\newtheorem{theo}{Theorem}[section]
\newtheorem{prop}[theo]{Proposition}
\newtheorem{lemma}[theo]{Lemma}
\theoremstyle{definition}
\newtheorem{definition}[theo]{Definition}
\newtheorem*{remarks}{Remarks}
\numberwithin{equation}{section}
\newcommand{\lune}{L_k}
\newcommand{\la}{\lambda_{k,p}^{-1}}
\newcommand{\lbulk}{L^{\textrm{Bulk}}_k}
\newcommand{\Gcutoff}{\mathcal{G}_\delta}
\newcommand{\DDD}{\mathbb{D}}
\newcommand{\HHH}{\mathbb{H}}
\newcommand{\NNN}{\mathbb{N}}
\newcommand{\RRR}{\mathbb{R}}
\newcommand{\SSS}{\mathbb{S}}
\newcommand{\TTT}{\mathbb{T}}
\newcommand{\XXX}{\mathbb{X}}
\newcommand{\ZZZ}{\mathbb{Z}}
\newcommand{\cE}{\mathcal{E}}
\newcommand{\cF}{\mathcal{F}}
\newcommand{\cH}{\mathcal{H}}
\newcommand{\cI}{\mathcal{I}}
\newcommand{\cN}{\mathcal{N}}
\newcommand{\cO}{\mathcal{O}}
\newcommand{\cS}{\mathcal{S}}
\newcommand{\fE}{\mathfrak{E}}
\newcommand{\fF}{\mathfrak{F}}
\newcommand{\fK}{\mathfrak{K}}
\newcommand{\B}{\textnormal{B}}
\newcommand{\di}{\textnormal{d}}
\newcommand{\diag}{\textnormal{diag}}
\newcommand{\eff}{\textnormal{eff}}
\newcommand{\F}{\textnormal{F}}
\newcommand{\FS}{\textnormal{FS}}
\newcommand{\GS}{\textnormal{GS}}
\newcommand{\HS}{\textnormal{HS}}
\newcommand{\lin}{\textnormal{lin}}
\newcommand{\nor}{\textnormal{nor}}
\newcommand{\op}{\textnormal{op}}
\newcommand{\RPA}{\textnormal{RPA}}
\newcommand{\tr}{\textnormal{Tr}}
\newcommand{\abs}[1]{\left\lvert #1 \right\rvert}
\newcommand{\norm}[1]{\left\lVert #1 \right\rVert}
\newcommand{\eva}[1]{\left\langle #1 \right\rangle}
\title{The Ground State Energy of a Mean-Field Fermi Gas in Two Dimensions}
\author[1,*]{Gregorio Casadei}
\author[2,**]{Sascha Lill}
\affil[1]{e--mail: \href{mailto:gregorio.casadei@studenti.unimi.it}{gregorio.casadei@studenti.unimi.it}}
\affil[2]{ORCID: \href{https://orcid.org/0000-0002-9474-9914}{0000-0002-9474-9914}, e--mail: \href{mailto:sali@math.ku.dk}{sali@math.ku.dk}}
\affil[*]{Università degli Studi di Milano, Via Cesare Saldini 50, 20133 Milano, Italy}
\affil[**]{University of Copenhagen, Universitetsparken 5, DK-2100 Copenhagen, Denmark}
\begin{document}

\maketitle

\begin{abstract}
We rigorously establish a formula for the correlation energy of a two-dimensional Fermi gas in the mean-field regime for potentials whose Fourier transform $\hat{V}$ satisfies $\hat{V}(\cdot) | \cdot | \in \ell^1$. Further, we establish the analogous upper bound for $\hat{V}(\cdot)^2 | \cdot |^{1 + \varepsilon} \in \ell^1$, which includes the Coulomb potential $\hat{V}(k) \sim |k|^{-2}$.
The proof is based on an approximate bosonization using slowly growing patches around the Fermi surface. In contrast to recent proofs in the three-dimensional case, we need a refined analysis of low-energy excitations, as they are less numerous, but carry larger contributions.


\medskip

\noindent Key words: fermionic many-body systems, Fermi gas, bosonization, ground state energy

\medskip

\noindent {\textit{2020 Mathematics Subject Classification}: 81V74, 81Q10, 82D20.}

\end{abstract}

\tableofcontents

\section{Introduction and Main Result}

In the past years, rigorous bosonization techniques allowed for huge progress in the mathematical study of fermionic gases. These techniques are based on the construction of fermionic pair excitation operators, which approximately behave like bosonic operators. The fermionic Hamiltonian is then approximated by a quadratic quasi-bosonic effective Hamiltonian, which can be diagonalized by a Bogoliubov-type transformation. Using a patch-based bosonization, Benedikter, Nam, Porta, Schlein and Seiringer~\cite{benedikter2020optimal,benedikter2021correlation,benedikter2023correlation} first proved a formula for the correlation energy of a 3d mean-field Fermi gas, which was not accessible to earlier mathematical works~\cite{graf1994correlation,hainzl2020correlation}.
The result was shortly afterward extended to Coulomb potentials by Christiansen, Hainzl and Nam~\cite{christiansen2023gell,christiansen2023random,christiansen2024correlation}, using a patch-free bosonization method. Rigorous approximate bosonization also allowed for studying the dynamics~\cite{benedikter2022bosonization}, excitation spectrum~\cite{christiansen2022effective} and momentum distribution~\cite{lill2025bosonized,benedikter2025momentum,benedikter2026momentum} of a 3d mean-field Fermi gas.\\
For the 3d Fermi gas in the dilute thermodynamic limit, Falconi, Giacomelli, Hainzl and Porta~\cite{falconi2021dilute} and Giacomelli~\cite{giacomelli2023optimal,giacomelli2024optimal} applied a similar bosonization technique to improve existing results by~\cite{lieb2005ground} on the ground state energy. By a further improvement of this technique, Giacomelli, Hainzl, Nam and Seiringer~\cite{giacomelli2024huang,giacomelli2025huang} very recently proved a formula conjectured by Huang and Yang~\cite{huang1957quantum} for the ground state energy of the dilute Fermi gas.\\
Let us also point out that Lauritsen and Seiringer~\cite{lauritsen2024ground,lauritsen2024groundLOW} and Lauritsen~\cite{lauritsen2025almost} obtained similar results on the ground state energy of the dilute Fermi gas in 1, 2 and 3 dimensions without bosonization.\\

In this article, we prove a formula for the correlation energy of the 2d mean-field Fermi gas, which is the analog of the 3d formula proven in~\cite{benedikter2023correlation,christiansen2023random}.
Our formula is of the form conjectured by Rajagopal and Kimball~\cite[(21)]{rajagopal1977correlations}, which is the 2d analog of the 3d correlation energy formula conjectured by Gell-Mann and Brueckner~\cite[(19)]{GellMann1957CorrelationEO}.
To prove our result, we use an adaptation to 2 dimensions of the approximate bosonization based on slowly growing patches in~\cite{benedikter2022bosonization}.
More precisely, we consider $N$ fermions on a torus $ \TTT^2 \coloneq [0, 2 \pi]^2 $, described by the Hamiltonian
\begin{equation} \label{eq:HN}
	H_N \coloneq \hbar^2 \sum_{j=1}^N (-\Delta_{x_j})
		+ \lambda \sum_{1 \le i < j \le N} V(x_i - x_j) \;,
\end{equation}
acting on the antisymmetric tensor product space $ L^2(\TTT^2)^{\otimes_{\mathrm{a}} N} $. The mean-field scaling corresponds to choosing $\hbar \coloneq N^{-\frac 12}$ and $\lambda \coloneq N^{-1}$, see also the discussion below.
Further, $ -\Delta_{x_j} $ is the Laplacian acting on the $ j $-th particle, and $ V(x_i-x_j) $ is a position space multiplication operator for some $2\pi$-periodic pair potential function $ V: \RRR^2 \to \RRR $. We assume that the Fourier transform of the latter exists and satisfies
\begin{equation} \label{eq:Vhat}
	\hat{V}(k) = \hat{V}(-k) \ge 0 \quad \forall k \in \ZZZ^2 \;, \qquad
    \hat{V} \in \ell^\infty(\ZZZ^2) \;, \qquad
	\hat{V}(k) \coloneq \int_{\RRR^2} V(x) e^{-ik \cdot x} \di x \;. 
\end{equation}
Our result addresses the \emph{ground state energy}
\begin{equation} \label{eq:GS}
	E_{\GS} \coloneq \inf(\sigma(H_N))
	= \inf_{\substack{\psi \in L^2(\TTT^2)^{\otimes_{\mathrm{a}} N} \\ \norm{\psi} = 1}} \langle \psi, H_N \psi \rangle \;,
\end{equation}
where any $ \psi \in L^2(\TTT^2)^{\otimes_{\mathrm{a}} N} $, $ \norm{\psi} = 1 $ that attains $ E_{\GS} = \eva{\psi, H_N \psi} $ is called a \emph{ground state}.\\
The choice of the mean-filed scaling $ \hbar = N^{-\frac 12} $, $ \lambda = N^{-1} $ in two dimensions is heuristically motivated as follows: We aim at both the kinetic and interaction energy to be extensive, that is, they shall scale\footnote{By $A \sim B$ we mean that there exist constants $c,C > 0$, such that $cB \le A \le CB$.} like $ \sim N $ as $ N \to \infty $. Since each of the $ N $ particles interacts with $ \sim N $ many other particles, the interaction energy is expected to scale like $ \sim \lambda N^2 $, which requires choosing $ \lambda \sim N^{-1} $. To motivate the choice of $ \hbar $, consider the interaction-free case $ V = 0 $. Here, a ground state is given by the Slater determinant (called Fermi ball state or Fermi sea state)
\begin{equation} \label{eq:psiFS}
	\psi_{\FS}(x_1, \ldots, x_N)
	\coloneq (N!)^{-\frac 12} \det \left( (2 \pi)^{-1} e^{i k_j x_\ell} \right)_{j,\ell=1}^N \;,
\end{equation}
where $ (k_j)_{j=1}^N \subset \ZZZ^2 $ is a family of momenta minimizing the kinetic energy
\begin{equation} \label{eq:EFS_kin}
	E_{\FS, \mathrm{kin}}
	\coloneq \eva{\psi_{\FS}, \sum_{j=1}^N (- \Delta_{x_j}) \psi_{\FS}}
	= \sum_{j=1}^N |k_j|^2 \;.
\end{equation}
Without loss of generality, we assume that $ N $ is chosen such that the $ k_j $ fill up a ball, called \emph{Fermi ball}:
\begin{equation} \label{eq:BF}
	\{ k_j \}_{j=1}^N = B_{\F} \;, \qquad
    B_{\F} \coloneq \{ k \in \ZZZ^2 ~|~ |k| < k_{\F}\} \;,
\end{equation}
for some suitable \emph{Fermi momentum} $k_{\F} > 0$ satisfying
\begin{equation}
    k_{\F}^2
    = \frac{1}{2}\left(\inf_{p\in B_F^c}|p|^2 + \sup_{q\in B_F}|q|^2\right) \;.
\end{equation}
Note that $ k_{\F} \sim N^{\frac 12} $ as $ N \to \infty $. We therefore expect $ E_{\FS, \mathrm{kin}} \sim N k_{\F}^2 \hbar^2 $, which motivates the choice $ \hbar \sim k_{\F}^{-1} \sim N^{-\frac 12} $.\\
For a generic interaction $ V \neq 0 $, no closed expression for a ground state or $ E_{\GS} $ is known, but one may derive a simple upper bound on $ E_{\GS} $ by the variational principle, using $ \psi_{\FS} $ as a trial state:
\begin{equation}
	E_{\GS}
	\le E_{\FS}
	\coloneq \eva{\psi_{\FS}, H_N \psi_{\FS}} \;. 
\end{equation}
While $E_{\FS} \sim N$, we rigorously establish the next-order correction to $E_{\GS}$, which is of order $\hbar = N^{-\frac 12}$.

\begin{theo}[Upper and lower bound on the ground state energy] \label{thm:main}
Let the Fourier transform of the interaction potential satisfy $ \hat{V}(k) = \hat{V}(-k) \ge 0 $ and $ \sum_{k \in \ZZZ^2} |k|^{2-b} \hat{V}(k)^2 < \infty $ for some $b \in (0,1)$. Then,
\begin{equation} \label{eq:main_upperbound}
\begin{aligned}
    E_{\GS} 
    &\leq E_{\FS} + E^{\RPA} + o(N^{-\frac 12}) \;,
\end{aligned}
\end{equation}
where, defining $\kappa \coloneq \pi^{-\frac 12}$  such that $k_{\F} = \kappa N^{\frac 12} + o(N^{\frac 12})$, the RPA energy
\begin{equation} \label{eq:ERPA}
    E^{\RPA}
    \coloneqq \hbar \kappa \sum_{k \in \ZZZ^2} \frac{|k|}{\pi} \int_0^\infty F \left( \frac{\hat{V}(k)}{4 \pi} \left(1 - \frac{\lambda}{\sqrt{\lambda^2 + 1}} \right) \right) \di \lambda \;, \qquad
    F(x) \coloneqq \log(1+x) - x \;,
\end{equation}
is bounded by $ 0 \ge E^{\RPA} \ge - C N^{-\frac 12} $.\\
Further, if $ \sum_{k \in \ZZZ^2} |k| \hat{V}(k) < \infty $ holds, then we even have
\begin{equation} \label{eq:main}
\begin{aligned}
    E_{\GS}
    = E_{\FS} + E^{\RPA} + o(N^{-\frac 12}) \;.
\end{aligned}
\end{equation}

\end{theo}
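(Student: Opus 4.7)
The plan is to perform an approximate bosonization around the Fermi circle, following the slow-growing-patches scheme of~\cite{benedikter2022bosonization} adapted to $d=2$. First I would move to Fock space and conjugate $H_N$ by the particle--hole transformation $R_{\FS}$ mapping $\psi_{\FS}$ to the vacuum; this yields $R_{\FS}^{\ast} H_N R_{\FS} = E_{\FS} + \mathbb{H}_{\mathrm{kin}} + \mathbb{Q} + \mathbb{E}$, where $\mathbb{H}_{\mathrm{kin}}$ is the particle--hole kinetic energy, $\mathbb{Q}$ is the particle--hole pair part of the interaction, and $\mathbb{E}$ collects normal-ordered remainders. I would then partition the Fermi circle into $2M$ angular patches $P_\alpha$ with $M=M(N)\to\infty$ slowly, and introduce, for each momentum transfer $k$, quasi-bosonic pair operators $b_{k,\alpha}^{\ast} = n_{k,\alpha}^{-1} \sum_p c_{p+k}^{\ast} c_p^{\ast}$, the sum running over particle--hole pairs with hole momentum in $P_\alpha$. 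A direct calculation shows $[b_{k,\alpha}, b_{\ell,\beta}^{\ast}] = \delta_{k,\ell}\delta_{\alpha,\beta} + \mathcal{E}_{k,\alpha;\ell,\beta}$, with the error $\mathcal{E}$ quadratic in $c$ and small on states of bounded excitation number.

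Next, linearising $\mathbb{H}_{\mathrm{kin}}$ on each patch and rewriting $\mathbb{Q}$ in terms of the $b_{k,\alpha}$ gives an effective quadratic bosonic Hamiltonian with dispersion $\omega_{k,\alpha} = 2\hbar^{2} k_{\F} |u_\alpha \cdot k|$ (with $u_\alpha$ the unit vector through the centre of $P_\alpha$) and an interaction of the form $\tfrac{\lambda}{2}\sum_k \hat V(k) \bigl(\sum_\alpha n_{k,\alpha}(b_{k,\alpha}+b_{-k,\alpha}^{\ast})\bigr)^{2}$. I would diagonalise it by a Bogoliubov transformation $T=\exp\bigl(\sum_{k,\alpha,\beta} K_{\alpha\beta}(k)\, b_{k,\alpha}^{\ast} b_{-k,\beta}^{\ast} - \mathrm{h.c.}\bigr)$ with $K(k)$ the solution of the patch-indexed matrix Riccati equation. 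The resulting ground-state energy takes the symplectic form $\tfrac{1}{2}\sum_k \tr(E(k)-h(k))$, which via a standard integral representation of $\tr \log$ and the continuum limit $M\to\infty$ (converting the angular sum into the integral over the Fermi circle) reproduces exactly $E^{\RPA}$ from~\eqref{eq:ERPA}. The upper bound~\eqref{eq:main_upperbound} then follows by evaluating $\eva{T\Omega, R_{\FS}^{\ast} H_N R_{\FS}\, T\Omega}$: since only one trial state must be controlled, the weaker assumption $\sum_k |k|^{2-b}\hat V(k)^{2}<\infty$ suffices to make all commutator and non-bosonisable errors $o(N^{-1/2})$.

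For the matching lower bound under $\sum_k |k|\hat V(k)<\infty$, I would establish an a priori excitation bound---any state with energy close to $E_{\GS}$ has fermionic excitation number $\lesssim N^{1/2}$---and then control all error terms (approximate-CCR remainders, patch linearisation error, modes away from $\partial B_{\F}$, non-bosonisable exchange pieces) uniformly on this low-excitation sector. The main obstacle, as flagged in the abstract, is specific to two dimensions: each patch contains only $n_{k,\alpha}\sim k_{\F}/M$ pairs instead of the $k_{\F}^{2}/M^{2}$ of the 3D case, so the quasi-bosonic approximation is intrinsically coarser, and modes with $|u_\alpha\cdot k|$ small are few but contribute proportionally more to $E^{\RPA}$. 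Losing these tangential modes through crude bounds would destroy the leading $\hbar$-order. The new ingredient compared with the 3D analysis is therefore a refined kinetic-form estimate on patches nearly tangent to $k$, combined with an optimal choice of $M=M(N)$ balancing the commutator errors, the patch curvature correction, and the exchange remainder; this patch-scale optimisation is where I expect the bulk of the analytical work to lie.
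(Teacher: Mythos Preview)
Your overall architecture matches the paper: particle--hole conjugation, patch bosonization, Bogoliubov diagonalization, trial state $T\Omega$ for the upper bound, and a priori excitation control for the lower bound. However, three concrete ingredients that the paper identifies as essential in two dimensions are missing from your sketch, and without them the lower bound would not close.

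First, a single Bogoliubov transformation $T$ is not enough for the lower bound. After conjugating the effective quadratic Hamiltonian by $T$ one obtains a diagonal block $E(k)$, and to drop the resulting quasi-bosonic term one would need $E(k)\ge D(k)$, which is false in general. The paper therefore introduces a \emph{second} unitary $Z$ (generated by an antisymmetric matrix $L(k)$) that rotates $E(k)$ to $\widetilde P(k)\ge D(k)$; only then can one bound the conjugated Hamiltonian below by $\DDD_\B$ and absorb that into $\HHH_0\ge 0$. Your plan omits this step entirely.

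Second, your a priori bound ``excitation number $\lesssim N^{1/2}$'' is precisely the naive estimate that the paper has to improve on. In two dimensions the relative coupling is $\lambda/\hbar^2=1$ (not $N^{-1/3}$ as in 3d), so there is no smallness to exploit, and with only $\langle\cN\rangle\lesssim N^{1/2}$ the error terms are not $o(\hbar)$. The paper introduces a \emph{gapped} number operator $\cN_\delta$ (modes with $e(p)>\hbar N^{-\delta}$) and uses the lattice-point bound $|\Gcutoff|\le C_\varepsilon N^{1/2-\delta+\varepsilon}$ to obtain the sharper $\langle\cN\rangle\le C_\varepsilon N^{1/4+\varepsilon}$; this gap argument is singled out as a novelty and is what makes the subsequent error estimates affordable.

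Third, the non-bosonizable term $\cE_1=\tfrac{1}{2N}\sum_k\hat V(k)\,d^*(k)d(k)$ cannot be handled by the generic commutator/exchange estimates you allude to. The naive bound $\Vert d(k)\xi\Vert^2\le C\langle\cN^2\rangle\le C_\varepsilon N^{1/2+\varepsilon}$ gives only $|\langle\cE_1\rangle|\le C_\varepsilon\hbar N^\varepsilon$, which is not $o(\hbar)$. The paper uses a \emph{three}-scale energy decomposition (replacing the two-scale argument of the 3d case) together with a lattice-geometric estimate on the number of points in the intersection of two thin annuli to squeeze out $\Vert d(k)\xi\Vert^2\le C_\varepsilon N^{1/2-1/68+\varepsilon}$. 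This is the step where the ``low-energy excitations are less numerous but carry larger contributions'' observation is actually cashed in, and it is not the tangential-patch issue you describe.
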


\begin{proof}
The lower bound is proven in Proposition~\ref{prop:lowerbound} and the upper bound in Proposition~\ref{prop:upperbound}.
The bound on $E^{\RPA}$ follows from Lemma~\ref{lem:trace_evaluation}, where $E^{\RPA} \leq 0$ is evident from $F(x) \leq 0$.
\end{proof}

\begin{remarks}
\begin{enumerate}

\item \emph{Main novelties in two dimensions.} Compared to the 3d case, the main complication in 2d is that the relative coupling is now $\lambda / \hbar^2 = 1$ instead of $N^{-\frac 13}$. That means, the \textbf{2d mean-field regime is no longer a regime of small coupling}. This is in part compensated by the fact that certain subsets of $\mathbb{R}^2$ contain much less lattice points than their 3d counterparts. However, some estimates lose their validity.\\
While our proof follows the general strategy of~\cite{benedikter2023correlation}, including ideas from~\cite{christiansen2023random}, we now need a gap argument (Lemma~\ref{lem:cN_delta_bound}) to achieve optimal a priori bounds, similar to the argument in~\cite[Lemma~3.5]{falconi2021dilute}. We further achieve bounds on non-bosonizable terms (Lemma~\ref{lem:cE_1estimate}) through a three-scale decomposition, which replaces the two-scale decomposition of~\cite[Prop.~2.3]{christiansen2023random}, followed by a careful analysis of the number of lattice points at different energy scales. Also, the bound on low-energy excitations in Lemma~\ref{lem:Q_QR_bound} requires an additional effort with respect to the 3d case, and we need to derive the 2d counterparts of some 3d estimates on sums over lattice points (see Appendix~\ref{app:numbertheory}).

\item \label{rem:V} \emph{On the conditions for the potentials.} If $\sum_{k \in \ZZZ^2} |k| \hat{V}(k) < \infty$, then there must exist some $C > 0$ such that $\hat{V}(k) \leq C |k|^{-1} \; \forall k \in \ZZZ^2_*$, since otherwise, the sum would have infinitely many contributions $\ge C$ and thus be divergent. Hence, $\sum_{k \in \ZZZ^2} |k|^2 \hat{V}(k)^2 < \infty$, so in particular the assertion $\sum_{k \in \ZZZ^2} |k|^{2-b} \hat{V}(k)^2 < \infty$ holds with any $b \ge 0$.

\item \emph{Coulomb potential.} It would be highly desirable to establish also a lower bound for the 2d Coulomb potential, $\hat{V}(k) \sim |k|^{-2}$. As mentioned above, the correlation energy for a 3d Fermi gas with Coulomb potentials was very recently established in~\cite{christiansen2023gell,christiansen2024correlation}. However, the method for obtaining a priori estimates on the kinetic energy for the lower bound on $E_{\GS}$ in~\cite{christiansen2024correlation} relies on the relative coupling being $\lambda / \hbar^2 \ll 1$, which is no longer true in 2 dimensions. Here, we instead use an Onsager-type argument as in~\cite{benedikter2021correlation,christiansen2023gell} to derive a priori bounds, which is restricted to the case $\sum_k \hat{V}(k) |k| < \infty$. 
It is an interesting question for future research how to derive a priori bounds for the 2d Coulomb case in spite of the relative coupling being of order 1.\\
Let us also mention that the original prediction by Rajagopal and Kimball~\cite{rajagopal1977correlations} is for $V(x) \sim k_{\F}^{-1} |x|^{-1}$ so $\hat{V}(k) \sim k_{\F}^{-1} |k|^{-1}$. In other words, a 3d Coulomb potential is plugged into the 2d Hamiltonian. Using this potential would massively simplify the analysis, as the factor of $k_{\F}^{-1}$ entails a weak relative coupling.

\end{enumerate}
\end{remarks}

\noindent The rest of this paper is organized as follows: In Section~\ref{sec:mathdef}, we introduce some notation and particle--hole transform the Hamiltonian. We then derive a priori estimates needed for the lower bound on $E_{\GS}$ in Section~\ref{sec:a_priori_bounds}, as well as estimates for non-bosonizable terms in Section~\ref{sec:non_bosonizable}. In Section~\ref{sec:patch-decomposition}, we introduce the patch-based approximate bosonization and compile bosonization error bounds. Based on this, we define the approximately bosonic effective Hamiltonian and Bogoliubov transformation in Section~\ref{sec:pseudobosonic}. After providing bosonization error estimates for the kinetic energy in Section~\ref{sec:linearizing_kinetic}, we finally conclude the bounds on $E_{\GS}$ in Section~\ref{sec:proof_main}.\\
Appendix~\ref{app:numbertheory} contains some number theoretical estimates specific to the 2d case.

\section{Mathematical Definitions}
\label{sec:mathdef}

We largely adopt the notation of~\cite{benedikter2021correlation,benedikter2023correlation}, working in second quantization. The fermionic Fock space over the 2D torus $ \TTT^2 = [0, 2 \pi]^2 $ is defined as
\begin{equation} \label{eq:cF}
	\cF \coloneq \bigoplus_{N=0}^\infty L^2(\TTT^2)^{\otimes_{\mathrm{a}} N} \;,
\end{equation}
with vacuum vector $ \Omega = (1,0,0,\ldots) \in \cF $. The standard fermionic creation and annihilation operators for $ f \in L^2(\TTT^2) $ are $ a^*(f), a(f): \cF \to \cF $ with operator norm bounds $ \norm{a^*(f)}, \norm{a(f)} \le \norm{f}_2 $. To each momentum $ p \in \ZZZ^2 $, we associate a creation and annihilation operator
\begin{equation} \label{eq:aastar}
	a_p^* \coloneq a^*(e_p) \;, \qquad
	a_p \coloneq a(e_p) \;, \qquad
	e_p \coloneq (2 \pi)^{-1} e^{ip \cdot x} \;,
\end{equation}
where $ (e_p)_{p \in \ZZZ^2} \subset L^2(\TTT^2) $ is the plane-wave orthonormal basis and where $ a_p^*, a_p $ satisfy the canonical anticommutation relations (CAR)
\begin{equation} \label{eq:CAR}
	\{ a_p, a_{p'}^* \} = \delta_{p,p'} \;, \qquad
	\{ a_p, a_{p'} \} = \{ a_p^*, a_{p'}^* \} = 0 \qquad
    \forall p, p' \in \ZZZ^2\;.
\end{equation}
This allows to conveniently re-write $ H_N $ (defined in~\eqref{eq:HN}) in momentum space: If we lift $ H_N $ on $ L^2(\TTT^2)^{\otimes_{\mathrm{a}} N} $ to an operator $ \cH_N $ on $ \cF $, then a quick calculation reveals that
\begin{equation} \label{eq:cH_N}
	\cH_N = \sum_{p \in \ZZZ^2} \hbar^2 |p|^2 a_p^* a_p
		+ \frac{1}{2 (2 \pi)^2 N} \sum_{k,p,q \in \ZZZ^2} \hat{V}(k) a^*_{p+k} a^*_{q-k} a_q a_p \;.
\end{equation}
To analyze this Hamiltonian, we introduce the unitary particle--hole transformation $ R: \cF \to \cF $, which flips the operators inside the Fermi ball (defined in $ B_{\F} $~\eqref{eq:BF})
\begin{equation} \label{eq:R}
	R^* a_p R
	\coloneq \chi(p \in B_{\F}^c) a_p + \chi(p \in B_{\F}) a_p^* \;, \qquad
	B_{\F}^c \coloneq \ZZZ^2 \setminus B_{\F} \;.	
\end{equation}
This transformation generates the Fermi sea state~\eqref{eq:psiFS} as $ \psi_{\FS} = R \Omega $. Note that $R^2 = 1$. As in~\cite{benedikter2021correlation},~\cite{christiansen2023random}, for $ k \in \ZZZ^2 $, we introduce the lune and the pair creation and shift operators
\begin{equation} \label{eq:Lkbkdk}
    L_k \coloneq B_{\F}^c \cap (B_{\F}+k) \;, \quad
    b^*(k) \coloneq \sum_{p \in L_k} a_p^* a_{p-k}^* \;, \quad
    d(k) \coloneq \!\! \sum_{p \in B_{\F}^c \cap (B_{\F}^c+k)} \!\!  a_{p-k}^* a_p
        - \!\!  \sum_{h \in B_{\F} \cap (B_{\F}-k)} \!\!  a_{h+k}^* a_h \;,
\end{equation}
where $b(0) = 0$ and $L_0 = \emptyset$. This allows for conveniently rewriting
\begin{equation*}
    R^* \sum_{p \in \ZZZ^2} a_{p+k}^* a_p R
    = b^*(k) + b(-k) + d(k)^* \;.
\end{equation*}
Using the CAR~\eqref{eq:CAR}, we then obtain 
\begin{equation} \label{eq:HNconjugation}
\begin{aligned}
	R^* \cH_N R 
	&= E_{\FS} + \HHH_0 + Q_{\B} + \cE_1 + \cE_2 + \XXX \;, \\
	\HHH_0 &\coloneq \sum_{p \in \ZZZ^2} e(p) a_p^* a_p \;, \quad \textnormal{with} \quad e(p) \coloneq \hbar^2 \abs{|p|^2 - k_{\F}^2} \;,\\
	Q_{\B} &\coloneq \frac{1}{(2 \pi)^2 N} \sum_{k \in \ZZZ^2_*} \hat{V}(k) \left( b^*(k) b(k) + \frac 12 \big( b^*(k) b^*(-k) + b(-k) b(k) \big) \right)\;, \\
	\cE_1 &\coloneq \frac{1}{2 (2 \pi)^2 N} \sum_{k \in \ZZZ^2_*} \hat{V}(k) d^*(k) d(k) \;, \\
	\cE_2 &\coloneq \frac{1}{2 (2 \pi)^2 N} \sum_{k \in \ZZZ^2_*} \hat{V}(k) (d^*(k) b(-k) + \textnormal{h.c.} ) \;, \\
	\XXX &\coloneq -\frac{1}{2 (2 \pi)^2 N} \sum_{k \in \ZZZ^2_*} \hat{V}(k) \sum_{p \in L_k} ( a_p^* a_p + a_{p-k}^* a_{p-k} ) \;, \\
\end{aligned}
\end{equation}
where $ \ZZZ^2_* \coloneq \ZZZ^2 \setminus \{(0,0)\} $. Note that there is an additional $ (2 \pi)^2 $ in the denominator with respect to~\cite[(2.5),(2.6)]{benedikter2023correlation} due to our different Fourier transform convention.

\section{A Priori Estimates}
\label{sec:a_priori_bounds}

To control error terms, we need to establish estimates on expectations of powers of kinetic energy and excitation number operators. In this section, we derive such estimates for approximate ground states in a similar sense to~\cite[(4.18)]{hainzl2020correlation}, which will be useful to prove the lower bound on $E_{\GS}$.

\begin{definition} \label{def:approxGS}
We say that $ \xi \in \cF $ \textbf{belongs to an approximate ground state} $ \psi = R \xi $ if $ R \xi \in L^2(\TTT^2)^{\otimes_{\mathrm{a}} N} $, $ \norm{\xi} = 1 $, and
\begin{equation} \label{eq:approxGS}
    \eva{R \xi, H_N R \xi} - E_{\FS}
    \le C \hbar \;.
\end{equation}
\end{definition}

Since $E_{\GS} \le E_{\FS}$, for any ground state $ \psi_{\GS} $, the vector $ \xi = R \psi_{\GS} $ belongs to an approximate ground state. We start with extracting a bound for $ \HHH_0 $, using an Onsager-type argument as in~\cite[Lemma~4.1]{benedikter2023correlation} and~\cite[Sect.~10.2]{christiansen2023random}.

\begin{lemma}[Onsager bound]
Assume $\hat{V}\geq0$ and $\sum_{k\in\ZZZ^2}|k| \hat{V}(k) <\infty$ and let $\xi\in \cF$ such that $ R \xi \in L^2(\TTT^2)^{\otimes_{\mathrm{a}} N} $. Then, there exists a $C > 0$ such that
\begin{equation} \label{eq:HHH0_operatorbound}
    \langle \xi, \HHH_0 \xi \rangle
    \le \langle R \xi, H_N R \xi \rangle - E_{\FS} + C N^{-\frac{1}{2}} \sum_{q\in\ZZZ^2_*}|q|\hat{V}(q) \;.
\end{equation}
In particular, if $\xi$ belongs to an approximate ground state in the sense of Definition~\ref{def:approxGS}, then
\begin{equation} \label{eq:Onsager-Bound}
	\langle \xi, \HHH_0 \xi \rangle
	\le C N^{-\frac{1}{2}} \;.
\end{equation}
\label{lemma:Onsager-Bound}
\end{lemma}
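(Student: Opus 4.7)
The strategy is a direct Onsager-type estimate: rewrite the interaction $W \coloneq \cH_N - \sum_p \hbar^2 |p|^2 a_p^* a_p$ via the density operators $\rho_k \coloneq \sum_{p \in \ZZZ^2} a_{p+k}^* a_p$, then exploit $\rho_k^* \rho_k \ge 0$ to compare $\eva{R\xi, W R\xi}$ with its Fermi-sea expectation. To connect back to $\HHH_0$, first note that a direct computation from~\eqref{eq:R} gives $R^* \bigl( \sum_p \hbar^2 |p|^2 a_p^* a_p \bigr) R = \HHH_0 + E_{\FS,\text{kin}}$ on the $N$-particle sector (using $\cN R\xi = N R\xi$ and $|B_{\F}| = N$). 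Splitting $E_{\FS}$ into its kinetic and interaction pieces, the target bound~\eqref{eq:HHH0_operatorbound} reduces to
\begin{equation*}
    \eva{R\xi, W R\xi} - E_{\FS,\text{int}} \;\ge\; -C N^{-\frac{1}{2}} \sum_{q \in \ZZZ^2_*} |q|\hat{V}(q) \;, \qquad E_{\FS,\text{int}} \coloneq \eva{\psi_{\FS}, W \psi_{\FS}} \;.
\end{equation*}

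The CAR identity $\sum_{p,q \in \ZZZ^2} a_{p+k}^* a_{q-k}^* a_q a_p = \rho_k^* \rho_k - \cN$ yields $W = \frac{1}{2(2\pi)^2 N} \sum_{k \in \ZZZ^2} \hat{V}(k)(\rho_k^* \rho_k - \cN)$. Since $\cN$ acts as $N$ on both $R\xi$ and $\psi_{\FS}$, the $-\cN$ contributions cancel between $\eva{R\xi, W R\xi}$ and $E_{\FS,\text{int}}$; the $k=0$ term, being $\cN^2$, contributes identically on both sides and cancels as well. For $k \neq 0$, a direct count of the unblocked particle--hole pairs gives $\|\rho_k \psi_{\FS}\|^2 = |L_k|$. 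We arrive at
\begin{equation*}
    \eva{R\xi, W R\xi} - E_{\FS,\text{int}} = \frac{1}{2(2\pi)^2 N} \sum_{k \in \ZZZ^2_*} \hat{V}(k) \bigl( \|\rho_k R\xi\|^2 - |L_k| \bigr) \;.
\end{equation*}

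Dropping the non-negative term $\|\rho_k R\xi\|^2$ and invoking the two-dimensional lune bound $|L_k| \le C k_{\F}|k|$ for all $k \in \ZZZ^2_*$ (the crescent $L_k$ has width $\sim |k|$ and length $\sim k_{\F}$ for $|k| \le 2k_{\F}$, while $|L_k| \le |B_{\F}| = N \sim k_{\F}^2 \le k_{\F}|k|$ for $|k| > 2k_{\F}$), combined with $k_{\F} \sim N^{1/2}$, produces~\eqref{eq:HHH0_operatorbound}. The a priori bound~\eqref{eq:Onsager-Bound} then follows by substituting the estimate $\eva{R\xi, H_N R\xi} - E_{\FS} \le C\hbar = CN^{-1/2}$ from Definition~\ref{def:approxGS} and absorbing the (finite, by hypothesis) constant $\sum_{q \in \ZZZ^2_*} |q| \hat{V}(q)$.

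\textbf{Main obstacle.} None of the steps is genuinely hard: the argument is the two-dimensional analog of the standard Onsager estimate cited in~\cite{benedikter2023correlation,christiansen2023random}. The only dimension-sensitive input is the lune bound $|L_k| \le C k_{\F}|k|$, and the fortunate feature of the 2d regime is that the resulting prefactor $k_{\F}/N \sim N^{-1/2}$ equals $\hbar$, so the error matches the $O(\hbar)$ excess energy of an approximate ground state. The restriction $\sum_k |k|\hat{V}(k) < \infty$ is forced by the extra factor $|k|$ coming from the lune volume; without it one would need $\hat{V} \in \ell^1$, which fails for Coulomb-type decays, and a different route to a priori bounds becomes necessary.
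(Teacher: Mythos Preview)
Your proof is correct and follows essentially the same Onsager-type argument as the paper: both exploit $\hat{V}(k)\ge 0$ to drop a nonnegative term and reduce to the lune bound $|L_k|\le Ck_{\F}|k|$. The only cosmetic difference is that the paper completes the square in position space against the uniform density $N/(2\pi)^2$, whereas you work directly in momentum space via $\rho_k^*\rho_k\ge 0$ and compare with $\|\rho_k\psi_{\FS}\|^2=|L_k|$; the two formulations are equivalent and lead to the identical final estimate.
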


\begin{proof}
As in~\cite[Lemma~4.1]{benedikter2023correlation}, we complete the square as
\begin{align*}
    0 &\leq \frac 12 \int_{\mathbb{T}^2\times\mathbb{T}^2} \Bigg(\sum_{i=1}^N \delta(x_i - x) - \frac{N}{(2 \pi)^2} \Bigg) V(x-y) \Bigg(\sum_{j=1}^N \delta(x_j - y) - \frac{N}{(2 \pi)^2} \Bigg) \, \di x \, \di y \\
    &= \sum_{1 \leq i < j \leq N} V(x_i - x_j)
        - N^2 \frac{\hat{V}(0)}{2(2 \pi)^2}
        + N \frac{V(0)}{2} \;,
\end{align*}
where we recognize the first term as $ N = \lambda^{-1} $ times the interaction energy in $ H_N $~\eqref{eq:HN}. Thus, adding the kinetic energy, we get
\begin{equation*}
    \sum_{j=1}^N (-\hbar^2 \Delta_{x_j})
    \le H_N + \frac{V(0)}{2} - N \frac{\hat{V}(0)}{2(2 \pi)^2} \;.
\end{equation*}
We now take the expectation in $ R \xi $. A quick calculation for $ R \xi \in L^2(\TTT^2)^{\otimes_{\mathrm{a}} N} $ reveals
\begin{equation} \label{eq:kinetic-bound}
\begin{aligned}
    \eva{\xi, \HHH_0 \xi}
    &= \eva{ R \xi, \left( \sum_{j=1}^N -\hbar^2 \Delta_{x_j} \right) R \xi}
        - \sum_{p \in B_\F} \hbar^2 p^2 \\
    &\leq \big( \langle R \xi, H_N R \xi \rangle - E_{\FS} \big)
        + E_{\FS}
        + \frac{V(0)}{2} 
        - N \frac{\hat{V}(0)}{2(2 \pi)^2} - \sum_{p \in B_\F} \hbar^2 p^2 \;.
\end{aligned}
\end{equation}
The Fermi sea energy can be written as
\begin{equation*}
    E_\FS = N \frac{\hat{V}(0)}{2(2 \pi)^2} - \frac{1}{2(2 \pi)^2N} \sum_{k,k' \in B_\F} \hat{V}(k - k') + \sum_{p \in B_\F} \hbar^2 p^2 \;.
\end{equation*}
Next, observe that
\begin{align*}
    \sum_{k, k' \in B_\F} \hat{V}(k - k') 
    &= \sum_{k \in B_\F} \Bigg( \sum_{k' \in \mathbb{Z}^2} \hat{V}(k - k') - \sum_{k' \in B_\F^c} \hat{V}(k - k') \Bigg)
    = (2 \pi)^2 N V(0) - \sum_{\substack{k \in B_\F \\ k' \in B_\F^c}} \hat{V}(k - k') \;.
\end{align*}
For the second term, recalling $\ZZZ^2_* = \ZZZ^2\setminus\{(0,0)\}$, we have
\begin{equation}
    \sum_{\substack{k\in B_\F\\k'\in B_\F^c}}\hat{V}(k-k')
    = \sum_{k\in B_\F}\sum_{q\in B_\F^c+k}\hat{V}(q)
    =\sum_{q\in\ZZZ^2_*}\left|L_q\right|\hat{V}(q)
    \leq CN^{\frac{1}{2}}\sum_{q\in\ZZZ^2_*}|q|\hat{V}(q) \;.
    \label{eq:3.2}
\end{equation}
Putting together~\eqref{eq:kinetic-bound}--\eqref{eq:3.2} proves the claimed result~\eqref{eq:HHH0_operatorbound}. Then,~\eqref{eq:Onsager-Bound} follows immediately from the definition of an approximate ground state.
\end{proof}

Based on this bound, we derive further a priori estimates, which involve the following gapped number operator.

\begin{definition}
Recall the excitation energy $ e(p) = \hbar^2 \abs{|p|^2 - k_{\F}^2} $. Given $ \delta \in [0,\tfrac 12] $, we define the \textbf{gap} $ \Gcutoff $ and the \textbf{gapped number operator} $ \cN_\delta $ as
\begin{equation} \label{eq:cN_delta}
    \Gcutoff \coloneq \{ p \in \ZZZ^2 ~|~ e(p) \leq \hbar N^{-\delta} \} \;, \qquad
    \cN_\delta \coloneqq \sum_{p \in \ZZZ^2 \setminus \Gcutoff} a_p^* a_p \;.
\end{equation}
\end{definition}

Note that by lattice discretization, we have $e(p) \geq c \hbar ^2$, so there is already a natural gap corresponding to $\delta = \tfrac 12$ and of thickness $N^{-\frac 12}$.\\
A similar $ \cN_\delta $ was introduced in~\cite{benedikter2021correlation} to address the fact that $\HHH_0$ is not stable under propagation by the 3d analog of our quasi-Bogoliubov transformation $ T $ defined in~\eqref{eq:bogoliubov}. We introduce $ \cN_\delta $ for the very same reason. In contrast to the 3d case, we will additionally need the following ``gapped conversion'' to estimate $\cN$ against $\HHH_0$.\\

\begin{lemma} [Bound on $ \Gcutoff $ and gapped conversion] \label{lem:cN_delta_bound}
Given $ \delta \in [0,\tfrac 12] $ and any $ \varepsilon > 0 $, there exist some $ C, C_\varepsilon > 0 $ such that for all $ \xi \in \cF $, $\norm{\xi} = 1$,
\begin{equation} \label{eq:cN_delta_bound}
    |\Gcutoff| \le C_\varepsilon N^{\frac 12 - \delta + \epsilon} \;, \quad
    \eva{\xi, \cN \xi}
    \le |\Gcutoff| + C N^{\frac 12 + \delta} \eva{\xi, \HHH_0 \xi} \;, \quad
    \eva{\xi, \cN_\delta \xi}
    \le C N^{\frac 12 + \delta} \eva{\xi, \HHH_0 \xi} \;.
\end{equation}
\end{lemma}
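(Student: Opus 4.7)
The plan is to establish the three bounds in order, noting that the second is a direct consequence of the first and third.

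First I would handle the bound on $|\Gcutoff|$. Unfolding the definition, $p \in \Gcutoff$ iff $\bigl||p|^2 - k_{\F}^2\bigr| \leq \hbar^{-1} N^{-\delta} = N^{\frac 12 - \delta}$, so I must count lattice points $p \in \ZZZ^2$ whose squared norm lies in a window of width $2 N^{\frac 12 - \delta}$ around $k_{\F}^2$. Since $k_{\F}^2$ is a half-integer (or integer) by its definition as the average of two integers in $\{|q|^2\}$, this window contains at most $\lceil 2 N^{\frac 12 - \delta} \rceil + 1$ integers $n$. For each such integer, the count $r_2(n) = \#\{p \in \ZZZ^2 : |p|^2 = n\}$ is bounded by the standard divisor-type estimate $r_2(n) \leq C_\varepsilon n^\varepsilon \leq C_\varepsilon N^\varepsilon$ (this is a classical number-theoretic bound, presumably recorded in Appendix~\ref{app:numbertheory}). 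Multiplying the two bounds yields $|\Gcutoff| \leq C_\varepsilon N^{\frac 12 - \delta + \varepsilon}$.

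Next, for the third inequality, I would exploit the definition of $\Gcutoff$ the other way: for $p \in \ZZZ^2 \setminus \Gcutoff$ we have $e(p) > \hbar N^{-\delta}$, so as operators on $\cF$,
\begin{equation*}
    a_p^* a_p \leq \frac{e(p)}{\hbar N^{-\delta}} a_p^* a_p = N^{\frac 12 + \delta}\, e(p) a_p^* a_p \;.
\end{equation*}
Summing over $p \in \ZZZ^2 \setminus \Gcutoff$ and using $\HHH_0 \geq \sum_{p \in \ZZZ^2 \setminus \Gcutoff} e(p) a_p^* a_p$ (since every $e(p) \geq 0$), this gives the operator inequality $\cN_\delta \leq C N^{\frac 12 + \delta} \HHH_0$, from which the third claim follows by taking expectation in $\xi$.

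Finally, for the middle inequality, I would decompose $\cN = \sum_{p \in \Gcutoff} a_p^* a_p + \cN_\delta$. The first sum is bounded in operator norm by $|\Gcutoff|$ because each $a_p^* a_p$ has operator norm at most $1$ and the summands commute, so $\langle \xi, \sum_{p \in \Gcutoff} a_p^* a_p \xi\rangle \leq |\Gcutoff|$. Combining this with the bound on $\cN_\delta$ just derived gives the claim. The main (and essentially only) nontrivial ingredient is the number-theoretic input $r_2(n) \leq C_\varepsilon n^\varepsilon$; all other steps are direct operator estimates. This divisor-type bound is precisely the 2d feature absent in the 3d analysis, which is why the loss of $N^\varepsilon$ in the first estimate is genuinely needed.
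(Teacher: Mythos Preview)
Your proof is correct and matches the paper's own argument essentially step for step: decompose $\Gcutoff$ into at most $\cO(N^{\frac12-\delta})$ integer spheres and invoke the $r_2(n)\le C_\varepsilon n^\varepsilon$ bound (Lemma~\ref{lem:sphere_point_estimate}), then use the gap $e(p)>\hbar N^{-\delta}$ outside $\Gcutoff$ to control $\cN_\delta$ by $\HHH_0$, and split $\cN=\sum_{p\in\Gcutoff}a_p^*a_p+\cN_\delta$ for the middle inequality. The only cosmetic difference is the order in which you derive the second and third bounds.
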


\begin{proof}
To bound $ |\Gcutoff| $, note that, by definition of $ e(p) $~\eqref{eq:HNconjugation}, $ \Gcutoff $ contains $ p \in \ZZZ^2 $ with $ k_{\F}^2 - N^{\frac 12 - \delta} \le |p|^2 \le k_{\F}^2 + N^{\frac 12 - \delta} $. As $ |p|^2 $ can only take integer values $ |p|^2 = n \in \NNN $, we can decompose $ \Gcutoff $ into $ \le N^{\frac 12 - \delta} $ spheres of the kind $ S_n = \{ p \in \ZZZ^2 ~|~ |p|^2 = n\} $. By Lemma~\ref{lem:sphere_point_estimate}, each sphere has $ |S_n| \le C_\varepsilon N^\varepsilon $ points. This concludes the first bound of~\eqref{eq:cN_delta_bound}. The second bound follows from $ p \notin \cN_\delta \Rightarrow e(p) > N^{-\frac 12 - \delta} $:
\begin{equation*}
    \eva{\xi, \cN \xi}
    = \sum_{p \in \Gcutoff} \eva{\xi, a_p^*a_p \xi}
        +\sum_{p \in \ZZZ^2 \setminus \Gcutoff}\frac{1}{e(p)}e(p) \eva{\xi, a_p^*a_p \xi}
    \leq |\Gcutoff|
        + C N^{\frac 12 + \delta} \eva{\xi, \HHH_0 \xi} \;.
\end{equation*}
The third bound readily follows by dropping the contribution with $ p \in \Gcutoff $.
\end{proof}

In the proof of our final a priori bounds, we will need the following simple estimates.
\begin{lemma}[Naive bounds on $b$ and $d$] \label{lem:naivebounds_bdXXX}
For $k \in \ZZZ^2_*$, let $L_k, b(k)$, and $ d(k)$ be defined as in~\eqref{eq:Lkbkdk}. Then, for all $\xi \in \cF$,
\begin{equation} \label{eq:naivebounds_bdXXX}
    \Vert b(k) \xi \Vert^2 \leq |L_k| \eva{\xi, \cN \xi} \;, \qquad
    \Vert b^*(k) \xi \Vert^2 \leq |L_k| \eva{\xi, (\cN + 1) \xi} \;, \qquad
    \Vert d(k) \xi \Vert^2 \leq 8 \eva{\xi, \cN^2 \xi} \;.
\end{equation}
\end{lemma}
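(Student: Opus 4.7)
The plan is to prove each of the three bounds using standard fermionic estimates. For $\Vert b(k)\xi \Vert^2$, I would apply the Cauchy--Schwarz inequality to the sum $b(k)\xi = \sum_{p \in L_k} a_{p-k} a_p \xi$, giving $\Vert b(k)\xi \Vert^2 \leq |L_k| \sum_{p \in L_k} \Vert a_{p-k} a_p \xi \Vert^2$. Since $\Vert a_{p-k}\Vert_{\op} \leq 1$ and $\Vert a_p \xi \Vert^2 = \eva{\xi, a_p^* a_p \xi}$, summing over $p \in L_k \subset \ZZZ^2$ bounds the right-hand side by $|L_k|\eva{\xi, \cN \xi}$.

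For $\Vert b^*(k)\xi \Vert^2$, I would relate the two norms via the identity $\Vert b^*(k)\xi \Vert^2 = \Vert b(k)\xi \Vert^2 + \eva{\xi, [b(k), b^*(k)]\xi}$ and then compute the commutator. A careful application of the CAR~\eqref{eq:CAR}, exploiting that for $p \in L_k$ the two modes $p \in B_{\F}^c$ and $p-k \in B_{\F}$ lie in disjoint sets (so that mode collisions of the form $p = p' - k$ or $p - k = p'$ cannot occur for $p, p' \in L_k$), should yield
\begin{equation*}
    [b(k), b^*(k)] = \sum_{p \in L_k} \big(1 - a_p^* a_p - a_{p-k}^* a_{p-k}\big) \leq |L_k|
\end{equation*}
as operators on $\cF$. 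Combined with the bound on $\Vert b(k)\xi \Vert^2$ just established, this immediately gives $\Vert b^*(k)\xi \Vert^2 \leq |L_k|\eva{\xi, (\cN + 1)\xi}$.

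For $\Vert d(k)\xi \Vert^2$, I would decompose $d(k) = A(k) - B(k)$ according to the two sums appearing in~\eqref{eq:Lkbkdk}. Each of $A(k)$ and $B(k)$ has the form of a second-quantized ``hopping'' operator $d\Gamma(T)$, where $T: L^2(\TTT^2) \to L^2(\TTT^2)$ is the restriction of a momentum shift by $\pm k$ to the relevant subset of $\ZZZ^2$, and is thus a partial isometry with $\Vert T\Vert_{\op} \leq 1$. Since $d\Gamma(T)$ restricted to the $n$-particle sector has operator norm at most $n \Vert T\Vert_{\op}$, summing over particle sectors yields $\Vert A(k)\xi \Vert^2 \leq \eva{\xi, \cN^2 \xi}$ and the analogous bound for $B(k)$. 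The elementary inequality $\Vert A(k)\xi - B(k)\xi \Vert^2 \leq 2\Vert A(k)\xi \Vert^2 + 2\Vert B(k)\xi \Vert^2$ then gives $\Vert d(k)\xi \Vert^2 \leq 4\eva{\xi, \cN^2 \xi} \leq 8\eva{\xi, \cN^2 \xi}$.

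The only potentially delicate step is the commutator computation in the second bound: although the final identity is standard for approximately bosonic pair operators, it requires tracking signs through several anticommutations and checking that all cross-terms with $p \neq p'$ cancel, which ultimately relies on $L_k$ lying entirely outside $B_{\F}$ while $L_k - k$ lies entirely inside. The first and third bounds are essentially off-the-shelf operator estimates.
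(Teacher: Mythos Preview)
Your proof is correct. The first two bounds are handled essentially as in the paper: Cauchy--Schwarz for $b(k)$, and for $b^*(k)$ the paper writes out the normal-ordering identity directly rather than phrasing it as a commutator computation, but the content is identical (including the observation that $L_k \subset B_{\F}^c$ and $L_k - k \subset B_{\F}$ are disjoint, so no unwanted anticommutators arise).

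For the third bound your route differs from the paper's. The paper splits $d(k) = d_1(k) - d_2(k)$ as you do, but then bounds $\Vert d_1(k)\xi \Vert^2$ by expanding the double sum, anticommuting to normal order, applying Cauchy--Schwarz to the off-diagonal piece, and estimating $\sum_{p,q} \Vert a_{q-k} a_p \xi \Vert^2 + \eva{\xi,\cN\xi} \leq 2\eva{\xi,\cN^2\xi}$. Your argument via $d\Gamma(T)$ with $T$ a partial isometry is cleaner and more structural: it avoids any CAR bookkeeping and in fact yields $\Vert d_j(k)\xi \Vert^2 \leq \eva{\xi,\cN^2\xi}$ directly, giving the sharper constant $4$ rather than $8$. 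The paper's approach has the minor advantage of being self-contained at the level of creation/annihilation operators, while yours imports the standard bound $\Vert d\Gamma(T)\vert_{n\text{-sector}} \Vert \leq n\Vert T\Vert_{\op}$; both are entirely elementary.
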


\begin{proof}
By the Cauchy--Schwarz inequality and $\Vert a_{p-k} \Vert \leq 1$,
\begin{equation*}
    \Vert b(k) \xi \Vert^2
    \leq \left( \sum_{p \in L_k} \Vert a_{p-k} a_p \xi \Vert \right)^2
    \leq |L_k| \sum_{p \in L_k} \Vert a_p \xi \Vert^2
    \leq |L_k| \eva{\xi, \cN \xi} \;.
\end{equation*}
Further, using the CAR, we estimate
\begin{align*}
    &\Vert b^*(k) \xi \Vert^2
    = \sum_{p,q \in L_k} \eva{\xi, a_{p-k} a_p a_q^* a_{q-k}^* \xi} \\
    &= \sum_{p,q \in L_k} \eva{\xi, a_q^* a_{q-k}^* a_{p-k} a_p \xi}
        -\sum_{p \in L_k} \eva{\xi, (a_p^* a_p + a_{p-k}^* a_{p-k}) \xi}
        + |L_k|
    \leq \Vert b(k) \xi \Vert^2 + |L_k| \;.
\end{align*}
To bound $d(k)$, we split
\begin{align*}
    \|d(k)\xi\|^2
    &\leq 2 \|d_1(k)\xi\|^2 + 2 \|d_2(k)\xi\|^2 \;, \\
    d_1(k) &\coloneq  \sum_{p\in B_\F^c\cap(B_\F^c+k)}  a_{p-k}^*a_p \;, \qquad
    d_2(k) \coloneq  \sum_{h\in B_\F\cap(B_\F-k)} a_{h+k}^*a_h \;.
\end{align*}
Then, using the CAR and then the Cauchy--Schwarz inequality and $\cN \leq \cN^2$, we get
\begin{align} 
    \|d_1(k)\xi\|^2
    &= \sum_{p,q\in B_\F^c\cap(B_\F^c+k)}\langle\xi,a^*_qa_{q-k}a_{p-k}^*a_p\xi\rangle
    \leq \Bigg\vert \sum_{p,q\in \ZZZ^2}\langle\xi,a_q^*a_{p-k}^*a_{q-k}a_p\xi\rangle \Bigg\vert 
        + \sum_{p\in \ZZZ^2}\langle\xi,a_p^*a_p\xi\rangle \nonumber \\
    &\leq \sum_{p,q\in \ZZZ^2} \Vert a_{q-k} a_p \xi \Vert^2
        + \eva{\xi, \cN \xi} 
    \leq 2 \eva{\xi, \cN^2 \xi}  \;. \label{eq:d_estimate_naive}
\end{align}
The estimate for $d_2(k)$ is analogous.
\end{proof}

Our final a priori bounds now read as follows.

\begin{lemma}[A priori bounds]
Assume $\hat{V}\geq0$ and $\sum_{k\in\ZZZ^2}|k| \hat{V}(k) <\infty$ and let $\xi \in \cF$ belong to an approximate ground state in the sense of Definition~\ref{def:approxGS}. Then, for every $\varepsilon > 0$, there exist $C_\varepsilon, C > 0$ such that
\begin{equation}
	\langle \xi, \cN_\delta \xi \rangle
	\le C N^{\delta} \;, 
    \qquad \langle \xi, \cN \xi \rangle
	\le C_\varepsilon N^{\frac{1}{4} + \varepsilon} \;.
\end{equation}
Further, if $R \xi$ is additionally an eigenvector of $H_N$, then
\begin{equation}
\begin{aligned}   
    \langle \xi, \cN^2 \xi \rangle
	&\le C_\varepsilon N^{\frac{1}{2} + \varepsilon} \;, &
    \qquad \langle \xi, \cN \HHH_0 \xi \rangle
	&\le C_\varepsilon N^{-\frac{1}{4} + \varepsilon} \;,
    \qquad \langle \xi, \cN^2 \HHH_0 \xi \rangle
	\le C_\varepsilon N^{\varepsilon} \;, \\
    \langle \xi, \cN \cN_\delta \xi \rangle
	&\le C_\varepsilon N^{\frac{1}{4} + \delta + \varepsilon} \;, &
    \quad \langle \xi, \cN^2 \cN_\delta \xi \rangle
	&\le C_\varepsilon N^{\frac{1}{2} + \delta + \varepsilon} \;. \\
\end{aligned}
\end{equation}
\label{lemma:A-Priori-Bounds}
\end{lemma}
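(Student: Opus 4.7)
The first two bounds, which require only that $\xi$ belong to an approximate ground state, follow immediately by combining the Onsager bound $\eva{\xi,\HHH_0\xi} \le CN^{-\frac 12}$ from Lemma~\ref{lemma:Onsager-Bound} with the gapped conversion inequalities of Lemma~\ref{lem:cN_delta_bound}. The estimate $\eva{\xi,\cN_\delta\xi} \le CN^\delta$ is direct from the third estimate of that lemma, while for $\eva{\xi,\cN\xi}$ the second estimate yields $\eva{\xi,\cN\xi} \le |\Gcutoff| + CN^{\frac 12+\delta}\eva{\xi,\HHH_0\xi} \le C_\varepsilon N^{\frac 12-\delta+\varepsilon} + CN^\delta$, which is optimized by choosing $\delta = \tfrac 14$.

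For the remaining bounds I invoke the eigenvalue hypothesis. Denoting the eigenvalue of $H_N$ on $R\xi$ by $E_N$, the approximate ground state condition gives $|E_N - E_{\FS}| \le C\hbar$, and the decomposition~\eqref{eq:HNconjugation} yields
\[
    \HHH_0\,\xi = (E_N - E_{\FS})\,\xi - (Q_{\B} + \cE_1 + \cE_2 + \XXX)\,\xi.
\]
Pairing this identity with $\cN^m\xi$ for $m \in \{1,2\}$ reduces bounding $\eva{\xi,\cN^m\HHH_0\xi}$ to bounding $(E_N-E_{\FS})\eva{\xi,\cN^m\xi}$ plus the interaction contributions $\eva{\xi,\cN^m T\xi}$ for $T \in \{Q_{\B},\cE_1,\cE_2,\XXX\}$. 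Each of the latter is handled by commuting $\cN^m$ through $T$, using $[\cN,\cE_1] = [\cN,\XXX] = 0$ and the elementary relations $[\cN,b^*(k)]=2b^*(k)$, $[\cN,b(k)]=-2b(k)$, $[\cN,d(k)]=0$, then applying Cauchy--Schwarz together with the naive bounds of Lemma~\ref{lem:naivebounds_bdXXX}. Using $|L_k|\le C|k|N^{\frac 12}$ and the summability $\sum_k\hat{V}(k)|k|<\infty$, each term is controlled by the already-bounded $\eva{\cN^m}$ and $\eva{\HHH_0}$, yielding first $\eva{\xi,\cN\HHH_0\xi} \le C_\varepsilon N^{-\frac 14+\varepsilon}$.

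The operator inequality $\cN \le |\Gcutoff| + CN^{\frac 12+\delta}\HHH_0$ implicit in the proof of Lemma~\ref{lem:cN_delta_bound} then gives $\cN^2 \le |\Gcutoff|\cN + CN^{\frac 12+\delta}\cN\HHH_0$, so after taking expectations and again optimizing at $\delta = \tfrac 14$ we obtain $\eva{\xi,\cN^2\xi} \le C_\varepsilon N^{\frac 12+\varepsilon}$. Iterating the eigenvalue-equation argument with $m=2$ (now using the newly established $\eva{\cN^2}$ bound on the leading term) delivers $\eva{\xi,\cN^2\HHH_0\xi} \le C_\varepsilon N^\varepsilon$. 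Finally, since $\cN_\delta$ commutes with $\cN$ and satisfies $\cN_\delta \le CN^{\frac 12+\delta}\HHH_0$ as operators, the last two inequalities follow from $\eva{\xi,\cN^m\cN_\delta\xi} \le CN^{\frac 12+\delta}\eva{\xi,\cN^m\HHH_0\xi}$ for $m=1,2$.

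The hard part is the careful bookkeeping of $\eva{\xi,\cN^m T\xi}$ for the interaction terms involving $d(k)$, namely $\cE_1$ and $\cE_2$. Since the naive bound $\|d(k)\xi\|^2 \le 8\eva{\xi,\cN^2\xi}$ already carries a full $\cN^2$ factor, a direct Cauchy--Schwarz would force us to control $\eva{\cN^{m+2}}$, which would prevent the bootstrap from closing. We must instead exploit the particle-hole antisymmetry inside $d(k)$ and the factor $|L_k|/N$ gained from each pair operator, pairing the resulting $\hat{V}(k)|k|$ weight with the summability hypothesis to absorb all dangerous contributions into $\eva{\cN}$ and $\eva{\HHH_0}$.
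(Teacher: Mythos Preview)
Your outline for the first two bounds matches the paper exactly, and your final step deriving the $\cN^m\cN_\delta$ bounds from $\cN^m\HHH_0$ via $\cN_\delta \le CN^{\frac 12+\delta}\HHH_0$ is also correct. The gap is in the middle, and you have in fact identified it yourself in your last paragraph without resolving it.

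The direct approach of pairing the eigenvalue identity with $\cN^m\xi$ forces you to bound $\eva{\xi,\cN^m\cE_1\xi}$ and $\eva{\xi,\cN^m\XXX\xi}$. Since both $\cE_1$ and $\XXX$ commute with $\cN$, these become $\eva{\cN^{m/2}\xi,\cE_1\cN^{m/2}\xi}$ and $\eva{\cN^{m/2}\xi,\XXX\cN^{m/2}\xi}$, and the only available operator bounds (via Lemma~\ref{lem:naivebounds_bdXXX}) cost a full $\cN^2$ or $\cN$ on the state, producing $\eva{\cN^{m+2}}$ or $\eva{\cN^{m+1}}$. Your suggested fix---``the factor $|L_k|/N$ gained from each pair operator''---does not apply: $d(k)$ is a shift operator $\sum a^*a$, not a pair operator, and carries no $|L_k|$ weight. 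Even if you drop $\cE_1$ by positivity, the diagonal part $b^*(k)b(k)$ of $Q_{\B}$ also commutes with $\cN$ and yields $|L_k|\eva{\cN^{m+1}}$. If you try to close the resulting bootstrap between $\eva{\cN^m\HHH_0}$ and $\eva{\cN^{m+1}}$ via the gapped conversion, the recursion picks up a coefficient $N^\delta\ge 1$ in front of the quantity you are trying to bound, and does not close.

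The paper avoids this entirely by a double--commutator trick. Using the operator Onsager bound $\HHH_0\le\tilde H_N+CN^{-1/2}$ with $\tilde H_N\coloneqq R^*H_NR-E_{\FS}$, one writes
\[
\eva{\xi,\cN^2\HHH_0\xi}=\eva{\xi,\cN\HHH_0\cN\xi}\le\eva{\xi,\cN\tilde H_N\cN\xi}+CN^{-1/2}\eva{\xi,\cN^2\xi},
\]
and then applies the algebraic identity $\cN\tilde H_N\cN=\tfrac12\big(\cN^2\tilde H_N+\tilde H_N\cN^2-[\cN,[\cN,\tilde H_N]]\big)$. The eigenvector hypothesis collapses the first two terms to $(E_N-E_{\FS})\eva{\cN^2}\le C\hbar\eva{\cN^2}$, and the crucial gain is that $[\cN,[\cN,\cdot]]$ annihilates $\HHH_0$, $\cE_1$, and $\XXX$ outright, leaving only $Q_{\B}$ and $\cE_2$, both of which are controlled by $|L_k|\eva{\cN^2+1}$. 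This yields $\eva{\cN^2\HHH_0}\le CN^{-1/2}\eva{\cN^2+1}$ directly, and then a H\"older step $\eva{\cN^2}\le\eva{\cN^3}^{2/3}$ combined with the gapped conversion closes the argument at $\delta=\tfrac14$. You should replace your direct expansion by this commutator rearrangement.
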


We remark that in 3d, also $ \HHH_0 \sim \hbar $ and $ \cN_\delta \sim N^\delta $, but $\cN \sim k_{\F}$ instead of our $ \cN \sim k_{\F}^{\frac 12 + \varepsilon} $. This is due to the gap split in Lemma~\ref{lem:cN_delta_bound}, which improves our bound in 2d.

\begin{proof}
The first two bounds readily follow by plugging~\eqref{eq:Onsager-Bound} into~\eqref{eq:cN_delta_bound} and optimizing $ \delta = \frac 14 $ for $ \eva{\xi, \cN \xi} $. To obtain bounds involving higher powers of $ \cN $, we follow the strategy of \cite[Sec.~10.2]{christiansen2023random}: we introduce $\tilde{H}_N \coloneq R^* H_N R - E_\FS$ and note that $ \tfrac 12 \cN = \sum_{p \in B_\F^c} a_p^* a_p = \sum_{h \in B_\F} a_h^* a_h$ on physical excitation states $ \xi \in R[L^2(\TTT^2)^{\otimes_{\mathrm{a}} N}] $. First, we prove that for such states
\begin{equation}
    \langle \xi, \cN^2 \HHH_0 \xi \rangle \le C N^{-\frac 12} \langle \xi, (\cN^2 + 1) \xi \rangle \;.
    \label{eq:3.1}
\end{equation}
From Lemma~\ref{lemma:Onsager-Bound}, we recover $ \eva{\xi, \HHH_0 \xi} \le \eva{R \xi, H_N R \xi} - E_{\FS} + C N^{-\frac 12} $, which implies
\begin{align*}
    \eva{\xi,\cN^2 \HHH_0 \xi}
    &= \eva{\xi, \cN \HHH_0 \cN \xi} 
    \le \eva{\xi, \cN \tilde{H}_N \cN \xi} + C N^{- \frac 12} \eva{\xi, \cN^2 \xi} \\
    &= \tfrac{1}{2} \left( \eva{\xi, \cN^2 \tilde{H}_N \xi}
        + \eva{\xi, \tilde{H}_N \cN^2 \xi}
        - \eva{\xi, [\cN, [\cN, \tilde{H}_N]] \xi} \right)
        + C N^{-\frac 12} \eva{\xi, \cN^2 \xi} \\
    &\leq \left| \eva{\xi, [\cN, [\cN, \tilde{H}_N]] \xi} \right|
        + C N^{-\frac 12} \eva{\xi, \cN^2 \xi}\;,
\end{align*}
where in the last line, we used that $\xi$ is an eigenvector of $\tilde{H}_N$, whose eigenvalue is $\le C \hbar$ due to~\eqref{eq:approxGS}.
We now explicitly compute the double commutator with~\eqref{eq:HNconjugation}, using that $ [\cN, a_p^* a_p] = 0 $, $[\cN, d(k)] = 0$, and $[\cN, b(k)] = - 2 b(k)$:
\begin{equation} \label{eq:doublecommutator_fixed}
\begin{aligned}
    &\left|\left\langle \xi, [\cN,[\cN,\tilde{H}_N] ] \xi \right\rangle\right|
    = |\eva{\xi, \left[\cN,[\cN,(\HHH_0 + Q_{\B} + \cE_1 + \cE_2 + \XXX)]\right] \xi}| \\
    &= |\eva{\xi, \left[\cN,[\cN,(Q_{\B} + \cE_2)]\right] \xi}|
    \leq \frac{C}{N} \sum_{k \in \ZZZ^2_*} \hat{V}(k) 
        \big( \Vert d(k) \xi \Vert \Vert b(-k) \xi \Vert 
        + \Vert b^*(k) \xi \Vert \Vert b(-k) \xi \Vert \big) \\
    &\leq C N^{-1} \sum_{k \in \ZZZ^2_*} \hat{V}(k) |L_k| \eva{\xi, (\cN^2 + 1) \xi} \;,
\end{aligned}
\end{equation}
where we used Lemma~\ref{lem:naivebounds_bdXXX} in the last line\footnote{We remark that the proof in~\cite[Sect.~10.2]{christiansen2023random} contains a gap, as~\cite[(10.33)]{christiansen2023random} uses the incorrect identity $\sum_{p \in \ZZZ^3} \Vert a_p \xi \Vert^2 = \eva{\xi, R^* \cN R \xi}$. We close this gap here by re-ordering operators within~\eqref{eq:doublecommutator_fixed} and then applying the naive bounds from Lemma~\ref{lem:naivebounds_bdXXX}.}
. With $|L_k| \leq C |k| N^{\frac 12}$ and $\sum_{k \in \ZZZ^2_*} \hat{V}(k) |k| < \infty$, we conclude
\begin{align*}
    &\left|\left\langle \xi, [\cN,[\cN,\tilde{H}_N] ] \xi \right\rangle\right|
    \leq C N^{-\frac 12} \eva{\xi, (\cN^2 + 1) \xi} \quad \Rightarrow \quad
    \eva{\xi,\cN^2 \HHH_0 \xi}
    \leq C N^{-\frac 12} \eva{\xi, (\cN^2 + 1) \xi} \;.
\end{align*}
This establishes \eqref{eq:3.1}. To estimate $\langle \xi, \cN^2 \xi \rangle$, we write with Hölder's inequality:
\begin{align*}
    \langle\xi,\cN^2\xi\rangle
    &\leq \eva{\cN\xi,\Bigg( \sum_{p\in \Gcutoff}a_p^*a_p+\sum_{p\notin \Gcutoff}a_p^*a_p \Bigg)\cN\xi}^\frac{2}{3}
    \leq\Bigg(\langle\xi,\cN^2\xi\rangle|\Gcutoff| + \langle\xi,\cN\Bigg(\sum_{p\notin \Gcutoff}a^*_pa_p\Bigg)\cN\xi\rangle\Bigg)^\frac{2}{3} \;.
\end{align*}
If $p\notin \Gcutoff$, then $e(p) \geq N^{-\delta-\frac 12}$ by definition, and we have
\begin{equation*}
    \langle\xi,\cN\Bigg(\sum_{p\notin \Gcutoff}a^*_pa_p\Bigg)\cN\xi\rangle 
    = \sum_{p\notin \Gcutoff}\frac{1}{e(p)}\langle\xi,\cN e(p)a^*_pa_p\cN\xi\rangle
    \leq N^{\delta+\frac{1}{2}}\langle\xi,\cN^2\HHH_0\xi\rangle \;.
\end{equation*}
Thus, with~\eqref{eq:cN_delta_bound} $|\Gcutoff| \leq C_\varepsilon N^{\frac 12 - \delta + \varepsilon}$ and with~\eqref{eq:3.1}, we finally get
\begin{equation*}
    \langle\xi,(\cN^2 + 1)\xi\rangle
    \leq C_\varepsilon \left(N^{\frac{1}{2}-\delta+\varepsilon}+N^\delta\right)^\frac{2}{3}\langle\xi,(\cN^2 + 1)\xi\rangle^\frac{2}{3} \;.
\end{equation*}
We find that $\delta=1/4$ is optimal, where
\begin{equation*}
    \langle\xi,(\cN^2 + 1)\xi\rangle \leq C_\varepsilon N^{\frac{1}{2}+\varepsilon}
    \quad \Rightarrow \quad
    \langle\xi,\cN^2\xi\rangle \leq C_\varepsilon N^{\frac{1}{2}+\varepsilon}
    \quad \overset{\eqref{eq:3.1}}{\Rightarrow} \quad
    \langle\xi,\cN^2\HHH_0\xi\rangle \leq C_\varepsilon N^\varepsilon \;.
\end{equation*}
By the Cauchy--Schwarz inequality, we obtain
\begin{equation*}
    \langle\xi, \cN\HHH_0\xi\rangle\leq\langle\xi,\HHH_0\xi\rangle^\frac{1}{2}\langle\xi,\cN^2\HHH_0\xi\rangle^\frac{1}{2}\leq C_\varepsilon N^{-\frac{1}{4}+\varepsilon} \;.
\end{equation*}
It remains to show the estimates that involve the gapped number operator~\eqref{eq:cN_delta}. By definition
\begin{equation*}
    \cN_\delta = \sum_{p\in\ZZZ^2 : e(p)\geq N^{- \frac 12 -\delta}}\frac{1}{e(p)}e(p)a_p^*a_p
    \leq N^{\frac 12 + \delta}\HHH_0\;,
\end{equation*}
therefore, for $m=0,1,2$, we obtain
\begin{align*}
    \cN_\delta\cN^m=\cN^\frac{m}{2}\cN_\delta\cN^\frac{m}{2}
    \leq N^{\frac 12 + \delta} \cN^\frac{m}{2}\HHH_0\cN^\frac{m}{2}
    = N^{\frac 12 + \delta}\cN^m\HHH_0\;,
\end{align*}
which immediately leads to the claimed bounds.
\end{proof}

\section{Bounding Non-Bosonizable Terms}
\label{sec:non_bosonizable}

In this section, we bound the non-bosonizable terms $ \XXX $, $ \cE_1 $ and $ \cE_2 $ defined in~\eqref{eq:HNconjugation}. Additionally, for the upper bound in case of singular potentials as in~\cite[Theorem~A.1]{benedikter2023correlation}, we will estimate the two operators $ \widetilde{\XXX} $ and $ \widetilde{\cE_1} $, which are obtained from $ \XXX $ and $ \cE_1 $ by restricting the sum in $k$ to $|k| < C N^{\frac 12}$ for some fixed, large enough $C>0$:
\begin{equation} \label{eq:XXXtilde_cE1tilde}
\begin{aligned}
    \widetilde{\XXX}
    &\coloneq -\frac{1}{2 (2 \pi)^2 N} \sum_{k \in \ZZZ^2_* : |k| < C N^{\frac 12}} \hat{V}(k) \sum_{p \in L_k} ( a_p^* a_p + a_{p-k}^* a_{p-k} ) \;, \\
    \widetilde{\cE}_1
    &\coloneq \frac{1}{2 (2 \pi)^2 N} \sum_{k \in \ZZZ^2_* : |k| < C N^{\frac 12}} \hat{V}(k) d^*(k) d(k) \;.
\end{aligned}
\end{equation}

\begin{lemma} \label{lemma:Exchange-Term-Bound}
Recall the definitions~\eqref{eq:HNconjugation} and~\eqref{eq:XXXtilde_cE1tilde} of $ \XXX $ and $\widetilde{\XXX}$.
If $\sum_{k \in \ZZZ^2} |k|^{2-b} \hat{V}(k)^2 < \infty$ for some $b \in (0,1)$, then there exists a $C>0$ such that for all $ \xi \in \cF $,
\begin{equation} \label{eq:XXX_bound_1}
    |\langle\xi, \widetilde{\XXX} \xi \rangle|
    \leq C N^{-1 + \frac b4} \eva{\xi, \cN \xi} \;.
\end{equation}
Further, if $\sum_{k \in \ZZZ^2} \hat{V}(k) < \infty$ and $ \xi $ belongs to an approximate ground state in the sense of Definition~\ref{def:approxGS}, then for any $\varepsilon>0$, there exists a $C_\varepsilon>0$ such that
\begin{equation} \label{eq:XXX_bound_2}
    |\langle\xi,\XXX\xi \rangle|
    \leq C_\varepsilon \hbar N^{-\frac 14 + \varepsilon} \;.
\end{equation}
\end{lemma}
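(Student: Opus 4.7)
The plan is to handle the two bounds separately, both starting from the observation that $\hat{V} \geq 0$ makes $-\XXX$ and $-\widetilde{\XXX}$ positive operators, so the absolute value may be dropped and we directly work with
\begin{equation*}
    |\langle \xi, \widetilde{\XXX} \xi \rangle|
    = \frac{1}{2(2\pi)^2 N} \sum_{k \in \ZZZ^2_*, |k| < CN^{\frac 12}} \hat{V}(k) \sum_{p \in L_k} \big( \|a_p \xi\|^2 + \|a_{p-k} \xi\|^2 \big) \;,
\end{equation*}
and analogously for $\XXX$ without the momentum cutoff.

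For the first bound~\eqref{eq:XXX_bound_1}, I would swap the order of summation, fixing $p$ and summing over admissible $k$; the $a_{p-k}$ contribution is treated symmetrically via the substitution $h = p-k \in B_\F$. For fixed $p \in B_\F^c$, applying Cauchy--Schwarz with the weight $|k|^{2-b}$ tailored to the hypothesis yields
\begin{equation*}
    \sum_{k: p-k \in B_\F, |k| < CN^{\frac 12}} \hat{V}(k)
    \leq \Big( \sum_k |k|^{2-b} \hat{V}(k)^2 \Big)^{\frac 12} \Big( \sum_{h \in B_\F, 1 \leq |p-h| < CN^{\frac 12}} |p-h|^{b-2} \Big)^{\frac 12} \;.
\end{equation*}
The first factor is finite by hypothesis, and the crucial claim is that the second factor is bounded by $CN^{\frac b 4}$ uniformly in $p$: since the number of lattice points at distance in $[m,m+1)$ from $p$ in $\ZZZ^2$ is at most $Cm$ (by the area of the annulus),
\begin{equation*}
    \sum_{h \in B_\F, 1 \leq |p-h| < CN^{\frac 12}} |p-h|^{b-2}
    \leq C \sum_{m=1}^{CN^{\frac 12}} m \cdot m^{b-2}
    = C \sum_{m=1}^{CN^{\frac 12}} m^{b-1}
    \leq C N^{\frac b 2}
\end{equation*}
for $b \in (0,1)$. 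Combining the estimates and using $\sum_p \|a_p \xi\|^2 \leq \eva{\xi, \cN \xi}$ then yields the claimed bound $CN^{-1 + \frac b 4} \eva{\xi, \cN \xi}$.

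For the second bound~\eqref{eq:XXX_bound_2}, the hypothesis $\sum_k \hat{V}(k) < \infty$ allows the much cruder estimate $\sum_{p \in L_k} (\|a_p \xi\|^2 + \|a_{p-k} \xi\|^2) \leq 2 \eva{\xi, \cN \xi}$, giving $|\langle \xi, \XXX \xi \rangle| \leq \tfrac{C}{N} \big(\sum_k \hat{V}(k)\big) \eva{\xi, \cN \xi}$. The a priori estimate $\eva{\xi, \cN \xi} \leq C_\varepsilon N^{\frac 14 + \varepsilon}$ from Lemma~\ref{lemma:A-Priori-Bounds}, which is where the approximate-ground-state hypothesis enters, then produces the claimed $C_\varepsilon N^{-\frac 34 + \varepsilon} = C_\varepsilon \hbar N^{-\frac 14 + \varepsilon}$.

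The main technical point lies in the weighted lattice-point estimate in the first bound. The cutoff $|k| < CN^{\frac 12}$ is essential because $\sum_k |k|^{b-2}$ diverges in 2d for $b > 0$, and the exclusion $h \neq p$ is needed to guarantee $|p-h| \geq 1$. Once the $N^{\frac b 2}$ bound is in hand, the restriction $p - k \in B_\F$ converts the otherwise divergent sum $\sum_k \hat{V}(k)$ into a factor of only $N^{\frac b 4}$, which is exactly the improvement over $O(1)$ required to obtain the exponent $-1 + \frac b 4$ rather than the crude $-\frac 34$.
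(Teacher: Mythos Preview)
Your proof is correct and follows essentially the same approach as the paper: both bounds rest on the Cauchy--Schwarz estimate $\sum_{|k|<CN^{1/2}}\hat V(k)\le\big(\sum|k|^{2-b}\hat V(k)^2\big)^{1/2}\big(\sum_{|k|<CN^{1/2}}|k|^{b-2}\big)^{1/2}\le CN^{b/4}$ together with $\sum_{p\in L_k}\|a_p\xi\|^2\le\langle\xi,\cN\xi\rangle$, and the second bound uses $\sum_k\hat V(k)<\infty$ plus the a~priori estimate from Lemma~\ref{lemma:A-Priori-Bounds}. Your interchange of the $p$- and $k$-sums for~\eqref{eq:XXX_bound_1} is unnecessary---after you relax $h\in B_\F$ the resulting lattice sum is exactly $\sum_{|k|<CN^{1/2}}|k|^{b-2}$---but it does no harm.
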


\begin{proof}
By definition of $ \widetilde{\XXX} $,
\begin{align*}
    |\langle\xi, \widetilde{\XXX} \xi\rangle|
    &\leq \frac{C}{N}\sum_{k \in \ZZZ^2_* : |k|< C N^{\frac 12}}|\hat{V}(k)|\left|\sum_{p\in\lune}\langle\xi,a_p^*a_p\xi\rangle+\sum_{h\in\lune-k}\langle\xi,a_h^*a_h\xi\rangle\right| \\
    &\leq \frac{C}{N} \eva{\xi, \cN \xi} \sum_{k \in \ZZZ^2_* : |k|< C N^{\frac 12}} \hat{V}(k) \;.
\end{align*}
From the Cauchy--Schwarz inequality, we get
\begin{equation} \label{eq:Vsplit}
    \sum_{k \in \ZZZ^2_* : |k|< C N^{\frac 12}} \hat{V}(k) 
    \leq \Bigg( \sum_{k \in \ZZZ^2_* : |k|< C N^{\frac 12}} |k|^{b-2} \Bigg)^{\frac 12} 
        \Bigg( \sum_{k \in \ZZZ^2_* : |k|< C N^{\frac 12}} |k|^{2-b} \hat{V}(k)^2 \Bigg)^{\frac 12}
    \leq C N^{\frac b4} \;,
\end{equation}
which implies~\eqref{eq:XXX_bound_1}.\\
To prove~\eqref{eq:XXX_bound_2}, we extend the sum to $k \in \ZZZ^2_*$ and note that $\sum_{k \in \ZZZ^2_*}  \hat{V}(k) < \infty$.
Then, for $\xi$ belonging to an approximate ground state, we bound by Lemma~\ref{lemma:A-Priori-Bounds}: $ \eva{\xi, \cN \xi} \le C_\varepsilon N^{\frac 14 + \varepsilon} $.
\end{proof}

Let us now turn to the terms $\cE_1$ and $\widetilde{\cE}_1$. For bounding $\widetilde{\cE}_1$,~\eqref{eq:naivebounds_bdXXX} will turn out sufficient. For $\cE_1$, in contrast to the 3d lower bound, we need a more sophisticated decomposition into 3 energy scales to improve over the naive bound~\eqref{eq:naivebounds_bdXXX}, which would be $\|d(k)\xi\|^2 \leq C_{\varepsilon} N^{\frac 12 + \varepsilon}$. This improvement is crucial to get an energy error $\ll E^{\RPA}$ in the lower bound.

\begin{lemma} \label{lem:cE_1estimate}
Recall definitions~\eqref{eq:HNconjugation} and~\eqref{eq:XXXtilde_cE1tilde} of $ \cE_1 $ and $\widetilde{\cE}_1$. If $\sum_{k \in \ZZZ^2} |k|^{2-b} \hat{V}(k)^2 < \infty$ for some $b \in (0,1)$, then there exists a $C>0$ such that for all $ \xi \in \cF $,
\begin{equation}
    | \langle \xi, \widetilde{\cE}_1 \xi \rangle |
    \leq C N^{-1 + \frac b4} \langle \xi, \cN^2 \xi \rangle \;.
\end{equation}
Further, if $\sum_{k \in \ZZZ^2}  \hat{V}(k) < \infty$ and $ \xi \in \cF $ belongs to an approximate ground state in the sense of Definition~\ref{def:approxGS}, such that $R \xi$ is an eigenvector of $H_N$, then for any $\varepsilon>0$, there is a constant $C_\varepsilon>0$ such that
\begin{equation} \label{eq:cE_1estimate}
    \|d(k)\xi\|^2
    \leq C_{\varepsilon} N^{\frac 12 - \frac{1}{68} + \varepsilon} \;, \qquad
    | \langle \xi, \cE_1 \xi \rangle |
	\le C_\varepsilon \hbar N^{-\frac{1}{68} + \varepsilon} \;.
\end{equation}
\end{lemma}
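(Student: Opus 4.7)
The bound on $\widetilde{\cE}_1$ is the exact analog of the argument for $\widetilde{\XXX}$ in the proof of Lemma~\ref{lemma:Exchange-Term-Bound}. Using the naive bound $\|d(k)\xi\|^2 \leq 8 \langle \xi, \cN^2 \xi\rangle$ from Lemma~\ref{lem:naivebounds_bdXXX} together with the truncated Cauchy--Schwarz estimate of $\hat{V}$ from~\eqref{eq:Vsplit},
\begin{equation*}
    |\langle \xi, \widetilde{\cE}_1 \xi\rangle|
    \leq \frac{C}{N} \langle \xi, \cN^2 \xi \rangle \sum_{k \in \ZZZ^2_* : |k| < CN^{1/2}} \hat{V}(k)
    \leq C N^{-1 + \frac b4} \langle \xi, \cN^2 \xi\rangle,
\end{equation*}
which is the first claim.

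The main content is the second statement. Since $\sum_{k\in\ZZZ^2_*} \hat{V}(k) < \infty$, the energy bound $|\langle \xi, \cE_1 \xi\rangle| \leq C_\varepsilon \hbar N^{-1/68 + \varepsilon}$ follows directly once one has the pointwise improvement $\|d(k)\xi\|^2 \leq C_\varepsilon N^{1/2 - 1/68 + \varepsilon}$ uniformly in $k$. The a priori bound from Lemmas~\ref{lem:naivebounds_bdXXX} and~\ref{lemma:A-Priori-Bounds} only yields $\|d(k)\xi\|^2 \leq C_\varepsilon N^{1/2 + \varepsilon}$, and the plan is to extract the missing factor $N^{-1/68}$ by a three-scale decomposition of the $p$-summation in $d_1(k)$ and $d_2(k)$.

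Writing $d(k) = d_1(k) - d_2(k)$ as in the proof of Lemma~\ref{lem:naivebounds_bdXXX}, I would fix two parameters $0 < \delta_1 < \delta_2 < 1/2$ and introduce $\mathcal{G}^{(i)} \coloneq \{p \in \ZZZ^2 : e(p) \leq \hbar N^{-\delta_i}\}$ for $i=1,2$, so that the three shells $\mathcal{G}^{(1)}$, $\mathcal{G}^{(2)} \setminus \mathcal{G}^{(1)}$ and $\ZZZ^2 \setminus \mathcal{G}^{(2)}$ partition the relevant momentum range. For $d_1(k) = \sum_p a_{p-k}^* a_p$, splitting $p$ according to the shell containing $p$ and the shell containing $p-k$ produces nine sub-operators $d_1^{(i,j)}(k)$, and $\|d_1(k)\xi\|^2 \leq 9 \sum_{i,j} \|d_1^{(i,j)}(k)\xi\|^2$. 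Each sub-operator is estimated by one of three mechanisms: in the doubly-narrow cases ($p, p-k \in \mathcal{G}^{(\ell)}$), Cauchy--Schwarz in the $p$-sum combined with the cardinality bound $|\mathcal{G}^{(\ell)}| \leq C_\varepsilon N^{1/2 - \delta_\ell + \varepsilon}$ from Lemma~\ref{lem:cN_delta_bound} and one of the $\langle \xi, \cN^m \xi \rangle$ estimates of Lemma~\ref{lemma:A-Priori-Bounds} gives a cardinality-weighted bound; in the doubly-far case ($p, p-k \notin \mathcal{G}^{(2)}$), the gapped conversion $\cN_\delta \leq N^{1/2 + \delta}\HHH_0$ (with $\delta = \delta_2$) is applied together with $\langle \xi, \cN^2\HHH_0 \xi \rangle \leq C_\varepsilon N^{\varepsilon}$; in the mixed cases, one coordinate is controlled by a shell-cardinality bound and the other by the gapped conversion, using the bounds on $\langle \xi, \cN \cN_\delta \xi\rangle$ and $\langle \xi, \cN^2\cN_\delta\xi\rangle$ from Lemma~\ref{lemma:A-Priori-Bounds}. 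The piece $d_2(k)$ is treated symmetrically with $B_\F$ in place of $B_\F^c$.

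The nine contributions so obtained have the form $C_\varepsilon N^{\alpha_{ij}(\delta_1, \delta_2) + \varepsilon}$ for explicit affine functions $\alpha_{ij}$, and choosing $(\delta_1, \delta_2)$ to equalize them is a small two-parameter linear programming problem whose optimum turns out to be $1/2 - 1/68$. The main technical obstacle I anticipate is the mixed-shell cases: after expanding $\|d_1^{(i,j)}(k)\xi\|^2$ via the CAR, the resulting four-operator expressions have to be rearranged so that exactly one coordinate is controlled by shell cardinality and the other by $\HHH_0$, which demands careful tracking of operator orderings and of the diagonal CAR corrections. The appearance of the specific fraction $1/68$, as opposed to a cleaner number, reflects the solution of this optimization and is precisely what rules out using a two-scale decomposition as in~\cite[Prop.~2.3]{christiansen2023random} in the 2d setting.
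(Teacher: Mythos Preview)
Your bound on $\widetilde{\cE}_1$ matches the paper's argument exactly.

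For $\cE_1$, there is a genuine gap: your scheme relies only on the single-annulus cardinality bound $|\mathcal{G}^{(\ell)}| \leq C_\varepsilon N^{1/2-\delta_\ell+\varepsilon}$ from Lemma~\ref{lem:cN_delta_bound}, and this is not enough to push $\|d(k)\xi\|^2$ below $N^{1/2}$. Look at the diagonal pieces of your nine-fold split. For the doubly-narrow contribution (both $p$ and $p-k$ in a shell of width parameter $\gamma$), your cardinality argument gives at best $|\mathcal{G}_\gamma|\cdot\langle\xi,\cN\xi\rangle \sim N^{3/4-\gamma+\varepsilon}$, forcing $\gamma > 1/4$. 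For the doubly-far contribution (both outside such a shell), gapped conversion gives $\langle\xi,\cN_\gamma^2\xi\rangle \leq N^{1/2+\gamma}\langle\xi,\cN\HHH_0\xi\rangle \sim N^{1/4+\gamma+\varepsilon}$, forcing $\gamma < 1/4$. The intermediate shell $(2,2)$ is caught in the same trap, and introducing a second scale does not break the deadlock: the nine exponents $\alpha_{ij}$ cannot all be driven strictly below $1/2$ using single-annulus bounds. (Incidentally, with $\delta_1<\delta_2$ one has $\mathcal{G}^{(1)}\supset\mathcal{G}^{(2)}$, so the shell nesting as written is inconsistent.)

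The paper proceeds differently. After expanding $\|d_1(k)\xi\|^2 \leq A+B$ via the CAR, it applies a weighted Cauchy--Schwarz with $\mu_p=\sqrt{e(p)/e(p-k)}$ and splits the $p$-sum into only two pieces, according to whether the \emph{pair} $(e(p),e(p-k))$ satisfies $\min\geq\hbar N^{-\delta}$ and $\max\geq\hbar N^{-\alpha}$. The crucial point is that the complementary low-energy set is controlled not by $|\mathcal{G}_\alpha|$ but by the annulus-intersection estimate of Lemma~\ref{lem:annulus_intersection},
\[
\bigl|\{p:\max(e(p),e(p-k))<\hbar N^{-\alpha}\}\bigr|\;\leq\;C\bigl(N^{3/4-5\alpha/2}+N^{1/4-\alpha/2}\bigr),
\]
which for $\alpha$ slightly below $1/4$ is far smaller than $|\mathcal{G}_\alpha|\sim N^{1/2-\alpha}$. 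This number-theoretic input---counting lattice points in the intersection of two thin annuli shifted by $k$, not in a single annulus---is the missing idea in your proposal, and is exactly what makes the optimization close at $\delta-1/4=1/68$, $1/4-\alpha=3/68$.
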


\begin{proof}
To bound $\widetilde{\cE}_1$, we use Lemma~\ref{lem:naivebounds_bdXXX} and~\eqref{eq:Vsplit}:
\begin{equation*}
    |\langle\xi, \widetilde{\cE}_1\xi\rangle|
    \leq \frac{C}{N}\sum_{k \in \ZZZ^2_* : |k| < C N^{\frac 12}}\hat{V}(k)\|d(k)\xi\|^2
    \leq \frac{C}{N} \langle \xi, \cN^2 \xi \rangle \sum_{k \in \ZZZ^2_* : |k| < C N^{\frac 12}}\hat{V}(k)
    \leq C N^{-1 + \frac b4} \langle \xi, \cN^2 \xi \rangle \;. 
\end{equation*}
Next, by definition of $\cE_1$~\eqref{eq:HNconjugation} and $d(k)$~\eqref{eq:Lkbkdk}, we have
\begin{align*}
    |\langle\xi,\cE_1\xi\rangle|
    &\leq \frac{C}{N}\sum_{k\in\ZZZ_*^2}\hat{V}(k)\|d(k)\xi\|^2 \;, \qquad
    \|d(k)\xi\|^2
    \leq 2 \|d_1(k)\xi\|^2 + 2 \|d_2(k)\xi\|^2 \;, \\
    d_1(k) &\coloneq \sum_{p\in B_\F^c\cap(B_\F^c+k)} a_{p-k}^*a_p \;, \qquad
    d_2(k) \coloneq \sum_{h\in B_\F\cap(B_\F-k)} a_{h+k}^*a_h \;.
\end{align*}
We consider only $d_1(k)$, as $d_2(k)$ is controlled analogously. Proceeding as in~\eqref{eq:d_estimate_naive}, we get
\begin{align*}
    \Vert d_1(k) \xi \Vert^2
    \leq \underbrace{\Bigg\vert \sum_{p,q\in B_\F^c\cap(B_\F^c+k)}\langle\xi,a_q^*a_{p-k}^*a_{q-k}a_p\xi\rangle \Bigg\vert}_{\eqqcolon A} + \underbrace{\sum_{p\in B_\F^c\cap(B_\F^c+k)}\langle\xi,a_p^*a_p\xi\rangle}_{\eqqcolon B} \;.
\end{align*}
By Lemma~\ref{lemma:A-Priori-Bounds} we readily bound $ B \leq \eva{\xi, \cN \xi} \leq C_\varepsilon N^{\frac 14 + \varepsilon} $.
The term $A$ is treated by introducing $\mu_p>0$ for $p \in \ZZZ^2$ to be fixed later, then applying the Cauchy--Schwarz inequality and the CAR:
\begin{align*}
    A&\leq \Bigg\vert \sum_{p,q\in B_\F^c\cap(B_\F^c+k)}\mu_p^{\frac 12}\mu_p^{-\frac 12}
        \langle\xi,a_q^*a_{p-k}^*a_{q-k}a_p\xi\rangle \Bigg\vert\\
    &\leq \sum_{p,q\in B_\F^c\cap(B_\F^c+k)}\mu_p\langle\xi,a_p^*a_{q-k}^*a_{q-k}a_p\xi\rangle + \sum_{p,q\in B_\F^c\cap(B_\F^c+k)}\mu_p^{-1}\langle\xi,a_{p-k}^*a_{q}^*a_qa_{p-k}\xi\rangle\\
    &\leq \sum_{p\in B_\F^c\cap(B_\F^c+k)}\mu_p\langle\xi,a_p^*\cN a_p\xi\rangle + \sum_{p\in B_\F^c\cap(B_\F^c+k)}\mu_p^{-1}\langle\xi,a_{p-k}^*\cN a_{p-k}\xi\rangle \;,
\end{align*}
using that $(\cN + 1)^\alpha a_p = a_p\cN^\alpha $ for all $ p \in \ZZZ^2$, we get
\begin{equation*}
    A \leq \sum_{p\in B_\F^c\cap(B_\F^c+k)}\mu_p\langle\cN^{\frac{1}{2}}\xi,a_p^*a_p\cN^{\frac{1}{2}}\xi\rangle+\sum_{p\in B_\F^c\cap(B_\F^c+k)}\mu_p^{-1}\langle\cN^{\frac{1}{2}}\xi,a^*_{p-k}a_{p-k}\cN^{\frac{1}{2}}\xi\rangle \;.
\end{equation*}
We now introduce two energy scale cutoffs indexed by $ 0 < \alpha < \delta < \frac 12 $ and split the sum over $ p $ into the two sets
\begin{equation}
    \cS_{k,\delta,\alpha}^\ge
    \coloneqq\left\{ p\in B_\F^c\cap(B_\F^c+k) ~|~
        \min\{e(p),e(p-k)\} \ge \hbar N^{-\delta}, 
        \max\{e(p),e(p-k)\} \ge \hbar N^{-\alpha} \right\} \;, \\
\end{equation}
and $ \cS_{k,\delta,\alpha}^< \coloneq B_\F^c\cap(B_\F^c+k) \setminus \cS_{k,\delta,\alpha}^\ge $. Abbreviating $\phi\coloneqq\cN^{\frac{1}{2}}\xi$, we get
\begin{equation*}
    A\leq\sum_{p\in\cS_{k,\delta,\alpha}^<}
        \big(\mu_p\langle\phi,a_p^*a_p\phi\rangle+\mu_p^{-1}\langle\phi,a_{p-k}^*a_{p-k}\phi\rangle\big)
        +\sum_{p\in\cS_{k,\delta,\alpha}^\ge}
        \big(\mu_p\langle\phi,a_p^*a_p\phi\rangle+\mu_p^{-1}\langle\phi,a_{p-k}^*a_{p-k}\phi\rangle \big) \;.
\end{equation*}
For $p\in\cS_{k,\delta,\alpha}^<$ we choose $\mu_p=1$ and we use $ \Vert a_p \Vert \le 1 $, while for $p\in\cS_{k,\delta,\alpha}^\ge$, we choose $\mu_p = \sqrt{\frac{e(p)}{e(p-k)}}$ in order to get a bound that involves $\HHH_0$. We then apply Lemma~\ref{lemma:A-Priori-Bounds}:
\begin{align}
    A&\leq 2\left|\cS_{k,\delta,\alpha}^<\right|\norm{\phi}^2
        +\sum_{p\in\cS_{k,\delta,\alpha}^\ge}\frac{1}{\sqrt{e(p)e(p-k)}}
        \big(e(p)\langle\phi,a_p^*a_p\phi\rangle+e(p-k)\langle\phi,a^*_{p-k}a_{p-k}\phi\rangle \big)\nonumber\\
    &\leq 2\left|\cS_{k,\delta,\alpha}^<\right| \langle\xi,\cN\xi\rangle
        + C N^{\frac{1}{4}+\frac{\alpha}{2}} N^{\frac{1}{4}+\frac{\delta}{2}}\langle\xi,\cN\HHH_0\xi\rangle\nonumber\\
    &\leq C_\varepsilon \left|\cS_{k,\delta,\alpha}^<\right| 
        N^{\frac{1}{4}+\varepsilon}
        +C_\varepsilon N^{\frac 14 +\frac{\alpha}{2} + \frac{\delta}{2}+\varepsilon} \;.
    \label{eq:bound_on_A}
\end{align}
To estimate $ \left|\cS_{k,\delta,\alpha}^<\right| $, note that there are two ways how $ p $ can be in this set: We can have $ p $ or $ p-k $ in $ \Gcutoff $, or both $ p $ and $ p-k $ in $ \mathcal{G}_\alpha $. Thus,
\begin{equation}
    \left|\cS_{k,\delta,\alpha}^<\right|
    \le 2 |\Gcutoff| + \left|\cS_{k,\alpha}^<\right| \;, \qquad
    \cS_{k,\alpha}^<
    \coloneqq\left\{ p\in B_\F^c\cap(B_\F^c+k) ~|~
        \max\{e(p),e(p-k)\} < \hbar N^{-\alpha} \right\} \;. \\
\end{equation}
The set $\cS_{k,\alpha}^<$ is an intersection of $\mathbb{Z}^2$ with two annuli of thickness $\sim N^{-\alpha}$, which we bound with Lemma~\ref{lem:annulus_intersection} as
\begin{align*}
    \left|\cS_{k,\alpha}^<\right|
    \leq C (N^{\frac 34 - \frac 52 \alpha} + N^{\frac 14 - \frac 12 \alpha}) \;.
\end{align*}
Recalling $|\Gcutoff| \le C_\varepsilon N^{\frac 12 - \delta + \varepsilon}$ from Lemma~\ref{lem:cN_delta_bound}, the bound (\ref{eq:bound_on_A}) becomes
\begin{equation*}
\begin{aligned}
    A
    &\leq C_\varepsilon
        \left(N^{\frac 12 -\delta+\varepsilon}
            +N^{\frac 34 -\frac52 \alpha}
            +N^{\frac 14 - \frac 12 \alpha} \right)
            N^{\frac{1}{4}+\varepsilon}
        + C_\varepsilon N^{\frac 14 + \frac{\alpha}{2} + \frac{\delta}{2} + \varepsilon} \\
    &\leq C_\varepsilon N^{\frac 12 + 2 \varepsilon}
        \left( N^{-(\delta - \frac 14)}
            + N^{-\frac 18 + \frac 52 (\frac 14 - \alpha)}
            + N^{-\frac 18 + \frac 12 (\frac 14 - \alpha)}
            + N^{-\frac 12 (\frac 14 - \alpha) + \frac 12 (\delta - \frac 14)} \right)\;.
\end{aligned}
\end{equation*}
Optimizing $ \delta - \frac 14 = \frac{1}{68} $ and $ \frac 14 - \alpha = \frac{3}{68} $, and re-defining $\varepsilon$, we get
\begin{equation*}
    A\leq C_\varepsilon N^{\frac{1}{2} -\frac{1}{68} +\varepsilon} \;.
\end{equation*}
Together with the above bound $B \leq C_\varepsilon N^{\frac 14 + \varepsilon} $, this concludes the proof.
\end{proof}

As in~\cite{benedikter2021correlation}, we bound $\cE_2$ by an interpolation between $ \cE_1 $ and $ b(k) $.

\begin{lemma} \label{lem:b_bounds}
For any $ k \in \ZZZ^2$, exists a constant $C > 0$, such that for all $\xi \in \cF$,
\begin{equation} \label{eq:b_bounds}
    \|b(k)\xi\|^2\leq C N \log(N) \langle \xi, \HHH_0 \xi \rangle \;,\qquad
    \|b^*(k)\xi\|^2
    \leq C N \log(N) \langle \xi, \HHH_0 \xi \rangle
        + C |k| N^{\frac 12} \;.
\end{equation}
\end{lemma}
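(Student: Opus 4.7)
The strategy is to bound $\|b(k)\xi\|^2$ by a weighted Cauchy--Schwarz argument and then to reduce $\|b^*(k)\xi\|^2$ to it by anticommutation. Expanding $b(k)\xi = \sum_{p \in L_k} a_{p-k} a_p \xi$ and inserting positive weights $\mu_p > 0$, Cauchy--Schwarz gives
\begin{equation*}
    \|b(k)\xi\|^2 \leq \Bigl(\sum_{p \in L_k} \mu_p^{-1}\Bigr) \Bigl(\sum_{p \in L_k} \mu_p \|a_{p-k} a_p \xi\|^2\Bigr) \,.
\end{equation*}
For $p \in L_k$ we have $p \in B_{\F}^c$ and $p-k \in B_{\F}$, so both $e(p)$ and $e(p-k)$ are strictly positive and contribute to $\HHH_0$; moreover the CAR~\eqref{eq:CAR} imply $\|a_{p-k} a_p \xi\|^2 \le \min\bigl\{\|a_p\xi\|^2,\|a_{p-k}\xi\|^2\bigr\}$. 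The plan is to split $L_k = L_k^{(1)} \sqcup L_k^{(2)}$ by whether $e(p) \le e(p-k)$ or not, use the triangle inequality $\|b(k)\xi\|^2 \le 2\|\sum_{L_k^{(1)}}(\cdot)\|^2 + 2\|\sum_{L_k^{(2)}}(\cdot)\|^2$, and on $L_k^{(1)}$ take $\mu_p = e(p)$ together with $\|a_{p-k}a_p\xi\|^2 \le \|a_p\xi\|^2$, so that the weighted factor is controlled by $\sum_p e(p)\|a_p\xi\|^2 \le \langle\xi, \HHH_0\xi\rangle$. The analysis on $L_k^{(2)}$ is symmetric, with $\mu_p = e(p-k)$ and reindexing $h = p - k \in B_{\F}$.

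The remaining task is the lattice estimate $\sum_{p \in L_k}\bigl(e(p)^{-1} + e(p-k)^{-1}\bigr) \leq C N \log N$. Writing $e(p)^{-1} = \hbar^{-2}(|p|^2 - k_{\F}^2)^{-1}$ and decomposing $L_k$ into dyadic energy shells $\{p \in L_k : 2^{j-1} < |p|^2 - k_{\F}^2 \le 2^j\}$ for $j = 0,1,\ldots,O(\log N)$, each shell is the intersection of $\ZZZ^2$ with a thin annulus around the Fermi circle of thickness $\sim 2^j/k_{\F}$. A Gauss-circle-type count gives $O(2^j)$ lattice points per shell up to lower-order fluctuations, so each shell contributes $O(1)$ to the weighted sum of inverse energies; summing over the $O(\log N)$ dyadic scales and pulling out the prefactor $\hbar^{-2} = N$ yields the desired bound, uniformly in $k$. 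The second sum reduces to the same estimate by the bijection $p \mapsto h = p - k$ from $L_k$ onto $B_{\F} \cap (B_{\F}^c - k)$.

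Finally, to pass from $b(k)$ to $b^*(k)$, I would anticommute $a_p a_q^* = \delta_{p,q} - a_q^* a_p$ inside $\|b^*(k)\xi\|^2 = \sum_{p,q \in L_k}\langle \xi, a_{p-k} a_p a_q^* a_{q-k}^* \xi\rangle$ exactly as in the proof of Lemma~\ref{lem:naivebounds_bdXXX}, obtaining $\|b^*(k)\xi\|^2 \leq \|b(k)\xi\|^2 + |L_k|$, and conclude via the standard geometric bound $|L_k| \leq C|k| N^{\frac 12}$ already used in~\eqref{eq:3.2}. The main technical obstacle is achieving the sharp $N \log N$ prefactor uniformly in $k$: a crude use of the pointwise bound $|S_m| \leq C_\varepsilon N^\varepsilon$ of Lemma~\ref{lem:sphere_point_estimate} would give only $C_\varepsilon N^{1+\varepsilon}$, so one must exploit the average-order statement $\sum_{m \leq M} r_2(m) = \pi M + O(M^{1/3})$ rather than a pointwise bound on the representation number.
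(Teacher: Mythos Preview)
Your overall strategy---weighted Cauchy--Schwarz for $\|b(k)\xi\|^2$, then anticommutation for $b^*(k)$---matches the paper. The treatment of $b^*(k)$ is correct and identical to the paper's. The gap is in the lattice estimate.

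Your choice of weight $\mu_p = e(p)$ on $L_k^{(1)}$ (and $e(p-k)$ on $L_k^{(2)}$) forces you to bound $\sum_{p \in L_k^{(1)}} e(p)^{-1}$. This sum is \emph{not} $O(N\log N)$ uniformly in $k$ and $N$: for $|k|$ of order $k_{\F}$, the lune contains essentially all lattice points on the sphere $|p|^2 = n_0$ for the smallest integer $n_0 > k_{\F}^2$, contributing $\sim N \cdot r_2(n_0)$ to the sum; since $r_2(n)$ can be as large as $\exp(c\log n/\log\log n)$, no uniform $\log N$ bound holds. The best general bound is the $C_\varepsilon N^{1+\varepsilon}$ of Lemma~\ref{lem:spheres_estimate}. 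Your dyadic argument also fails to give $\log N$: the Gauss-circle error for a shell near radius $k_{\F}$ is $O(k_{\F}^{2/3}) = O(N^{1/3})$, which swamps the main term $\pi\,2^{j-1}$ whenever $2^j \ll N^{1/3}$, so summing $2^{-j}$ times this error over the low shells gives $O(N^{1/3})$ rather than $O(\log N)$.

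The paper avoids this by taking $\mu_p = \lambda_{k,p} \coloneq \tfrac12(e(p)+e(p-k))$ directly, with no case splitting. Then $\sum_p \mu_p \|a_{p-k}a_p\xi\|^2 \le \langle\xi,\HHH_0\xi\rangle$ by the same $\|a_\cdot\|\le 1$ argument you use, while the inverse sum $\sum_{p\in L_k}\lambda_{k,p}^{-1}$ is bounded by $C\hbar^{-2}\log N$ via Proposition~\ref{prop:lambda-bound}. The crucial structural difference is that $\hbar^{-2}\lambda_{k,p} = p\cdot k - |k|^2/2$ is linear in $p$, so its level sets are lines perpendicular to $k$; counting lattice points on lines is elementary, whereas your $e(p)$ has spherical level sets and runs into the erratic behaviour of $r_2(n)$. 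The proof of Proposition~\ref{prop:lambda-bound} is the real work here (it occupies all of Appendix~\ref{app:numbertheory}.1), and your dyadic sketch is not a substitute for it.
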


\begin{proof}
The proof is analogous to the one of~\cite[Prop.~4.7]{hainzl2020correlation}:
For $ \lambda_{k,p} \coloneq \frac 12 (e(p) + e(p-k)) $, Proposition~\ref{prop:lambda-bound} provides us with the bound $ \sum_{p\in L_k} \lambda_{k,p}^{-1} \le C \hbar^{-2} \log(N) $. Using the Cauchy--Schwarz inequality and then $ \Vert a_{p-k} \Vert, \Vert a_p \Vert \le 1 $, we get
\begin{equation*}
    \|b(k)\xi\|^2
    \le \Big( \sum_{p\in L_k} \lambda_{k,p}^{-1} \Big)
        \Big( \sum_{p\in L_k} (e(p) + e(p-k)) \|a_{p-k}a_p\xi\|^2 \Big)
    \le C N \log(N) \eva{\xi, \HHH_0 \xi} \;.
\end{equation*}
For $b^*(k)$, note that $|\lune| \le C |k| N^{\frac 12}$ and $ \|b^*(k)\xi\|^2\leq |\lune|+\|b(k)\xi\|^2 $.
\end{proof}

\begin{lemma} \label{lem:cE_2estimate}
Let $ \sum_{k \in \ZZZ^2} \hat{V}(k) < \infty $ and $ \xi \in \cF $ belong to an approximate ground state in the sense of Definition~\ref{def:approxGS}, such that $R \xi$ is an eigenvector of $H_N$.
Recall the definition~\eqref{eq:HNconjugation} of $ \cE_2 $. Then for any $\varepsilon>0$, there is a constant $C_\varepsilon>0$ such that
\begin{equation} \label{eq:cE_2estimate}
    |\langle\xi,\cE_2\xi\rangle|
    \leq C_\varepsilon \hbar N^{-\frac{1}{136} + \varepsilon} \;.
\end{equation}
\end{lemma}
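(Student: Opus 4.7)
The plan is to reduce the estimate on $\cE_2$ to a Cauchy--Schwarz interpolation between the bounds on $\|d(k)\xi\|$ already established in Lemma~\ref{lem:cE_1estimate} and the bounds on $\|b(-k)\xi\|$ from Lemma~\ref{lem:b_bounds}. The target exponent $-\frac{1}{136}$ is precisely half of the exponent $-\frac{1}{68}$ gained in Lemma~\ref{lem:cE_1estimate}, which strongly suggests that square--rooting the $d$-bound and pairing it with the clean $b$-bound is the right move.

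Concretely, I would start from the definition of $\cE_2$ in~\eqref{eq:HNconjugation} and pass the expectation inside the sum:
\begin{equation*}
    |\langle \xi, \cE_2 \xi \rangle|
    \leq \frac{1}{2(2\pi)^2 N} \sum_{k \in \ZZZ^2_*} \hat{V}(k)
        \big( |\langle d(k)\xi, b(-k)\xi \rangle| + |\langle b(-k)\xi, d(k)\xi \rangle| \big)
    \leq \frac{C}{N} \sum_{k \in \ZZZ^2_*} \hat{V}(k) \, \|d(k)\xi\| \, \|b(-k)\xi\| \;.
\end{equation*}
A key observation is that the hermitian conjugate of $d^*(k)b(-k)$ produces $b^*(-k)d(k)$, but taking the inner product allows me to always transfer $b^*$ onto the vector $\xi$ on the opposite side, so that only the norm $\|b(-k)\xi\|$ (\emph{annihilator} acting on $\xi$) appears. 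This is crucial, since the bound on $\|b^*(-k)\xi\|^2$ in Lemma~\ref{lem:b_bounds} carries an additional $C|k|N^{1/2}$ term that would defeat the summation against $\hat{V}(k)$.

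Next, I would plug in the two ingredients. From Lemma~\ref{lem:cE_1estimate}, $\|d(k)\xi\|^2 \leq C_\varepsilon N^{\frac{1}{2} - \frac{1}{68} + \varepsilon}$, so $\|d(k)\xi\| \leq C_\varepsilon N^{\frac{1}{4} - \frac{1}{136} + \varepsilon/2}$. From Lemma~\ref{lem:b_bounds} together with the Onsager bound $\langle \xi, \HHH_0 \xi \rangle \leq C N^{-\frac{1}{2}}$ of Lemma~\ref{lemma:Onsager-Bound}, $\|b(-k)\xi\|^2 \leq C N \log(N) \langle \xi, \HHH_0 \xi \rangle \leq C N^{\frac{1}{2}} \log N$, so $\|b(-k)\xi\| \leq C N^{\frac{1}{4}} (\log N)^{1/2}$. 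Importantly, this upper bound on $\|b(-k)\xi\|$ is \emph{independent} of $k$, which lets me factor it out of the sum. Absorbing $\log N$ into $N^\varepsilon$ yields
\begin{equation*}
    \|d(k)\xi\| \, \|b(-k)\xi\| \leq C_\varepsilon N^{\frac{1}{2} - \frac{1}{136} + \varepsilon} \;.
\end{equation*}

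Finally, I would use $\sum_{k \in \ZZZ^2_*} \hat{V}(k) < \infty$ (guaranteed by the hypothesis of the lemma) to bound the sum and divide by $N$:
\begin{equation*}
    |\langle \xi, \cE_2 \xi \rangle|
    \leq \frac{C_\varepsilon}{N} N^{\frac{1}{2} - \frac{1}{136} + \varepsilon} \sum_{k \in \ZZZ^2_*} \hat{V}(k)
    \leq C_\varepsilon N^{-\frac{1}{2} - \frac{1}{136} + \varepsilon}
    = C_\varepsilon \hbar N^{-\frac{1}{136} + \varepsilon} \;,
\end{equation*}
which is exactly~\eqref{eq:cE_2estimate}. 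The proof is essentially routine once Lemmas~\ref{lem:cE_1estimate} and~\ref{lem:b_bounds} are in hand; no real obstacle remains, provided one is careful to orient the Cauchy--Schwarz splitting so that only $\|b(-k)\xi\|$ (and not $\|b^*(-k)\xi\|$) enters, and provided $\sum_k \hat{V}(k) < \infty$, which is weaker than the working hypothesis $\sum_k |k|\hat{V}(k) < \infty$ of the main theorem's lower bound.
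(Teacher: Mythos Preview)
Your proposal is correct and follows essentially the same route as the paper: Cauchy--Schwarz on each summand to reduce to $\|d(k)\xi\|\,\|b(-k)\xi\|$, then plug in Lemma~\ref{lem:cE_1estimate} for $d(k)$ and Lemma~\ref{lem:b_bounds} combined with the Onsager bound~\eqref{eq:Onsager-Bound} for $b(-k)$, and finally sum using $\sum_k \hat V(k)<\infty$. Your remark about orienting the inner product so that only $\|b(-k)\xi\|$ (not $\|b^*(-k)\xi\|$) appears is exactly the right care to take.
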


\begin{proof}
By the Cauchy--Schwarz inequality, we have
\begin{align*}
    |\langle\xi,\cE_2\xi\rangle|
    &\leq \frac{C}{N}\left|\sum_{k\in\ZZZ^2_*}\hat{V}(k)\left( \langle\xi,d(k)^*b(-k)\xi\rangle+\langle\xi,b(-k)^*d(k)\xi\rangle \right)\right|\\
    &\leq \frac{C}{N} \sum_{k\in\ZZZ^2_*}\hat{V}(k) \|d(k)\xi\|\|b(-k)\xi\| \;.
\end{align*}
Bounding $d(k)$ and $b(-k)$ by Lemmas~\ref{lem:cE_1estimate} and~\ref{lem:b_bounds}, and using~\eqref{eq:Onsager-Bound} $\eva{\xi, \HHH_0 \xi} \leq C N^{-\frac 12}$, we get
    \begin{equation*}
        |\langle\xi,\cE_2\xi\rangle|
        \leq C_{\varepsilon} N^{-1}\sum_{k\in\ZZZ^2_*}\hat{V}(k)
            N^{\frac 14 -\frac{1}{136} + \frac{\varepsilon}{2}}
            N^{\frac 14 + \frac{\varepsilon}{2}}
        \leq C_\varepsilon \hbar N^{-\frac{1}{136} + \varepsilon} \;.
    \end{equation*}
\end{proof}

\section{Patch Construction}
\label{sec:patch-decomposition}

We employ a two-dimensional version of the patch bosonization of~\cite{benedikter2020optimal,benedikter2022bosonization,benedikter2023correlation}, meaning we decompose the region close to the Fermi surface into disjoint regions (``patches'') with suitable properties. We then define collective pair excitation operators in each patch, which behave approximately like bosonic operators, in the sense that they satisfy the canonical commutation relations (CCR) up to a small error.

\subsection{Construction of the Patches}
As in the 3d case~\cite{benedikter2020optimal,benedikter2022bosonization,benedikter2023correlation}, we adopt the algorithmic procedure of \cite{leopardi2006partition}, which allows us to decompose the Fermi surface into equal-area boxes with uniformly bounded diameter. Our decomposition is characterized by two parameters depending on the particle number $ N $:
\begin{itemize}
    \item the patch number $ M=M(N) \in \NNN $, which we assume to be even,
    \item the patch thickness $ R=R(N)>0 $.
\end{itemize}
We will fix the precise dependences in $ N $, later, in order to optimize error bounds.
Since patches should be bigger than the lattice spacing 1, and since radius and circumference of the Fermi surface scale like $ \sim k_{\F} \sim N^{\frac{1}{2}} $, we have the trivial constraints $1 \ll R \ll N^{\frac{1}{2}}$ and $1 \ll M \ll N^{\frac{1}{2}}$.


\noindent

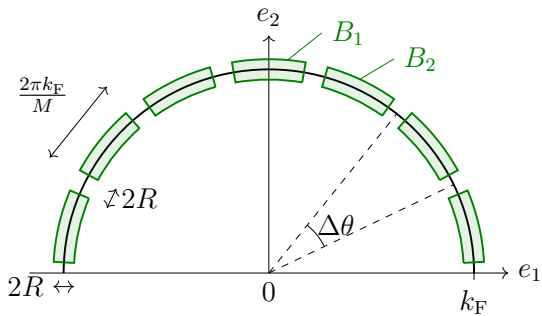
\begin{figure}
    \centering
    \scalebox{0.9}{
\def\R{3}
\def\h{0.05}
\def\shift{3} 
\def\N{7}      

\begin{tikzpicture}

\draw[->] (-3.5,0) -- (3.5,0) node[right] {$e_1$};
\draw[->] (0,0) -- (0,3.5) node[above] {$e_2$};
\node[anchor = north]  at (0,0) {$0$};

\draw ({\R},-0.1) node[anchor = north] {$k_{\F}$} -- ++(0,0.2);

\draw[thick] (-\R,0) arc (180:0:\R);

\foreach \i in {1,...,\N} {
    \pgfmathsetmacro{\angle}{180 - (\i/\N)*180}

    \pgfmathsetmacro{\xi}{\R*cos(\angle+\shift)}
    \pgfmathsetmacro{\yi}{\R*sin(\angle+\shift)}
    
    \pgfmathsetmacro{\xf}{\R*cos(\angle+180/\N-\shift)}
    \pgfmathsetmacro{\yf}{\R*sin(\angle+180/\N-\shift)}

    \pgfmathsetmacro{\ux}{cos(\angle)}
    \pgfmathsetmacro{\uy}{sin(\angle)}
    
    \filldraw[green!50!black, thick, fill opacity = .1] ({(1-\h)*\xi}, {(1-\h)*\yi})
          -- ({(1+\h)*\xi}, {(1+\h)*\yi})  
          arc ({\angle + \shift} : {\angle + 180/\N - \shift} : {(1+\h)*\R})
          -- ({(1-\h)*\xf}, {(1-\h)*\yf})
          arc ({\angle + 180/\N - \shift} : {\angle + \shift} : {(1-\h)*\R}) ;
}

\draw[green!50!black] ({(1+\h)*\R*cos(85)}, {(1+\h)*\R*sin(85)}) -- ++(0.5,0.4) node[anchor = west] {$B_1$};
\draw[green!50!black] ({(1+\h)*\R*cos(65)}, {(1+\h)*\R*sin(65)}) -- ++(0.5,0.3) node[anchor = west] {$B_2$};

\draw[dashed] (0,0) -- ({\R*cos(180/\N)}, {\R*sin(180/\N)}) ;
\draw[dashed] (0,0) -- ({\R*cos(2*180/\N)}, {\R*sin(2*180/\N)}) ;
\draw ({0.9*cos(180/\N)}, {0.9*sin(180/\N)})
    arc ({180/\N} : {2*180/\N} : 0.9) node[anchor = west] {$\Delta \theta$};

\draw[<->] ({-(1+\h)*\R},-0.2) node[anchor = east] {$2R$} -- ({-(1-\h)*\R},-0.2);

\draw[<->] ({\R*cos(5*180/\N)-0.5}, {\R*sin(5*180/\N)+0.4}) -- node[anchor = south east] {$\frac{2 \pi k_{\F}}{M}$} ({\R*cos(6*180/\N)-0.5}, {\R*sin(6*180/\N)+0.4}) ;

\draw[<->] ({\R*cos(6*180/\N - \shift)+0.4}, {\R*sin(6*180/\N - \shift)-0.2}) -- node[anchor = west] {$2R$} ({\R*cos(6*180/\N + \shift)+0.4}, {\R*sin(6*180/\N  + \shift)-0.2}) ;

\end{tikzpicture}}
    \caption{Example for a patch decomposition around the Northern Hemisphere of the Fermi surface. Here, half of all $M=14$ patches $B_{\alpha}$ are shown. The patches have thickness $2R$ and are separated by corridors of size $2R$, where $R$ grows slowly with increasing $N$.}
    \label{fig:patch}
\end{figure}

\noindent
\textbf{Flat patches on the Fermi circle.}~~~In two dimensions, the Fermi surface $ \partial B_{\F} $ is just a circle, which we divide into $M$ arcs, all having an opening angle $\Delta\theta \coloneqq 2\pi M^{-1}$. We choose to put the first arc to be centered at $e_2$, as showed in Figure~\ref{fig:patch}. Then, arc number $ \alpha \in \{1, \ldots, M\} $ is centered at $\theta_\alpha \coloneqq (\alpha-1)\Delta\theta $.
Next, we cut off pieces at the edges of each arc, creating corridors of size $2R$, which requires cutting away an angle $\Delta\theta_{\mathrm{corri}} \coloneqq \frac{2R}{k_{\F}}$. The remaining angle covered by a patch is $ \Delta\widetilde{\theta} \coloneqq \Delta\theta - \Delta\theta_{\mathrm{corri}} $. Denoting with $\hat{\omega}(\theta)$ the point in $\SSS^1$ that forms an angle $\theta$ with respect to $e_2$, we then define the flat patches
\begin{equation}
    P_\alpha
    \coloneqq \{k_{\F} \hat{\omega}(\theta) ~|~ \theta \in (\theta_\alpha - \tfrac{\Delta\widetilde\theta}{2}, \theta_\alpha + \tfrac{\Delta\widetilde\theta}{2})\} \;, \qquad
    \alpha \in \{1,\dots,M\}\;.
\end{equation}
Obviously, $ P_\alpha \subset \partial B_{\F} $ and the patches are disjoint. Further, we require corridors to be much smaller than patches, i.e., $ \Delta\theta_{\mathrm{corri}} \ll \Delta\theta $, which requires $ RM \ll N^{\frac 12} $. It is also clear that, by construction, the following properties hold:
\begin{enumerate}
    \item The diameter of every flat patch is $\textrm{diam}(P_\alpha)=\frac{2\pi k_\F}{M} + O_N(1) $ for all $\alpha \in \{1,\dots,M\}$.
    \item For every $\alpha \in \{1,\dots,M/2\}$, we have the reflection property $P_\alpha=-P_{\alpha+\frac{M}{2}}$.
\end{enumerate}

\noindent
\textbf{Final patches around the Fermi circle.}~~~Finally, we extend the flat patches $P_\alpha$ radially to obtain the patches
\begin{equation}
    B_\alpha \coloneqq \left\{k\in\mathbb{Z}^2~|~ k_\F-R < |k| < k_\F+R\right\}
        \cap \Bigg(\bigcup_{t\in(0,\infty)}tP_\alpha \Bigg)\;.
\end{equation}
As in~\cite{benedikter2020optimal}, $B_\alpha$ inherits the reflection property from $P_\alpha$ and also has a bounded diameter. Similarly, the $B_\alpha$ are also pairwise disjoint and separated by corridors of size $>R$.\\

\noindent
\textbf{Belt cut-off.}~~~As $k \cdot \hat{\omega}_\alpha \to 0$, the number of particle--hole pairs in a patch gets small or even zero, leading to problems with small or zero denominators. We avoid this problem as in~\cite{benedikter2020optimal}: For each $\alpha \in \{ 1,\dots,M \}$, let $\hat{\omega}_\alpha \coloneq \hat{\omega}(\theta_\alpha) \in \SSS^1$ be the vector pointing to the center of the patch $B_\alpha$. Note that $\hat{\omega}_\alpha$ inherits the reflection symmetry: $\hat{\omega}_\alpha = -\hat{\omega}_{\alpha+M/2}$ for $\alpha \in \{1,\dots, \tfrac M2\}$. For $k\in\ZZZ^2_*$, $|k|<R$, define the index set $\cI_k \coloneqq \cI_k^+\cup\cI_k^-$ via
\begin{align}
    \cI_k^+\coloneqq\left\{\alpha=1,\dots,M ~\middle|~ k \cdot \hat{\omega}_\alpha\geq N^{-\delta}\right\}\;, \qquad
    \cI_k^-\coloneqq\left\{\alpha=1,\dots,M ~\middle|~ k \cdot \hat{\omega}_\alpha\leq -N^{-\delta}\right\}\;,
    \label{eq:Ik+set}
\end{align}
where $ \delta > 0 $ is some exponent to be fixed later. In other words, we exclude patches in some thin belt orthogonal to $k$. As motivated below in the proof of Lemma~\ref{lem:normalization_constant}, we impose the constraint
\begin{equation} \label{eq:Mconstraint}
    R N^\delta \ll M \ll R^{-2} N^{\frac 12 - \delta} \;.
\end{equation}
This completes the patch construction, leaving $ (M,R,\delta) $ as the parameters to be optimized. Note that~\eqref{eq:Mconstraint} only makes sense if $\delta \in (0, \frac 14)$. In fact, we will later choose $\delta$ as an arbitrarily small number and $R \sim N^{\delta'}$ for some even smaller $0 < \delta' < \delta$.


\subsection{Patch Operators and Elementary Bounds}

As in~\cite{benedikter2020optimal}, we now split the pair operators $ b^*(k), b(k) $~\eqref{eq:Lkbkdk} among the patches. Given $k\in\ZZZ^2_*$ with $|k|<R$ and given $\alpha \in \cI_k^+$, we define the particle--hole pair creation operator
\begin{equation}
    b_\alpha^*(k)\coloneqq\frac{1}{n_\alpha(k)}\sum_{\substack{p:~p\in B_\F^c\cap B_\alpha\\p-k\in B_\F\cap B_\alpha}}a_p^*a_{p-k}^*\;, \qquad
    n_\alpha(k)^2\coloneqq \sum_{\substack{p:~p\in B_\F^c\cap B_\alpha\\p-k\in B_\F\cap B_\alpha}}1\;.
    \label{def:pair_creation}
\end{equation}
So the normalization constant $n_\alpha(k)^2$ counts the number of particle--hole pairs of relative momentum $k$ in patch $B_\alpha$. A larger $n_\alpha(k)$ corresponds to a better bosonic approximation of the $b^*$--operators.
Moreover, for $\alpha\in\cI_k$, we define
\begin{equation}
    c_\alpha^*(k)\coloneqq
    \begin{cases}
        b_\alpha^*(k) & \textrm{if }\alpha\in\cI_k^+\\
        b_\alpha^*(-k) & \textrm{if }\alpha\in\cI_k^-\\
    \end{cases}
    \;,
    \label{def:pair_operators}
\end{equation}
where $k \in \Gamma^\nor$ with
\begin{equation} \label{eq:Gammanor}
    \Gamma^\nor
    \coloneqq \{k=(k_1,k_2)\in\ZZZ^2_*~|~|k|<R~\textnormal{ and }~ k_2>0 ~\textnormal{ or }~ (k_2=0 \textnormal{ and } k_1>0)\} \;.
\end{equation}
This definition allows for conveniently combining modes associated with $k$ and $-k$, which is possible since $\hat{V}(k) = \hat{V}(-k)$. Next, we compile some bounds for $n_\alpha(k)$ and $c_\alpha(k)$, which are similar or identical to the 3d case~\cite{benedikter2021correlation,benedikter2023correlation}.


\begin{lemma}[Normalization Constant] \label{lem:normalization_constant}
    Assume that $R N^{\delta} \ll M\ll R^{-1}N^{\frac{1}{2}-\delta}$. Then for any $k\in\Gamma^{\textnormal{nor}}$, $\alpha\in\cI_k$, we have
    \begin{equation}
        n_\alpha(k)^2
        =\frac{2\pi k_\F}{M}|k\cdot\hat{\omega}_\alpha|
            \big( 1+\cO(R M^{-1} N^\delta + R M N^{-\frac 12 + \delta}) \big)\;.
        \label{eq:normalization_constant}
    \end{equation}
\end{lemma}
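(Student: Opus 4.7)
\emph{Proof plan.} By definition, $n_\alpha(k)^2$ counts $p \in \ZZZ^2$ satisfying $p \in B_\alpha \cap B_\F^c$ and $p-k \in B_\alpha \cap B_\F$. Without loss of generality I assume $\alpha \in \cI_k^+$ so that $k_\perp \coloneqq k \cdot \hat\omega_\alpha \geq N^{-\delta}$; the case $\alpha \in \cI_k^-$ follows by sending $k \to -k$. I will first estimate the continuous area of the corresponding planar region $\mathcal{R}_\alpha(k) \subset \RRR^2$ and then convert the area to a lattice count.

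\emph{Area estimate.} Parametrizing $p = r\hat\omega(\theta)$, the conditions $|p|^2 \geq k_\F^2$ and $|p-k|^2 < k_\F^2$ determine a radial window of width $k \cdot \hat\omega(\theta) - |k|^2/(2k_\F) + O(R^2/k_\F)$, while $p \in B_\alpha \cap (B_\alpha + k)$ restricts $\theta$ to the patch angular interval of length $\Delta\tilde\theta = 2\pi/M - 2R/k_\F$ minus tangential slivers of angular width $\cO(|k \cdot \hat t_\alpha|/k_\F)$. Integrating the radial depth against $r\,d\theta \approx k_\F\,d\theta$ and collecting errors from (i) the expansion $k \cdot \hat\omega(\theta) = k_\perp + O(|k|/M)$, which is of relative size $\cO(RM^{-1}N^\delta)$ after dividing by $k_\perp \geq N^{-\delta}$; (ii) the corridor shrinkage $\Delta\theta_\mathrm{corri}/\Delta\theta = \cO(RMN^{-1/2})$; and (iii) the polar curvature correction $|k|^2/(k_\F k_\perp) \leq R^2 N^{\delta - 1/2}$, absorbed into (ii) via $R \leq M$; one obtains
\[
    \mathrm{area}(\mathcal{R}_\alpha(k))
    = \frac{2\pi k_\F}{M}\, k_\perp\, \Big(1 + \cO\big(RM^{-1}N^\delta + RMN^{-1/2+\delta}\big)\Big).
\]

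\emph{Lattice count and main obstacle.} Finally I convert the area to the cardinality $|\mathcal{R}_\alpha(k) \cap \ZZZ^2|$. The region is a curvilinear strip of perimeter $\cO(k_\F/M + k_\perp) = \cO(k_\F/M)$, so the standard inequality $\big||\ZZZ^2 \cap \mathcal{R}| - \mathrm{area}(\mathcal{R})\big| \leq C\,\mathrm{perim}(\mathcal{R})$ yields an absolute lattice error of size $\cO(k_\F/M)$. The delicate point is that the naive relative error this produces, $\cO(1/k_\perp) \leq \cO(N^\delta)$, is coarser than the claimed $\cO(RM^{-1}N^\delta)$; this is the main technical obstacle I expect. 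To close the gap I would refine the counting by slicing $\mathcal{R}_\alpha(k)$ into thin tangential strips along $\hat t_\alpha$: each of the $\sim k_\F/M$ slices is a short radial segment of length $k_\perp + O(R/M)$ contributing that many integer lattice points up to $\cO(1)$ per-slice error, and the summed per-slice errors are absorbed into the angular-variation error from the area step. This slicing argument, together with carefully tracking the $p - k \in B_\alpha$ boundary loss (which trims $\cO(|k \cdot \hat t_\alpha|)$ from the tangential extent), should yield the stated bound; both error terms are $o(1)$ under the standing hypotheses $RN^\delta \ll M \ll R^{-1}N^{\frac 12 - \delta}$.
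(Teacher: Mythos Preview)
Your area estimate and the identification of the lattice-counting step as the main obstacle are correct; the angular-variation error $\cO(RM^{-1}N^\delta)$ is exactly the paper's first error term. The proposed fix, however, has a genuine gap. Slicing along $\hat t_\alpha$ produces fibers in the radial direction $\hat\omega_\alpha$, which is not a lattice direction: a generic radial segment contains no points of $\ZZZ^2$, so the claim ``$k_\perp$ integer lattice points up to $\cO(1)$ per slice'' is not meaningful. Even reinterpreting the slices as lattice-aligned unit columns, the $\sim k_\F/M$ per-slice errors of size $\cO(1)$ sum to $\cO(k_\F/M)$, i.e.\ the same relative error $\cO(1/k_\perp)\le N^\delta$ as the perimeter bound you already rejected; this is \emph{not} absorbed by the angular-variation term (absorption would force $M\lesssim R k_\perp N^\delta\le R^2 N^\delta$, which the hypotheses $RN^\delta\ll M\ll R^{-1}N^{1/2-\delta}$ do not imply).

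The paper's resolution is to slice parallel to $k\in\ZZZ^2$ rather than to $\hat t_\alpha$, equivalently to project the patch onto $k^\perp$. The lattice $\ZZZ^2$ decomposes into lines in direction $k$ with perpendicular spacing $\ell=\gcd(k_1,k_2)/|k|$; on each such line the lune $B_\F^c\cap(B_\F+k)$ is a segment of length exactly $|k|$ containing exactly $\gcd(k_1,k_2)$ lattice points. Hence $n_\alpha(k)^2$ equals $\gcd(k_1,k_2)$ times the number of lattice lines whose lune-segment lies in the patch, and counting those lines is a one-dimensional problem (points of spacing $\ell$ in the projected interval $P_\alpha^k$) with $\cO(1)$ discretization error. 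Each of the $\cO(1)$ uncertain boundary lines carries at most $\gcd(k_1,k_2)\le|k|<R$ pairs, so the absolute lattice error is $\cO(R)$; dividing by the leading term $\tfrac{2\pi k_\F}{M}|k\cdot\hat\omega_\alpha|\ge\tfrac{2\pi k_\F}{M}N^{-\delta}$ yields the second relative error $\cO(RMN^{-1/2+\delta})$. Your corridor-shrinkage contribution $\cO(RMN^{-1/2})$ is real but strictly smaller than this lattice term.
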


\begin{proof}
Follows by adapting the arguments in~\cite[Section~6]{benedikter2020optimal} and~\cite[Lemma~5.1]{benedikter2023correlation} to the two-dimensional case.\\
Here, the angle between $ k $ and the patch surface is approximated by $ |\hat{k}\cdot\hat{\omega}_\alpha| \ge N^{-\delta} |k|^{-1} $ with $\hat{k} \coloneqq k/|k|$, but actually varies by $ \sim \frac{2 \pi}{M} $ within a patch, leading to a relative error of order $ M^{-1} N^\delta |k| \le M^{-1} N^\delta R $.\\
Further, the error from approximating the projected patch (called $P_\alpha^k$) with its lattice discretization is now $\cO(1)$, and a line intersecting a patch may carry up to $R$ particle--hole pairs, leading to an absolute error of $\cO(R)$, and thus a relative error of $\cO(R M N^{-\frac 12 + \delta})$.
The assumptions on $ M $ are needed for both relative errors to be $ \ll 1 $.
\end{proof}

\begin{lemma}[Approximate CCR] \label{lem:quasi-bosonic-behavior}
    Let $k,\ell\in\Gamma^{\mathrm{nor}}$, $\alpha\in\cI_k$ and $\beta\in\cI_\ell$. Then, the operators $c_\alpha(k),c_\beta^*(k)$ defined above satisfy the following approximate bosonic commutation relations:
    \begin{equation}
    [c_\alpha(k),c_\beta(\ell)]
    =0
    =[c_\alpha^*(k),c_\beta^*(\ell)] \;,\quad[c_\alpha(k),c_\beta^*(\ell)]=\delta_{\alpha,\beta}(\delta_{k,\ell}+\cE_\alpha(k,\ell)) \;,
    \end{equation}
    where the error operator $\cE_\alpha(k,\ell)$ is given by
    \begin{equation} \label{eq:cE_alpha}
        \cE_\alpha(k,\ell)
        \coloneqq - \frac{1}{n_\alpha(k) n_\alpha(\ell)} \Bigg( \sum_{\substack{p:~p \in B_{\F}^c \cap B_\alpha \\ p-\ell, p-k \in B_{\F} \cap B_\alpha}} a^*_{p-\ell} a_{p-k}
		+ \sum_{\substack{h:~h \in B_{\F} \cap B_\alpha \\ h+\ell, h+k \in B_{\F}^c \cap B_\alpha}} a^*_{h+\ell} a_{h+k} \Bigg)\;.
    \end{equation}
    Moreover, $\cE_\alpha(k,\ell)=\cE_\alpha(\ell,k)^*$ commutes with $\cN$ and, for any $\gamma\in\cI_k\cap\cI_\ell$ and $\psi\in\cF$, we have the following bounds
    \begin{equation}
	\sum_{\alpha \in \cI_k \cap \cI_\ell} |\cE_\alpha(k,\ell)|^2
	\le C \big( M N^{-\frac 12 + \delta} \cN \big)^2 \;,
	\quad
	\sum_{\alpha \in \cI_k \cap \cI_\ell} \Vert \cE_\alpha(k,\ell) \psi \Vert
	\le C M^{\frac 32} N^{- \frac 12 + \delta} \Vert \cN \psi \Vert \;.
\end{equation}
\end{lemma}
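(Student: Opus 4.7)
The plan is to follow the quasi-bosonic CCR analysis established in~\cite{benedikter2020optimal} and adapted to slowly growing patches in~\cite{benedikter2023correlation}, with the minor modifications needed for the two-dimensional patch geometry already set up in Lemma~\ref{lem:normalization_constant}.

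The vanishing commutators $[c_\alpha(k), c_\beta(\ell)] = 0 = [c_\alpha^*(k), c_\beta^*(\ell)]$ are structural: each $c^{(*)}$ is a linear combination of bilinears in same-type operators (two $a^*$ for $c^*$, two $a$ for $c$), and any two such bilinears commute because the four applications of $\{a_p^{(*)}, a_q^{(*)}\} = 0$ needed to reorder them produce an overall sign $(-1)^4 = +1$. For the mixed commutator $[c_\alpha(k), c_\beta^*(\ell)]$, I would unpack the $\pm$ cases in~\eqref{def:pair_operators} to reduce to $[b_\alpha(m), b_\beta^*(m')]$ with $m \in \{\pm k\}$, $m' \in \{\pm \ell\}$, and then apply the identity $[AB, CD] = A\{B,C\}D - AC\{B,D\} + \{A,C\}DB - C\{A,D\}B$ with $A = a_p$, $B = a_{p-m}$, $C = a_q^*$, $D = a_{q-m'}^*$. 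The same-type anticommutators vanish, leaving only Kronecker deltas from $\{a, a^*\}$. Collecting them produces a diagonal contribution that, after normalization by $n_\alpha(k) n_\alpha(\ell)$, reproduces $\delta_{k,\ell}$, plus residual $a^*a$-terms that assemble into~\eqref{eq:cE_alpha}. The factor $\delta_{\alpha, \beta}$ is automatic from disjointness of the patches $B_\alpha$, since any surviving Kronecker delta identifies two momenta and can only be satisfied inside a single patch. The identities $\mathcal{E}_\alpha(k,\ell) = \mathcal{E}_\alpha(\ell,k)^*$ and $[\mathcal{N}, \mathcal{E}_\alpha(k,\ell)] = 0$ are then direct from~\eqref{eq:cE_alpha}: taking the adjoint swaps $k \leftrightarrow \ell$ summand by summand, and every summand is of $a^*a$-type and hence preserves particle number.

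For the quantitative bounds, the strategy is to prove first the operator estimate $\sum_\alpha |\mathcal{E}_\alpha(k,\ell)|^2 \le C(M N^{-1/2+\delta} \mathcal{N})^2$, after which the second bound follows by Cauchy--Schwarz using $|\cI_k \cap \cI_\ell|^{1/2} \le M^{1/2}$. The key inputs are the uniform lower bound $n_\alpha(k)^2 \gtrsim k_{\F} M^{-1} N^{-\delta}$, which combines Lemma~\ref{lem:normalization_constant} with the belt constraint $|k \cdot \hat{\omega}_\alpha| \ge N^{-\delta}$ from~\eqref{eq:Ik+set} and yields a uniform prefactor $(n_\alpha(k) n_\alpha(\ell))^{-2} \le C M^2 N^{-1+2\delta}$; and a CAR expansion of $X_\alpha^* X_\alpha$, where $X_\alpha := \sum_p a^*_{p-\ell} a_{p-k}$ is the particle part of $\mathcal{E}_\alpha$, giving a quadratic piece $\sum_p a^*_{p-k} a_{p-k}$ which after summing over $\alpha$ is bounded by $\mathcal{N}$ via patch disjointness, and a quartic piece bounded by $\mathcal{N}^2$ via the standard estimate $\sum_{p,q}\|a_{p-\ell} a_{q-k}\psi\|^2 \le \langle \psi, \mathcal{N}^2 \psi\rangle$. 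The hole part $Y_\alpha$ is handled identically.

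The main obstacle I anticipate is precisely this operator bound with right-hand side $\mathcal{N}^2$: a naive triangle inequality $\|\mathcal{E}_\alpha \psi\| \le (n_\alpha(k) n_\alpha(\ell))^{-1} \sum_p \|a^*_{p-\ell} a_{p-k} \psi\|$ only delivers $\|\mathcal{N}^{1/2} \psi\|$ on the right, which would be insufficient for the later Bogoliubov-type error estimates; the remedy is to square first, CAR-expand, and exploit disjointness of the patches to avoid an extra combinatorial $M$-factor.
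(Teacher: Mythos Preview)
Your proposal is correct and follows essentially the same approach as the paper, which simply refers to \cite[Lemma~5.2]{benedikter2021correlation} and \cite[Lemma~5.2]{benedikter2023correlation} and notes that the factor $N^{-\frac12+\delta}$ arises from the 2d normalization constant in Lemma~\ref{lem:normalization_constant}. You correctly identify that the only dimension-dependent input is the lower bound $n_\alpha(k)^2 \gtrsim k_\F M^{-1} N^{-\delta}$ coming from~\eqref{eq:normalization_constant} together with the belt cutoff $|k\cdot\hat\omega_\alpha|\ge N^{-\delta}$, and your squaring-then-CAR-expanding strategy with patch disjointness is exactly the mechanism used in those references.
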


\begin{proof}
The proof follows as in~\cite[Lemma~5.2]{benedikter2021correlation}, see also~\cite[Lemma~5.2]{benedikter2023correlation}. Note that our factor $N^{- \frac 12 + \delta}$ differs from the 3d case, since also~\eqref{eq:normalization_constant} is different.
\end{proof}

\begin{lemma}[Conversion into Gapped Number Operators] \label{lem:c_conversion} 
Recall the gapped number operator $\cN_\delta$~\eqref{eq:cN_delta}. Let $\delta>0$ be as in the belt cutoff~\eqref{eq:Ik+set} and let $k\in\Gamma^{\nor}$. Then, for $M \gg N^\delta$,
\begin{equation}
    \sum_{\alpha\in\cI_k}c_\alpha^*(k)c_\alpha(k)\leq\cN_\delta \;.
\end{equation}
Moreover, for any $\psi\in\cF$,
\begin{equation}
    \sum_{\alpha \in \cI_k} \Vert c_\alpha(k) \psi \Vert
	\le M^{\frac 12} \Vert \cN_\delta^{\frac 12} \psi \Vert \;,
	\qquad
	\sum_{\alpha \in \cI_k} \Vert c^*_\alpha(k) \psi \Vert
	\le M^{\frac 12} \Vert (\cN_\delta + M)^{\frac 12} \psi \Vert \;,
\end{equation}
and for any $f\in\ell^2(\cI_k)$,
\begin{equation}
\begin{aligned}
    \Big\Vert \sum_{\alpha \in \cI_k} f_\alpha c_\alpha(k) \psi \Big\Vert
	\le \Vert f \Vert_{\ell^2} \Vert \cN_\delta^{\frac 12} \psi \Vert \;, \qquad
    \Big\Vert \sum_{\alpha \in \cI_k} f_\alpha c^*_\alpha(k) \psi \Big\Vert
	\le \Vert f \Vert_{\ell^2} \Vert (\cN_\delta + 1)^{\frac 12} \psi \Vert \;.
\end{aligned}
\end{equation}
\end{lemma}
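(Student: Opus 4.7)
The plan is to first establish the operator inequality $\sum_{\alpha \in \cI_k} c_\alpha^*(k) c_\alpha(k) \le \cN_\delta$, from which the remaining four bounds will follow by routine Cauchy--Schwarz together with the approximate CCR of Lemma~\ref{lem:quasi-bosonic-behavior}.

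For the main inequality, I would first apply Cauchy--Schwarz to the sum defining $c_\alpha(k) = n_\alpha(k)^{-1} \sum_{p} a_{p\mp k} a_p$. Since the number of summands equals $n_\alpha(k)^2$, one obtains in the spirit of Lemma~\ref{lem:naivebounds_bdXXX},
\begin{equation*}
    c_\alpha^*(k) c_\alpha(k) \le \sum_p a_p^* a_{p \mp k}^* a_{p \mp k} a_p\;,
\end{equation*}
where the sum runs over the particle--hole pairs associated to $c_\alpha(k)$. By the CAR each summand equals the product of two number operators $a_p^* a_p\, a_{p\mp k}^* a_{p\mp k}$ and is thus bounded by either $a_p^* a_p$ or $a_{p \mp k}^* a_{p \mp k}$. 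I am therefore free to pick at each $p$ a mode $q(p) \in \{p, p\mp k\}$ at will.

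The key step is the geometric observation that, thanks to the belt cut-off, the choice $q(p)$ can always be made outside $\Gcutoff$. For $\alpha \in \cI_k^+$ I write $p = k_{\F} \hat\omega_\alpha + \xi$ with $|\xi| = O(R + k_{\F} M^{-1})$, reflecting the radial thickness $R$ and angular width $2\pi/M$ of $B_\alpha$, so that
\begin{equation*}
    \big(|p|^2 - k_{\F}^2\big) + \big(k_{\F}^2 - |p-k|^2\big)
    = 2 \big( p - \tfrac k 2 \big) \cdot k
    = 2 k_{\F}\, k \cdot \hat\omega_\alpha + 2 \xi \cdot k - |k|^2\;.
\end{equation*}
Using $k \cdot \hat\omega_\alpha \ge N^{-\delta}$, $|k| < R$ and the constraint~\eqref{eq:Mconstraint}, the right-hand side is $\ge c N^{\frac 12 - \delta}$. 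Since both parentheses on the left are non-negative, at least one of $e(p), e(p-k)$ exceeds $c \hbar N^{-\delta}$, so I take $q(p)$ to be the corresponding mode; the case $\alpha \in \cI_k^-$ is treated symmetrically with $k \mapsto -k$.

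The main obstacle is then to verify that the chosen modes $\{q(p)\}$ are pairwise distinct across all $\alpha \in \cI_k$ and $p$: modes from distinct patches cannot coincide because the $B_\alpha$'s are disjoint, while within a single patch the two candidate modes of each pair lie on opposite sides of $\partial B_{\F}$, ruling out any collision. This gives $\sum_{\alpha \in \cI_k} c_\alpha^*(k) c_\alpha(k) \le \sum_{q \notin \Gcutoff} a_q^* a_q = \cN_\delta$. Everything else is bookkeeping: the two bounds for $c_\alpha(k)$ follow from Cauchy--Schwarz in $\alpha$ with $|\cI_k| \le M$ combined with the operator inequality. For the $c_\alpha^*(k)$ bounds I use the identity $\|c_\alpha^*(k) \psi\|^2 = \|c_\alpha(k) \psi\|^2 + \|\psi\|^2 + \langle \psi, \cE_\alpha(k,k) \psi \rangle$ together with $\cE_\alpha(k,k) \le 0$ from~\eqref{eq:cE_alpha}; summing $\ell^1$ in $\alpha$ produces the extra $+M$ inside the gapped number operator, while expanding $\|\sum_\alpha f_\alpha c_\alpha^*(k) \psi\|^2$ via the CCR and using $[c_\alpha, c_\beta^*] = 0$ for $\alpha \neq \beta$ yields only a single diagonal $+1$, giving the sharper $(\cN_\delta + 1)^{1/2}$ factor.
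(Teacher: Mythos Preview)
Your proposal is correct and follows essentially the same approach as the paper, which defers the details to \cite[Lemma~5.3]{benedikter2021correlation} and only records that $M \gg N^\delta$ forces $\mathrm{diam}(B_\alpha) \ll N^{\frac12-\delta}$ and hence $e(p)+e(p-k) \geq c\,\hbar N^{-\delta}$. Your Cauchy--Schwarz reduction, the geometric gap argument using the belt cutoff, the injectivity check for the selected modes $q(p)$, and the deduction of the four norm bounds from the operator inequality together with $\cE_\alpha(k,k)\le 0$ all coincide with the standard argument.
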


\begin{proof}
The proof is analogous to~\cite[Lemma~5.3]{benedikter2021correlation}, where $M \gg N^\delta$ is needed to ensure $\mathrm{diam}(B_\alpha) \leq C N^{\frac 12} M^{-1} \ll N^{\frac 12 - \delta}$, so finally $e(p)+e(p-k) \geq c N^{-\frac 12 - \delta}$.
\end{proof}

As in~\cite[(5.11)]{benedikter2021correlation}, for $ g: \ZZZ^2 \times \ZZZ^2 \to \RRR $ we define the weighted pair operators
\begin{equation} \label{eq:calphag}
    c_\alpha^g(k)
    \coloneqq \frac{1}{n_\alpha(k)} \sum_{\substack{p:~p\in B_\F^c\cap B_\alpha\\p \mp k\in B_\F\cap B_\alpha}} g(p,k) a_{p \mp k} a_p \qquad \textrm{for } \alpha \in \cI_k^\pm \;.
\end{equation}

\begin{lemma}[Weighted Pair Operators] \label{lem:weightedpairoperators}
Recall~\eqref{eq:cN_delta} $\cN_\delta$ and let $\delta>0$ as in~\eqref{eq:Ik+set}. Then, for all $k\in\Gamma^{\textnormal{nor}}$ and $\psi\in\cF$, we have
\begin{equation}
	\sum_{\alpha \in \cI_k} \Vert c_\alpha^g(k) \psi \Vert
	\le C M^{\frac 12} \Vert g \Vert_\infty
		\Vert \cN_\delta^{\frac 12} \psi \Vert \;,
	\qquad
	\sum_{\alpha \in \cI_k} \Vert c_\alpha^g(k)^* \psi \Vert
	\le C M^{\frac 12} \Vert g \Vert_\infty
		\Vert (\cN_\delta + M)^{\frac 12} \psi \Vert \;,
\end{equation}
and for all $f\in\ell^2(\cI_k)$ also
\begin{equation}
\begin{aligned}
    \Big\Vert \sum_{\alpha \in \cI_k} f_\alpha c_\alpha^g(k) \psi \Big\Vert
	&\le \Vert g \Vert_\infty
		\Vert f \Vert_{\ell^2}
		\Vert \cN_\delta^{\frac 12} \psi \Vert \;, \\
	\Big\Vert \sum_{\alpha \in \cI_k} f_\alpha c_\alpha^g(k)^* \psi \Big\Vert
	&\le \Vert g \Vert_\infty
		\Vert f \Vert_{\ell^2}
		\Vert (\cN_\delta + 1)^{\frac 12} \psi \Vert \;.
\end{aligned}
\end{equation}
\end{lemma}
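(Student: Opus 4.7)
My plan is to follow the strategy of the proof of Lemma~\ref{lem:c_conversion}, letting the weight $g$ be carried through the estimates as a uniform prefactor $\|g\|_\infty$. First, for fixed $\alpha\in\cI_k^\pm$, I would apply Cauchy--Schwarz to the $p$-sum in the definition~\eqref{eq:calphag}, using that the number of terms equals $n_\alpha(k)^2$, to obtain
\begin{equation*}
    \|c_\alpha^g(k)\psi\|^2
    \le \frac{1}{n_\alpha(k)^2}\Big(\sum_p 1\Big)\Big(\sum_p |g(p,k)|^2\|a_{p\mp k}a_p\psi\|^2\Big)
    \le \|g\|_\infty^2 \sum_{\substack{p:~p\in B_\F^c\cap B_\alpha\\ p\mp k\in B_\F\cap B_\alpha}} \|a_{p\mp k}a_p\psi\|^2 \;.
\end{equation*}
Summing over $\alpha\in\cI_k$, the disjointness of the patches $B_\alpha$ combined with the belt cutoff and the diameter bound from Lemma~\ref{lem:c_conversion} forces each contributing pair $(p,p\mp k)$ to have at least one member outside $\Gcutoff$. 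Dominating $\|a_{p\mp k}a_p\psi\|^2$ by either $\|a_p\psi\|^2$ or $\|a_{p\mp k}\psi\|^2$ according to which index is gapped, and noting that each such momentum belongs to at most one patch, I would arrive at the operator bound $\sum_{\alpha\in\cI_k}c_\alpha^g(k)^*c_\alpha^g(k)\le \|g\|_\infty^2\cN_\delta$.

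From this inequality, the first non-starred bound of the lemma follows via Cauchy--Schwarz in $\alpha$ with $|\cI_k|\le M$, while the $\ell^2$-weighted estimate follows directly:
\begin{equation*}
    \Big\|\sum_\alpha f_\alpha c_\alpha^g(k)\psi\Big\|
    \le \sum_\alpha |f_\alpha|\,\|c_\alpha^g(k)\psi\|
    \le \|f\|_{\ell^2}\Big(\sum_\alpha\|c_\alpha^g(k)\psi\|^2\Big)^{1/2}
    \le \|g\|_\infty\|f\|_{\ell^2}\|\cN_\delta^{1/2}\psi\| \;.
\end{equation*}

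For the starred variants, I would use $\|c_\alpha^g(k)^*\psi\|^2=\|c_\alpha^g(k)\psi\|^2+\langle\psi,[c_\alpha^g(k),c_\alpha^g(k)^*]\psi\rangle$ and expand the commutator with the CAR. Its diagonal $(p=q)$ contribution is $\frac{1}{n_\alpha(k)^2}\sum_p |g(p,k)|^2(1-a_p^*a_p-a_{p\mp k}^*a_{p\mp k})$, whose operator norm is $\le \|g\|_\infty^2$, while the off-diagonal part is a weighted analogue of $\cE_\alpha(k,k)$ from~\eqref{eq:cE_alpha} and is controlled exactly as in Lemma~\ref{lem:quasi-bosonic-behavior}, hence of lower order. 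Summing the diagonal over $\alpha\in\cI_k$ yields an additional $\|g\|_\infty^2 M\|\psi\|^2$, producing the $(\cN_\delta+M)$ factor in the $\ell^1$-bound; in the $f$-weighted estimate it contributes only $\|g\|_\infty^2\|f\|_{\ell^2}^2\|\psi\|^2$, giving the $(\cN_\delta+1)$ factor.

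The most delicate step, and the one I expect to be the main obstacle, is the reduction to $\cN_\delta$ rather than $\cN$: one has to reuse the geometric estimate $e(p)+e(p\mp k)\ge c\hbar N^{-\delta}$ from Lemma~\ref{lem:c_conversion}, relying on both the belt cutoff $|k\cdot\hat\omega_\alpha|\ge N^{-\delta}$ and the patch-diameter bound implied by $M\gg N^\delta$, in order to guarantee that at least one of $p, p\mp k$ carries excitation energy above the $\Gcutoff$ threshold (with the threshold in $\Gcutoff$ harmlessly rescaled by a constant if needed). Once this geometric input is in place, every other step is a routine Cauchy--Schwarz or CAR manipulation.
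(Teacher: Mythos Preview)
Your approach is essentially the same as the paper's, which simply records that the proof is analogous to \cite[Lemma~5.4]{benedikter2021correlation}; the steps you outline (Cauchy--Schwarz in $p$ to extract $\|g\|_\infty$, disjointness of patches plus the geometric gap $e(p)+e(p\mp k)\ge c\hbar N^{-\delta}$ to land on $\cN_\delta$, then the commutator identity for the starred bounds) are exactly those of that reference adapted to 2d.

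One small simplification: there is in fact no ``off-diagonal'' part in $[c_\alpha^g(k),c_\alpha^g(k)^*]$. Since $p,q\in B_\F^c\cap B_\alpha$ and $p\mp k,q\mp k\in B_\F\cap B_\alpha$, the only nonvanishing Kronecker deltas in the CAR computation are $\delta_{p,q}$ and $\delta_{p\mp k,q\mp k}=\delta_{p,q}$, so the commutator is exactly your diagonal expression $\tfrac{1}{n_\alpha(k)^2}\sum_p|g(p,k)|^2(1-a_p^*a_p-a_{p\mp k}^*a_{p\mp k})$, with operator norm $\le\|g\|_\infty^2$. The weighted analogue of $\cE_\alpha(k,k)$ is precisely the $-a_p^*a_p-a_{p\mp k}^*a_{p\mp k}$ piece already contained in this diagonal, not an additional term; invoking Lemma~\ref{lem:quasi-bosonic-behavior} is unnecessary here. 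This makes the starred estimates even cleaner than you anticipated.
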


\begin{proof}
The proof is analogous to~\cite[Lemma~5.4]{benedikter2021correlation}.
\end{proof}

\section{Pseudo-Bosonic Bogoliubov Transformations}
\label{sec:pseudobosonic}

Recall $Q_\B$ from the correlation Hamiltonian~\eqref{eq:HNconjugation}. For the upper bound on $E_{\GS}$, in analogy to $\widetilde{\XXX}$ and $\widetilde{\cE}_1$~\eqref{eq:XXXtilde_cE1tilde}, we define the low-momentum restriction
\begin{equation} \label{eq:QBtilde}
    \widetilde{Q}_{\B} 
    \coloneq \frac{1}{(2 \pi)^2 N} \sum_{k \in \ZZZ^2_* : |k| < C N^{\frac 12}} \hat{V}(k) \left( b^*(k) b(k) + \frac 12 \big( b^*(k) b^*(-k) + b(-k) b(k) \big) \right) \;.
\end{equation}
As in~\cite{benedikter2023correlation}, we approximate $Q_{\B}$ and $\widetilde{Q}_{\B}$ using the pairs operators introduced in \eqref{def:pair_operators} by
\begin{align}
    Q_\B^R
    &\coloneq\frac{1}{(2\pi)^2N}\sum_{k\in\Gamma^\nor}\hat{V}(k)
        \Bigg( \sum_{\alpha,\beta\in\cI_k^+}n_\alpha(k)n_\beta(k)c_\alpha^*(k)c_\beta(k) 
        + \sum_{\alpha,\beta\in\cI_k^-}n_\alpha(k)n_\beta(k)c_\alpha^*(k)c_\beta(k) \nonumber\\
    &\quad
        +\sum_{\alpha\in\cI^+_k,\beta\in\cI^-_k}n_\alpha(k)n_\beta(k)c_\alpha^*(k)c_\beta^*(k) + \sum_{\alpha\in\cI_k^-,\beta\in\cI^+_k}n_\alpha(k)n_\beta(k)c_\alpha(k)c_\beta(k)\Bigg)\;.
\end{align}
This approximation amounts to neglecting the contributions from corridors and patches close to the equator, whose smallness is ensured by the following lemma.

\begin{lemma}
Recall the definitions~\eqref{eq:HNconjugation} and~\eqref{eq:QBtilde} of $ Q_{\B} $ and $\widetilde{Q}_{\B}$. If $\sum_{k \in \ZZZ^2} |k|^{2-b} \hat{V}(k)^2 < \infty$ for some $b \in (0,1)$, then there exist $C, C_\varepsilon>0$ such that for all $ \xi \in \cF $,
\begin{equation} \label{eq:Q_QR_bound_mod}
\begin{aligned}
	&|\langle \xi, (\widetilde{Q}_{\B} - Q_{\B}^R) \xi \rangle| \\
	&\le C \hbar R^2 M^{\frac 32} N^{-\frac 14 + \frac b4 + \frac \delta2} \sup_{\lambda \in [0,1]} \langle T_\lambda \xi, (\cN+1)^3 T_\lambda \xi \rangle \\
    &\quad + C_\varepsilon \hbar R^{1 + \frac b2} N^\varepsilon 
        \big( N^{\frac 14} \langle \xi, \HHH_0 \xi \rangle^{\frac 12} + 1 \big)
        \big( N^{\frac 14} \langle \xi, \HHH_0 \xi \rangle^{\frac 12}
        \big( N^{-\frac 18}
        + N^{-\frac{\delta}{2}}
        + R M^{\frac 12} N^{-\frac 14 + \frac{\delta}{2}} \big) + R N^{-\frac 14} \big) \;.
\end{aligned}
\end{equation}
Further, if $\sum_{k\in\ZZZ^2}|k| \hat{V}(k) <\infty$ and $\xi$ belongs to an approximate ground state in the sense of Definition~\ref{def:approxGS}, then
\begin{equation} \label{eq:Q_QR_bound}
	|\langle \xi, (Q_{\B} - Q_{\B}^R) \xi \rangle|
	\le C_\varepsilon \hbar N^\varepsilon
        \big( R^{-\frac 12}
        + N^{-\frac 18}
        + N^{-\frac{\delta}{2}}
        + R M^{\frac 12} N^{-\frac 14 + \frac{\delta}{2}} \big) \;.
\end{equation}
\label{lem:Q_QR_bound}
\end{lemma}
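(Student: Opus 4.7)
The plan is to split $b^*(k) = b^*_R(k) + r^*(k)$, where $b^*_R(k) \coloneq \sum_{\alpha \in \cI_k^+} n_\alpha(k) c_\alpha^*(k)$ reassembles the patch pair operators and the residue $r^*(k)$ collects three kinds of leftover pairs $(p, p-k) \in L_k$: (R1) those with $p$ in a corridor between patches or in the radial complement of $\bigcup_\alpha B_\alpha$; (R2) cross-patch pairs with $p \in B_\alpha$ but $p-k \notin B_\alpha$; and (R3) pairs lying in belt-excluded patches where $|k \cdot \hat\omega_\alpha| < N^{-\delta}$. An analogous split of $b^*(-k)$ uses the reflection symmetry $P_\alpha = -P_{\alpha + M/2}$, which relates $c_\alpha^*(k)$ for $\alpha \in \cI_k^-$ to $b_\alpha^*(-k)$. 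Expanding $b^*b - b_R^* b_R = r^* b + b_R^* r$, and likewise the off-diagonal $b^*(k) b^*(-k)$ term, reduces the problem to bounding cross-term expectations, plus the separate contribution from high momenta $|k| \ge R$ (only present in $Q_\B$, not in $\widetilde Q_\B$).

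I then apply Cauchy--Schwarz to each cross term. The factor $\|b(k)\xi\|$ is controlled via Lemma~\ref{lem:b_bounds}, giving $C\sqrt{N \log N \, \eva{\xi,\HHH_0\xi}} + C\sqrt{|k| N^{1/2}}$; the factor $\|b_R(k)\xi\|$ is controlled via Lemma~\ref{lem:c_conversion}, yielding $\bigl(\sum_\alpha n_\alpha(k)^2\bigr)^{1/2} \, \|\cN_\delta^{1/2}\xi\|$; the factor $\|r(k)\xi\|$ is estimated using Lemma~\ref{lem:naivebounds_bdXXX}, with $|L_k|$ replaced by the smaller count of residue pairs. The required counts follow from the flat-patch structure (corridors of angular width $\sim R/k_\F$, belt arc $\sim N^{-\delta}$) combined with Lemma~\ref{lem:annulus_intersection}. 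Summing over $k \in \Gamma^{\textnormal{nor}}$ with $|k| < R$ against $\hat V(k)$ via the Cauchy--Schwarz split~\eqref{eq:Vsplit} yields~\eqref{eq:Q_QR_bound_mod}. For the approximate-ground-state bound~\eqref{eq:Q_QR_bound}, I then substitute the a priori estimates $\eva{\xi,\HHH_0\xi} \le C\hbar$ and $\eva{\xi,\cN\xi} \le C_\varepsilon N^{1/4+\varepsilon}$ from Lemma~\ref{lemma:A-Priori-Bounds}; the $|k| \ge R$ contribution is treated via $\sum_{|k|\ge R}\hat V(k) \le R^{-1} \sum_k |k| \hat V(k) < \infty$, producing the $R^{-1/2}$ term.

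The main technical obstacle is obtaining the sharp $N^{-\delta/2}$ exponent (rather than $N^{-\delta}$) from the belt-cutoff residue: this requires distributing the $N^{-\delta}$-gain symmetrically across both Cauchy--Schwarz factors, so that the $r$-side produces $\sqrt{N^{-\delta} |L_k| \eva{\xi,\cN_\delta\xi}}$ while the $b$-side retains its full $\HHH_0$-weighted bound. A further 2D-specific delicacy is that $|L_k| \sim |k| N^{1/2}$ (versus $\sim |k|^2 N^{2/3}$ in 3D) leaves less slack in the naive bounds, so Lemma~\ref{lem:b_bounds} with its $\log N$ factor must be used in preference to pure lattice counting wherever $b(k)$ appears. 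The higher power $(\cN+1)^3$ in~\eqref{eq:Q_QR_bound_mod} reflects that the triple product of operator bounds — one from $\|b_R\|$, one from $\|r\|$, and one commutator picked up when reordering to apply the gapped conversion — naturally produces up to three $\cN$-factors in total.
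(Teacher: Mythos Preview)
Your high-level skeleton (split $b = b_R + r$, Cauchy--Schwarz the cross terms, separate $|k| \ge R$ from $|k| < R$) matches the paper's, but the two decisive technical ingredients are missing, and your account of where certain terms originate is incorrect.

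\textbf{The belt residue.} You propose to control $\|r(k)\xi\|$ via the naive bound of Lemma~\ref{lem:naivebounds_bdXXX} together with lattice counting (even invoking Lemma~\ref{lem:annulus_intersection}, which the paper does \emph{not} use here). That route cannot produce the exponents $N^{-1/8}$ and $N^{-\delta/2}$ simultaneously, nor the additive $R N^{-1/4}$ term. The paper instead uses an $\HHH_0$-weighted Cauchy--Schwarz, $\sum_{p} \|a_{p-k}a_p\xi\| \le (\sum_p \lambda_{k,p}^{-1})^{1/2} \langle\xi,\HHH_0\xi\rangle^{1/2}$, and then decomposes the belt set $Y_k = \{p \in U_k : \lambda_{k,p} \le \hbar N^{-\delta}\}$ into planes $Y_{k,m}$ parallel to $\hat k$. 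The extremal planes $m = \pm m^*$ are bounded crudely by $|Y_{k,\pm m^*}| \le 2R$, giving the $R$ term. For the remaining planes one exploits the lattice gap $s(p) \ge s_{\min} \sim N^{1/4}|k|^{-1/2}$ to prove the refined estimate $\sum_{p \in \tilde Y_k} \lambda_{k,p}^{-1} \le C N |k| (N^{-1/4} + N^{-\delta})$; its square root yields the factors $N^{-1/8}$ and $N^{-\delta/2}$. Your suggestion to ``distribute the $N^{-\delta}$-gain symmetrically'' via $\sqrt{N^{-\delta}|L_k| \langle\xi,\cN_\delta\xi\rangle}$ has no mechanism behind it; the gain must come from the inverse-energy sum, not from counting.

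\textbf{The $(\cN+1)^3$ and $T_\lambda$ term.} This does \emph{not} arise from a triple product of $b_R$, $r$, and a commutator as you claim. It comes entirely from the $R \le |k| < C N^{1/2}$ contribution to $\widetilde Q_{\B}$, handled (as in \cite[Lemma~A.3]{benedikter2023correlation}) by writing $b^*(k)b(k)$ and $b(-k)b(k)$ in normal order, pulling out an $|L_k|$-factor, and using the propagation estimate of Lemma~\ref{lem:gronwall} through $T_\lambda$; this is why $T_\lambda$ appears at all. Your proposal never explains the appearance of $T_\lambda$, which is a genuine gap.
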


\begin{proof}
First, note that for $|k| \geq R$, we have $k \notin \Gamma^{\nor}$ (compare~\eqref{eq:Gammanor}), so $k$ does not contribute to $Q_{\B}^R$. Thus,
\begin{equation*}
\begin{aligned}
    |\langle \xi, (Q_{\B} - Q_{\B}^R) \xi \rangle|
	&\le \frac{C}{N} \sum_{k \in \ZZZ^2_* : |k| \geq R} \hat{V}(k)
        \Big( \langle \xi ,b^*(k) b(k) \xi \rangle
        + |\langle \xi ,b(k) b(-k) \xi \rangle| \Big) \\
    &\quad + \frac{C}{N} \sum_{k \in \ZZZ^2_* : |k| < R} \hat{V}(k)
        \big( \Vert b(k) \xi \Vert + \Vert b^*(-k) \xi \Vert \big) 
        \Vert r^R(k) \xi \Vert \;,
\end{aligned}
\end{equation*}
where the bosonization error for $|k| < R$ is defined as
\begin{equation}
    r^R(k) \coloneq b(k) - \sum_{\alpha \in \cI_k^+} n_\alpha(k) c_\alpha(k) \;.
\end{equation}
The same formula is true for $|\langle \xi, (\widetilde{Q}_{\B} - Q_{\B}^R) \xi \rangle|$ with the additional constraint $|k| < C N^{\frac 12}$.\\

\textbf{Case $|k| \ge R$.} For $\widetilde{Q}_{\B}$, we follow the same steps as in~\cite[Lemma~A.3]{benedikter2023correlation}, using Lemma~\ref{lem:naivebounds_bdXXX} and $\sum_{|k| < C N^{\frac 12}} \hat{V}(k) |k|^{\frac 12} \leq N^{\frac{1+b}{4}}$ (compare~\eqref{eq:Vsplit}), which yields
\begin{equation}
\begin{aligned}
    &\frac{C}{N} \sum_{k \in \ZZZ^2_* : R \leq |k| < C N^{\frac 12}} \hat{V}(k)
        \Big( \langle \xi ,b^*(k) b(k) \xi \rangle
        + |\langle \xi ,b(k) b(-k) \xi \rangle| \Big) \\
    &\leq C R^2 M^{\frac 32} N^{-\frac 34 + \frac b4 + \frac \delta2} 
        \sup_{\lambda \in [0,1]} \langle T_\lambda \xi, (\cN+1)^3 T_\lambda \xi \rangle \;.
\end{aligned}
\end{equation}
For $Q_{\B}$, as in~\cite[Lemma~6.1]{benedikter2023correlation}, we use Lemmas~\ref{lem:b_bounds} and~\ref{lemma:Onsager-Bound} to get
\begin{equation*}
\begin{aligned}
    &\frac{1}{N} \sum_{|k| \ge R} \hat{V}(k)
        \left( \Vert b(k) \xi \Vert + \Vert b^*(-k) \xi \Vert \right) 
        \Vert r^R(k) \xi \Vert
    \leq \frac{C}{N} \sum_{|k| \ge R} \hat{V}(k)
        \left(N^{\frac 12} + |k| N^{\frac 12}\right)^{\frac 12}
        \left(N^{\frac 12}\right)^{\frac 12} \log(N)\\
    &\leq C N^{-\frac 12} \log(N) \sum_{|k| > R} \hat{V}(k) |k|^{\frac 12} R^{-\frac 12}
    \leq C \hbar R^{-\frac 12} \log(N) \;.
\end{aligned}
\end{equation*}

\textbf{Case $|k|<R$.} Here, the errors for $\widetilde{Q}_{\B}$ and $Q_{\B}$ are identical and proportional to
\begin{equation*}
    \frac{1}{N} \sum_{k \in \ZZZ^2_* : |k| < R} \hat{V}(k)
        \left( \Vert b(k) \xi \Vert + \Vert b^*(-k) \xi \Vert \right) 
        \Vert r^R(k) \xi \Vert
\end{equation*}
We write
\begin{equation*}
    \Vert r^R(k) \xi \Vert
    \le \sum_{p \in Y_k} \Vert a_{p-k} a_p \xi \Vert
    + \sum_{p \in U_k \setminus Y_k} \Vert a_{p-k} a_p \xi \Vert \;,
\end{equation*}
where $ U_k $ tracks all non-bosonized pairs and $ Y_k $ in particular such excluded by the belt cutoff\footnote{Note that if $p$ is excluded by the belt cutoff~\eqref{eq:Ik+set}, then $p \in B_{\alpha}$ or $p-k \in B_\alpha$ for some $\alpha \notin \cI_k$. We then write $\hbar^{-2} \lambda_{k,p} = (p \cdot k - \tfrac{|k|^2}{2}) \leq |p \cdot k - k_{\F}(k \cdot \hat{\omega}_\alpha)| + k_{\F} |k \cdot \hat{\omega}_\alpha|$, where by the belt cutoff $k_{\F} |k \cdot \hat{\omega}_\alpha| \leq \pi^{-\frac 12} N^{\frac 12 - \delta}$. From the patch geometry, $|p \cdot k - k_{\F}(k \cdot \hat{\omega}_\alpha)| \leq |k| |p-k_{\F} \hat{\omega}_\alpha| \leq |k|(R + C k_{\F} M^{-1})$, so with $M \gg R N^\delta \gg |k| N^\delta$, we conclude $\hbar^{-2} \lambda_{k,p} \leq \pi^{-\frac 12} (1+o_N(1)) |k|^{-1} N^{\frac 12-\delta}$. This implies~\eqref{eq:cutoffcondition} for $N$ large enough, so the set $Y_k$ indeed covers all $p$ excluded by the belt cutoff.} in~\eqref{eq:Ik+set}:
\begin{equation}
    U_k \coloneq L_k \setminus \bigcup_{\alpha = 1}^M (B_\alpha \cap (B_\alpha + k)) \;, \qquad
    Y_k \coloneq \{ p \in U_k ~|~ \lambda_{k,p} \le \hbar N^{-\delta} \} \;,
\end{equation}
with excitation energy $\lambda_{k,p} \coloneq \frac 12 \hbar^2 (|p|^2 - |p-k|^2)$. Note that, introducing $\hat{k} \coloneq k/|k|$ and the distance $s(p) \coloneq (p \cdot \hat{k} - \frac{|k|}{2})$ of $p$ in $k$-direction to the tip of the lune $B_{\F}^c \cap (B_{\F}+k)$, see Figure~\ref{fig:Fig_Lemma_6_1}, we have $\lambda_{k,p} = \hbar^2 |k| s(p)$. The cutoff in $Y_k$ then amounts to
\begin{equation} \label{eq:cutoffcondition}
    \lambda_{k,p} \le \hbar N^{-\delta}
    \quad \Leftrightarrow \quad
    s(p) \leq |k|^{-1} N^{\frac 12 - \delta} \;.
\end{equation}
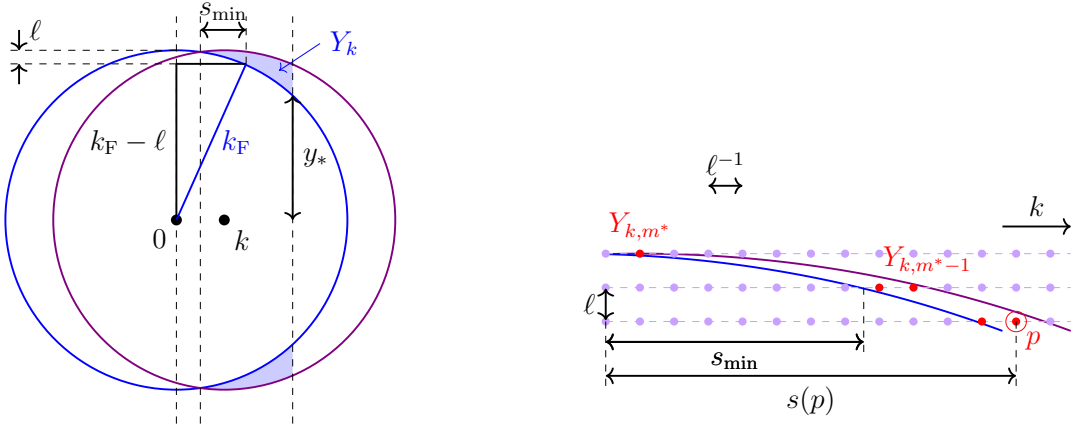
\begin{figure}
    \centering
    \scalebox{0.9}{\def\kF{2.5}      
\def\k{0.7}     
\def\cutoff{1.7}  
\def\l{0.2}     
\begin{tikzpicture}

\fill[blue, opacity=0.2]
    plot[domain={\cutoff}:{\k/2}, samples = 50] ({\x}, {sqrt(\kF^2 - (\x)^2)}) -- 
    plot[domain={\k/2}:{\cutoff}, samples = 50] ({\x}, {sqrt(\kF^2 - (\x-\k)^2)});
\fill[blue, opacity=0.2]
    plot[domain={\cutoff}:{\k/2}, samples = 50] ({\x}, {-sqrt(\kF^2 - (\x)^2)}) -- 
    plot[domain={\k/2}:{\cutoff}, samples = 50] ({\x}, {-sqrt(\kF^2 - (\x-\k)^2)});

\fill (0,0) circle (0.08) node[anchor = north east]{$0$};
\fill ({\k},0) circle (0.08) node[anchor = north west]{$k$};
\draw[thick, blue] (0,0) circle ({\kF});
\draw[thick, red!50!blue] ({\k},0) circle ({\kF});

\draw[dashed] (0,{-\kF-0.5}) -- ++(0,{2*\kF+1});
\draw[dashed] ({\k/2},{-\kF-0.5}) -- ++(0,{2*\kF+1});
\draw[dashed] ({\cutoff},{-\kF-0.5}) -- ++(0,{2*\kF+1});
\draw[thick] (0,0) --node[anchor = east]{$k_{\F} - \ell$} (0,{\kF-\l});
\draw[thick] (0,{\kF-\l}) -- ++({sqrt(2*\kF*\l + \l^2)},0);
\draw[thick, blue] (0,0) --node[anchor = west]{$k_{\F}$} ({sqrt(2*\kF*\l + \l^2)},{\kF-\l});

\draw[dashed] ({sqrt(2*\kF*\l + \l^2)},{\kF-\l}) -- ({sqrt(2*\kF*\l + \l^2)},{\kF+0.5});
\draw[<->, thick] ({\k/2},{\kF+0.3}) --node[anchor = south]{$s_{\min}$} ({sqrt(2*\kF*\l + \l^2)},{\kF+0.3});

\draw[dashed] (0,{\kF}) -- ++(-2.5,0);
\draw[dashed] (0,{\kF-\l}) -- ++(-2.5,0);
\draw[<-, thick] (-2.3,{\kF}) -- ++(0,0.3) node[anchor = west]{$\ell$};
\draw[<-, thick] (-2.3,{\kF-\l}) -- ++(0,-0.3);

\draw[<-, blue] (1.5,2.2) -- ++(0.6,0.4) node[anchor = west]{$Y_k$};

\draw[<->, thick] ({\cutoff},0) --node[anchor = west]{$y_*$} ({\cutoff},{sqrt(\kF^2 - \cutoff^2)});

\end{tikzpicture}}
    \hspace{2em}
    \scalebox{0.9}{\definecolor{lilla}{RGB}{200,160,255}

\def\kF{18}			
\def\k{1}
\def\d{\k+6.3}
\def\Yshift{-8.5}   
\def\Ymfirst{-\Yshift+1.6}  
\def\l{0.5}         
\def\spacing{0.5}
\def\nlines{3}

\begin{tikzpicture}[scale=1]

\draw[thick,blue!50!red] plot[domain={\k/2}:{\d}, samples=200] ({\x}, {sqrt(\kF^2 - (\x - \k)^2) - \kF});
\draw[thick,blue] plot[domain={\k/2}:{\d - \k}, samples=200] ({\x}, {sqrt(\kF^2 - (\x)^2) - \kF});

\pgfmathsetmacro{\kFSqr}{\kF*\kF}    
\foreach \i in {0,...,\numexpr\nlines-1} {
    \draw[dashed, lilla, thin] (\k/2, -\i*\l) -- (\d, -\i*\l);
    \foreach \j in {1,...,14} {
        \pgfmathsetmacro{\rSqr}{(\spacing*\j)*(\spacing*\j) + (-\i*\l+\kF)*(-\i*\l+\kF)}
        \pgfmathsetmacro{\sSqr}{(\spacing*\j-\k)*(\spacing*\j-\k) + (-\i*\l+\kF)*(-\i*\l+\kF)-0.1}
        \ifdim \rSqr pt < \kFSqr pt
            \fill[lilla]  (\spacing*\j, -\i*\l) circle (0.06);
        \else
            \ifdim \sSqr pt < \kFSqr pt
                \fill[red]  (\spacing*\j, -\i*\l) circle (0.06);
            \else
                \fill[lilla]  (\spacing*\j, -\i*\l) circle (0.06);
            \fi
        \fi
    }
}

\pgfmathsetmacro{\yTop}{-(\nlines-2)*\l}
\pgfmathsetmacro{\yBottom}{-(\nlines-1)*\l}
\pgfmathsetmacro{\yLast}{-(\nlines-1)*\l}

\draw[->, thick] ({\d-\k}, 0.4) -- ({\d},0.4) node[midway, above] {$k$};
\draw[<->, thick] (\k/2, +\yTop*1.03) -- (\k/2, +\yBottom*0.98) node[midway, left] {$\ell$};

\node[red] at (1,0.4) {$Y_{k,m^*}$};
\node[red] at (5.2,-0.1) {$Y_{k,m^*-1}$};

\draw[<->, thick] ({\k/2},-1.3) -- ({sqrt(2*\kF*\l + \l^2)},-1.3) node[midway, below]{$s_{\min}$};
\draw[<->, thick] ({\k/2},-1.3) -- ({sqrt(2*\kF*\l + \l^2)},-1.3) node[midway, below]{$s_{\min}$};

\draw[dashed] ({\k/2},-1) -- ++(0,-1);
\draw[dashed] ({sqrt(2*\kF*\l + \l^2)},-0.5) -- ++(0,-1);
\draw[dashed] ({\k/2+12*\l},-1) -- ++(0,-1);
\draw[<->, thick] ({\k/2},-1.8) -- ++({12*\l},0) node[midway, below]{$s(p)$};
\draw[red] ({\k/2+12*\l},-1) circle (0.15) node[anchor = north west]{$p$};

\draw[<->, thick] ({4*\l},1) --node[anchor = south]{$\ell^{-1}$} ++({\l},0);

\end{tikzpicture}}
    \caption{\textbf{Left}: Depiction of the set $Y_k$ and geometric considerations for determining $s_{\min}$. \textbf{Right}: We decompose the set $Y_k$ into planes $Y_{k,m}$ parallel to $k$. For a point $p \in Y_{k,m}$, the pair excitation energy is then $\lambda_{k,p} = \hbar^2 |k| s(p)$, which is conveniently lower-bounded for $|m| \neq m^*$ using $s(p) \geq s_{\min}$.}
    \label{fig:Fig_Lemma_6_1}
\end{figure}
We now decompose $Y_k$ into planes parallel to $\hat{k}$, i.e., perpendicular to $\hat{k}^\perp \coloneq \left( \begin{smallmatrix} 0 & -1 \\ 1 & 0 \end{smallmatrix} \right) \hat{k}$, where the distance of two planes is $\ell=|k|^{-1} \gcd(k_1,k_2) \leq1$:
\begin{equation}
    Y_{k,m} \coloneq \{ p \in Y_k ~|~ p \cdot \hat{k}^\perp = m \ell \} \;, \qquad m \in \mathbb{Z} \;,
\end{equation}
see Figure~\ref{fig:Fig_Lemma_6_1}. Here, $ Y_{k,m} $ can only be non-empty if
\begin{equation}
    m_* \leq |m| \leq m^* \;, \qquad
    m_* \coloneq \inf \{ m \in \mathbb{N} ~|~ m \ell \geq y_* \} \;, \qquad
    m^* \coloneq \sup \{ m \in \mathbb{N} ~|~ m \ell < k_{\F} \} \;,
\end{equation}
where $ y_* > 0 $ is defined such that (compare~\eqref{eq:cutoffcondition})
\begin{equation} \label{eq:ystar}
    k_{\F}^2 - y_*^2 = \big( \tfrac{|k|}{2} + |k|^{-1} N^{\frac 12 - \delta} \big)^2
    \quad \Rightarrow \quad
    y_* \ge c N^{\frac 12} \;, \qquad 
    k_{\F}^2 - y_*^2 
    \leq N^{1-2\delta} + R^2 \;.
\end{equation}
Here, $ R^2 \ll N^{1-2\delta} $ since $ \delta \in (0,\frac 14) $ and $R$ will be chosen as a sufficiently small power of $N$. We now consider the cases $m \in \{-m^*, m^*\}$ and $|m| \le m^* -1$, separately: Let $\tilde Y_k \coloneq Y_k \setminus (Y_{k,-m^*} \cup Y_{k,m^*})$. Then\footnote{Note that Proposition~\ref{prop:lambda-bound} already provides us with the bound $\sum_{p \in \tilde Y_k} \lambda_{k,p}^{-1} \leq C \hbar^{-2} \log(N)$. However, this is insufficient for this lemma: We need $\sum_{p \in \tilde Y_k} \lambda_{k,p}^{-1} = \hbar^{-2} o_N(1)$.}
\begin{equation} \label{eq:Yk_bound_split}
    \sum_{p \in Y_k} \Vert a_{p-k} a_p \xi \Vert
    \le |Y_{k,-m^*}|
        + |Y_{k,m^*}| 
        + \Bigg( \sum_{p \in \tilde Y_k} \lambda_{k,p}^{-1} \Bigg)^{\frac 12}\langle\xi,\HHH_0\xi\rangle^{\frac 12}\;.
\end{equation}
The spacing of points on each plane is $\ell^{-1}$, so the number of points per plane is bounded by $|Y_{k,m}| \leq \ell |k| + 1 \leq |k| + 1 \leq 2 R$, which is in particular true for $m \in \{-m^*, m^*\}$.\\
For $|m| \le m^* - 1$, note that since $|k| < R$, the lune is very thin, which results in a lower bound on $s(p)$, (i.e., an energy gap), see Figure~\ref{fig:Fig_Lemma_6_1}:
\begin{equation} \label{eq:smin}
    s(p)
    \geq s_{\min}
    = \sqrt{k_{\F}^2 - (k_{\F} - \ell)^2} - \frac{|k|}{2} \quad \Rightarrow \quad
    s_{\min}
    \geq C \sqrt{k_{\F} \ell} - R
    \geq C N^{\frac 14} |k|^{-\frac 12} \;.
\end{equation}
Likewise, $p \in Y_{k,m}$ satisfies $s(p) \geq \sqrt{k_{\F}^2 - (m \ell)^2} - \frac{|k|}{2}$, and since $|k| < R \ll s_{\min} \leq s(p)$, we have $s(p) \ge c \sqrt{k_{\F}^2 - (m \ell)^2} $. Since every plane accommodates $\le (|k| + 1) $ points, we have
\begin{equation*}
\begin{aligned}
    &\sum_{p\in \tilde{Y}_k} \lambda_{k,p}^{-1}
    = \sum_{p\in \tilde{Y}_k} \frac{1}{\hbar^2 |k| s(p)}
    \leq C \hbar^{-2} |k|^{-1} (|k| + 1) \sum_{m_* \leq |m| \le m^*-1} \left( k_{\F}^2 - (m \ell)^2 \right)^{-\frac 12} \\
    &\leq C \hbar^{-2} \Bigg( s_{\min}^{-1}
        + \int_{m_*}^{m^*-1} \left( k_{\F}^2 - (m \ell)^2 \right)^{-\frac 12} \di m \Bigg)
    \leq C \hbar^{-2} \Bigg( s_{\min}^{-1}
        + \ell^{-1} \int_{y_*}^{k_\F - \ell} \left( k_{\F}^2 - y^2 \right)^{-\frac 12} \di y \Bigg) \\
    &\leq C \hbar^{-2} \Bigg( s_{\min}^{-1}
        + |k| \Big[ \arctan\Big( \tfrac{y}{\sqrt{k_{\F}^2 - y^2}} \Big) \Big]_{y=y_*}^{k_{\F}-\ell} \Bigg) \;.
\end{aligned}
\end{equation*}
Using $\arctan(\tfrac 1x) = \tfrac{\pi}{2} - \arctan(x)$, where $\arctan(x) = x + \cO(x^3)$, we conclude
\begin{equation*}
    \sum_{p\in \tilde{Y}_k} \lambda_{k,p}^{-1}
    \leq C \hbar^{-2} \Big( s_{\min}^{-1} + |k| \tfrac{\sqrt{k_{\F}^2 - y_*^2}}{y_*} \Big)
    \leq C N |k| (N^{-\frac 14} +  N^{-\delta}) \;,
\end{equation*}
where we bounded $s_{\min}^{-1}$ via~\eqref{eq:smin}, and $\sqrt{k_{\F}^2 - y_*^2}$ and $y_*$ via~\eqref{eq:ystar}. Then, recalling that $|Y_{k,m^*}|, |Y_{k,-m^*}| \leq C R$,~\eqref{eq:Yk_bound_split} becomes
\begin{equation} \label{eq:Yk_bound}
    \sum_{p \in Y_k} \Vert a_{p-k} a_p \xi \Vert
    \le C |k|^{\frac 12} N^{\frac 12} \langle \xi, \HHH_0 \xi \rangle^{\frac 12} (N^{-\frac 18} + N^{-\frac{\delta}{2}}) + C R \;.
\end{equation}
For $ p \in U_k \setminus Y_k $, we exploit the even larger spectral gap $ e(p) + e(p-k) = 2 \lambda_{k,p} > 2 \hbar N^{-\delta} $:
\begin{equation*}
\begin{aligned}
    &\sum_{p \in U_k \setminus Y_k} \Vert a_{p-k} a_p \xi \Vert
    \le C \Bigg( \sum_{p \in U_k \setminus Y_k} \hbar^{-1} N^{\delta} (e(p) + e(p-k)) \Vert a_{p-k} a_p \xi \Vert^2 \Bigg)^{\frac 12} |U_k \setminus Y_k|^{\frac 12} \\
    &\le C R M^{\frac 12} N^{\frac 14 + \frac{\delta}{2}} \eva{\xi, \HHH_0 \xi}^{\frac 12} \;,
\end{aligned}
\end{equation*}
where we used $ |U_k \setminus Y_k| \le C M R^2 $, as this set consists of $ M $ corridors of area $ \le C R^2 $. Putting all bounds together, we obtain
\begin{equation}
    \Vert r^R(k) \xi \Vert
    \le C |k|^{\frac 12} N^{\frac 12} \langle \xi, \HHH_0 \xi \rangle^{\frac 12}
        \big(N^{-\frac 18}
        + N^{-\frac{\delta}{2}}
        + R M^{\frac 12} N^{-\frac 14 + \frac{\delta}{2}} \big) + C R\;.
\end{equation}
Combining this with the bounds~\eqref{eq:b_bounds} on $ \Vert b^\sharp(k) \xi \Vert^2 \le C N \log(N) \langle \xi, \HHH_0 \xi \rangle + C |k| N^{\frac{1}{2}} $ with $\sharp \in \{ *, \cdot \}$, and estimating $ \sum_{|k| < R} \hat{V}(k) |k| \leq C R^{\frac{2+b}{2}} $ as in~\eqref{eq:Vsplit} yields~\eqref{eq:Q_QR_bound_mod}.\\
For~\eqref{eq:Q_QR_bound}, we directly estimate $\sum_k \hat{V}(k) |k| < \infty$ and use that by Lemma~\ref{lemma:Onsager-Bound}, for approximate ground states, $\langle \xi, \HHH_0 \xi \rangle \leq C N^{-\frac 12}$.
\end{proof}

By contrast, the kinetic energy $\HHH_0$ cannot be directly expressed in terms of the quasi-bosonic pair operators $c$ and $c^*$. However, as in~\cite{benedikter2023correlation}, it behaves with respect to commutators as
\begin{equation}
    [\HHH_0,c^*_\alpha(k)]
    = \frac{1}{n_\alpha(k)}\sum_{p\in L_k \cap B_\alpha}(e(p)+e(p-k))a_p^*a_{p-k}^*
    \simeq 2\hbar\kappa|k\cdot\hat{\omega}_\alpha|c_\alpha^*(k)\;,
\end{equation}
where we linearized the dispersion relation as
$e(p)+e(p-k)\simeq2\hbar\kappa|k\cdot\hat{\omega}_\alpha|$ with $\kappa = \pi^{-\frac 12}$, so $k_{\F} = \kappa N^{\frac 12}$. Thus, heuristically,
\begin{equation} \label{eq:DDD_B}
    \HHH_0 
    \simeq 2\hbar\kappa\sum_{k\in\Gamma^\nor} \sum_{\alpha \in \cI_k} |k\cdot\hat{\omega}_\alpha|c_\alpha^*(k)c_\alpha(k)
    \eqqcolon \DDD_\B\;.
\end{equation}
We can then approximate $ (\HHH_0 + Q_\B) $ as follows: Define $ g(k) \in \RRR $, $ u(k), v(k) \in \RRR^{|\cI_k^+|} $, and $ d(k), b(k) \in \RRR^{|\cI_k^+| \times |\cI_k^+|} $ via
\begin{align}
    g(k)
    &\coloneqq \frac{1}{2(2 \pi)^2} \hat{V}(k) \;, \qquad
    u_\alpha(k)
    \coloneqq |\hat{k}\cdot\hat{\omega}_\alpha|^{\frac{1}{2}} \;, \qquad 
    v_\alpha(k)
    \coloneqq k_{\F}^{-\frac 12} |k|^{- \frac 12} n_\alpha(k)~\textrm{ for } \alpha \in \cI_k^+ \;, \nonumber \\
    d(k)
    &\coloneqq \textrm{diag}\{u_\alpha(k)^2~|~ \alpha \in \cI_k ^+\}\;, \qquad
    b(k)
    \coloneqq g(k) |v(k) \rangle \langle v(k)|\;, \label{eq:guvdb}
\end{align}
with $\hat{k} \coloneqq k/|k|$, as well as the $|\cI_k|\times|\cI_k|$ real symmetric matrices
\begin{equation} \label{eq:DWW}
    D(k)\coloneqq\begin{pmatrix}
        d(k) & 0\\
        0 & d(k)
    \end{pmatrix}\;,\qquad
    W(k)\coloneqq\begin{pmatrix}
        b(k) & 0\\
        0 & b(k)
    \end{pmatrix}\;,\qquad
    \widetilde{W}(k)\coloneqq\begin{pmatrix}
        0 & b(k)\\
        b(k) & 0
    \end{pmatrix}\;.
\end{equation}
Then, with the effective Hamiltonian
\begin{equation}
    h_\eff(k)\coloneqq \sum_{\alpha,\beta\in\cI_k}\left( (D(k)+W(k))_{\alpha,\beta}c^*_\alpha(k)c_\beta(k) + \frac{1}{2}\widetilde{W}(k)_{\alpha,\beta}(c_\alpha^*(k)c_\beta^*(k) + c_\beta(k)c_\alpha(k)) \right)\;,
    \label{eq:h_eff}
\end{equation}
we have
\begin{equation} \label{eq:h_eff_2}
    \HHH_0 + Q_\B
    \simeq \DDD_\B + Q_\B^R = \sum_{k\in\Gamma^\nor} 2\hbar\kappa|k|h_\eff(k)\;.
\end{equation}

To simplify the notation, we will often drop the explicit dependence on $k$. In analogy to~\cite[Section~7]{benedikter2023correlation}, we now introduce the two approximately bosonic Bogoliubov transformations in order to approximately diagonalize the quasi-bosonic Hamiltonian. Let us briefly recall the construction strategy: We write $ h_\eff $ in block matrix form
\begin{align}
    &h_\eff \simeq \HHH-\frac{1}{2}\tr(D+W)\;, \qquad
    \HHH \coloneq \frac{1}{2}((c^*)^T,c^T)
    \begin{pmatrix}
        D+W & \widetilde{W}\\
        \widetilde{W} & D+W
    \end{pmatrix}
    \begin{pmatrix}
        c\\
        c^*
    \end{pmatrix}\;.
    \label{eq:HHH_quadratic}
\end{align}
Introducing the $|\cI_k|\times|\cI_k|$ matrices 
\begin{equation} \label{eq:K}
\begin{aligned}
    E
    &\coloneqq \left((D+W-\widetilde{W})^{1/2}(D+W+\widetilde{W})(D+W-\widetilde{W})^{1/2}\right)^{1/2}\;,\\
    S_1
    &\coloneqq (D+W-\widetilde{W})^{1/2}E^{-1/2}\;,\\
    K
    &\coloneqq \log |S_1^T|\;,
\end{aligned}
\end{equation}
with polar decomposition $S_1=O|S_1|$, we can diagonalize
\begin{align}
    \begin{pmatrix}
        D+W & \widetilde{W}\\
        \widetilde{W} & D+W
    \end{pmatrix}
    &=
    \begin{pmatrix}
    \cosh K & \sinh K\\
    \sinh K & \cosh K
    \end{pmatrix}
    \begin{pmatrix}
        O & 0\\
        0 & O
    \end{pmatrix}
    \begin{pmatrix}
        E & 0\\
        0 & E
    \end{pmatrix}
    \nonumber\\
    &\quad \times
    \begin{pmatrix}
        O^T & 0\\
        0 & O^T
    \end{pmatrix}
    \begin{pmatrix}
    \cosh K & \sinh K\\
    \sinh K & \cosh K
    \end{pmatrix}
    \;.
\end{align}
As in~\cite[Sect.~9]{benedikter2023correlation} and~\cite{christiansen2023gell}, this first transformation will turn out insufficient for a lower bound: The approximation $ \HHH_0 \simeq \DDD_\B $ produces a contribution $ - \DDD_\B $ in the Hamiltonian, which could only be compensated if we had $ E \ge D $. But this is generally not true. We therefore adopt the second quasi-bosonic Bogoliubov transformation from~\cite[Sect.~7]{benedikter2023correlation} which renders a diagonal block matrix $ \widetilde{P} \ge D $: We introduce the $|\cI_k|\times|\cI_k|$ matrix
$U\coloneqq \tfrac{1}{\sqrt{2}}
\big( \begin{smallmatrix}
    1 & 1\\
    1 & -1
\end{smallmatrix} \big)$, 
and we notice that
\begin{equation}
    U^T(D+W+\widetilde{W})U =
    \begin{pmatrix}
        d+2b & 0\\
        0 & d
    \end{pmatrix}
    \;,\qquad
    U^T(D+W-\widetilde{W})U = 
    \begin{pmatrix}
        d & 0\\
        0 & d+2b
    \end{pmatrix}
    \;,
\end{equation}
\begin{align} \label{eq:UEU}
    U^TEU &=
    \begin{pmatrix}
        \left(d^{1/2}(d+2b)d^{1/2}\right)^{1/2} & 0\\
        0 & \left((d+2b)^{1/2}d(d+2b)^{1/2}\right)^{1/2}
    \end{pmatrix}\nonumber\\
    &=
    \begin{pmatrix}
        (X^*X)^{1/2} & 0\\
        0 & (XX^*)^{1/2}
    \end{pmatrix}
    =
    \begin{pmatrix}
        P & 0\\
        0 & APA^T
    \end{pmatrix}
    \;,
\end{align}
where $X\coloneqq (d+2b)^{1/2}d^{1/2} = AP$, with $A$ orthogonal and $P\coloneqq (X^*X)^{1/2}$ characterizing the polar decomposition of $X$. Finally, setting
\begin{equation}
    \widetilde{O}\coloneqq U
    \begin{pmatrix}
        1 & 0 \\
        0 & A
    \end{pmatrix}
    U^T\;, \qquad
    \widetilde{P}\coloneqq
    \begin{pmatrix}
        P & 0\\
        0 & P
    \end{pmatrix}\;,
\end{equation}
and noticing that $E=\widetilde{O}\widetilde{P}\widetilde{O}^T$, we conclude the final diagonalization
\begin{align}
    \begin{pmatrix}
        D+W & \widetilde{W}\\
        \widetilde{W} & D+W
    \end{pmatrix}
    &=
    \begin{pmatrix}
    \cosh K & \sinh K\\
    \sinh K & \cosh K
    \end{pmatrix}
    \begin{pmatrix}
        O & 0\\
        0 & O
    \end{pmatrix}
    \begin{pmatrix}
        \widetilde{O} & 0\\
        0 & \widetilde{O}
    \end{pmatrix}
    \begin{pmatrix}
        \widetilde{P} & 0\\
        0 & \widetilde{P}
    \end{pmatrix}
    \nonumber\\
    & \quad \times
    \begin{pmatrix}
        \widetilde{O}^T & 0\\
        0 & \widetilde{O}^T
    \end{pmatrix}
    \begin{pmatrix}
        O^T & 0\\
        0 & O^T
    \end{pmatrix}
    \begin{pmatrix}
    \cosh K & \sinh K\\
    \sinh K & \cosh K
    \end{pmatrix}
    \;.
\end{align}
Therefore, the following unitary transformations would diagonalize $ (\DDD_\B + Q_\B^R) $, if it was exactly bosonic:
\begin{equation} \label{eq:bogoliubov}
\begin{aligned}
    T &\coloneq T_1 \;, \qquad 
    &T_\lambda &\coloneq \exp\left( \frac{\lambda}{2}\sum_{k\in\Gamma^\nor}\sum_{\alpha,\beta\in\cI_k} K(k)_{\alpha,\beta}c_\alpha^*(k)c_\beta^*(k) - \textrm{h.c.} \right)\;, \qquad 
    &\lambda \in \RRR \;, \\
    Z &\coloneq Z_1 \;, \qquad 
    &Z_\lambda &\coloneq\exp\left( \lambda \sum_{k\in\Gamma^\nor}\sum_{\alpha,\beta\in\cI_k} L(k)_{\alpha,\beta}c_\alpha^*(k)c_\beta(k)\right)\;, \qquad 
    &\lambda \in \RRR \;,
\end{aligned}
\end{equation}
where $ K(k) $ was defined in~\eqref{eq:K} and $L(k)$ is given by
\begin{equation}
    L(k)\coloneqq \log\left(O(k)\widetilde{O}(k)\right)\;.
\end{equation}
The unitary diagonalization then follows as
\begin{align}
    Z^*T^*\HHH TZ
    \simeq \frac{1}{2}\sum_{\alpha,\beta\in\cI_k} \widetilde{P}_{\alpha,\beta}c_\alpha^*(k)c_\beta(k)+\frac{1}{2}\tr\widetilde{P}
    \geq\DDD_\B+\frac{1}{2}\tr E \;,
\end{align}
where the last line is obtained noticing that $\widetilde{P}\geq D$ and that $\tr\widetilde{P}\geq \tr E$. Together with~\eqref{eq:HHH_quadratic}, the diagonalization thus produces an energy of
\begin{equation} \label{eq:Ecorr_trace_formula}
    \hbar \kappa \sum_{k \in \Gamma^{\nor}} |k| \tr \left( E(k) - D(k) - W(k) \right) 
    \simeq E^{\RPA} \;.
\end{equation}
We will make this approximation rigorous. To do so, we start compiling some estimates on the transformations $T_\lambda$ and $Z_\lambda$.

\begin{lemma}[Bogoliubov Kernel for $T$] \label{lem:K-kernel}
For $k\in\Gamma^\nor$, $K(k)$ is a real symmetric matrix, and there is a $C>0$ such that for all $k \in \Gamma^{\nor}$ and $\alpha, \beta \in \cI_k$, we have
\begin{equation}
    |K(k)_{\alpha,\beta}|
    \leq C\frac{\hat{V}(k)}{M} \;, \qquad
    \Vert K(k) \Vert_{\HS}
	\le C \hat{V}(k) \;.
\end{equation}
\end{lemma}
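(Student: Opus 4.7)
The plan is to follow the strategy of \cite[Lemma~7.1]{benedikter2023correlation}, adapting the 3d argument to the 2d setting. The proof has three ingredients: symmetry from the polar form, block reduction via the rotation $U$, and a rank-one perturbation analysis.

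Symmetry of $K(k)$ follows immediately from
\[
|S_1^T|^2 = S_1 S_1^T = A_-^{1/2} E^{-1} A_-^{1/2},
\]
where $A_- \coloneq D+W-\widetilde{W}$, which is manifestly symmetric and positive. The orthogonal matrix $U$ from \eqref{eq:UEU} simultaneously block-diagonalizes $A_\pm$ and hence $E^{-1}$ and $|S_1^T|^2$, yielding $K = U \diag(K_1, K_2) U^T$ with $K_i = \tfrac12 \log G_i^2$, where
\[
G_1^2 = d^{1/2}\bigl(d^{1/2}(d+2b)d^{1/2}\bigr)^{-1/2}d^{1/2},
\]
and $G_2^2$ is obtained by the formal substitution $d \leftrightarrow d+2b$. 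Since $U$ is orthogonal with bounded entries, it suffices to establish the entrywise bound for $K_1$ and $K_2$ individually.

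The crucial structural input is that $b = g|v\rangle\langle v|$ is rank one. Setting $w \coloneq d^{1/2} v$, I use the integral formula $Y^{-1/2} = \pi^{-1}\int_0^\infty \mu^{-1/2}(Y+\mu)^{-1}d\mu$ applied to $Y_1 = d^2 + 2g|w\rangle\langle w|$, together with the Sherman--Morrison identity, to obtain the explicit expression
\[
(\mathbb{1} - G_1^2)_{\alpha\beta} = \frac{2g\, v_\alpha v_\beta\, d_\alpha d_\beta}{\pi} \int_0^\infty \frac{\mu^{-1/2}\,d\mu}{(d_\alpha^2 + \mu)(d_\beta^2 + \mu)\bigl(1 + 2g\langle w, (d^2+\mu)^{-1} w\rangle\bigr)}.
\]
Dropping the positive Sherman--Morrison denominator for an upper bound and invoking the elementary identity $\int_0^\infty \mu^{-1/2}(d_\alpha^2+\mu)^{-1}(d_\beta^2+\mu)^{-1}d\mu = \pi/(d_\alpha d_\beta (d_\alpha + d_\beta))$ yields
$0 \leq (\mathbb{1} - G_1^2)_{\alpha\beta} \leq 2g\, |v_\alpha v_\beta|/(d_\alpha + d_\beta)$.
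Lemma~\ref{lem:normalization_constant} gives $v_\alpha^2 = 2\pi d_\alpha/M + o(1/M)$, so by AM-GM $|v_\alpha v_\beta| \leq \tfrac12(v_\alpha^2 + v_\beta^2) \leq C(d_\alpha + d_\beta)/M$, producing the key cancellation $|(\mathbb{1} - G_1^2)_{\alpha\beta}| \leq Cg/M \leq C\hat V(k)/M$. The analogous bound for $\mathbb{1} - G_2^2$ (which is negative since $G_2^2 \geq \mathbb{1}$) follows by the $d \leftrightarrow d+2b$ symmetry.

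The final step is to propagate the entrywise estimate from $\mathbb{1} - G_i^2$ to $K_i = \tfrac12 \log G_i^2$, using the integral representation $\log(\mathbb{1} - X) = -\int_0^1 X(\mathbb{1} - sX)^{-1}ds$ and its analog for $\log(\mathbb{1} + X)$. A further application of Sherman--Morrison to the resolvent $(\mathbb{1} - sX)^{-1}$ preserves the rank-one-superposition structure together with the favorable scaling $v_\alpha v_\beta / (d_\alpha + d_\beta) \sim 1/M$. Integrating in $s$ gives $|K(k)_{\alpha\beta}| \leq C\hat V(k)/M$, and the Hilbert--Schmidt bound is then immediate from $|\cI_k| \leq M$:
\[
\Vert K(k)\Vert_{\HS}^2 \leq |\cI_k|^2 (C\hat V(k)/M)^2 \leq C^2 \hat V(k)^2.
\]
The main obstacle is this last step: the operator norm of $\mathbb{1} - G_i^2$ need not be small (it may be of order $\hat V(k)$), so Taylor-expanding the matrix logarithm is not immediately quantitative. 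The resolution is that the integrand in the log representation inherits the same rank-one plus resolvent structure as $\mathbb{1} - G_i^2$, so the entrywise bound $\hat V(k)/M$ survives the iteration—the AM-GM cancellation exploited in the previous step is compatible with the resolvent expansion.
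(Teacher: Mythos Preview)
Your proposal is correct and takes essentially the same approach as the paper. The paper's proof simply refers to the 3d argument (citing \cite[Lemma~2.5]{benedikter2022bosonization} rather than your \cite[Lemma~7.1]{benedikter2023correlation}, but the content is the same) and singles out exactly the 2d-specific ingredient you use: from Lemma~\ref{lem:normalization_constant} one still has $v_\alpha \simeq C u_\alpha M^{-1/2}$, i.e.\ $v_\alpha^2 \simeq 2\pi d_\alpha/M$, which is precisely the relation driving your AM--GM cancellation $|v_\alpha v_\beta|/(d_\alpha+d_\beta)\le C/M$.
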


\begin{proof}
The proof is a straightforward adaptation of~\cite[Lemma~2.5]{benedikter2022bosonization} to two dimensions. The only modification is that for us, $ v_\alpha = k_{\F}^{-\frac 12} |k|^{-\frac 12} n_\alpha(k) $, where $ \hbar^{\frac 12} $ replaces the factor of $ \hbar $ in~\cite[(2.15)]{benedikter2022bosonization}. With $u_\alpha \coloneqq |\hat{k}\cdot\hat{\omega}_\alpha(k)|^{\frac 12}$, we then still have $v_\alpha\simeq Cu_\alpha M^{-\frac{1}{2}}$, as in three dimensions.
The rest of the proof then follows as in~\cite[Lemma~2.5]{benedikter2022bosonization}.
\end{proof}

\begin{lemma} \label{lem:L-kernel}
Let $\delta>0$, $M>0$ and $R>0$ be defined as in Section \ref{sec:patch-decomposition}. Then there exists a $C>0$ such that for any $k\in\Gamma^\nor$ we have
\begin{equation}
\begin{aligned}
	&\Vert L(k) \Vert_{\HS}
	\le C \hat{V}(k) (1 + \delta \log(N))^2 \;,\\
	&\Vert L(k) \Vert_{\op}
	\le C \hat{V}(k) (1 + \delta \log(N)) \;.
\end{aligned}
\end{equation}
\end{lemma}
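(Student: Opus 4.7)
The plan is to reduce the bounds on $L(k) = \log(O(k)\widetilde{O}(k))$ to bounds on $O-I$ and $\widetilde{O}-I$ separately, and then to exploit the rank-one structure of $b = g|v\rangle\langle v|$ (and hence of $W$ and $\widetilde W$) together with the belt cut-off~\eqref{eq:Ik+set} to extract the logarithmic factors $(1+\delta\log N)$.

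First, since $O$ and $\widetilde{O}$ are orthogonal, so is $O\widetilde{O}$, and provided one verifies $\|O\widetilde O - I\|_{\op} \le 1/2$, the principal logarithm is well-defined and the power-series estimate
\begin{equation*}
\|L\|_{\op} \le C\,\|O\widetilde O - I\|_{\op},\qquad
\|L\|_{\HS} \le C\,\|O\widetilde O - I\|_{\HS}
\end{equation*}
holds. Writing $O\widetilde O - I = O(\widetilde O - I) + (O - I)$ and using $\|O\|_{\op} = 1$ gives $\|O\widetilde O - I\|_\bullet \le \|O - I\|_\bullet + \|\widetilde O - I\|_\bullet$ for $\bullet \in \{\HS,\op\}$, so it suffices to control the two pieces separately.

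Next I would estimate $\widetilde O - I$. Since $\widetilde O = U\,\diag(1,A)\,U^T$ with $U$ orthogonal, $\|\widetilde O - I\|_\bullet$ reduces to $\|A - I\|_\bullet$, where $A$ is the orthogonal factor in the polar decomposition of $X = (d+2b)^{1/2}d^{1/2}$. Using the integral representation of the square root
\begin{equation*}
(d+2b)^{1/2} - d^{1/2} = \frac{1}{\pi}\int_0^\infty \sqrt{\mu}\left(\frac{1}{\mu+d}-\frac{1}{\mu+d+2b}\right)d\mu,
\end{equation*}
the resolvent identity, and the Sherman--Morrison formula $(\mu+d+2b)^{-1} = (\mu+d)^{-1} - \frac{2g}{1+2g\langle v,(\mu+d)^{-1}v\rangle}(\mu+d)^{-1}|v\rangle\langle v|(\mu+d)^{-1}$ for the rank-one perturbation, one expresses $A - I$ explicitly as a rank-one operator of the form $|\phi\rangle\langle\psi|$ with $\phi,\psi$ built from $v$ by functional calculus in $d$. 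An analogous computation controls $O - I$, where $O$ is the orthogonal factor of $S_1 = (D+W-\widetilde W)^{1/2}E^{-1/2}$; here one additionally expands $E^{-1/2}$ around $D^{-1/2}$ and uses that $W,\widetilde W$ inherit their rank-one blocks from $b$.

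All matrix norms then reduce to weighted sums $S_r \coloneqq M^{-1}\sum_{\alpha\in\cI_k}u_\alpha^{-2r}$ for $r\in\{0,\tfrac12,1\}$, with $u_\alpha = |\hat k\cdot\hat\omega_\alpha|^{1/2}$ and $\|v\|^2 \simeq M^{-1}\sum_\alpha u_\alpha^2$ uniformly bounded by Lemma~\ref{lem:normalization_constant}. The belt cut-off yields $u_\alpha \ge c|k|^{-1/2}N^{-\delta/2}$, and a Riemann-sum comparison gives $S_0,S_{1/2} \le C$, while $S_1 \le C(1+\delta\log N)$ because the angular integral $\int d\theta/|\hat k\cdot\hat\omega(\theta)|$ diverges logarithmically and is truncated at the cut-off scale. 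The main obstacle will be tracking carefully where each copy of $S_1$ appears: the operator norm of $A-I$ (and $O-I$) picks up only one such factor, giving $C\hat V(k)(1+\delta\log N)$, whereas the Hilbert--Schmidt norm squares matrix entries and sums over a second index, giving one additional factor and hence $C\hat V(k)(1+\delta\log N)^2$. This log-enhancement is precisely the new two-dimensional feature absent from the 3D estimate in~\cite[Lemma~2.5]{benedikter2022bosonization}, where the corresponding angular integrals are absolutely convergent.
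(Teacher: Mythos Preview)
Your approach is correct and coincides with what the paper does: it defers the proof entirely to \cite[Lemma~7.2]{benedikter2023correlation}, and your outline is exactly that strategy (reduce $\log(O\widetilde O)$ to $O-I$ and $\widetilde O-I$, exploit the rank-one structure of $b$ via the integral square-root representation and Sherman--Morrison, then bound the resulting $d$-weighted sums by Riemann comparison against the belt-truncated angular integral). One small correction to your closing remark: the logarithmic factor $(1+\delta\log N)$ is already present in the three-dimensional bound on $L$ in \cite[Lemma~7.2]{benedikter2023correlation}; it is a feature of the second Bogoliubov kernel $L$ (arising from the small eigenvalues of $d$ near the belt cut-off) rather than a two-dimensional novelty, so your comparison with \cite[Lemma~2.5]{benedikter2022bosonization}---which treats $K$, not $L$---is slightly misplaced.
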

\begin{proof}
The claim follows by the same strategy as in \cite[Lemma~7.2]{benedikter2023correlation}.
\end{proof}

\begin{lemma}[Stability of number operators] \label{lem:gronwall}
Let $ \sum_{k \in \ZZZ^2} |k|^{2-b} \hat{V}(k)^2 < \infty $ for some $b \in (0,1)$ and recall $T_\lambda$, $Z_\lambda$ from~\eqref{eq:bogoliubov}. Then for any $m\in\NNN$ there exists a constant $C_m>0$ such that for all $\lambda\in[-1,1]$ we have
\begin{equation}
	T_\lambda^* \cN^m T_\lambda
	\le C_m \exp(C_m R^{\frac b2}) (\cN + 1)^m \;.
\end{equation}
Further, if $\sum_{k \in \ZZZ^2} \hat{V}(k) < \infty$, then we have the bounds
\begin{align}
	T_\lambda^* \cN^m T_\lambda
	&\le C_m (\cN + 1)^m \;,
	\qquad
	&T_\lambda^* \cN_\delta \cN^m T_\lambda
	&\le C_m (\cN_\delta + 1) (\cN + 1)^m \;, \\
	Z_\lambda^* \cN^m Z_\lambda
	&= \cN^m \;,
	\qquad
	&Z_\lambda^* \cN_\delta \cN^m Z_\lambda
	&\le C_m N^{C_m \delta} \cN_\delta \cN^m \;.
\end{align}
\end{lemma}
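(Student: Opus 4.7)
The approach is a Grönwall-type argument applied to the scalar functions $f(\lambda) \coloneq \langle \phi, T_\lambda^* A T_\lambda \phi \rangle$ (and their $Z_\lambda$-analogs), with $A$ taken to be the operator appearing on the left-hand side of the desired bound. Differentiating in $\lambda$ yields $f'(\lambda) = \langle \phi, T_\lambda^* [A, G_T] T_\lambda \phi \rangle$, where $G_T = \tfrac{1}{2}\sum_{k \in \Gamma^{\nor}}\sum_{\alpha,\beta\in\cI_k} K(k)_{\alpha,\beta} (c^*_\alpha(k) c^*_\beta(k) - c_\alpha(k) c_\beta(k))$ is the anti-self-adjoint generator of $T_\lambda$. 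The plan is to prove $|f'(\lambda)| \le C_m r \cdot f(\lambda)$ for an appropriate rate $r$ depending on the kernel $K$ (or $L$), and then integrate.

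For $A = (\cN+1)^m$, the identity $[\cN, c^*_\alpha(k) c^*_\beta(k)] = 4\, c^*_\alpha(k) c^*_\beta(k)$ lets one write $[(\cN+1)^m, c^*_\alpha c^*_\beta] = c^*_\alpha c^*_\beta \cdot p_m(\cN)$ with $p_m$ a polynomial of degree $m-1$. A Cauchy--Schwarz step exploiting the quasi-bosonic bounds of Lemma~\ref{lem:c_conversion} --- first summing $\beta$ via $\| \sum_\beta K_{\alpha,\beta} c^*_\beta(k) \psi \| \le \|K(k)_\alpha\|_{\ell^2} \|(\cN_\delta+1)^{1/2}\psi\|$, then summing $\alpha$ via $\sum_\alpha \|c_\alpha(k)\psi\|^2 \le \langle \psi, \cN_\delta \psi\rangle$ --- produces the clean estimate
\[
    \Big| \big\langle \psi, \sum_{\alpha,\beta\in\cI_k} K(k)_{\alpha,\beta} c^*_\alpha(k) c^*_\beta(k) \psi \big\rangle \Big|
    \le \Vert K(k) \Vert_{\HS} \, \langle \psi, (\cN+1) \psi\rangle \;,
\]
and an analogous bound for the $cc$-piece. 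Applied with $\psi = (\cN+1)^{(m-1)/2}\phi$, this gives the Grönwall rate $r = \sum_{k \in \Gamma^{\nor}} \Vert K(k) \Vert_{\HS}$. By Lemma~\ref{lem:K-kernel}, $\Vert K(k) \Vert_{\HS} \le C \hat V(k)$, and since $\Gamma^{\nor}$ enforces $|k| < R$, Cauchy--Schwarz in $k$ with $\sum_{0 < |k| < R} |k|^{b-2} \le C R^b$ yields $r \le C R^{\frac b2}$ under the first hypothesis and $r \le C$ under the stronger assumption $\sum_{k \in \ZZZ^2} \hat V(k) < \infty$. This proves the first two estimates. The mixed quantity $\cN_\delta (\cN+1)^m$ is handled identically after Leibniz, using that $[\cN_\delta, a^*_p a^*_{p-k}] = \big(\chi(p \notin \Gcutoff) + \chi(p-k \notin \Gcutoff)\big) a^*_p a^*_{p-k}$ has an $O(1)$ scalar multiplier, so the same rate controls $\cN_\delta$-propagation.

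For $Z_\lambda$, whose generator $G_Z = \sum_{k,\alpha,\beta} L(k)_{\alpha,\beta} c^*_\alpha(k) c_\beta(k)$ is exactly excitation-number preserving ($[\cN, c^*_\alpha(k) c_\beta(k)] = 0$ since $c^*c$ creates and annihilates one pair each), the identity $Z_\lambda^* \cN^m Z_\lambda = \cN^m$ follows by direct differentiation. For $\cN_\delta$ one computes that $[\cN_\delta, c^*_\alpha(k) c_\beta(k)]$ is a linear combination of operators of the same form, multiplied by a $\Gcutoff$-indicator difference in $\{-2,\ldots,2\}$. An operator-norm estimate $\pm \sum_{\alpha,\beta} L(k)_{\alpha,\beta}\, c^*_\alpha c_\beta \le C \Vert L(k) \Vert_{\op}\, \cN_\delta$ then yields Grönwall rate $r = \sum_k \Vert L(k) \Vert_{\op} \le C(1 + \delta \log N)$ by Lemma~\ref{lem:L-kernel} combined with $\sum_{k \in \ZZZ^2} \hat V(k) < \infty$. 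Exponentiating gives $e^{C_m r} \le C_m N^{C_m \delta}$ as claimed. The main obstacle I anticipate is keeping the Cauchy--Schwarz bookkeeping tight at two places: first, in the $T_\lambda$-commutator, one must recover exactly $\Vert K(k) \Vert_{\HS}$ rather than a spurious $M^{1/2}$-weighted norm, which is essential for the $R^{\frac b2}$ summability; second, in the $Z_\lambda$-commutator, one must use $\Vert L(k) \Vert_{\op}$ rather than $\Vert L(k) \Vert_{\HS}$ (which carries an extra factor of $(1 + \delta \log N)$ from Lemma~\ref{lem:L-kernel}), since only the operator norm produces the correct $N^{C_m \delta}$ growth after exponentiation.
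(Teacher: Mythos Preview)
Your proposal is correct and follows essentially the same approach as the paper, which simply cites the 3d references~\cite{benedikter2020optimal,benedikter2021correlation,benedikter2023correlation} and records the key estimate $\sum_{k\in\Gamma^{\nor}}\Vert K(k)\Vert_{\HS}\le C\sum_{|k|<R}\hat V(k)\le C R^{b/2}$. Your Gr\"onwall scheme with rate $\sum_k\Vert K(k)\Vert_{\HS}$ for $T_\lambda$ and $\sum_k\Vert L(k)\Vert_{\op}$ for $Z_\lambda$ is exactly what those references do; the only point where your sketch is slightly loose is that $[\cN_\delta, c^*_\alpha c_\beta]$ does not acquire a pure scalar multiplier but rather produces weighted pair operators $c^g_\alpha$ as in~\eqref{eq:calphag}, so the clean operator-norm bound you quote must be routed through Lemma~\ref{lem:weightedpairoperators} rather than Lemma~\ref{lem:c_conversion} alone.
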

\begin{proof}
The proof for $\sum_{k \in \ZZZ^2} \hat{V}(k) < \infty$ is the same as in~\cite[Lemma~7.2]{benedikter2021correlation} and~\cite[Lemma~7.3]{benedikter2023correlation}.
For $ \sum_{k \in \ZZZ^2} |k|^{2-b} \hat{V}(k)^2 < \infty $, we adopt the modification of~\cite[Lemma~A.2]{benedikter2023correlation} to~\cite[Proposition~4.6]{benedikter2020optimal} with (compare~\eqref{eq:Vsplit})
\begin{equation*}
    \sum_{k \in \Gamma^{\nor}} \Vert K(k) \Vert_{\HS}
    \leq C \sum_{k \in \ZZZ^2_* : |k| < R } \hat{V}(k)
    \leq C R^{\frac b2} \;.
\end{equation*}
\end{proof}

The next lemma tells us that the operators $T$ and $Z$ behave like bosonic Bogoliubov transformations, up to errors $\fE$ and $\fF$. For this reason, we will call them pseudo-bosonic (or quasi-bosonic) Bogoliubov transformations.

\begin{lemma}[Approximate bosonic Bogoliubov transformations]
Let $\sum_{k \in \ZZZ^2} |k| \hat{V}(k)$. Then, for any $\lambda \in [-1,1]$, $k\in\Gamma^\nor$, $\gamma\in\cI_k$ it holds that
\begin{equation}
\begin{aligned}
	T_\lambda^* c_\gamma(k) T_\lambda
	&= \tilde{c}_\gamma(\lambda,k)
		+ \fE_\gamma(\lambda,k) \;, \\
	\tilde{c}_\gamma(\lambda,k)
	&\coloneq \sum_{\alpha \in \cI_k} \cosh(\lambda K(k))_{\alpha, \gamma} c_\alpha(k)
		+ \sum_{\alpha \in \cI_k} \sinh(\lambda K(k))_{\alpha, \gamma} c^*_\alpha(k) \;,
\end{aligned}
\label{eq:c_tilde_def}
\end{equation}
\begin{equation}
	Z_\lambda^* c_\gamma(k) Z_\lambda
	= \sum_{\beta \in \cI_k} \exp(\lambda L(k))_{\gamma, \beta} c_\beta(k)
		+ \fF_\gamma(\lambda,k) \;,
\end{equation}
with error estimates
\begin{equation} \label{eq:fEfF_bounds}
\begin{aligned}
	\sum_{\gamma \in \cI_k} \Vert \fE_\gamma(\lambda,k) \psi \Vert
	&\le C M N^{-\frac 12 + \delta}
        \eva{\psi, (\cN_\delta + M) (\cN + 1)^2 \psi}^{\frac 12} \;, \\
	\sum_{\gamma \in \cI_k} \Vert \fF_\gamma(\lambda,k) \psi \Vert
	&\le C M^{\frac 32} N^{-\frac 12 + C \delta}
        \eva{\psi, \cN_\delta \cN^2 \psi }^{\frac 12} \;,
\end{aligned}
\end{equation}
for $\psi \in \cF$. If only $\sum_{k \in \ZZZ^2} |k|^{2-b} \hat{V}(k)^2 < \infty$ is known for some $b \in (0,1)$, then we still have
\begin{equation} \label{eq:fEfF_bounds_mod}
	\sum_{\gamma \in \cI_k} \Vert \fE_\gamma(\lambda,k) \psi \Vert
	\le C M N^{-\frac 12 + \delta}
        e^{C R^\frac b2} \eva{\psi, (\cN_\delta + M) (\cN + 1)^2 \psi}^{\frac 12} \;.
\end{equation}
In both cases, for $ \tilde{c}_\gamma(k) \coloneq \tilde{c}_\gamma(1,k) $ the following bounds hold true
\begin{equation}
    \|\tilde{c}_\alpha(k)\psi\|,\|\tilde{c}^*_\alpha(k)\psi\| 
    \le C \|(\cN_\delta+1)^\frac 12\psi \|\;.
\end{equation}
\label{lem:approximate_bosonic_bogoliubov_transformation}
\end{lemma}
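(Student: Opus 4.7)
The plan is to adapt the standard Duhamel/Gronwall argument for quasi-bosonic Bogoliubov transformations (as in~\cite{benedikter2020optimal,benedikter2023correlation}) to our two-dimensional setting. Define $\fE_\gamma(\lambda,k) \coloneq T_\lambda^* c_\gamma(k) T_\lambda - \tilde{c}_\gamma(\lambda,k)$. Differentiating in $\lambda$, using the explicit generator of $T_\lambda$ and the approximate CCR in Lemma~\ref{lem:quasi-bosonic-behavior}, the ``exact CCR'' pieces integrate to precisely the $\cosh/\sinh$ ansatz $\tilde{c}_\gamma(\lambda,k)$ (this uses the symmetry of $K(k)$ from Lemma~\ref{lem:K-kernel}), and the remainder is driven entirely by the CCR defect operators $\cE_\alpha(k,k)$. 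The analogous setup works for $\fF_\gamma(\lambda,k) \coloneq Z_\lambda^* c_\gamma(k) Z_\lambda - \sum_\beta \exp(\lambda L(k))_{\gamma\beta}\, c_\beta(k)$: the commutator $[c_\gamma(k),\sum_{\alpha,\beta} L(k)_{\alpha\beta}\, c^*_\alpha(k) c_\beta(k)]$ again splits into a main part reproducing $\exp(\lambda L)$ and a $\cE_\alpha$-driven remainder.

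To control the resulting integral equation for $\fE_\gamma$, I would combine the defect bound $\sum_{\alpha} \|\cE_\alpha(k,k)\phi\| \leq C M^{3/2} N^{-\frac 12+\delta}\|\cN\phi\|$ from Lemma~\ref{lem:quasi-bosonic-behavior} with the kernel bounds $|K_{\alpha\beta}| \le C\hat{V}(k)/M$ and $\|K\|_{\HS}\le C\hat{V}(k)$ of Lemma~\ref{lem:K-kernel}. Propagating the number operator through intermediate states $T_\mu \psi$ relies on Lemma~\ref{lem:gronwall}: under $\sum_k |k|\hat{V}(k)<\infty$ one obtains $T_\mu^*(\cN_\delta+M)(\cN+1)^2 T_\mu \leq C(\cN_\delta+M)(\cN+1)^2$, while under the weaker assumption the same lemma produces the extra factor $e^{CR^{b/2}}$ visible in~\eqref{eq:fEfF_bounds_mod}. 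Summing over $\gamma$ by Cauchy--Schwarz on the $\cI_k$ index then produces~\eqref{eq:fEfF_bounds} and~\eqref{eq:fEfF_bounds_mod}. For $Z_\lambda$ the same scheme applies, but the relevant kernel bounds from Lemma~\ref{lem:L-kernel} are larger by $(1+\delta\log N)^2$, and stability becomes $Z_\lambda^* \cN_\delta Z_\lambda \leq N^{C\delta}\cN_\delta$, which accounts for the $N^{C\delta}$ factor and the additional $M^{\frac 12}$ that turns the $M$ of the $\fE$ estimate into the $M^{\frac 32}$ of the $\fF$ estimate.

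The operator bounds on $\tilde{c}_\alpha(k)$ and $\tilde{c}^*_\alpha(k)$ are then immediate: writing $\tilde{c}_\alpha(k)$ as a weighted sum of $c_\beta(k)$ and $c_\beta^*(k)$ with weights $\cosh(K)_{\beta\alpha}$ and $\sinh(K)_{\beta\alpha}$, and noting that $\|\cosh K\|_\op,\|\sinh K\|_\op$ are uniformly bounded because $\|K\|_\op \leq \|K\|_{\HS}\leq C\hat{V}(k)$ is summable, Lemma~\ref{lem:weightedpairoperators} applied to each piece yields $\|\tilde{c}_\alpha(k)\psi\|,\|\tilde{c}^*_\alpha(k)\psi\| \leq C\|(\cN_\delta+1)^{\frac 12}\psi\|$. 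The main obstacle is the bookkeeping of number-operator factors along the Gronwall iteration: in contrast to three dimensions, here one must carefully separate the gapped operator $\cN_\delta$ from the full $\cN$ to avoid losing powers of $N$, which is precisely what forces the split $(\cN_\delta+M)(\cN+1)^2$ on the right-hand side of~\eqref{eq:fEfF_bounds} and requires invoking the gap-conversion estimate of Lemma~\ref{lem:cN_delta_bound} at the final step.
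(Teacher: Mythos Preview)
Your proposal is essentially correct and follows the same Duhamel/Gronwall scheme as the paper, which simply refers to the analogous arguments in \cite[Lemmas~7.4--7.5]{benedikter2023correlation} combined with Lemmas~\ref{lem:quasi-bosonic-behavior}, \ref{lem:c_conversion}, and~\ref{lem:gronwall}. Two small corrections: for the $\tilde{c}_\alpha$ bound the relevant lemma is~\ref{lem:c_conversion} (not~\ref{lem:weightedpairoperators}), and the paper exploits the entrywise bound $|K(k)_{\alpha\beta}|\le C\hat V(k)/M$ from Lemma~\ref{lem:K-kernel} to split off the diagonal $\delta_{\alpha\beta}$ of $\cosh K$, rather than relying on operator norms---either route works. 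Finally, Lemma~\ref{lem:cN_delta_bound} is not needed here at all: the error bounds~\eqref{eq:fEfF_bounds}--\eqref{eq:fEfF_bounds_mod} are stated directly in terms of $\cN_\delta$ and $\cN$, with no gap conversion involved.
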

\begin{proof}
The proof of~\eqref{eq:c_tilde_def}--\eqref{eq:fEfF_bounds_mod} is analogous to \cite[Lemma~7.4]{benedikter2023correlation} and \cite[Lemma~7.5]{benedikter2023correlation}, using Lemmas~\ref{lem:quasi-bosonic-behavior},~\ref{lem:c_conversion}, and~\ref{lem:gronwall}.
Consider now the operator $\tilde{c}_\alpha(k)$, and notice that
\begin{equation*}
    \big|\cosh (K(k))_{\alpha,\beta}-\delta_{\alpha,\beta}\big|+\big|\sinh (K(k))_{\alpha,\beta}\big|\leq \frac CM\;.
\end{equation*}
By Lemma~\ref{lem:c_conversion} we have $\|c_\alpha(k)\psi\|\le \|\cN_\delta^\frac 12\psi\|$ and $\sum_{\alpha \in \cI_k} \|c_\alpha^*(k)\psi\|\le M^{\frac 12}\|(\cN_\delta+M)^\frac 12 \psi \|$, hence
\begin{align*}
    \|\tilde{c}_\alpha(k)\psi \| &\le \sum_{\beta\in\cI_k}\|\cosh(K(k))_{\alpha,\beta}c_\beta(k)\psi\|+\sum_{\beta\in\cI_k}\|\sinh(K(k))_{\alpha,\beta}c^*_\beta(k)\psi\|\\
    &\le \sum_{\beta\in\cI_k}\delta_{\alpha,\beta}\|c_\beta(k)\psi\| + \frac CM \sum_{\beta\in\cI_k}\|c_\beta(k)\psi\|+\frac CM \sum_{\beta\in\cI_k}\|c^*_\beta(k)\psi\|\\
    &\le C \|(\cN_\delta+1)^\frac 12\psi \|\;,
\end{align*}
where we used $|\cosh(K(k))_{\alpha,\beta}|\le |\cosh(K(k))_{\alpha,\beta} - \delta_{\alpha,\beta}| + \delta_{\alpha,\beta}$. An analogous argument applies to $\|\tilde{c}^*_\alpha(k)\psi\|$.
\end{proof}

\section{Linearizing the Kinetic Energy}
\label{sec:linearizing_kinetic}

Thanks to Lemma \ref{lem:approximate_bosonic_bogoliubov_transformation}, we can now make the heuristic argument of the last section rigorous.

\begin{lemma}[Kinetic commutators]
For all $k\in\Gamma^\nor$ and all $\alpha\in\cI_k$, we have
\begin{equation}
\begin{aligned}
    [\HHH_0, c^*_\alpha(k)]
    &= 2 \kappa \hbar | k \cdot \hat{\omega}_\alpha | c^*_\alpha(k)
    	+ \hbar \fE^{\lin}_\alpha(k)^* \;, \\
    [\DDD_{\B}, c^*_\alpha(k)]
    &= 2 \kappa \hbar | k \cdot \hat{\omega}_\alpha | c^*_\alpha(k)
    	+ \hbar \fE^{\B}_\alpha(k)^* \;,
\end{aligned}
\end{equation}
where there exists a $C>0$ such that\footnote{Note that in the analogous 3d bound on $\fE^{\B}_\alpha(k)$ in~\cite[(8.2)]{benedikter2023correlation}, a $|k|$ is missing on the r.~h.~s., which does, however, not influence the correctness of the proof.} for all $f\in\ell^2(\cI_k)$ and $\psi\in\cF$,
\begin{equation}
\begin{aligned}
    \sum_{\alpha \in \cI_k} \Vert \fE^{\lin}_\alpha(k) \psi \Vert
    &\le C |k| M^{-\frac 12} \langle \psi, \cN_\delta \psi \rangle^{\frac 12}\;, \\
    \Bigg\Vert \sum_{\alpha \in \cI_k} f_\alpha \fE^{\lin}_\alpha(k) \psi \Bigg\Vert
    &\le C |k| M^{-1} \Vert f \Vert_2 \langle \psi, \cN_\delta \psi \rangle^{\frac 12} \;, \\
    \sum_{\alpha \in \cI_k} \Vert \fE^{\B}_\alpha(k) \psi \Vert
    &\le C |k| R^2 M^{\frac 32} N^{-\frac 12 + \delta} \langle \psi, \cN_\delta \cN^2 \psi \rangle^{\frac 12} \;.
\end{aligned}
\end{equation}
\end{lemma}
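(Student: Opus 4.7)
The plan is to prove the two identities separately: the commutator with $\HHH_0$ is handled by a direct CAR computation followed by a patchwise linearization of the dispersion, while the commutator with $\DDD_\B$ is obtained via the approximate bosonic CCR of Lemma~\ref{lem:quasi-bosonic-behavior}.

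For $[\HHH_0,c^*_\alpha(k)]$, a short CAR computation yields
\begin{equation*}
    [\HHH_0, c_\alpha^*(k)] = \frac{1}{n_\alpha(k)} \sum_{\substack{p \in B_\F^c \cap B_\alpha \\ p-k \in B_\F \cap B_\alpha}} (e(p)+e(p-k))\, a_p^* a_{p-k}^* \;.
\end{equation*}
On this index set, $|p|^2 > k_\F^2 > |p-k|^2$, so $e(p)+e(p-k) = \hbar^2(2p\cdot k - |k|^2)$. Writing $p = k_\F\hat{\omega}_\alpha + q_p$ with $|q_p| \leq C\,\mathrm{diam}(B_\alpha) \leq Ck_\F/M$ (the tangential patch size $k_\F/M$ dominating the radial thickness $R$ because $RM \ll k_\F$), and using $\alpha \in \cI_k^+$ together with $\hbar k_\F = \kappa + o(1)$, the leading term produces $2\hbar\kappa|k\cdot\hat{\omega}_\alpha|c_\alpha^*(k)$ and the remainder rewrites as
\begin{equation*}
    \hbar\fE^{\lin}_\alpha(k)^* = \hbar\, c_\alpha^g(k)^* \;, \qquad g(p,k) \coloneqq \hbar(2q_p\cdot k - |k|^2) \;,
\end{equation*}
a weighted pair operator in the sense of~\eqref{eq:calphag}. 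Since $\|g\|_\infty \leq C\hbar|k|k_\F/M = C|k|/M$, Lemma~\ref{lem:weightedpairoperators} immediately yields both bounds on $\fE^{\lin}_\alpha(k)$.

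For $[\DDD_\B, c^*_\alpha(k)]$, the Leibniz rule and Lemma~\ref{lem:quasi-bosonic-behavior} (using $[c^*_\beta(\ell), c^*_\alpha(k)] = 0$ and $[c_\beta(\ell), c^*_\alpha(k)] = \delta_{\alpha\beta}(\delta_{k\ell} + \cE_\beta(\ell, k))$) give
\begin{equation*}
    [\DDD_\B, c_\alpha^*(k)] = 2\hbar\kappa|k\cdot\hat{\omega}_\alpha|c_\alpha^*(k) + 2\hbar\kappa\sum_{\ell\in\Gamma^\nor:\, \alpha\in\cI_\ell}|\ell\cdot\hat{\omega}_\alpha|\,c_\alpha^*(\ell)\cE_\alpha(\ell,k)\;,
\end{equation*}
which identifies $\fE^{\B}_\alpha(k)^* = 2\kappa\sum_\ell |\ell\cdot\hat{\omega}_\alpha|\,c_\alpha^*(\ell)\cE_\alpha(\ell,k)$. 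Taking the adjoint via $\cE_\alpha(\ell,k)^* = \cE_\alpha(k,\ell)$, the plan is to bound $\sum_\alpha \Vert\fE^{\B}_\alpha(k)\psi\Vert$ by first estimating $|\ell\cdot\hat{\omega}_\alpha| \leq |\ell|$, applying Cauchy--Schwarz in $\alpha$ (yielding a factor $M^{1/2}$), and combining the operator inequality $\sum_\alpha|\cE_\alpha(k,\ell)|^2 \leq C(MN^{-1/2+\delta}\cN)^2$ from Lemma~\ref{lem:quasi-bosonic-behavior} with the bound $\Vert c_\alpha(\ell)\phi\Vert \leq \Vert \cN_\delta^{1/2}\phi\Vert$ from Lemma~\ref{lem:c_conversion}; finally, summing over the $|\Gamma^\nor| \leq CR^2$ momenta $\ell$ gives the stated prefactor.

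The hard part will be the second estimate, specifically controlling $\sum_\alpha\Vert\cE_\alpha(k,\ell)c_\alpha(\ell)\psi\Vert^2$: the operator $\cE_\alpha$ is unbounded, and the operator-form bound from Lemma~\ref{lem:quasi-bosonic-behavior} applies to $\sum_\alpha\Vert\cE_\alpha\phi\Vert^2$ only for an $\alpha$-independent vector $\phi$. The plan for this step is to split $\cE_\alpha(k,\ell)c_\alpha(\ell) = c_\alpha(\ell)\cE_\alpha(k,\ell) + [\cE_\alpha(k,\ell), c_\alpha(\ell)]$: in the first piece, $c_\alpha(\ell)$ can be pulled out of the sum over $\alpha$ after a further Cauchy--Schwarz so that the operator inequality then applies to the common vector $\cE_\alpha(k,\ell)\psi$ (exploiting $[\cN,\cE_\alpha]=0$ to commute factors of $\cN$ freely); the bilinear commutator $[\cE_\alpha(k,\ell),c_\alpha(\ell)]$ is a lower-order operator that can be handled by a separate bound in terms of polynomial powers of $\cN$ and $\cN_\delta$. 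Tracking the $|k|$-dependence through the normalization $n_\alpha(k)^2 \sim k_\F|k|/M$ and the resulting Cauchy--Schwarz steps then recovers the explicit factor $|k|$ in the prefactor, correcting the omission noted in the footnote.
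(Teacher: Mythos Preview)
Your proposal is correct and follows essentially the same route as the paper: identify $\fE^{\lin}_\alpha(k)=c_\alpha^g(k)$ with $\|g\|_\infty\le C|k|M^{-1}$ and invoke Lemma~\ref{lem:weightedpairoperators}, then obtain $\fE^{\B}_\alpha(k)$ from the approximate CCR of Lemma~\ref{lem:quasi-bosonic-behavior} and bound it via Lemmas~\ref{lem:quasi-bosonic-behavior} and~\ref{lem:c_conversion} together with $|\Gamma^{\nor}|\le CR^2$. The paper's proof is in fact terser than yours---it simply cites the three-dimensional analogs in \cite{benedikter2021correlation}---so your explicit identification of the $\alpha$-dependence issue in $\sum_\alpha\|\cE_\alpha(k,\ell)c_\alpha(\ell)\psi\|$ and the commutator-splitting remedy goes beyond what the paper spells out; this is exactly the kind of detail hidden behind the citation to \cite[(8.6)]{benedikter2021correlation}.
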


\begin{proof}
As in the proof of~\cite[Lemma~8.2]{benedikter2021correlation}, we obtain $ \fE^{\lin}_\alpha(k) = c^g_\alpha(g) $ for some $ g $, which is bounded with $ \textnormal{diam}(B_\alpha) \le C N^{\frac 12} M^{-1} $ as $ \norm{g}_{\ell^\infty} \le C |k| M^{-1} $. Then, we apply Lemma~\ref{lem:weightedpairoperators} to obtain the bounds on $ \fE^{\lin}_\alpha(k) $.
The bound for $ \fE^{\B}_\alpha(k) $ follows as in~\cite[(8.6)]{benedikter2021correlation}, with Lemmas~\ref{lem:quasi-bosonic-behavior} and~\ref{lem:c_conversion}, as well as $\sum_{\ell \in \Gamma^{\nor}} 1 \leq C R^2$.
\end{proof}

\begin{lemma}[Approximate Bogoliubov invariance of $\HHH_0-\DDD_\B$]
Let $\sum_{k \in \ZZZ^2} |k| \hat{V}(k) < \infty$. Then, there exists a constant $C>0$ such that for all $\psi\in\cF$ we have
\begin{equation}
\begin{aligned}
    &| \langle T \psi, (\HHH_0 - \DDD_{\B}) T \psi \rangle
    	- \langle \psi, (\HHH_0 - \DDD_{\B}) \psi \rangle | \\
    &\le C \hbar \Big( M^{-1}
    	\langle \psi, (\cN_\delta + 1) \psi \rangle
    	+ R^2 M N^{-\frac 12 + \delta}
    	\langle \psi, (\cN_\delta + 1) \psi \rangle^{\frac 12}
    	\langle \psi, (\cN_\delta + 1) (\cN + 1)^2 \psi \rangle^{\frac 12} \Big) \;, \\
    &| \langle Z \psi, (\HHH_0 - \DDD_{\B}) Z \psi \rangle
    	- \langle \psi, (\HHH_0 - \DDD_{\B}) \psi \rangle | \\
    &\le C \hbar \Big( M^{-1} N^{C \delta}
    	\langle \psi, \cN_\delta \psi \rangle
    	+ R^2 M^{\frac 32} N^{-\frac 12 + C \delta}
    	\langle \psi, \cN_\delta \psi \rangle^{\frac 12}
    	\langle \psi, \cN_\delta \cN^2 \psi \rangle^{\frac 12} \Big) \;. \\
\end{aligned}
\end{equation}
If only $\sum_{k \in \ZZZ^2} |k|^{2-b} \hat{V}(k)^2 < \infty$ is known for some $b \in (0,1)$, then we still have
\begin{equation}
\begin{aligned}
    &| \langle T \psi, (\HHH_0 - \DDD_{\B}) T \psi \rangle
    	- \langle \psi, (\HHH_0 - \DDD_{\B}) \psi \rangle | \\
    &\le C \hbar e^{C R^{\frac b2}} \Big( M^{-1}
    	\langle \psi, (\cN_\delta + 1) \psi \rangle
    	+ M N^{-\frac 12 + \delta}
    	\langle \psi, (\cN_\delta + 1) \psi \rangle^{\frac 12}
    	\langle \psi, (\cN_\delta + 1) (\cN + 1)^2 \psi \rangle^{\frac 12} \Big) \;.
\end{aligned}
\end{equation}
\label{lem:Approximate_Bogoliubov_Invariance}
\end{lemma}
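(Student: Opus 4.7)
The plan is a Duhamel argument that exploits a key cancellation. Setting $H \coloneqq \HHH_0 - \DDD_\B$ and writing $T_\lambda = \exp(\lambda G_T)$ with the anti-self-adjoint generator
\[
    G_T = \tfrac{1}{2}\sum_{k\in\Gamma^\nor}\sum_{\alpha,\beta\in\cI_k} K(k)_{\alpha,\beta}\bigl(c^*_\alpha(k)c^*_\beta(k) - c_\beta(k)c_\alpha(k)\bigr),
\]
I would start from the identity
\[
    \langle T\psi, H T\psi\rangle - \langle \psi, H\psi\rangle
    = \int_0^1 \langle T_\lambda\psi, [H, G_T] T_\lambda\psi\rangle\, d\lambda.
\]
The decisive observation is that the preceding kinetic-commutator lemma gives $[\HHH_0, c^*_\alpha(k)] - [\DDD_\B, c^*_\alpha(k)] = \hbar(\fE^{\lin}_\alpha(k)^* - \fE^{\B}_\alpha(k)^*)$: the principal terms $2\kappa\hbar|k\cdot\hat{\omega}_\alpha|c^*_\alpha(k)$ cancel exactly. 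Expanding $[H, c^*_\alpha c^*_\beta]$ and $[H, c_\beta c_\alpha]$ via Leibniz therefore reduces $[H, G_T]$ to a sum of four structurally identical contributions of the schematic form $\sum_{k,\alpha,\beta} K_{\alpha,\beta}\fE^{\#}_\alpha c_\beta$ (and adjoint variants), with $\# \in \{\textrm{lin}, \B\}$, and no uncancelled diagonal piece remains.

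The core estimate will then bound each such term by
\[
    \Bigl|\sum_{\alpha,\beta} K(k)_{\alpha,\beta}\langle\phi, (\fE^{\#}_\alpha)^* c^*_\beta \phi\rangle\Bigr|
    \le \Bigl(\sum_{\alpha\in\cI_k}\|\fE^{\#}_\alpha(k)\phi\|\Bigr) \sup_{\alpha\in\cI_k}\Bigl\|\sum_\beta K(k)_{\alpha,\beta}c^*_\beta(k)\phi\Bigr\|
\]
with $\phi \coloneqq T_\lambda\psi$, and analogously for the three other operator orderings (for which one integrates $c_\alpha$ or $c_\beta$ into the $K$-sum, using Lemma~\ref{lem:c_conversion} symmetrically). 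The supremum is controlled via Lemma~\ref{lem:c_conversion} applied to $f = K(k)_{\alpha,\cdot}$ together with the $\ell^\infty$ bound $|K_{\alpha,\beta}| \le C\hat V(k)/M$ of Lemma~\ref{lem:K-kernel}, which gives $\|K_{\alpha,\cdot}\|_{\ell^2} \le C\hat V(k) M^{-1/2}$ and hence a factor $C\hat V(k) M^{-1/2}\|(\cN_\delta+1)^{1/2}\phi\|$. The $\ell^1$-factor is exactly what the preceding lemma provides: $C|k| M^{-1/2}\|\cN_\delta^{1/2}\phi\|$ for $\fE^{\lin}$ and $C|k| R^2 M^{3/2}N^{-\frac12+\delta}\langle\phi,\cN_\delta\cN^2\phi\rangle^{1/2}$ for $\fE^{\B}$. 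Multiplying these produces the sharp prefactors $M^{-1}$ and $R^2 M N^{-\frac12+\delta}$ that appear in the statement. Summing over $k\in\Gamma^\nor$ uses $\sum_{k}|k|\hat V(k)<\infty$ (or, for the weaker assumption $\sum_k |k|^{2-b}\hat V(k)^2<\infty$, a Cauchy--Schwarz split as in~\eqref{eq:Vsplit}, which brings the $e^{CR^{b/2}}$ factor from the corresponding variant of Lemma~\ref{lem:gronwall}). Finally, the number-operator expectations in $\phi = T_\lambda\psi$ are transferred back to $\psi$ uniformly in $\lambda \in [0,1]$ by Lemma~\ref{lem:gronwall}, concluding the $T$-case.

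The argument for $Z$ follows the same blueprint verbatim with $G_Z = \sum_{k,\alpha,\beta} L(k)_{\alpha,\beta}(c^*_\alpha c_\beta - c^*_\beta c_\alpha)$ in place of $G_T$; the only substantial change is that one uses the operator-norm bound $\|L(k)\|_{\op}\le C\hat V(k)(1+\delta\log N) \le C\hat V(k) N^{C\delta}$ from Lemma~\ref{lem:L-kernel} together with the stability bound $Z_\lambda^* \cN_\delta \cN^m Z_\lambda \le C_m N^{C_m\delta}\cN_\delta\cN^m$ of Lemma~\ref{lem:gronwall}, which together produce the $N^{C\delta}$ factor in the final estimate. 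The main obstacle I anticipate is the careful pairing of $\ell^\infty$-$\ell^2$-$\ell^1$ norms on $K_{\alpha,\beta}$, $\|c^{\sharp}_\beta\phi\|$, and $\|\fE^{\#}_\alpha\phi\|$ so as to extract the sharp $M^{-1}$ prefactor (rather than the naive $M^{-1/2}$), and verifying that all four operator orderings produced by Leibniz can be reduced to the same schematic estimate; I do not expect to need the CCR-error $\cE_\alpha(k,\ell)$ of Lemma~\ref{lem:quasi-bosonic-behavior} explicitly, since Leibniz never requires us to reorder $c$'s past each other.
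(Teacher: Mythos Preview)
Your proposal is correct and follows essentially the same approach as the paper, which itself gives no details and simply defers to the 3d arguments in \cite[Lemma~8.1]{benedikter2021correlation} and \cite[Lemma~8.3]{benedikter2023correlation}; those arguments are precisely the Duhamel-plus-cancellation scheme you describe. One small refinement worth noting for the $Z$-case: to land exactly on the stated powers of $M$ you will need to pair the bounds slightly differently than in the $T$-case, namely use the \emph{vector} estimate $\|\sum_\alpha f_\alpha \fE^{\lin}_\alpha\phi\| \le C|k|M^{-1}\|f\|_2\|\cN_\delta^{1/2}\phi\|$ (with $f = L_{\cdot,\beta}$) together with $\|L(k)\|_{\HS}$ for the $\fE^{\lin}$-term, and the $\ell^1$-bound on $\fE^{\B}$ together with $\|L(k)\|_{\op}$ for the $\fE^{\B}$-term, since no elementwise $|L_{\alpha,\beta}|\le C/M$ bound is available---but you have already flagged this pairing issue as the main point to watch.
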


\begin{proof}
The proof follows by the same arguments as in~ \cite[Lemma~8.1]{benedikter2021correlation} and~\cite[Lemma~8.3]{benedikter2023correlation}.
\end{proof}

\section{Proof of Theorem~\ref{thm:main}}
\label{sec:proof_main}

We divide the proof into three steps. The first part is devoted to the computation of the correlation energy, while the other two parts concern respectively the lower and the upper bound on the ground state energy.

\subsection{Evaluation of the Trace}

Recall~\eqref{eq:Ecorr_trace_formula} that the diagonalization of the effective pseudo-bosonic operator resulted in an approximate correlation energy
\begin{equation*}
    E^{\RPA} \simeq \hbar \kappa \sum_{k \in \Gamma^{\nor}} |k| \tr \left( E(k) - D(k) - W(k) \right) \;.
\end{equation*}
The next lemma will make this approximation rigorous.

\begin{lemma}[Evaluation of the trace] \label{lem:trace_evaluation}
Recall the definitions~\eqref{eq:DWW} and~\eqref{eq:K} of $ E(k) $, $ D(k) $ and $ W(k) $, as well as~\eqref{eq:ERPA} of $ E^{\RPA} $.  If $\sum_{k \in \ZZZ^2} |k|^{2-b} \hat{V}(k)^2 < \infty$ for some $b \in (0,1)$, then there exists some $ C > 0 $ such that
\begin{align} \label{eq:trace_evaluation_mod}
    &\left| E^{\RPA}
    - \hbar \kappa \sum_{k \in \Gamma^{\nor}} |k|\tr \left( E(k) - D(k) - W(k) \right)
        \right| \nonumber \\
    &\le C \hbar R^{\frac{2+b}{2}}
        \big( N^{-\frac{\delta}{2}} 
        + R^{\frac 12} M^{-\frac 12} N^{\frac{\delta}{2}}
        + R^{\frac 12} M^{\frac 12} N^{-\frac 14 + \frac{\delta}{2}} \big) 
        + C \hbar R^{b-1}\;.
\end{align}
If even $\sum_{k \in \ZZZ^2} |k| \hat{V}(k) < \infty$, then
\begin{align} \label{eq:trace_evaluation}
    &\left| E^{\RPA}
    - \hbar \kappa \sum_{k \in \Gamma^{\nor}} |k|\tr \left( E(k) - D(k) - W(k) \right)
        \right| \nonumber \\
    &\le C \hbar \big( R^{-1}
        + N^{-\frac{\delta}{2}} 
        + R^{\frac 12} M^{-\frac 12} N^{\frac{\delta}{2}} 
        + R^{\frac 12} M^{\frac 12} N^{-\frac 14 + \frac{\delta}{2}} \big) \;.
\end{align}
In either case,
\begin{equation} \label{eq:ERPA_bound}
    |E^{\RPA}| \le C \hbar \;.
\end{equation}
\end{lemma}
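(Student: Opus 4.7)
My plan is to reduce the matrix trace $\tr(E(k)-D(k)-W(k))$ to an explicit scalar integral by exploiting the rank-one structure of $W(k)$ and $\widetilde W(k)$, and then approximate the resulting finite sum over patches by the angular integral appearing in $E^{\RPA}$. By~\eqref{eq:UEU} and the orthogonality of $A$, one has $\tr E(k) = 2\tr P(k)$ with $P(k) = (d^{1/2}(d+2b)d^{1/2})^{1/2}$, so $\tr(E-D-W) = 2(\tr P - \tr d - g\|v\|^2)$. I would then apply the standard integral representation $\sqrt{X}-\sqrt{Y} = \frac{2}{\pi}\int_0^\infty (\frac{\lambda^2}{Y+\lambda^2}-\frac{\lambda^2}{X+\lambda^2})\,d\lambda$ with $Y = d^2$ and $X = d^2 + 2g\,d^{1/2}|v\rangle\langle v|d^{1/2}$, use the Sherman--Morrison formula on the rank-one perturbation, and integrate by parts in $\lambda$. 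Combining with the counterterm $-g\|v\|^2 = -\frac{1}{\pi}\int_0^\infty 2g\,f(\lambda^2)\,d\lambda$ where $f(\lambda^2) \coloneq \langle v, d^{1/2}(d^2+\lambda^2)^{-1}d^{1/2} v\rangle = \sum_{\alpha\in\cI_k^+}\frac{u_\alpha^2 v_\alpha^2}{u_\alpha^4 + \lambda^2}$, one obtains the Gell-Mann--Brueckner--Rajagopal--Kimball logarithmic form
\begin{equation*}
    \tr(E(k)-D(k)-W(k)) \;=\; \frac{2}{\pi}\int_0^\infty F\bigl(2g(k)\,f(\lambda^2)\bigr)\,d\lambda,
\end{equation*}
with $F(x) = \log(1+x)-x$ exactly as in~\eqref{eq:ERPA}.

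Next, I would replace $f(\lambda^2)$ by its continuous limit. Using $d_{\alpha\alpha}=|\hat k\cdot\hat\omega_\alpha|$ and Lemma~\ref{lem:normalization_constant}, $v_\alpha(k)^2 = \tfrac{2\pi}{M}|\hat k\cdot\hat\omega_\alpha|\bigl(1+\cO(RM^{-1}N^\delta+RMN^{-\frac 12+\delta})\bigr)$, so
\begin{equation*}
    f(\lambda^2) \;=\; \tfrac{2\pi}{M}\sum_{\alpha\in\cI_k^+}\tfrac{|\hat k\cdot\hat\omega_\alpha|^2}{|\hat k\cdot\hat\omega_\alpha|^2+\lambda^2}\,(1+o_N(1)),
\end{equation*}
which is a Riemann sum for the half-circle integral $\int_{-\pi/2}^{\pi/2}\frac{\cos^2\theta}{\cos^2\theta+\lambda^2}\,d\theta = \pi\bigl(1-\tfrac{\lambda}{\sqrt{\lambda^2+1}}\bigr)$. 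Multiplied by the prefactor $2g(k) = \hat V(k)/(4\pi^2)$ this produces exactly the argument $\frac{\hat V(k)}{4\pi}(1-\tfrac{\lambda}{\sqrt{\lambda^2+1}})$ of $F$ in~\eqref{eq:ERPA}; the factor-of-two mismatch between $\sum_{\Gamma^\nor}$ and the full sum $\sum_{\ZZZ^2}$ in $E^{\RPA}$ is absorbed by the $2$ in $\tr(E-D-W) = 2(\tr P - \tr d - g\|v\|^2)$.

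Finally, I would quantify the errors. Three sources contribute: (i)~the relative error in $v_\alpha(k)^2$ from Lemma~\ref{lem:normalization_constant}, (ii)~the belt cutoff~\eqref{eq:Ik+set}, which excludes an angular fraction $\cO(|k|^{-1}N^{-\delta})$, and (iii)~the truncation $|k|<R$ built into $\Gamma^\nor$. Since $F\in C^1$ with $F'(x) = -x/(1+x)$, the first two propagate linearly through the $\lambda$-integral and then via Cauchy--Schwarz~\eqref{eq:Vsplit} (yielding $\sum_{|k|<R}\hat V(k)|k|\le C R^{(2+b)/2}$) into the bounds $\cO(\hbar R^{(2+b)/2}(R^{1/2}M^{-1/2}N^{\delta/2}+R^{1/2}M^{1/2}N^{-1/4+\delta/2}))$ and $\cO(\hbar R^{(2+b)/2}N^{-\delta/2})$ of~\eqref{eq:trace_evaluation_mod}; the third contributes the tail $\cO(\hbar R^{b-1})$, which improves to $\cO(\hbar R^{-1})$ under the stronger hypothesis $\sum_k|k|\hat V(k)<\infty$, giving~\eqref{eq:trace_evaluation}. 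The uniform bound~\eqref{eq:ERPA_bound} follows from $|F(x)|\le C|x|^2$ on bounded sets, so that $\int_0^\infty F\bigl(c(1-\tfrac{\lambda}{\sqrt{\lambda^2+1}})\bigr)d\lambda \le C c^2$, combined with $\sum_k|k|\hat V(k)^2<\infty$ (which, as noted in Remark~\ref{rem:V}, follows from either hypothesis). The main obstacle is the non-uniformity of the Riemann sum as $\lambda\to 0$: the integrand $\frac{\cos^2\theta}{\cos^2\theta+\lambda^2}$ develops a sharp transition at $\theta = \pm\pi/2$, and controlling the resulting loss only as $N^{-\delta/2}$ (rather than the naive $N^{-\delta}$) requires tracking the belt exclusion through an $L^2$-type estimate on $F'$ restricted to the equatorial region.
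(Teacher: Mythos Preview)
Your approach is essentially the same as the paper's: reduce the trace to $\frac{2}{\pi}\int_0^\infty F(Q_k(\lambda))\,d\lambda$ with $Q_k(\lambda) = 2g(k)\sum_{\alpha\in\cI_k^+}\frac{u_\alpha^2 v_\alpha^2}{u_\alpha^4+\lambda^2}$, replace $Q_k$ by the continuous $\widetilde Q_k(\lambda) = 2\pi g(k)\bigl(1-\lambda/\sqrt{\lambda^2+1}\bigr)$, and control the three error sources you list. The paper quotes this trace identity from~\cite{benedikter2020optimal} rather than deriving it via Sherman--Morrison, but your route is equivalent.

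One correction on the mechanism behind the square roots. The halved exponents in the error bounds ($N^{-\delta/2}$, $R^{1/2}M^{-1/2}N^{\delta/2}$, $R^{1/2}M^{1/2}N^{-1/4+\delta/2}$) do \emph{not} come from an $L^2$-estimate on $F'$ near the equator. In the paper, one first obtains a bound $|Q_k(\lambda)-\widetilde Q_k(\lambda)| \le C\hat V(k)\,\varepsilon$ with $\varepsilon = N^{-\delta}+RM^{-1}N^\delta+RMN^{-1/2+\delta}$ that is already \emph{uniform in $\lambda$}: the $\theta$-derivative of $\cos^2\theta/(\cos^2\theta+\lambda^2)$ is bounded by $C/|\cos\theta|$ for all $\lambda$, and the belt cutoff~\eqref{eq:Ik+set} ensures $|\cos\theta|^{-1}\le RN^\delta$. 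The actual obstacle is that this uniform bound, fed through the Lipschitz estimate $|F(Q_k)-F(\widetilde Q_k)|\le|Q_k-\widetilde Q_k|$, is not integrable over $\lambda\in(0,\infty)$. The paper therefore splits the $\lambda$-integral at a scale $\Lambda$: on $[0,\Lambda]$ use the Lipschitz bound (cost $C\hat V(k)\Lambda\varepsilon$), on $[\Lambda,\infty)$ use the tail decay $|F(Q_k)|,|F(\widetilde Q_k)|\le C\hat V(k)\lambda^{-2}$ (cost $C\hat V(k)\Lambda^{-1}$), and then optimize $\Lambda\sim\varepsilon^{-1/2}$ to obtain $C\hat V(k)\sqrt{\varepsilon}$. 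This optimization, not an equatorial $L^2$ argument, is what produces the square roots; the integrability issue is at large $\lambda$, not small $\lambda$.
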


\begin{proof}
By the same computation as in the 3d case~\cite[(5.14)]{benedikter2020optimal}, recalling the definition~\eqref{eq:guvdb} of $g(k)$, $v_\alpha(k)$ and $u_\alpha(k)$, we arrive at
\begin{equation}
\begin{aligned}
    \tr \left( E(k) - D(k) - W(k) \right)
    &= \frac{2}{\pi}\int_0^\infty \log\big( 1 + Q_k(\lambda) \big) \di \lambda
        - 2 g(k) \sum_{\alpha \in \cI_k^+} v_\alpha^2(k) \;, \\
    \textnormal{where} \quad
    Q_k(\lambda)
    &\coloneq 2g(k) \sum_{\alpha \in \cI_k^+} \frac{u_\alpha^2(k)v_\alpha^2(k)}{u_\alpha^4(k)+\lambda^2} \;.
\end{aligned}
\end{equation}
Now notice that $ \frac{2}{\pi} \int_0^\infty Q_k(\lambda) \di \lambda = 2 g(k) \sum_{\alpha \in \cI_k^+} v_\alpha^2(k) $, which allows writing
\begin{equation} \label{eq:trace_EDW_formula_1}
    \hbar \kappa \sum_{k \in \Gamma^{\nor}} |k| \tr \left( E(k) - D(k) - W(k) \right)
    = \hbar \kappa \sum_{k \in \Gamma^{\nor}} |k| \frac{2}{\pi}\int_0^\infty F\big(Q_k(\lambda) \big) \di \lambda\;,
\end{equation}
where $ F(x) = \log(1+x) - x $. On the other hand, using the symmetry $ k \mapsto -k $, we have
\begin{equation} \label{eq:trace_EDW_formula_2}
\begin{aligned}
    E^{\RPA}
    &= E^{\RPA}_< + E^{\RPA}_\ge \;, \qquad
    &E^{\RPA}_< 
    &\coloneqq \hbar \kappa \sum_{k \in \Gamma^{\nor}} |k| \frac{2}{\pi}\int_0^\infty F\big(\widetilde{Q}_k(\lambda) \big) \di \lambda\;, \\
    \widetilde{Q}_k(\lambda)
    &\coloneqq 2\pi g(k)\left( 1- \frac{\lambda}{\sqrt{\lambda^2+1}} \right) \;, \qquad
    &E^{\RPA}_\ge
    &\coloneqq \hbar \kappa \sum_{k \in \ZZZ^2 : |k| \ge R} \frac{|k|}{\pi}\int_0^\infty F\big(\widetilde{Q}_k(\lambda) \big) \di \lambda  \;.
\end{aligned}
\end{equation}
So it remains to estimate $E^{\RPA}_\ge$ and the error from replacing $ Q_k(\lambda) $ by $ \widetilde{Q}_k(\lambda) $. We start with the latter. By \eqref{eq:normalization_constant} we have
\begin{equation} \label{eq:Qk_replacement}
    Q_k(\lambda) = 2g(k) \sum_{\alpha \in \cI_k^+} \sigma(p_\alpha)\frac{u_\alpha(k)^4}{u_\alpha(k)^4+\lambda^2} \big( 1 + \mathcal{O} \big(RM^{-1}N^\delta + R M N^{-\frac 12 + \delta} \big) \big)\;,
\end{equation}
Where $\sigma(p_\alpha)=2\pi/M$ is the measure of the unit circle arc $p_\alpha$, centered at $ \hat\omega_\alpha $. In order to evaluate the sum, we define $ \theta_\alpha $ as the angle between $ \hat{k} $ and $ \hat\omega_\alpha $, so $\cos(\theta_\alpha) = u_\alpha(k)^2$. Then, since the partition is diameter-bounded as $\sup_{\hat\omega \in p_\alpha}|\theta(\hat\omega) -\theta_\alpha|\leq C/M$, where $\theta(\hat\omega) = \theta$ is the angle between $\hat{k}$ and $\hat\omega$. Then,
\begin{align*}
    &\left|\int_{p_\alpha}\frac{\cos^2(\theta(\hat\omega))}{\cos^2(\theta(\hat\omega)) + \lambda^2}\di \sigma(\hat\omega) - \sigma(p_\alpha)\frac{\cos^2(\theta_\alpha)}{\cos^2(\theta_\alpha)+\lambda^2} \right|
    \leq \int_{p_\alpha}\left| \frac{\cos^2(\theta(\hat\omega))}{\cos^2(\theta(\hat\omega)) + \lambda^2} - \frac{\cos^2(\theta_\alpha)}{\cos^2(\theta_\alpha)+\lambda^2} \right|\di \sigma \\
    &\leq \sup_{\hat\omega \in p_\alpha}
        \left|\frac{d}{d\theta}\frac{\cos^2(\theta)}{\cos^2(\theta) + \lambda^2}\right|\frac{C}{M}\sigma(p_\alpha)
    \le \sup_{\hat\omega \in p_\alpha}
        \left|\frac{2\lambda^2\cos(\theta)\sin(\theta)}{(\cos^2(\theta)+\lambda^2)^2}\right| \frac{C}{M^2}
    \leq \sup_{\hat\omega \in p_\alpha}
        \frac{C M^{-2}}{|\cos(\theta(\hat\omega))|} \;.
\end{align*}
Since $\alpha \in \cI_k^+$ (compare~\eqref{eq:Ik+set}), we have $\cos(\theta_\alpha) > N^{-\delta}|k|^{-1} \ge N^{-\delta} R^{-1} $, and as we assumed~\eqref{eq:Mconstraint} $ M \gg R N^\delta $, then also $\cos(\theta(\hat\omega))>N^{-\delta}R^{-1}$ for any $\hat\omega \in p_\alpha$, so
\begin{equation*}
    \left|\int_{p_\alpha}\frac{\cos^2(\theta)}{\cos^2(\theta) + \lambda^2}\di \sigma - \sigma(p_\alpha)\frac{\cos^2(\theta_\alpha)}{\cos^2(\theta_\alpha)+\lambda^2} \right|
    \leq C R M^{-2} N^\delta \;.
\end{equation*}
Therefore, we conclude that
\begin{equation}
    \left|\int_{\mathbb{S}^1_{\textrm{reduced}}}\frac{\cos^2(\theta)}{\cos^2(\theta) + \lambda^2}\di \sigma - \sum_{\alpha \in \cI_k^+} \sigma(p_\alpha)\frac{\cos^2(\theta_\alpha)}{\cos^2(\theta_\alpha)+\lambda^2} \right|
    \leq C R M^{-1} N^\delta \;,
    \label{eq:bound1}
\end{equation}
where $\mathbb{S}^1_{\textrm{reduced}} \coloneq \bigcup_{\alpha \in \cI_k^+} p_\alpha$ is the unit half-circle, excluding the belt of width $N^{-\delta} |k|^{-1}$. Moreover, since $\cos^2(\theta)(\cos^2(\theta) +\lambda^2)^{-1}\leq1$, we can compare with the integral over the whole unit half-circle, called $\SSS^1_{\textrm{half}}$
\begin{equation}
    \left|\int_{\mathbb{S}^1_{\textrm{half}}}\frac{\cos^2(\theta)}{\cos^2(\theta) + \lambda^2}\di \sigma - \sum_{\alpha \in \cI_k^+} \sigma(p_\alpha)\frac{\cos^2(\theta_\alpha)}{\cos^2(\theta_\alpha)+\lambda^2} \right|
    \leq C (N^{-\delta} + R M^{-1} N^\delta) \;.
    \label{eq:bound2}
\end{equation}
Now we compute the integral over the half-circle. First, using $\cos^2(\theta) = (1+\cos(2\theta))/2$ and the symmetry $\cos^2(\pi-\theta) = \cos^2(\theta)$, we can write
\begin{equation*}
    \int_{\mathbb{S}^1_{\textrm{half}}}\frac{\cos^2(\theta)}{\cos^2(\theta) + \lambda^2}\di \sigma = \int_0^\pi\frac{\cos^2(\theta)}{\cos^2(\theta) + \lambda^2}\di \sigma = \int_0^\pi\frac{1+\cos(2\theta)}{2\lambda^2+\cos(2\theta)+1} \di \theta\;.
\end{equation*}
Let $z\coloneqq e^{i2\theta}$ and let $\gamma$ be the complex unit circle. Then
\begin{equation*}
    \int_{\mathbb{S}^1_{\textrm{half}}}\frac{\cos^2(\theta)}{\cos^2(\theta) + \lambda^2}\di \sigma =-\frac{i}{2}\int_\gamma \frac{(z+1)^2}{z((z+1)^2+4\lambda^2z)} \di z\;.
\end{equation*}
We have three poles: $z=0,~z=z_+,~z=z_-$, where $z_\pm \coloneq -1-2\lambda^2 \pm 2\lambda\sqrt{1+\lambda^2}$. Notice that if $\lambda\in(0, \infty)$, then $z_+$ is inside the unit circle and $z_-$ is outside, while if $\lambda\in(-\infty, 0)$, the opposite is true. We are interested in the case $\lambda \in (0, \infty)$, and thus we have
\begin{equation*}
    -\frac{i}{2}\int_\gamma \frac{(z+1)^2}{z^3+z+4\lambda^2z^2+2z^2} \di z
    =\pi \left[1+\frac{(z_++1)^2}{z_+(z_+-z_-)}\right]\;,
\end{equation*}
and we finally conclude that
\begin{equation}
    \int_{\mathbb{S}^1_{\textrm{half}}}\frac{\cos^2(\theta)}{\cos^2(\theta) + \lambda^2}\di \sigma = \pi\left( 1 - \frac{\lambda}{\sqrt{\lambda^2+1}} \right)\;.
    \label{eq:integral_identity}
\end{equation}
Since $g(k) \le \hat{V}(k)$, using~\eqref{eq:Qk_replacement} and~\eqref{eq:bound2}, we conclude
\begin{equation}
    \left|Q_k(\lambda)-\widetilde{Q}_k(\lambda)\right|\leq C \hat{V}(k) \big( N^{-\delta} + R M^{-1} N^\delta + R M N^{-\frac 12 + \delta} \big) \;.
\end{equation}
Notice that for $x\geq0$, the function $F(x) = \log(1+x)-x$ has a unit Lipschitz constant, so
\begin{equation}
    \left| F\big(Q_k(\lambda)\big) - F\big( \widetilde{Q}_k(\lambda) \big)\right|
    \leq \left|Q_k(\lambda)-\widetilde{Q}_k(\lambda)\right|
    \leq C \hat{V}(k) \big( N^{-\delta} + R M^{-1} N^\delta + R M N^{-\frac 12 + \delta} \big) \;.
    \label{eq:f1}
\end{equation}
Now we have to compare the integrals with respect to $\lambda$. Using $|F(x)|\leq x$ for any $x \geq 0$ and that $0\leq u_\alpha(k)^4\leq 1$, we have
\begin{equation}
    \left|F \big( Q_k(\lambda) \big) \right|
    \leq C g(k) \sum_{\alpha \in \cI_k^+} \sigma(p_\alpha)\frac{u_\alpha(k)^4}{u_\alpha(k)^4+\lambda^2}\leq C g(k) \sum_{\alpha \in \cI_k^+} \frac{1}{M\lambda^2}
    \leq C \frac{\hat{V}(k)}{\lambda^2}\;.
    \label{eq:f2}
\end{equation}
Now, by \eqref{eq:trace_EDW_formula_2}, we see that 
\begin{equation}
    \left|F \big( \widetilde{Q}_k(\lambda) \big) \right|
    \leq 2\pi g(k)\left|1-\frac{\lambda}{\sqrt{\lambda^2+1}} \right|
    \leq C \frac{\hat{V}(k)}{\lambda^2}\;.
    \label{eq:f3}
\end{equation}
Take $\Lambda > 0$ to be optimized later. Then, putting \eqref{eq:f1}--\eqref{eq:f3} together, we have
\begin{align}
    &\left| \int_0^\infty F\big(Q_k(\lambda) \big) \di \lambda
        - \int_0^\infty F\big(\widetilde{Q}_k(\lambda) \big) \di \lambda \right|
    \leq \int_0^\Lambda \left| F\big(Q_k(\lambda) \big)
        - F\big(\widetilde{Q}_k(\lambda) \big) \right|\di \lambda 
        + C \int_\Lambda^\infty \frac{\hat{V}(k)}{\lambda^2}\di \lambda \nonumber\\
    &\leq C \hat{V}(k) \Lambda \big( N^{-\delta} + R M^{-1} N^\delta + R M N^{-\frac 12 + \delta} \big)
        + C \hat{V}(k) \Lambda^{-1} \nonumber\\
    &\leq C \hat{V}(k) \big( N^{-\frac{\delta}{2}} + R^{\frac 12} M^{-\frac 12} N^{\frac{\delta}{2}} + R^{\frac 12} M^{\frac 12} N^{-\frac 14 + \frac{\delta}{2}} \big) \;,
\end{align}
where, in the last step, we have optimized with respect to $\Lambda$. Comparing~\eqref{eq:trace_EDW_formula_1} and~\eqref{eq:trace_EDW_formula_2}, we obtain
\begin{equation} \label{eq:final_error_1}
\begin{aligned}
    &\left| E^{\RPA}_<
    - \hbar \kappa \sum_{k \in \Gamma^{\nor}} |k|\tr \left( E(k) - D(k) - W(k) \right)
        \right| \\
    &\le C \hbar \big( N^{-\frac{\delta}{2}} + R^{\frac 12} M^{-\frac 12} N^{\frac{\delta}{2}} + R^{\frac 12} M^{\frac 12} N^{-\frac 14 + \frac{\delta}{2}} \big)
    \sum_{k \in \ZZZ^2 : |k| < R} |k| \hat{V}(k) \;.
\end{aligned}
\end{equation}
As in~\eqref{eq:Vsplit}, $ \sum_{k \in \ZZZ^2 : |k| < R} |k| \hat{V}(k) \leq C R^{\frac{2+b}{2}} $. If $\sum_{k \in \ZZZ^2} |k| \hat{V}(k) < \infty$, then the sum on the r.~h.~s. is even $\leq C$.\\
To estimate $E^{\RPA}_\ge$, we use that $\widetilde{Q}_k(\lambda) \leq \frac{\hat{V}(k)}{4 \pi}$ is uniformly bounded in $k \in \ZZZ^2$, so by Taylor expansion,
\begin{equation*}
    F(\widetilde{Q}_k(\lambda))
    \leq C \widetilde{Q}_k(\lambda)^2 \;.
\end{equation*}
Thus,
\begin{align*}
    &|E^{\RPA}_\ge|
    \leq C \hbar \sum_{k \in \ZZZ^2 : |k| \ge R} |k| \int_0^\infty |\widetilde{Q}_k(\lambda)|^2 \; \di \lambda
    \leq C \hbar \sum_{k \in \ZZZ^2 : |k| \ge R} |k| \hat{V}(k)^2 \int_0^\infty \left( 1 - \frac{\lambda}{\sqrt{\lambda^2+1}} \right)^2 \di \lambda \\
    &\leq C \hbar \sum_{k \in \ZZZ^2 : |k| \ge R} |k| \hat{V}(k)^2 \Big( \int_0^1 1 \; \di \lambda + \int_1^\infty \frac{1}{\lambda^4} \di \lambda \Big)
    \leq C \hbar R^{b-1} \sum_{k \in \ZZZ^2 : |k| \ge R} |k|^{2-b} \hat{V}(k)^2 
    \leq C \hbar R^{b-1} \;.
\end{align*}
If $ \sum_{k \in \ZZZ^2} |k| \hat{V}(k) < \infty $, then $\hat{V}(k) \leq C |k|^{-1}$ (see Remark~\ref{rem:V}), so $\sum_{k \in \ZZZ^2 : |k| \ge R} |k|^2 \hat{V}(k)^2$ and the bound is true with $b=0$. Combining the bound on $|E^{\RPA}_\ge|$ with~\eqref{eq:final_error_1} renders~\eqref{eq:trace_evaluation_mod} and~\eqref{eq:trace_evaluation}. To establish~\eqref{eq:ERPA_bound}, we proceed similarly as for $|E^{\RPA}_\ge|$ and obtain
\begin{align*}
    |E^{\RPA}|
    &\leq C \hbar \sum_{k \in \ZZZ^2} |k| \int_0^\infty |\widetilde{Q}_k(\lambda)|^2 \; \di \lambda
    \leq C \hbar \sum_{k \in \ZZZ^2} |k|\hat{V}(k)^2 
    \leq C \hbar \;.
\end{align*}
\end{proof}

\subsection{Lower Bound}

\begin{prop} \label{prop:lowerbound}
Let $ \hat{V}(k) = \hat{V}(-k) \ge 0 $ and $ \sum_{k \in \ZZZ^2} |k| \hat{V}(k) < \infty $. Then, there exist $C > 0$ and $a > 0$ such that
\begin{equation} \label{eq:lowerbound}
    E_\GS\geq E_\FS + E^{\RPA} - C N^{-\frac 12 - a} \;.
\end{equation}
\end{prop}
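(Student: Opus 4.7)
My plan is to combine the particle--hole reduction of Section~\ref{sec:mathdef} with the pseudo-bosonic Bogoliubov diagonalization of Sections~\ref{sec:patch-decomposition}--\ref{sec:linearizing_kinetic}. Fixing a ground state $\psi_{\GS}$ and setting $\xi \coloneq R^*\psi_{\GS}$, the bound $E_{\GS}\le E_{\FS}$ makes $\xi$ an approximate ground state, and since $R\xi$ is an eigenvector of $H_N$, the full family of a priori estimates of Lemma~\ref{lemma:A-Priori-Bounds} applies. By~\eqref{eq:HNconjugation},
\begin{equation*}
    E_{\GS}-E_{\FS}
    = \langle\xi,(\HHH_0+Q_\B+\cE_1+\cE_2+\XXX)\xi\rangle\;.
\end{equation*}
The non-bosonizable terms contribute at worst $-C_\varepsilon \hbar N^{-\frac{1}{136}+\varepsilon}$ via Lemmas~\ref{lemma:Exchange-Term-Bound},~\ref{lem:cE_1estimate}, and~\ref{lem:cE_2estimate}, while Lemma~\ref{lem:Q_QR_bound} lets me trade $Q_\B$ for the patch-bosonized $Q_\B^R$ at cost $C_\varepsilon\hbar N^\varepsilon(R^{-\frac12}+N^{-\frac18}+N^{-\frac\delta2}+RM^{\frac12}N^{-\frac14+\frac\delta2})$. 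The crux is thus to establish $\langle\xi,(\HHH_0+Q_\B^R)\xi\rangle\ge E^{\RPA}+O(\hbar N^{-a})$ for some $a>0$.

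To prove this core bound I set $\phi\coloneq Z^*T^*\xi$ and carry out the quasi-bosonic diagonalization. Combining~\eqref{eq:h_eff_2}, the quadratic representation~\eqref{eq:HHH_quadratic}, and the two-step diagonalization in~\eqref{eq:K}--\eqref{eq:UEU} with the approximate CCR estimates of Lemmas~\ref{lem:quasi-bosonic-behavior}--\ref{lem:approximate_bosonic_bogoliubov_transformation} yields
\begin{equation*}
    Z^*T^*(\DDD_\B+Q_\B^R)TZ
    \;\gtrsim\; \hbar\kappa\sum_{k\in\Gamma^\nor}|k|\sum_{\alpha,\beta\in\cI_k}\widetilde P(k)_{\alpha,\beta}\,c^*_\alpha(k)c_\beta(k)
    + E^{\RPA}\;,
\end{equation*}
where the first summand is non-negative because $\widetilde P(k)\ge0$, and Lemma~\ref{lem:trace_evaluation} identifies $\hbar\kappa\sum_k|k|\tr(E-D-W)$ with $E^{\RPA}$. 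The linearization defect $\HHH_0-\DDD_\B$ is transported through $TZ$ by the approximate invariance of Lemma~\ref{lem:Approximate_Bogoliubov_Invariance}, and the extra positivity provided by $\widetilde P(k)\ge D(k)$ -- which is precisely the reason for introducing the second Bogoliubov transformation $Z$ -- then absorbs the resulting $-\DDD_\B$ contribution as in the 3d counterpart~\cite[Sect.~9]{benedikter2023correlation}. Finally, Lemma~\ref{lem:gronwall} transports the error terms expressed in $\cN^m\cN_\delta^n$ at $\phi$ back to analogous quantities at $\xi$, where Lemma~\ref{lemma:A-Priori-Bounds} controls them.

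The hard part will be the joint optimization of $(R,M,\delta)$. All accumulated error contributions -- from $\cE_1,\cE_2,\XXX$ (Section~\ref{sec:non_bosonizable}), from $Q_\B-Q_\B^R$ (Lemma~\ref{lem:Q_QR_bound}), from the approximate-CCR Bogoliubov errors (Lemmas~\ref{lem:approximate_bosonic_bogoliubov_transformation} and~\ref{lem:Approximate_Bogoliubov_Invariance}), and from the trace evaluation (Lemma~\ref{lem:trace_evaluation}) -- must simultaneously be $\le C\hbar N^{-a}$ while respecting the constraint~\eqref{eq:Mconstraint} $RN^\delta\ll M\ll R^{-2}N^{\frac12-\delta}$. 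I expect to take $\delta$ a sufficiently small positive number, $R\sim N^{\delta'}$ with $0<\delta'\ll\delta$, and $M\sim N^\beta$ for an exponent $\beta$ strictly inside the admissible window, and then verify that every error decays at rate $\hbar N^{-a}$ for a uniform $a>0$. This balancing is tighter than in 3d because the bound $\langle\xi,\cN\xi\rangle\le C_\varepsilon N^{\frac14+\varepsilon}$ -- itself a consequence of the novel gapped conversion of Lemma~\ref{lem:cN_delta_bound} -- is weaker than its 3d analogue, and because the 1-dimensional Fermi circle limits $M\ll N^{\frac12-\delta}$; the eventual $a$ is capped by $1/136$ from Lemma~\ref{lem:cE_2estimate}.
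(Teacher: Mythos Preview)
Your proposal is correct and follows essentially the same approach as the paper: particle--hole reduction, control of the non-bosonizable terms, replacement of $Q_\B$ by $Q_\B^R$, pseudo-bosonic diagonalization via $TZ$ with the key inequality $\widetilde P\ge D$ to absorb the $-\DDD_\B$ contribution, trace evaluation via Lemma~\ref{lem:trace_evaluation}, and joint optimization of $(R,M,\delta)$. Two minor remarks: the paper simply drops $\cE_1\ge 0$ rather than invoking the quantitative bound of Lemma~\ref{lem:cE_1estimate}, and your aside that the 2d estimate $\langle\xi,\cN\xi\rangle\le C_\varepsilon N^{\frac14+\varepsilon}$ is \emph{weaker} than its 3d analogue has it backwards---relative to $k_{\F}$ it is an improvement (see the remark after Lemma~\ref{lemma:A-Priori-Bounds}); the genuinely tighter balancing in 2d comes from the relative coupling $\lambda/\hbar^2=1$ rather than from a weaker number bound.
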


\begin{proof}
We proceed as in~\cite[Sect.~10]{benedikter2021correlation} and~\cite[Sect.~9]{benedikter2023correlation}: Let $\psi_\GS$ be the ground state of $H_N$ and $\xi \coloneqq R\psi_\GS$, which obviously belongs to an approximate ground state in the sense of Definition~\ref{def:approxGS}. Also, $R \xi$ is obviously an eigenvector of $H_N$, so all a priori bounds of Lemma~\ref{lemma:A-Priori-Bounds} apply. Recall that by~\eqref{eq:HNconjugation}, the ground state energy is given by
\begin{equation*}
	E_{\GS}
	= \langle \psi_{\GS}, H_N \psi_{\GS} \rangle
	= \langle \xi, (\HHH_0 + Q_{\B} + \cE_1 + \cE_2 + \XXX) \xi \rangle + E_{\FS} \;.
\end{equation*}
We recall $\cE_1\geq 0$. Then, from Lemmas~\ref{lemma:Exchange-Term-Bound},~\ref{lem:cE_2estimate}, and~\ref{lem:Q_QR_bound}, we get
\begin{equation*}
    |\langle\xi,\XXX\xi \rangle|
    \leq C_\varepsilon \hbar
    N^{-\frac 14 + \varepsilon}\;, \qquad
    |\langle\xi,\cE_2\xi\rangle|
    \leq C_\varepsilon \hbar N^{-\frac{1}{136} + \varepsilon}\;,
\end{equation*}
\begin{equation*}
   |\langle \xi, (Q_{\B} - Q_{\B}^R) \xi \rangle|
	\le C_\varepsilon \hbar N^{\varepsilon}
        \big(R^{-\frac 12}
        + N^{-\frac 18}
        + N^{-\frac{\delta}{2}}
        + R M^{\frac 12} N^{-\frac 14 + \frac{\delta}{2}} \big) \;.
\end{equation*}
We conclude
\begin{align} \label{eq:E_GS}
    E_\GS
    \geq &E_\FS+\langle\xi,(\DDD_\B+Q^R_\B)\xi\rangle +\langle\xi,(\HHH_0-\DDD_\B)\xi\rangle \nonumber\\
    &-C_\varepsilon \hbar N^{\varepsilon}(
    N^{- \frac{1}{136}} +
    R^{-\frac 12} +
    N^{-\frac \delta2} +
    R M^\frac 12 N^{-\frac 14 + \frac \delta2} )\;.
\end{align}
By means of Lemmas~\ref{lemma:A-Priori-Bounds},~\ref{lem:gronwall} and~\ref{lem:Approximate_Bogoliubov_Invariance}, and writing $\xi=TZ\eta$, we can estimate
\begin{align}\label{eq:H0-DB-estimate}
    \langle\xi, (\HHH_0-\DDD_\B)\xi\rangle&\geq\langle\eta,(\HHH_0-\DDD_\B)\eta\rangle-C_\varepsilon \hbar(M^{-1}N^{C\delta}+R^2M^\frac 32 N^{-\frac 14 + C\delta+\frac \varepsilon2})\nonumber \\
    &\geq -\langle\eta,\DDD_\B\eta\rangle-C_\varepsilon \hbar(M^{-1}N^{C\delta}+R^2M^\frac 32 N^{-\frac 14 + C\delta+\frac \varepsilon2}) \;,
\end{align}
where we used that $\HHH_0\geq 0$. To treat $(\DDD_\B+Q^R_\B) $, we write (compare~\eqref{eq:h_eff} and~\eqref{eq:h_eff_2})
\begin{equation} \label{eq:DDD_QRB}
    \DDD_\B + Q^R_\B
    = \sum_{k\in\Gamma^\nor} 2\hbar\kappa|k|h_\eff(k) \;.
\end{equation}
By means of Lemma~\ref{lem:approximate_bosonic_bogoliubov_transformation}, as in~\cite[Sect.~10]{benedikter2021correlation}, conjugation with $T$ results in
\begin{equation}
    T^*h_\eff(k) T=h_\eff^\diag(k)+\fE^\diag(k)\;,
    \label{eq:h_eff_conjugation}
\end{equation}
where $\fE^\diag(k)$ is bounded in~\eqref{eq:fE_diag_normbound} and below, and where the leading-order term is given by
\begin{equation} \label{eq:heff_diag}
\begin{aligned}
	h_{\eff}^{\diag}(k)
	&\coloneq \sum_{\alpha, \beta \in \cI_k} \Big( \big( D(k) + W(k) \big)_{\alpha, \beta} \tilde{c}^*_\alpha(k) \tilde{c}_\beta(k)
		+ \frac 12 \widetilde{W}(k)_{\alpha, \beta} \big( \tilde{c}^*_\alpha(k) \tilde{c}^*_\beta(k) + \tilde{c}_\beta(k) \tilde{c}_\alpha(k) \big) \Big) \\
	&= \frac 12 \tr \big( E(k) - D(k) - W(k) \big)
		+ \sum_{\alpha, \beta \in \cI_k} \fK(k)_{\alpha, \beta} c^*_\alpha(k) c_\beta(k)
		+\fE^\textrm{no}(k)\;,\\
    \fE^\textrm{no}(k)
    &\coloneq \frac 12 \sum_{\alpha\in\cI_k}\Big(2\sinh (K(k))\big( D(k)+W(k) \big)\sinh (K(k)) +\\
    &\quad
    + \cosh (K(k)) \widetilde{W}(k) \sinh(K(k)) +\sinh (K(k)) \widetilde{W}(k)\cosh(K(k)) \Big)_{\alpha,\alpha}\cE_\alpha(k,k)\;,
\end{aligned}
\end{equation}
with $ \fK(k) \coloneq O(k) E(k) O(k)^T $, see~\eqref{eq:K} and below, and $\cE_\alpha(k,\ell)$ defined in~\eqref{eq:cE_alpha}. To bound the normal ordering error $\fE^\textrm{no}(k)$, notice that from Lemmas~\ref{lem:K-kernel} and~\ref{lem:normalization_constant}, as well as the definition~\eqref{eq:DWW} of $D, W, \widetilde{W}$,
\begin{equation} \label{eq:matrixbounds_KDWW}
\begin{aligned}
    |\sinh(K(k))_{\alpha, \beta}|,
    |W(k)_{\alpha,\beta}| ,
    |\widetilde{W}(k)_{\alpha,\beta}|
    &\leq C \hat{V}(k) M^{-1} \;, \qquad
    \norm{\cosh(K(k))} ,
    \norm{\sinh(K(k))}
    \leq C \;,\\
    \norm{W(k)},
    \Vert \widetilde{W}(k) \Vert
    &\leq C \hat{V}(k) \;, \qquad
    |D(k)_{\alpha,\beta}| 
    \leq \delta_{\alpha, \beta} \;, \qquad
    \norm{D(k)}
    \leq 1 \;,
\end{aligned}
\end{equation}
so, omitting the $k$-indices,
\begin{equation*}
    \Big|\Big(2\sinh (K)\big( D+W \big)\sinh (K)+ \cosh (K) \widetilde{W}\sinh(K) +\sinh (K) \widetilde{W}\cosh(K)\Big)_{\alpha,\alpha}\Big|\leq C\frac{\hat{V}(k)}{M}\;.
\end{equation*}
From~\eqref{eq:cE_alpha} and Lemma~\ref{lem:normalization_constant} with $|k \cdot \hat{\omega}_\alpha| \ge N^{-\delta}$, it becomes evident that
\begin{equation*}
\begin{aligned}
    \sum_{\alpha\in\cI_k}|\langle \xi ,\cE_\alpha(k,k) \xi \rangle|
    &\leq \sup_{\alpha\in\cI_k}\frac{1}{n_\alpha(k)^2}\langle \xi,\cN  \xi\rangle
    \leq C N^{-\frac12 +\delta}M\langle\xi,\cN\xi\rangle \\
    \Rightarrow \quad
    \pm \fE^\textrm{no}(k)
    &\leq C \hat{V}(k) N^{-\frac 12 + \delta} \cN \;.
\end{aligned}
\end{equation*}
We conclude
\begin{equation} \label{eq:heff_diag_bound}
    h_{\eff}^{\diag}(k)
    \geq \frac 12 \tr \big( E(k) - D(k) - W(k) \big)
		+ \sum_{\alpha, \beta \in \cI_k} \fK(k)_{\alpha, \beta} c^*_\alpha(k) c_\beta(k)
		- C \hat{V}(k) N^{-\frac 12 + \delta} \cN\;.
\end{equation}
Next, $\fE^\diag(k)$ is computed via Lemma~\ref{lem:approximate_bosonic_bogoliubov_transformation} with $\fE_\alpha(k) \coloneq \fE_\alpha(1,k)$, compare~\cite[(10.6)]{benedikter2021correlation}
\begin{align} \label{eq:fE_diag_normbound}
    | \langle Z \eta , \fE^{\diag}(k) Z \eta \rangle|
    &\leq \sum_{\alpha, \beta \in \cI_k} \big( D(k) + W(k) \big)_{\alpha, \beta} \big(
        2 \Vert \tilde{c}_\alpha(k) Z \eta \Vert
        \Vert \fE_\beta(k) Z \eta \Vert
        + \Vert \fE_\alpha(k) Z \eta \Vert
        \Vert \fE_\beta(k) Z \eta \Vert\big) \nonumber \\
    &\quad + 2 \sum_{\alpha, \beta \in \cI_k} \widetilde{W}(k)_{\alpha, \beta}\big(
        \Vert \tilde{c}_\alpha(k) Z \eta \Vert
        \Vert \fE_\beta(k)^* Z \eta \Vert
        + \Vert \fE_\alpha(k) Z \eta \Vert
        \Vert \fE_\beta(k)^* Z \eta \Vert\big) \nonumber \\
    &\quad + 2 \sum_{\alpha \in \cI_k} \Vert \fE_\alpha(k) Z \eta \Vert
        \Big\Vert \sum_{\beta \in \cI_k} \widetilde{W}(k)_{\alpha, \beta} \tilde{c}^*_\beta(k) Z \eta \Big\Vert \;.
\end{align}
Then, we apply Lemma~\ref{lem:approximate_bosonic_bogoliubov_transformation}, as well as the matrix element bounds~\eqref{eq:matrixbounds_KDWW}, and conclude
\begin{equation} \label{eq:fE_diag_bound}
\begin{aligned}
	| \langle Z \eta, \fE^{\diag}(k) Z \eta \rangle|
	&\le C M N^{-\frac 12 + \delta}
		\langle Z \eta, (\cN_\delta + M)(\cN + 1)^2 Z \eta \rangle^{\frac 12}
		\langle Z \eta, (\cN_\delta + 1) Z \eta \rangle^{\frac 12}\\
	&\quad + C M^2 N^{-1 + 2 \delta} \langle Z \eta, (\cN_\delta + M)(\cN + 1)^2 Z \eta \rangle \;.
\end{aligned}
\end{equation}
Plugging~\eqref{eq:h_eff_conjugation} into~\eqref{eq:DDD_QRB}, and applying~\eqref{eq:heff_diag_bound} and~\eqref{eq:fE_diag_bound} results in
\begin{align*}
    &\langle\xi,(\DDD_\B+Q_\B^R)\xi\rangle \nonumber \\
    &\geq \hbar\kappa \sum_{k\in\Gamma^\nor}|k|\tr \big( E(k)-D(k)-W(k) \big)
    +2\hbar\kappa\sum_{k\in\Gamma^\nor}\sum_{\alpha,\beta\in\cI_k}|k|\fK(k)_{\alpha,\beta}\langle Z \eta ,c_\alpha^*(k)c_\beta(k) Z \eta \rangle \nonumber \\
    &\quad - C\hbar\sum_{k\in\Gamma^\nor} |k|\Big( \hat{V}(k) N^{-\frac 12+\delta}\langle \xi,(\cN + 1) \xi\rangle 
    +M N^{-\frac 12 + \delta}
		\langle \xi, (\cN_\delta + M)(\cN + 1)^2 \xi \rangle^{\frac 12}
		\langle \xi, (\cN_\delta + 1) \xi \rangle^{\frac 12} \nonumber \\
    &\quad + M^2 N^{-1 + 2 \delta} \langle \xi, (\cN_\delta + M)(\cN + 1)^2 \xi \rangle \Big)\;,
\end{align*}
where we propagated expectations in $Z \eta = T^* \xi = T_{-1} \xi$ to expectations in $\xi$ using Lemma~\ref{lem:gronwall}. By means of the a priori bounds given in Lemma~\ref{lemma:A-Priori-Bounds}, and with $\sum_{k \in \Gamma^{\nor}} |k| \leq C R^3$, we conclude
\begin{align}\label{eq:D_B+Q_B-lower-bound}
    &\langle\xi,(\DDD_\B+Q_\B^R)\xi\rangle \nonumber \\
    &\geq \hbar\kappa \sum_{k\in\Gamma^\nor}|k|\tr \big( E(k)-D(k)-W(k) \big)
    +2\hbar\kappa\sum_{k\in\Gamma^\nor}\sum_{\alpha,\beta\in\cI_k}|k|\fK(k)_{\alpha,\beta}\langle Z\eta,c_\alpha^*(k)c_\beta(k)Z\eta\rangle \nonumber \\
    &\quad - C_\varepsilon \hbar N^\varepsilon 
        R^3
        \Big( MN^{-\frac 14 + 2\delta}
        + M^{\frac 32}N^{-\frac 14 +\frac 32 \delta} 
        + M^2 N^{-\frac 12 + 3\delta} 
        + M^3 N^{-\frac 12 + 2\delta} \Big)\;.
\end{align}
Similarly, using Lemma~\ref{lem:approximate_bosonic_bogoliubov_transformation}, we approximately diagonalize $\fK(k)$ using the transformation $ Z $. In analogy to~\cite[Eq.~(9.15)]{benedikter2023correlation}, with Lemma~\ref{lemma:A-Priori-Bounds}, we obtain
\begin{align}\label{eq:h_eff_estimate}
	&2 \kappa \hbar \sum_{k \in \Gamma^{\nor}} \sum_{\alpha, \beta \in \cI_k} |k| \fK(k)_{\alpha, \beta} \langle Z \eta, c^*_\alpha(k) c_\beta(k) Z \eta \rangle \nonumber \\
    &\ge \langle \eta, \DDD_{\B} \eta \rangle
	- C \hbar \sum_{k \in \Gamma^{\nor}} |k| \big( M^2 N^{-\frac 12 + C \delta} \eva{\eta, \cN_\delta \eta}^{\frac 12} \eva{\eta, \cN_\delta \cN^2 \eta}^{\frac 12}
        + M^{\frac 72} N^{-1 + C \delta} \eva{\eta, \cN_\delta \cN^2 \eta} \big) \nonumber \\
	&\ge \langle \eta, \DDD_{\B} \eta \rangle
	- C_\varepsilon \hbar R^3 \big( M^2 N^{-\frac 14 + C \delta + \varepsilon}
        + M^{\frac 72} N^{-\frac 12 + C \delta + \varepsilon} \big) \;,
\end{align}
Plugging~\eqref{eq:H0-DB-estimate} and~\eqref{eq:D_B+Q_B-lower-bound} into~\eqref{eq:E_GS}, and then inserting~\eqref{eq:h_eff_estimate} and Lemma~\ref{lem:trace_evaluation}, we obtain
\begin{align}
    E_\GS \geq &E_\FS + E^{\RPA} - C_\varepsilon \hbar N^\varepsilon \big( 
        R^3 M^2 N^{-\frac 14 + C\delta} 
        + R^3 M^{\frac 72} N^{-\frac 12 + C\delta}
        + M^{-1} N^{C\delta} \nonumber\\
    &+ N^{-\frac{1}{136}}
        + R^{-\frac 12} 
        + N^{-\frac \delta2}
        + R M^{\frac 12}N^{-\frac 14+\frac \delta2}
        + R^{\frac 12} M^{-\frac 12} N^{\frac{\delta}{2}} \big)\;.
\end{align}
Choosing $M = N^{2C\delta}$, then $\delta < \frac{1}{20 C}$ small enough, then $R = N^{4 \alpha}$ with $\alpha < \frac{\delta}{4}$ small enough and then $\varepsilon = \alpha$, we conclude the lower bound~\eqref{eq:lowerbound}.
\end{proof}

\subsection{Upper Bound}

\begin{prop} \label{prop:upperbound}
Let $ \hat{V}(k) = \hat{V}(-k) \ge 0 $ and $ \sum_{k \in \ZZZ^2} |k|^{2-b} \hat{V}(k)^2 < \infty $ for some $b \in (0,1)$. Then,
\begin{equation} \label{eq:upperbound}
\begin{aligned}
    E_{\GS} 
    &\leq E_{\FS} + E^{\RPA} + o(N^{-\frac 12}) \;.
\end{aligned}
\end{equation}
\end{prop}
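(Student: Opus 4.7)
The plan is to apply the variational principle to the trial state $\Psi := R T \Omega$, where $R$ is the particle--hole transform from Section~\ref{sec:mathdef} and $T = T_1$ is the pseudo-bosonic Bogoliubov transformation from~\eqref{eq:bogoliubov}. By~\eqref{eq:HNconjugation}, the task becomes
\begin{equation*}
    E_{\GS} - E_{\FS}
    \leq \eva{T\Omega, (\HHH_0 + Q_{\B} + \cE_1 + \cE_2 + \XXX) T\Omega} \;,
\end{equation*}
and the goal is to show this equals $E^{\RPA} + o(N^{-\frac 12})$. Since $\cN\Omega = 0$ and $\HHH_0 \Omega = 0$, the a priori estimates of Section~\ref{sec:a_priori_bounds} (which required $\sum_k|k|\hat{V}(k)<\infty$) are not needed; instead Lemma~\ref{lem:gronwall} yields $\eva{T\Omega,(\cN+1)^m T\Omega}\leq C_m\exp(C_m R^{b/2})$, which remains $o(N^\varepsilon)$ for every $\varepsilon>0$ provided $R$ grows slowly enough in $N$ (e.g.\ a power of $\log N$). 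Analogous bounds on $\eva{T\Omega,\HHH_0 T\Omega}$ and $\eva{T\Omega, \cN_\delta T\Omega}$ follow from commutator computations against $T$ as in Section~\ref{sec:linearizing_kinetic}.

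Next I split each of $Q_{\B}$, $\cE_1$, $\XXX$ at $|k| = CN^{\frac 12}$ into its low-momentum tilde-counterpart and a high-momentum tail. The tilde parts $\widetilde{\XXX}$, $\widetilde{\cE}_1$ are bounded directly by the first statements of Lemmas~\ref{lemma:Exchange-Term-Bound} and~\ref{lem:cE_1estimate}, which do not require $\sum_k|k|\hat{V}(k)<\infty$; the error $\widetilde{Q}_{\B} - Q^R_{\B}$ is absorbed into~\eqref{eq:Q_QR_bound_mod} of Lemma~\ref{lem:Q_QR_bound}. For $\cE_2$ and the high-momentum tails $Q_{\B}-\widetilde{Q}_{\B}$, $\cE_1-\widetilde{\cE}_1$, $\XXX-\widetilde{\XXX}$, the key structural observations are that $b(k)\Omega = 0$, $d(k)\Omega = 0$, and that $T$'s kernel is supported on $|k'|<R\ll N^{\frac 12}$. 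Hence, for $|k|\geq CN^{\frac 12}$, the commutators $[T,b(k)]$ and $[T,d(k)]$ are small, so $\|b(k)T\Omega\|$ and $\|d(k)T\Omega\|$ reduce to controlled commutator contributions. Combined with the Cauchy--Schwarz estimate~\eqref{eq:Vsplit} and the $\ell^2$-assumption $\sum_k|k|^{2-b}\hat{V}(k)^2<\infty$, this yields $o(N^{-\frac 12})$ contributions.

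The main term $\eva{T\Omega,(\HHH_0+Q^R_{\B})T\Omega}$ is then evaluated by the singular version of Lemma~\ref{lem:Approximate_Bogoliubov_Invariance}, which replaces $\HHH_0$ by $\DDD_{\B}$ modulo errors controlled by the moment bounds above, leaving
\begin{equation*}
    2\hbar\kappa \sum_{k \in \Gamma^{\nor}} |k| \eva{\Omega, T^* h_{\eff}(k) T \Omega} \;.
\end{equation*}
Expanding $T^* h_{\eff}(k) T$ via the identity~\eqref{eq:heff_diag}, the $c^*c$ and normal-ordering terms vanish at $\Omega$ because $\eva{\Omega, c^*_\alpha(k) c_\beta(k)\Omega}=0=\eva{\Omega,\cE_\alpha(k,k)\Omega}$; only the constant $\tfrac 12\tr(E(k)-D(k)-W(k))$ survives, together with the bosonization error $\fE^{\diag}(k)$ bounded via Lemma~\ref{lem:approximate_bosonic_bogoliubov_transformation}. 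The singular-potential form~\eqref{eq:trace_evaluation_mod} of Lemma~\ref{lem:trace_evaluation} then identifies the resulting sum with $E^{\RPA}$.

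The main technical obstacle is controlling the high-$|k|$ tails of $Q_{\B}$, $\cE_1$, $\cE_2$, $\XXX$, since without $\sum_k|k|\hat{V}(k)<\infty$ the naive bounds via $|L_k|\lesssim N$ produce contributions that are not $o(\hbar)$. Exploiting the facts $b(k)\Omega=0=d(k)\Omega$ together with the compact momentum support of the Bogoliubov kernel $K(k)$ is what keeps these expectations small enough. After collecting all errors, an optimization of $M$, $R$, $\delta$ analogous to the end of the lower-bound proof yields the overall $o(N^{-\frac 12})$ bound.
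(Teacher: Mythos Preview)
Your overall strategy---trial state $RT\Omega$, a priori moment bounds from Lemma~\ref{lem:gronwall}, replacement of $\HHH_0$ by $\DDD_{\B}$ via the singular version of Lemma~\ref{lem:Approximate_Bogoliubov_Invariance}, and evaluation of the main term through~\eqref{eq:heff_diag} with the $c^*c$ and $\fE^{\nor}$ contributions vanishing on $\Omega$---matches the paper's proof exactly, as does your treatment of the low-momentum pieces $\widetilde{\XXX}$, $\widetilde{\cE}_1$ and $\widetilde{Q}_{\B}-Q_{\B}^R$.

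There is, however, a genuine gap in your handling of the high-$|k|$ tails. The claim that $[T,d(k)]$ is small for $|k|\ge CN^{1/2}$ is false. For $|k|>2k_{\F}+2R$ one has $d_2(k)T\Omega=0$, while $d_1(k)=\sum_{p}a_{p-k}^*a_p$ moves a particle from the patch region to a momentum $p-k$ far outside it; a direct computation gives
\[
\Vert d(k)T\Omega\Vert^2=\Vert d_1(k)T\Omega\Vert^2=\sum_{p\in B_{\F}^c}\langle T\Omega,a_p^*a_p\,T\Omega\rangle=\tfrac12\langle T\Omega,\cN\,T\Omega\rangle,
\]
which is of order one, not $o(1)$. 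Inserting this into $\cE_1-\widetilde{\cE}_1$ produces the factor $\sum_{|k|>CN^{1/2}}\hat V(k)$, and for $\hat V(k)\sim|k|^{-2}$ (which satisfies $\sum_k|k|^{2-b}\hat V(k)^2<\infty$ for every $b\in(0,1)$) this sum diverges. So the commutator route cannot close.

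The paper avoids this entirely by working at the level of the \emph{original} Hamiltonian: since $RT\Omega$ has all its physical particles at momenta $|q|<k_{\F}+R$, one has $\langle RT\Omega,\,a_{p+k}^*a_{q-k}^*a_qa_p\,RT\Omega\rangle=0$ whenever $|k|>2(k_{\F}+R)$. After the particle--hole transform this means the \emph{combined} high-$|k|$ tail of $Q_{\B}+\cE_1+\cE_2+\XXX$ vanishes identically, so one may pass to the tilde versions with equality---even though the individual tails of $\cE_1$ and $\XXX$ are separately nonzero (indeed divergent) and cancel against each other. For $\cE_2$ the paper uses an even simpler parity argument: $T\Omega$ has excitation number divisible by $4$, while $\cE_2$ changes it by $\pm2$, so $\langle T\Omega,\cE_2\,T\Omega\rangle=0$.
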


\begin{proof}
As in~\cite[Sect.~9]{benedikter2023correlation}, for the upper bound, we use the trial state $ \tilde{\psi} = R T \Omega $, with $R$ and $T$ defined in~\eqref{eq:R} and~\eqref{eq:bogoliubov}, and set $ \tilde{\xi} \coloneq T \Omega $. Then, we may use $\cN_\delta \leq \cN$ and Lemma~\ref{lem:gronwall} provides us with the following a priori bounds:
\begin{equation} \label{eq:a_priori_upper_bound}
    \langle \tilde\xi, \cN^m \tilde\xi \rangle
    \leq C_m e^{C_m R^{\frac b2}} \langle \Omega, (\cN+1)^m \Omega \rangle
    \leq C_m e^{C_m R^{\frac b2}} \qquad \forall m \in \NNN \;.
\end{equation}
To obtain an a priori bound for $\HHH_0$, we write
\begin{equation*}
    \langle \tilde\xi, \HHH_0 \tilde\xi \rangle
    = \langle \tilde\xi, \DDD_{\B} \tilde\xi \rangle
        + \langle \tilde\xi, (\HHH_0 - \DDD_{\B}) \tilde\xi \rangle \;,
\end{equation*}
where $\DDD_{\B}$ was defined in~\eqref{eq:DDD_B}.
With Lemma~\ref{lem:c_conversion} and~\eqref{eq:a_priori_upper_bound}, using $R^3 \ll e^{C R^{\frac b2}}$, we bound $\langle \tilde\xi, \DDD_{\B} \tilde\xi \rangle \leq C \hbar e^{C R^{\frac b2}} $, and by Lemma~\ref{lem:Approximate_Bogoliubov_Invariance} and~\eqref{eq:a_priori_upper_bound}, we obtain
\begin{equation}\label{eq:UpperBoundKinetic}
    \langle \tilde{\xi}, (\HHH_0 - \DDD_{\B}) \tilde{\xi} \rangle = \langle \Omega, T^*(\HHH_0 - \DDD_{\B})T \Omega \rangle 
    \leq C \hbar e^{C R^{\frac b2}} \big( M^{-1} + M N^{-\frac 12+\delta} \big)\;.
\end{equation}
In total, we conclude the a priori bound
\begin{equation*}
    \langle \tilde\xi, \HHH_0 \tilde\xi \rangle
    \leq C \hbar e^{C R^{\frac b2}} \;.
\end{equation*}
Now, recall~\eqref{eq:HNconjugation}:
\begin{equation*}
	E_{\GS}
	\le \langle \tilde{\psi}, H_N \tilde{\psi} \rangle
	= \langle \tilde{\xi}, (\HHH_0 + Q_{\B} + \cE_1 + \cE_2 + \XXX) \tilde{\xi} \rangle + E_{\FS} \;.
\end{equation*}
First, note that the number of excitations in $\tilde\xi$ is an integer multiple of 4, while $\cE_2$ changes the excitation number by 2, so $\langle \tilde\xi, \cE_2 \tilde\xi \rangle = 0$. 
Next, note that the trial state $\tilde\xi$ only contains excitations of momenta $ |p| \leq k_{\F} + R < C N^{\frac 12}$ for $C>0$ large enough. We may therefore write
\begin{equation}
	E_{\GS}
	\le \langle \tilde{\xi}, (\HHH_0 + \widetilde{Q}_{\B} + \widetilde{\cE}_1 + \widetilde{\XXX}) \tilde{\xi} \rangle + E_{\FS} \;,
\end{equation}
where in $\widetilde{Q}_{\B}$, $\widetilde{\cE}_1$, and $\widetilde{\XXX}$ we restrict to $|k| < C N^{\frac 12}$, see~\eqref{eq:QBtilde} and~\eqref{eq:XXXtilde_cE1tilde}.
With Lemmas~\ref{lemma:Exchange-Term-Bound} and~\ref{lem:cE_1estimate}, as well as~\eqref{eq:a_priori_upper_bound}, we then bound
\begin{align*}
    \langle \tilde\xi, \widetilde{\XXX} \tilde\xi \rangle
    \leq C N^{-1 + \frac b4} \langle \tilde\xi, \cN \tilde\xi \rangle
    \leq C \hbar e^{C R^{\frac b2}} N^{-\frac 12 + \frac b4} \;, \quad
    \langle \tilde\xi, \widetilde{\cE}_1 \tilde\xi \rangle
    \leq C N^{-1 + \frac b4} \langle \tilde\xi, \cN^2 \tilde\xi \rangle
    \leq C \hbar e^{C R^{\frac b2}} N^{-\frac 12 + \frac b4} \;.
\end{align*}
Using these bounds with Lemma~\ref{lem:Q_QR_bound} and~\eqref{eq:UpperBoundKinetic}, we obtain
\begin{equation}\label{eq:UpperBoundGS}
    \begin{aligned}
	E_\GS
	&\le E_\FS+\langle \tilde{\xi}, (\DDD_{\B} + Q_{\B}^R) \tilde{\xi} \rangle + C_\varepsilon \hbar e^{C R^{\frac b2}} N^{\varepsilon} \\
        &\quad \times \big( N^{-\frac 12 + \frac b4}
            + M^{\frac 32} N^{-\frac 14 + \frac b4 + \frac \delta2}
            + N^{-\frac 18}
            + N^{-\frac \delta2}
            + M^{-1}
            + M N^{-\frac 12+\delta}
        \big) \;.
    \end{aligned}
\end{equation}
For the remaining term $\langle \tilde{\xi}, (\DDD_\B+Q_\B^R)\tilde{\xi}\rangle$, we proceed as in~\eqref{eq:DDD_QRB}--\eqref{eq:h_eff_estimate}, where now $c_\beta(k) \Omega = 0$ and $\langle \Omega, \fE^{\mathrm{no}} \Omega \rangle = 0$.
\begin{equation} \label{eq:UpperBoundRPA}
    \begin{aligned}
        &\langle \tilde{\xi}, (\DDD_\B+Q_\B^R)\tilde{\xi}\rangle
        =\sum_{k\in\Gamma^\nor}2 \hbar \kappa |k|\langle\Omega, T^*h_\eff(k)T\Omega\rangle \\
        &= \hbar \kappa \sum_{k\in\Gamma^\nor}|k|
        \Big( \tr \big(E(k)-D(k)-W(k) \big)
            + 2 \langle\Omega, \fE^\diag(k)\Omega\rangle \Big) \\
        &\leq \hbar \kappa \sum_{k\in\Gamma^\nor}|k|\tr \big(E(k)-D(k)-W(k) \big)
            + C \hbar e^{C R^{\frac b2}} \big(
            M^{\frac 32} N^{-\frac 12+\delta}
            + M^3 N^{-1+2\delta} \big) \\
        &\leq E^\RPA
            + C \hbar e^{C R^{\frac b2}} \big( N^{-\frac \delta2}
            + M^{-\frac12} N^\frac \delta2
            + M^\frac 12 N^{-\frac14+\frac \delta2}
            + M^{\frac 32} N^{-\frac 12+\delta}
            + M^3 N^{-1+2\delta}\big) 
            + C \hbar R^{b-1} \;,
    \end{aligned}
\end{equation}
where in the second-last line, we bounded $\fE^\diag(k)$ as in~\eqref{eq:fE_diag_bound}, and in the last line we evaluated the trace by Lemma~\ref{lem:trace_evaluation}. Inserting~\eqref{eq:UpperBoundRPA} into~\eqref{eq:UpperBoundGS}, we have
\begin{equation}
    \begin{aligned}
    E_\GS
	&\le E_\FS+E^\RPA
        + C_\varepsilon \hbar e^{C R^{\frac b2}} N^{\varepsilon} \big( N^{-\frac 12 + \frac b4}
            + M^{\frac 32} N^{-\frac 14 + \frac b4 + \frac \delta2}
            + N^{-\frac 18}
            + N^{-\frac \delta2}\\
    &\quad + M^{-\frac12} N^\frac \delta2
            + M^{\frac 32} N^{-\frac 12+\delta}
            + M^3 N^{-1+2\delta}
        \big)
        + C \hbar R^{b-1} \;.
    \end{aligned}
\end{equation}
Finally, we choose the patch size $R$ to grow slowly with $N$, such that $e^{C R^{\frac b2}} < N^\varepsilon$, then fix $M = N^{4 \delta}$, and then choose $\varepsilon, \delta > 0$ so small that $ 2 \varepsilon + 14 \delta - \frac{1-b}{4} < 0 $.
\end{proof}

\section*{Acknowledgments}
The authors would like to thank Niels Benedikter for helpful discussions. SL was supported by the European Union (ERC \textsc{FermiMath} nr.~101040991 and ERC \textsc{MathBEC} nr.~101095820). Views and opinions expressed are those of the authors and do not necessarily reflect those of the European Union or the European Research Council Executive Agency. Neither the European Union nor the granting authority can be held responsible for them. SL was partially supported by Gruppo Nazionale per la Fisica Matematica in Italy.

\section*{Statements and Declarations}
The authors have no competing interests to declare.

\section*{Data Availability}
As purely mathematical research, there are no datasets related to the article.

\appendix

\section{Number Theoretical Estimates}
\label{app:numbertheory}

\subsection{Inverse Energy Sum over a Lune}
The following result is the 2d analog of~\cite[Proposition~A.2]{christiansen2023random}.

\begin{prop} \label{prop:lambda-bound}
Let $ k\in\mathbb{Z}^2$ and recall for $ p \in L_k = B_{\F}^c \cap (B_{\F} + k) $ the pair excitation energy $ \lambda_{k,p} = \frac 12 \hbar^2 (|p|^2 - |p-k|^2) $ with $\hbar = N^{-\frac 12}$. Then,
\begin{equation} \label{eq:lambda-bound}
    \hbar^2 \sum_{p\in\lune}\lambda_{k,p}^{-1}
    \leq C \log(N) \;.
\end{equation}
\end{prop}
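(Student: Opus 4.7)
The plan is to exploit the integrality of $n(p) \coloneqq |p|^2 - |p-k|^2 = 2 p \cdot k - |k|^2$, which is a strictly positive integer for every $p \in L_k$ (positive since $|p|^2 \geq k_F^2 > |p-k|^2$; integer since $p, k \in \ZZZ^2$). Because $\lambda_{k,p} = \tfrac{1}{2} \hbar^2 n(p)$, the target estimate \eqref{eq:lambda-bound} is equivalent to
\[
    \sum_{p \in L_k} \frac{1}{n(p)} \leq C \log N,
\]
which I rewrite, grouping points by their $n$-value, as $\sum_{n \geq 1} M(n)/n$ with $M(n) \coloneqq |\{p \in L_k : n(p) = n\}|$.

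The geometric step is a pointwise bound on $M(n)$. For fixed $n$, all relevant $p$ lie on the affine line $\ell_n = \{p : 2 p \cdot k = n + |k|^2\}$; by B\'ezout, $\ell_n$ contains lattice points iff $n + |k|^2 \in 2g\ZZZ$ (with $g \coloneqq \gcd(k_1, k_2)$), and then consecutive lattice points on $\ell_n$ are equispaced along $\hat k^\perp$ with spacing $|k|/g$. Writing $r \coloneqq (n + |k|^2)/(2|k|)$ and $q \coloneqq p \cdot \hat k^\perp$, the lune condition $p \in L_k$ becomes $q^2 \in [\max(0, k_F^2 - r^2),\, k_F^2 - (r-|k|)^2)$, which is one $q$-interval if $r > k_F$ and the union $\pm [\sqrt{k_F^2 - r^2}, \sqrt{k_F^2 - (r-|k|)^2})$ of two intervals if $r \leq k_F$. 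Denoting the total length by $L(n)$, this yields $M(n) \leq g L(n)/|k| + O(1)$, where in the inner regime $L(n) = 2n/(\sqrt{k_F^2 - r^2} + \sqrt{k_F^2 - (r-|k|)^2})$ and in the outer regime $L(n) \leq 2\sqrt{k_F |k|}$.

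I will then split $\sum_n M(n)/n \leq S_1 + S_2$ with $S_1 \coloneqq \sum_n g L(n)/(|k| n)$ and $S_2 \coloneqq \sum_n O(1/n)$. Since valid $n$ are $2g$-spaced and bounded by $O(k_F|k|) = O(N)$, the second sum satisfies $S_2 \leq C(1 + g^{-1}\log N) \leq C \log N$. For $S_1$, in the inner regime the summand equals $2g/[|k|(\sqrt{k_F^2 - r^2} + \sqrt{k_F^2 - (r-|k|)^2})]$; as $n$ ranges over valid values, $r$ ranges over points spaced $g/|k|$ apart, and the resulting sum is a Riemann sum comparable with the finite integral $\int_0^{k_F} dr/\sqrt{k_F^2 - r^2} = \pi/2 = O(1)$. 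In the outer regime, combining $L(n) \leq 2\sqrt{k_F|k|}$ with the $O(|k|^2/g)$ valid $n$-values gives $O(\sqrt{|k|/k_F}) = O(1)$. Combining, $\sum_p n(p)^{-1} \leq O(\log N)$, as required.

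The main obstacle I anticipate is controlling the Riemann sum in $S_1$ close to the singularity $r \to k_F^-$, where $(\sqrt{k_F^2 - r^2})^{-1}$ is unbounded; the lattice spacing $g/|k|$ gives a natural cutoff in $r$, but the distance of the nearest $r_j$ to $k_F$ a priori depends on Diophantine properties of $k_F$. I will resolve this by a direct boundary estimate: writing $r_j = k_F - j g/|k|$ for $j \geq 1$ and using $\sqrt{k_F^2 - r_j^2} \geq \sqrt{j g k_F/|k|}$, I get $\sum_{j=1}^J 1/\sqrt{k_F^2 - r_j^2} \leq C \sqrt{J |k|/(g k_F)} \leq 2|k|/g$ for $J \leq k_F |k|/g$, which, multiplied by the $(g/|k|)$ prefactor, is $O(1)$ uniformly in $k_F$.
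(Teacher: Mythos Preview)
Your approach --- grouping by the integer level $n(p) = |p|^2 - |p-k|^2 \in \ZZZ_{\ge 1}$ and splitting $M(n) \le gL(n)/|k| + O(1)$ into a Riemann-sum part $S_1$ and a harmonic-sum part $S_2$ --- is sound and more unified than the paper's. The paper also slices $L_k$ by lines perpendicular to $k$, but it first separates the lune geometrically into a ``Bulk'' (treated by integral comparison via unit boxes, with somewhat delicate estimates on $\sqrt{a_+^2-b_+^2} - \sqrt{a_-^2-b_-^2}$) and a ``Tip'' near the corners (treated by direct plane counting together with the integrality $|k|s_m \ge \tfrac12$). Your $S_2$ packages the Tip analysis into a single harmonic series, and your $S_1$ replaces the Bulk integrals by a cleaner Riemann sum; this buys you a shorter argument at the price of slightly less explicit control.

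Two regimes are, however, not actually covered by what you wrote. First, your outer-regime claim $S_1^{\mathrm{outer}} = O(\sqrt{|k|/k_F}) = O(1)$ fails when $|k|$ approaches $2k_F$: a bound of the type $(\text{count})\cdot g\,2\sqrt{k_F|k|}/(|k|\,n_{\min})$ with $n_{\min} = |k|(2k_F-|k|)$ gives $2\sqrt{k_F|k|}/(2k_F-|k|)$, which diverges. The repair is to retain $L(n) = 2\sqrt{k_F^2-(r-|k|)^2}$ and compare the sum with $\int_{k_F-|k|}^{k_F} \tfrac{2\sqrt{k_F^2-s^2}}{|k|(2s+|k|)}\,ds \le \tfrac{k_F}{|k|}\log\tfrac{2k_F+|k|}{2k_F-|k|} = O(\log N)$; not $O(1)$, but sufficient. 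Second, you do not treat $|k| \ge 2k_F$ at all: there the inner regime is empty, $n_{\max}$ can be $\sim |k|^2 \gg N$, and your identity ``$O(k_F|k|) = O(N)$'' for the range of $n$ in $S_2$ is false. The fix is to note $n_{\min} \ge \max\{1,\,|k|(|k|-2k_F)\}$, so $\log(n_{\max}/n_{\min}) = O(\log N)$ uniformly in $k$.

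One smaller point on your inner-regime boundary estimate: writing $r_j = k_F - jg/|k|$ presupposes the nearest admissible $r_j$ sits at distance exactly $g/|k|$ from $k_F$, but the true offset is uncontrolled, and with your bound $1/\sqrt{k_F^2-r_j^2}$ the $j=1$ term can then be arbitrarily large. The cleanest way around this is to bound the \emph{actual} summand by $2g\big/\big(|k|\sqrt{k_F^2-(r_j-|k|)^2}\big)$ instead, using that $(r_j-|k|)^2 \le |k|^2/4$ throughout the inner regime; summing over the $\le |k|(k_F-|k|/2)/g + 1$ terms gives $2\sqrt{(k_F-|k|/2)/(k_F+|k|/2)} + O(1) = O(1)$ directly, without any Diophantine input.
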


\noindent Note that the analogous bound in 3d is of order $k_{\F}$, so it is larger by almost a factor of $k_{\F}$.

\begin{proof}
\noindent
We proceed similarly to~\cite{christiansen2023random}, treating the cases $|k| < 2 k_{\F}$ and $|k| \ge 2 k_{\F}$ separately.\\

\textbf{Case $|k|<2k_\F$}. At the tips of the lune, i.e., if $m \approx m_*$, we expect $\la$ to blow up. Therefore, we subdivide the lune in a "bulk" region, where we can estimate the sum by an integral, and a "tip" region:
\begin{equation} \label{eq:Lk_Bulk}
    L^{\textrm{Bulk}}_k \coloneq \left\{ p\in\lune ~\middle|~ \hat{k}\cdot p -\tfrac{|k|}{2}>\tfrac{3}{2}\sqrt{2} \right\}\;, \qquad
    L^{\textrm{Tip}}_k \coloneq L_k \setminus L^{\textrm{Bulk}}_k \;.
\end{equation}
If we call $\mathcal{C}_p = [-\frac{1}{2},\frac{1}{2}]^2+p$ the box around the lattice point p, we have
\begin{equation}
    \hbar^2 \sum_{p\in\lbulk}\la
    = \int \hbar^2 \sum_{q\in\lbulk}\chi_{\mathcal{C}_q}(p)\lambda_{k,q}^{-1} \di p\;.
\end{equation}
Notice that $ \hat{k} \cdot q \ge \hat{k} \cdot p - \frac{\sqrt{2}}{2} $ for all $ p \in \mathcal{C}_q $, so we can dominate the integrand as
\begin{equation}
\begin{aligned}
    \hbar^2 \sum_{q\in\lbulk}\chi_{\mathcal{C}_q}(p)\lambda_{k,q}^{-1}
    &\leq \left(|k|\left(\hat{k}\cdot p -\tfrac{|k|}{2} -\tfrac{\sqrt{2}}{2}\right)\right)^{-1} \\
    \Rightarrow \quad
    \hbar^2 \sum_{p\in\lbulk}\la
    &\leq \int_{S^{\textrm{Bulk}}} \left(|k|\left(\hat{k}\cdot p -\tfrac{|k|}{2} -\tfrac{\sqrt{2}}{2}\right)\right)^{-1} \di p\;,
\end{aligned}
\end{equation}
with $S^{\textrm{Bulk}} \coloneq \bigcup_{p\in\lbulk}\mathcal{C}_p$. We enlarge this integration domain to facilitate calculations:
\begin{equation}
    \overline{S}^{\textrm{Bulk}}
    \coloneq \left\{ p \in \RRR^2 ~\middle|~ |p| \ge k_{\F} - \tfrac{\sqrt{2}}{2} , \quad |p-k| < k_{\F} + \tfrac{\sqrt{2}}{2} , \quad \hat{k} \cdot p > \tfrac{|k|}{2} + \sqrt{2} \right\} \;,
\end{equation}
and split it as $\overline{S}^{\textrm{Bulk}} = S^1 \cup S^2$, where
\begin{align}
    S^1 \coloneq \left\{ p\in \overline{S}^{\textrm{Bulk}} ~\middle|~ \hat{k}\cdot p< k_\F -\tfrac{\sqrt{2}}{2} \right\}\;, \qquad
    S^2 \coloneq \left\{ p\in \overline{S}^{\textrm{Bulk}} ~\middle|~ \hat{k}\cdot p\geq k_\F -\tfrac{\sqrt{2}}{2} \right\}\;.
\end{align}
Then
\begin{equation}
    \hbar^2 \sum_{p\in\lbulk}\la
    \leq \underbrace{\int_{S^1}\left(|k|\left(\hat{k}\cdot p - \tfrac{|k|}{2} - \tfrac{\sqrt{2}}{2}\right)\right)^{-1}\di p}_{\eqqcolon~I_1} + \underbrace{\int_{S^2}\left(|k|\left(\hat{k}\cdot p - \tfrac{|k|}{2} - \tfrac{\sqrt{2}}{2}\right)\right)^{-1}\di p}_{\eqqcolon~I_2}\;.
\end{equation}

We begin by estimating $I_1$. Calling $\hat{k}\cdot p= z$, we have
\begin{align*}
    I_1 &= \int_{\frac{|k|}{2}+ \sqrt{2}}^{k_\F-\frac{\sqrt{2}}{2}} \frac{2}{|k|\left(z-\tfrac{|k|}{2} - \tfrac{\sqrt{2}}{2}\right)} \left(\sqrt{\left(k_\F +\tfrac{\sqrt{2}}{2}\right)^2 - \left(z-|k|\right)^2} - \sqrt{\left(k_\F -\tfrac{\sqrt{2}}{2}\right)^2 - z^2}\right) \di z\\
    &= \int_{\frac{\sqrt{2}}{2}}^{k_\F-\frac{|k|}{2}-\sqrt{2}} 2|k|^{-1}t^{-1}\left(\sqrt{a_+^2-b_+^2} - \sqrt{a_-^2-b_-^2}\right)\di t\;,
\end{align*}
where we performed the change of variable $t\coloneq z-\frac{|k|}{2}-\frac{\sqrt{2}}{2}$, and introduced
\begin{align*}
    a_+\coloneq k_\F +\tfrac{\sqrt{2}}{2}\;, \qquad
    b_+\coloneq t-\tfrac{|k|-\sqrt{2}}{2}\;, \qquad
    a_-\coloneq k_\F -\tfrac{\sqrt{2}}{2}\;, \qquad
    b_-\coloneq t+\tfrac{|k|+\sqrt{2}}{2}\;.
\end{align*}
We estimate the square roots as follows:
\begin{equation} \label{eq:sqrt_diff_estimate}
    \sqrt{a_+^2-b_+^2} - \sqrt{a_-^2-b_-^2} 
    = \frac{a_+^2-b_+^2 - a_-^2+b_-^2}{\sqrt{a_+^2-b_+^2} + \sqrt{a_-^2-b_-^2}} \leq \frac{2\sqrt{2}k_\F + 2|k|(t+\frac{\sqrt{2}}{2})}{\sqrt{a_+^2-b_+^2}}\;.
\end{equation}
In case $ |k| \leq k_{\F} $, we have $ a_+ + b_+ \ge \tfrac 12 k_{\F} $, so
\begin{equation*}
    a_+^2-b_+^2
    = (a_++b_+)(a_+-b_+)
    \geq \tfrac 12 k_\F\left( k_\F - t +\tfrac{|k|}{2} \right)\;,
\end{equation*}
and therefore
\begin{equation*}
    \sqrt{a_+^2-b_+^2} - \sqrt{a_-^2-b_-^2}
    \leq Ck_\F^{-1/2} \frac{k_\F + |k|t}{\left( k_\F - t +\tfrac{|k|}{2} \right)^{1/2}}\;.
\end{equation*}
The denominator might grow large as $ t $ approaches its maximum value, $ t = k_{\F} - \frac{|k|}{2} - \sqrt{2} $. Still, if $t \leq \frac{k_\F}{2}$, we can safely bound
\begin{equation*}
    \sqrt{a_+^2-b_+^2} - \sqrt{a_-^2-b_-^2}\leq Ck_\F^{-1/2} \frac{k_\F + |k|t}{\left( k_\F  +|k| \right)^{1/2}} \leq C\left(1 + \frac{|k|}{k_\F}t\right)\;.
\end{equation*}
Conversely, for $ t > \frac{k_{\F}}{2} $, we may conveniently estimate $ t^{-1} \le C k_{\F}^{-1} $ in the numerator. Thus
\begin{align*}
    I_1&\leq C\int_{\frac{\sqrt{2}}{2}}^{\frac{k_{\F}}{2}} |k|^{-1} t^{-1} \left(1 + \frac{|k|}{k_\F}t\right) \di t
    + C\int_{\frac{k_{\F}}{2}}^{k_{\F} - \frac{|k|}{2} - \sqrt{2}} |k|^{-1} t^{-1} k_\F^{-1/2}\frac{k_\F + |k|t}{\left( k_\F - t +\tfrac{|k|}{2} \right)^{1/2}} \di t\\
    &\leq C\int_{\frac{\sqrt{2}}{2}}^{\frac{k_{\F}}{2}} (|k|^{-1}t^{-1}+k_\F^{-1})\di t
    + C\int_{\frac{k_{\F}}{2}}^{k_\F-\frac{|k|}{2}-\sqrt{2}} \frac{|k|^{-1}+1}{k_\F^{1/2}\left( k_\F - t +\tfrac{|k|}{2} \right)^{1/2}}\di t\\
    &\leq C + C|k|^{-1}\log(k_\F)\;.
\end{align*}
In case $k_{\F} < |k| < 2 k_{\F}$, we have $a_+ - b_+ \geq k_{\F}$ and estimate
\begin{equation}
    \sqrt{a_+^2-b_+^2} - \sqrt{a_-^2-b_-^2}
    \leq \frac{C(k_{\F} + |k| t )}{k_{\F}^{\frac 12} (t + \sqrt{2})^{\frac 12}}
    \leq C k_{\F}^{\frac 12} t^{\frac 12}\;.
\end{equation}
Therefore,
\begin{equation}
\begin{aligned}
    I_1
    \leq C \int_{\frac{\sqrt{2}}{2}}^{k_{\F}} k_{\F}^{-\frac 12} t^{-\frac 12} \di t
    \leq C \;.
\end{aligned}
\end{equation}
We next estimate $I_2$. Recall the constraint~\eqref{eq:Lk_Bulk} $ z = \hat{k} \cdot p > \frac{|k|}{2} + \frac 32 \sqrt{2} \Rightarrow t \ge \sqrt{2} $, so integrals start from $ z_* \coloneq \max\{ k_{\F} - \frac{\sqrt{2}}{2}, \frac{|k|}{2} + \frac 32 \sqrt{2} \} $ and $ t_* \coloneq \max \{ k_{\F} - \frac{|k|}{2} - \sqrt{2}, \sqrt{2} \} $:
\begin{align*}
    I_2 &= \int_{z_*}^{k_\F + |k|+\frac{\sqrt{2}}{2}}2\left(\left(k_\F+\tfrac{\sqrt{2}}{2}\right)^2 - (z-|k|)^2\right)^{1/2} |k|^{-1}\left(z-\tfrac{|k|}{2}-\tfrac{\sqrt{2}}{2}\right)^{-1}\di z\\
    &\leq Ck_\F \int_{z_*}^{k_\F + |k|+\frac{\sqrt{2}}{2}} |k|^{-1}\left(z-\tfrac{|k|}{2}-\tfrac{\sqrt{2}}{2}\right)^{-1}\di z
    = Ck_\F \int_{t_*}^{k_\F + \frac{|k|}{2}} |k|^{-1}t^{-1}\di t \\
    &= Ck_\F|k|^{-1} \log \Bigg( \frac{k_{\F} + \tfrac{|k|}{2}}{t_*} \Bigg)\;.
\end{align*}
If $|k|\leq k_\F$, then, since $ t_* = k_{\F} - \frac{|k|}{2} - \sqrt{2} $ and $\log(1+x)\leq x$,
\begin{equation*}
    I_2 \leq Ck_\F |k|^{-1}\frac{\sqrt{2}+|k|}{k_\F -\tfrac{|k|}{2}-\sqrt{2}}\leq Ck_\F|k|^{-1}\frac{|k|}{k_\F}
    \le C \;.
\end{equation*}
In case $k_\F<|k|<2k_\F$, we have $ t_* \ge \sqrt{2} $ so
\begin{equation*}
    I_2\leq Ck_\F|k|^{-1}\log(1+Ck_\F)\leq C\log(k_\F)\;.
\end{equation*}
This concludes the analysis of $L_k^{\mathrm{Bulk}}$.\\
To treat $L_k^{\mathrm{Tip}}$, we decompose it into planes of distance $\ell= |k|^{-1}\gcd({k_1,k_2})$. The planes are given by
\begin{equation} \label{eq:luneslice}
    L_k^m
    \coloneq\big\{ p\in\lune ~\big\vert~ \hat{k}\cdot p = \ell m\big\} \quad \Rightarrow \quad 
    \lune^{\mathrm{Tip}} = \bigcup_{m=m_*}^{m^*}L_k^m\;,
\end{equation}
where $m_*$ and $m^*$ are given by
\begin{align}
    m_*\coloneq\inf \big\{ m\in\mathbb{N} ~\big\vert~ \tfrac{|k|}{2}<\ell m \big\}\;, \qquad
    m^*\coloneq \sup\left\{ m\in\mathbb{N} ~\middle|~ \ell m \le \tfrac{|k|}{2} + \tfrac 32 \sqrt{2} \right\}\;.
\end{align}
With $\lambda_{k,p} = \hbar^2 |k| (\ell m - \frac{|k|}{2})$, introducing $s_m\coloneq\ell m - \frac{|k|}{2}$, we get
\begin{align} \label{eq:Tipbound_preliminary}
    \hbar^2 \sum_{p\in L^{\textrm{Tip}}_k}\la
    = \sum_{m=m_*}^{m^*}|k|^{-1}\left(\ell m-\tfrac{|k|}{2}\right)^{-1}|L_k^m|
    = \sum_{m=m_*}^{m^*}|k|^{-1}s_m^{-1}|L_k^m| \;.
\end{align}
We now claim that
\begin{equation} \label{eq:Lkm_bound}
    |L_k^m|\leq C+C\ell\sqrt{k_\F s_m} \;.
\end{equation}

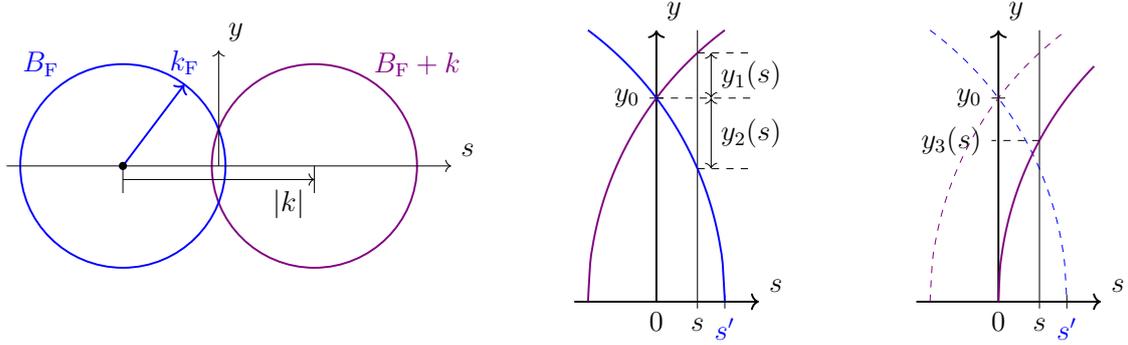
\begin{figure}
    \centering
    \scalebox{0.9}{\def\kF{5}			
\def\sprime{1}		
\def\s{0.6}			
\def\shift{5}		
\begin{tikzpicture}

\draw[red!50!blue,thick] (-5,2) circle (1.5);
\draw[blue,thick] (-7.8,2) circle (1.5);
\draw[->] (-9.5,2) -- (-3,2) node[anchor = south west] {$s$};
\draw[->] (-6.4,2) -- (-6.4,3.7) node[anchor = south west] {$y$};
\node[blue] at (-9,3.5) {$ B_{\F} $};
\node[red!50!blue] at (-3.5,3.5) {$ B_{\F} + k $};
\draw[->,thick, blue] (-7.8,2) -- ++(0.9,1.2) node[anchor = south] {$k_{\F}$};

\draw (-7.8,2) -- ++(0,-0.4);
\draw (-5,2) -- ++(0,-0.4);
\draw[->] (-7.8,1.8) -- (-5,1.8) node[anchor = north east] {$|k|$};
\fill (-7.8,2) circle (0.06);

\draw[thick, ->] (-\sprime-0.2,0) -- (\sprime+0.5,0) node[anchor = south west] {$s$};
\draw[thick, ->] (0,0) node[anchor = north] {$0$} -- ++ (0,4) node[anchor = south west] {$y$};
\draw ({\s},-0.1) node[anchor = north] {$s$} -- ++ (0,4.1);
\draw (\sprime,0.1) -- ++(0,-0.2) node[blue,anchor = north] {$s'$};
\draw (0.1,{sqrt(\kF^2-(\kF-\sprime)^2)}) -- ++(-0.2,0) node[anchor=east] {$y_0$};

\draw[thick,red!50!blue,smooth,samples=65,domain={-\sprime}:1] plot ({\x}, {sqrt(\kF^2-(\kF-\x-\sprime)^2)});
\draw[thick,blue,smooth,samples=65,domain={-\sprime}:\sprime] plot ({\x}, {sqrt(\kF^2-(\kF+\x-\sprime)^2)});

\draw[dashed] (0,{sqrt(\kF^2-(\kF-\sprime)^2)}) -- (\sprime+0.4,{sqrt(\kF^2-(\kF-\sprime)^2)});
\draw[dashed] (\s,{sqrt(\kF^2-(\kF-\sprime+\s)^2)}) -- (\sprime+0.4,{sqrt(\kF^2-(\kF-\sprime+\s)^2)});
\draw[dashed] (\s,{sqrt(\kF^2-(\kF-\sprime-\s)^2)}) -- (\sprime+0.4,{sqrt(\kF^2-(\kF-\sprime-\s)^2)});

\draw[<->] (\s+0.2,{sqrt(\kF^2-(\kF-\sprime)^2)}) -- node[anchor = west] {$y_2(s)$} (\s+0.2,{sqrt(\kF^2-(\kF-\sprime+\s)^2)});
\draw[<->] (\s+0.2,{sqrt(\kF^2-(\kF-\sprime)^2)}) -- node[anchor = west] {$y_1(s)$} (\s+0.2,{sqrt(\kF^2-(\kF-\sprime-\s)^2)});

\draw[thick, ->] (-\sprime-0.2+\shift,0) -- (\sprime+0.5+\shift,0) node[anchor = south west] {$s$};
\draw[thick, ->] (+\shift,0) node[anchor = north] {$0$} -- ++ (0,4) node[anchor = south west] {$y$};
\draw ({\s+\shift},-0.1) node[anchor = north] {$s$} -- ++ (0,4.1);
\draw (\sprime+\shift,0.1) -- ++(0,-0.2) node[blue,anchor = north] {$s'$};
\draw (0.1+\shift,{sqrt(\kF^2-(\kF-\sprime)^2)}) -- ++(-0.2,0) node[anchor=east] {$y_0$};

\draw[dashed,red!50!blue,smooth,samples=65,domain={-\sprime+\shift}:{\sprime+\shift}] plot ({\x}, {sqrt(\kF^2-(\kF-\x+\shift-\sprime)^2)});
\draw[dashed,blue,smooth,samples=65,domain={-\sprime+\shift}:{\sprime+\shift}] plot ({\x}, {sqrt(\kF^2-(\kF+\x-\shift-\sprime)^2)});
\draw[thick,red!50!blue,smooth,samples=65,domain={\shift}:{\sprime+\shift+0.4}] plot ({\x}, {sqrt(\kF^2-(\kF-\x+\shift)^2)});

\draw[dashed] (\s+\shift,{sqrt(\kF^2-(\kF-\s)^2)}) -- (-0.1+\shift,{sqrt(\kF^2-(\kF-\s)^2)}) node[anchor = east] {$ y_3(s) $};

\end{tikzpicture}}
    \caption{\textbf{Left}: For $ |k| \approx 2 k_{\F} $, the lune $L_k$ is almost identical to the shifted Fermi ball $B_{\F}+k$; only a small cap around $s=0$ is cut away.
    \textbf{Right}: The intersection length of $B_{\F}^c \cap (B_{\F} + k)$ with a vertical line at fixed $s< s'$ is given by $y_1(s)+y_2(s)$, which starts off at 0 for $s=0$ and then grows rapidly. It therefore has to be estimated carefully, using the properties of $y_1(s)$, $y_2(s)$ and $y_3(s)$.}
    \label{fig:y_y1_y2_y3}
\end{figure}

For $|k| \le k_{\F}$, this is obvious as the tip has a fixed opening angle $ \le \tfrac{\pi}{3} $, so $|L_k^m| \le C + C s_m$ with $s_m \le \frac 32 \sqrt{2}$. For $|k| \ge k_{\F}$, this argument does not go through since the opening angle approaches $ \pi $ as $|k| \approx 2 k_{\F}$. To prove~\eqref{eq:Lkm_bound}, we then proceed as follows, see Figure~\ref{fig:y_y1_y2_y3}: We introduce the continuous planes
\begin{equation} \label{eq:luneslice}
    L(s)
    \coloneq\big\{ p\in\lune ~\big\vert~ \hat{k}\cdot p = s + \tfrac{|k|}{2} \big\} \cap B_{k_{\F}}(0)^c \cap (B_{k_{\F}}(0) + k) \;, \qquad 
    s \in \mathbb{R} \;,
\end{equation}
and denote by $2y(s)$ the volume of $L(s)$. Let $s' \coloneqq k_{\F} - \frac{|k|}{2}$ and $y_0 \coloneq \sqrt{k_{\F}^2-(k_{\F}-s')^2}$. Then, we can write $y(s) = y_1(s)+y_2(s)$ with
\begin{equation*}
    y_1(s) \coloneq \sqrt{k_{\F}^2 -(k_{\F}-s'-s)^2}-y_0 \;, \qquad
    y_2(s) \coloneq \begin{cases}
        y_0-\sqrt{k_{\F}^2 -(k_{\F}-s'+s)^2} \quad &\textnormal{for } s < s' \\
        y_0 \quad
        &\textnormal{for } s \ge s'
    \end{cases} \;.
\end{equation*}
We compare this with $y_3(s) \coloneq \sqrt{k_{\F}^2-(k_{\F}-s)^2} \le \sqrt{2 k_{\F} s}$: Initially, $y(0)=y_3(0)$. For $0 \le s < s'/2$, we have $\frac{\di}{\di s} y_1(s) \le \frac{\di}{\di s} y_3(s)$ and $\frac{\di}{\di s} y_2(s) \le \frac{\di}{\di s} y_3(s)$, so $y(s) \le 2 y_3(s)$. For $s'/2 \le s \le \frac 32 \sqrt{2} $ (which only happens if $s' \le 3 \sqrt{2}$), we have both $y_1(s) \sim \sqrt{k_{\F} s}$ and $y_3(s) \sim \sqrt{k_{\F} s}$, as well as $y_2(s) \le C y_3(s)$, whence $y(s) \le C y_3(s) \le C \sqrt{k_{\F} s}$. Recalling that the lattice spacing on the plane is $\ell^{-1}$ and that lattice discretization leads to an error $C$, we obtain $|L_k^m|\leq C+\ell y(s) \leq C+C\ell\sqrt{k_\F s_m}$, which proves~\eqref{eq:Lkm_bound}.\\
Also, by lattice discretization, $\hbar^{-2} \lambda_{k,p} = |k| s_m \ge \frac 12 \Leftrightarrow s_m^{-1} |k|^{-1} \le 2$.\\
Further, since $ \ell \le 1 $, we have $|L_k^m| \le C + C \sqrt{|k| s_m} \le C |k| s_m$, so the contribution from each plane is $ \le C $, and we can remove the first two planes in~\eqref{eq:Tipbound_preliminary}:
\begin{align*}
    \hbar^2 \sum_{p\in L^{\textrm{Tip}}_k}\la
    &\leq \frac{|L_{m_*}|}{|k| s_{m_*}}
        + \frac{|L_{m_*+1}|}{|k| s_{m_*+1}}
        + C\sum_{m=m_*+2}^{m^*}(|k|^{-1}s_m^{-1}+k_\F^{1/2}s_m^{-1/2}|k|^{-1}\ell)\\
    &\leq C + C|k|^{-1}\ell^{-1}\sum_{m=0}^{m^*-m_*-2}(m+2)^{-1}+Ck_\F^{1/2}|k|^{-1}\ell^{1/2}\sum_{m=0}^{m^*-m_*-2}(m+2)^{-1/2}\\
    &\leq C + C|k|^{-1}\ell^{-1} \int_0^{C \ell^{-1}}\frac{1}{m+1} \di m
        + Ck_\F^{1/2}|k|^{-1}\ell^{1/2} \int_{0}^{C \ell^{-1}}\frac{1}{\sqrt{m+1}} \di m \\
    &\leq C + C|k|^{-1}\ell^{-1}\log\left(1+\ell^{-1}\right) + Ck_\F^{1/2}|k|^{-1}\\
    &\leq C + C\log(1+|k|) + Ck_\F^{1/2}|k|^{-1}
    \leq C\log(k_\F) \;,
\end{align*}
where we used $|k|^{-1}\leq \ell \leq 1$ and $k_\F\leq |k| < 2k_\F$ in the last two lines.\\

For $|k|< k_\F$, with $\lambda_{k,p} \geq \tfrac 12 \hbar^2$, we estimate
\begin{equation*}
    \hbar^2 \sum_{p\in L^{\textrm{Tip}}_k}\la
    \leq 2 |S^{\mathrm{Tip}}_k| \;, \qquad
    S^{\mathrm{Tip}}_k \coloneq \bigcup_{p\in L^{\textrm{Tip}}_k} (p+[-\tfrac 12, \tfrac 12]^2) \;.
\end{equation*}
\begin{figure}
    \centering
    \scalebox{0.9}{\definecolor{lilla}{RGB}{200,160,255}

\def\r{2}
\def\R{5}
\def\k{0.8}
\def\K{3}
\def\d{3.5}

\def\shift{4}

\begin{tikzpicture}

\draw (-\d-\shift,0) -- (\d-\shift,0);
\draw[red!50!blue, thick, ->] ({2.4-\shift},2.2) -- ++({\k},0) node[anchor = west] {$k$};

\draw[thick, blue] ({-\k/2-\shift},0) circle (\r);
\node[blue] at ({-2-\shift},1.8) {$B_{\F}$};
\fill[black] ({-\k/2-\shift},0) circle (0.08) node[anchor = north east]{$0$};
\fill[black] ({\k/2-\shift},0) circle (0.08) node[anchor = north west]{$k$};
\draw[red!50!blue, thick] ({\k/2-\shift},0) circle (\r);

\draw[dashed, thin] (-\shift,-2.5) -- (-\shift,2.5);
\fill (-\shift,{-sqrt(\r^2-(\k/2)^2)}) circle (1.2pt);

\draw[thick,<-,blue] (-\shift,{-sqrt(\r^2-(\k/2)^2)}) --node[anchor = east] {$k_{\F}$} ({-\k/2-\shift},0);
\draw[thin] (-\shift, {-sqrt(\r^2-(\k/2)^2)/2}) node[anchor = west] {$\alpha$}
    arc[start angle=90, end angle={90 + atan( \k / (2*sqrt(\r^2-(\k/2)^2)) )}, radius={sqrt(\r^2-(\k/2)^2)/2}] ;

\draw[|-|, thin] ({-\k/2-\shift},0.08) -- (-\shift,0.08) node[midway, above] {$|k|/2$};

\pgfmathsetmacro{\yshift}{-sqrt(\r^2-(\k/2)^2)*0.6 + 0.5}

\draw[thin] (\shift,\yshift) -- (\shift+\R,\yshift);

\draw[thin] plot[domain = {\shift}:{\shift+\R}, samples = 200] ({\x}, {\yshift + (\x-\shift-0.5)*(\shift + 0.5 + (\K/2 - \shift - 0.5))/sqrt(\R^2-(\shift + 0.5 + (\K/2 - \shift - 0.5))^2)});
\draw[thin] plot[domain = {\shift}:{\shift+\R}, samples = 200] ({\x}, {\yshift - (\x-\shift-0.5)*(\shift + 0.5 + (\K/2 - \shift - 0.5))/sqrt(\R^2-(\shift + 0.5 + (\K/2 - \shift - 0.5))^2)});

\draw[thin] ({\shift+1.3},\yshift) arc[start angle = 0, end angle = 11, radius = 1.3] node[midway, right] {$\alpha$};

\draw[thick,blue] plot[domain={\shift}:{\shift + \R -\K/2}, samples=200] ({\x}, {\yshift + sqrt(\R^2-(\K/2)^2) - sqrt(\R^2 - (\x + \K/2 - \shift - 0.5)^2)});
\draw[thick, red!50!blue] plot[domain={\shift}:{\shift + \R}, samples=200] ({\x}, {\yshift + sqrt(\R^2-(\K/2)^2) - sqrt(\R^2 - (\x - \K/2 - \shift - 0.5)^2)});

\draw[dashed] ({\shift + 0.5},{\yshift - 1.5}) -- ++(0,4);
\draw[dashed] ({\shift + 0.7},{\yshift - 1.5}) -- ++(0,4);
\draw[<-] ({\shift + 0.5},{\yshift + 2}) -- ++(-0.3,0);
\draw[<-] ({\shift + 0.7},{\yshift + 2}) -- ++(0.3,0) node[anchor = west] {$\tfrac{3}{2} \sqrt{2}$};
\draw[<-] ({\shift + 0.6},{\yshift -0.1}) -- ++(-0.8,-0.8) node[anchor = east] {$L_k^{\mathrm{Tip}}$};

\end{tikzpicture}}
    \caption{\textbf{Left}: The angle $\alpha$ is given by $\sin(\alpha) = \frac{|k|}{2 k_{\F}}$. For $|k| < k_{\F}$, we have $\alpha \le \frac{\pi}{6}$.\\
    \textbf{Right}: The opening angle of the lune $L_k$ is $2 \alpha$ and for $|k| < k_{\F}$, the tip $L^{\textrm{Tip}}_k$ is linearly approximated by a triangle.}
    \label{fig:luneTip}
\end{figure}
The extension of the tip in $k$-direction is $\tfrac 32 \sqrt{2} \ll k_{\F}$, so the extension of $S^{\mathrm{Tip}}_k$ in this direction is $\leq \tfrac 52 \sqrt{2}$. By a Taylor expansion of $y_1(s)$ and $y_2(s)$, we also estimate the extension of $S^{\mathrm{Tip}}_k$ perpendicular to $k$ and conclude
\begin{equation*}
    |S^{\mathrm{Tip}}_k|
    = \tan(\alpha) (\tfrac 52 \sqrt{2})^2 + \cO(k_{\F}^{-1}) \;,
\end{equation*}
where $\sin(\alpha) \coloneq \frac{|k|}{2} \frac{1}{k_{\F}} \le \frac 12$, see Figure~\ref{fig:luneTip}, so the lune has an opening angle of $2 \alpha$. Thus, $\alpha \le \frac{\pi}{6}$ and we conclude $|S^{\mathrm{Tip}}_k| \le C$. Summing up all bounds, the final result is
\begin{equation}
    \hbar^2 \sum_{p \in L_k}\la \leq C \log(k_\F) \leq C \log(N) \quad \text{ for } |k|<2k_\F\;.
\end{equation}

\textbf{Case $|k|\geq 2k_\F$}.
Here, $ \lambda_{k,p} $ may get small if $ \hat{k} \cdot p \approx k_{\F} - |k| $. We thus split
\begin{equation*}
    L_k = L^{\textrm{Cap}}_k\cup L_k^{\textrm{Rest}} \;, \qquad
    L^{\textrm{Cap}}_k \coloneq \left\{p\in L_k~\middle|~\hat{k}\cdot p -\tfrac{|k|}{2}\leq C' \right\} \;,
\end{equation*}
for some sufficiently large constant $C' > 0$. We estimate the sum over $L_k^{\textrm{Rest}}$ by a similar domination argument as for $ L_k^{\textrm{Bulk}} $ above: With $ z = \hat{k} \cdot p $, we have $ \hbar^{-2} \lambda_{k,p} = |k| \left( z - \frac{|k|}{2} \right) $, so
\begin{equation*}
    \hbar^2 \sum_{p\in L_k^{\textrm{Rest}}}\la \leq C\int_{|k|-k_\F +C'}^{|k|+k_\F} |k|^{-1}\left(z-\tfrac{|k|}{2}\right)^{-1}\sqrt{k_\F^2 -(z-|k|)^2} \; \di z\;.
\end{equation*}
Since the shifted Fermi ball $B_{k_{\F}}(k)$ is enclosed by the parabola
\begin{equation*}
    \left\{ p \in \RRR^2 ~\middle|~ (\hat{k}^\perp \cdot p)^2 = |k| \big( \hat{k} \cdot p - \tfrac{|k|}{2} \big) \right\} \;, \qquad 
    \hat{k}^\perp \coloneq \left( \begin{smallmatrix} 0 & -1 \\ 1 & 0 \end{smallmatrix} \right) \hat{k} \;,
\end{equation*}
we have $\sqrt{k_\F^2-(z-|k|)^2}\leq \sqrt{|k|}\sqrt{z-\tfrac{|k|}{2}}$, so
\begin{align*}
    \hbar^2 \sum_{p\in L_k^{\textrm{Rest}}}\la
    \leq C\int_{|k|-k_\F +C'}^{|k|+k_\F} |k|^{-\frac 12}\left(z-\tfrac{|k|}{2}\right)^{-\frac 12}\di z
    \leq Ck_\F^{-\frac 12}\left(\sqrt{\tfrac{|k|}{2}+k_\F} - \sqrt{\tfrac{|k|}{2}-k_\F + C'}\right)
    \le C \;,
\end{align*}
where we used $ \sqrt{a+b} - \sqrt{a} \le \sqrt{b} $.\\
For what concerns $L^{\textrm{Cap}}_k$, we proceed by decomposing the set into planes, as we did for the lune's tips. The lowest and highest $ m $ contributing to the cap are
\begin{align}
    M_*\coloneq\inf\left\{ m\in\mathbb{N} ~\middle|~ |k| - k_{\F} <\ell m \right\}\;, \qquad
    M^*\coloneq\sup\left\{ m\in\mathbb{N} ~\middle|~ \ell m\leq  \tfrac{|k|}{2} + C' \right\}\;.
\end{align}
Note that $M^* \leq M_* + C'$. Moreover, for $|k| > 2 k_{\F} + 2 C'$, we have $M^* < M_*$ and the cap is empty, so we can restrict to $|k| \leq 2 k_{\F} + 2 C'$. The bounds are now analogous to the tip: We still have $|L_k^m| \leq C + C \ell \sqrt{k_{\F} s_m}$ with $\lambda_{k,p} = \hbar^2 |k| s_m$ and $s_m = \ell m - \tfrac{|k|}{2}$ on the $m$-th plane. In particular, the contribution of each plane is $ \le C $, so we can remove the first two planes:
\begin{align*}
    \hbar^2 \sum_{p\in L_k^{\textrm{Cap}}}\la 
    &= \frac{|L_{M_*}|}{|k| s_{M_*}}
        + \frac{|L_{M_*+1}|}{|k| s_{M_*+1}}
        + C \sum_{m=M_*+2}^{M^*} \frac{1 + \ell \sqrt{k_{\F} s_m}}{|k| s_m} \\
    &\leq C + C|k|^{-1}\ell^{-1}\sum_{m=0}^{M^*-M_*-2}(m+2)^{-1} + Ck_\F^{\frac 12}|k|^{-1}\ell^{\frac 12}\sum_{m=0}^{M^*-M_*-2}(m+2)^{-\frac 12}\\
    &\leq C + C|k|^{-1}\ell^{-1}\int_0^{C' \ell^{-1}}\frac{1}{m+1} \di m+Ck_\F^{\frac 12}|k|^{-1}\ell^{\frac 12}\int_0^{C' \ell^{-1}}\frac{1}{\sqrt{m+1}} \di m\\
    &\leq C + C|k|^{-1}\ell^{-1}\log\left(1+|k|\right)+Ck_\F^{\frac 12}|k|^{-1}
    \leq C\log(k_\F)
    \leq C \log(N) \;,
\end{align*}
where we used $\ell^{-1} \leq |k| \leq C k_{\F}$. Hence, we proved~\eqref{eq:lambda-bound} for any $k\in\mathbb{Z}^2$.
\end{proof}

\subsection{Inverse Energy Sum over an Annulus}

The following number theoretical result is well-known~\cite[Thm.~338]{NumberTheory}:

\begin{lemma}[Bound on points on a sphere] \label{lem:sphere_point_estimate}
For any $\varepsilon>0$, there exists a $ C_\varepsilon > 0 $ such that for all $n\in\mathbb{N}$,
\begin{equation} \label{eq:sphere_point_estimate}
    r_2(n) \coloneq |\{ p \in \ZZZ^2 ~|~ |p|^2 = n^2 \}|
    \leq C_\varepsilon n^{\varepsilon} \;.
\end{equation}
\end{lemma}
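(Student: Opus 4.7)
The plan is to invoke two very classical number-theoretic facts: Jacobi's two-square formula and the standard divisor bound. Although the lemma concerns $|p|^2 = n^2$, nothing in the argument depends on $n$ being a perfect square; it is cleaner to first prove the equivalent statement $|\{p \in \ZZZ^2 \,:\, |p|^2 = N\}| \le C_\varepsilon N^\varepsilon$ for every $N \in \NNN$, and then specialize to $N = n^2$ (absorbing the factor of 2 in the exponent into a redefinition of $\varepsilon$).

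The first step is to relate the representation count to the divisor function. Writing $p = (p_1,p_2)$ and factoring $p_1^2 + p_2^2 = (p_1 + ip_2)(p_1 - ip_2)$ in the Gaussian integers $\ZZZ[i]$, each representation corresponds to a factorization of $N$ in $\ZZZ[i]$. Since $\ZZZ[i]$ is a UFD with four units, this gives the classical bound $|\{p \in \ZZZ^2 \,:\, |p|^2 = N\}| \le 4 \tau(N)$, where $\tau(N)$ is the number of positive divisors of $N$; this is (a weak form of) Jacobi's two-square theorem, see \cite[Thm.~278]{NumberTheory}.

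The second step is to invoke the classical divisor bound: for every $\varepsilon > 0$ there exists $C_\varepsilon > 0$ such that $\tau(N) \le C_\varepsilon N^\varepsilon$ for all $N \in \NNN$, see \cite[Thm.~315]{NumberTheory}. This is proved by noting that $\tau$ is multiplicative, so it suffices to bound $\tau(p^a)/(p^a)^\varepsilon = (a+1)/p^{a\varepsilon}$ on prime powers; for $p \ge 2^{1/\varepsilon}$ this quantity is $\le 1$, and only finitely many smaller primes contribute, each by a uniformly bounded factor.

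Combining the two bounds gives $r_2(N) \le 4 C_\varepsilon N^\varepsilon$, and setting $N = n^2$ together with $\varepsilon \mapsto \varepsilon/2$ yields~\eqref{eq:sphere_point_estimate}. There is no real obstacle here: the result is entirely standard, and the only choice is how explicit to be. Since both ingredients appear verbatim in the reference already cited, the cleanest write-up is to state the two classical facts and quote the relevant theorem numbers, rather than reproduce their proofs.
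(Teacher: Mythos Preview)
Your proposal is correct; the argument via Jacobi's two-square formula plus the divisor bound is the standard route to this result. In fact the paper does not give a proof at all: it simply states the lemma as well known and cites \cite[Thm.~338]{NumberTheory}, so your write-up is already more detailed than what appears there.
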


We use this result to prove the following analog of~\cite[Lemma~3.2]{christiansen2024correlation}. Recall that $ e(p) = \hbar^2 ||p|^2 - k_{\F}^2| $, where we chose w.l.o.g. $ k_{\F} = \frac{1}{2}\left(\inf_{p\in B_F^c}|p|^2 - \sup_{q\in B_F}|q|^2\right)$.

\begin{lemma} \label{lem:spheres_estimate}
Given $\varepsilon>0$, there exists some $C_\varepsilon > 0$ such that for any $A\subset\mathbb{Z}^2$ with $|A|\leq |\overline{B_{2k_{\F}}(0)}\cap\mathbb{Z}^2|$, we have
\begin{equation}
    \hbar^2 \sum_{p\in A} e(p)^{-1}
    \leq C_\varepsilon N^\varepsilon \;.
\end{equation}
\end{lemma}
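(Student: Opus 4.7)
My plan is to reduce the estimate to a sum over integer squared norms and apply the point-on-sphere bound of Lemma~\ref{lem:sphere_point_estimate}. Recalling that $\hbar = N^{-1/2}$, the quantity to bound is $\sum_{p \in A} \bigl||p|^2 - k_{\F}^2\bigr|^{-1}$. Since $k_{\F}^2$ is by construction the midpoint of two consecutive values attained by $|q|^2$ for $q \in \ZZZ^2$, and since $|p|^2$ takes only nonnegative integer values, the quantity $\bigl||p|^2 - k_{\F}^2\bigr|$ is a half-integer, in particular $\ge 1/2$, and takes each value in $\{1/2, 3/2, 5/2, \ldots\}$ on at most two ``shells'' $S_n = \{p \in \ZZZ^2 : |p|^2 = n\}$ (one with $n < k_{\F}^2$, one with $n > k_{\F}^2$).

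Next I would observe that the sum over any $A$ is maximized by choosing $A$ to consist of the points on the shells $S_n$ closest to $k_{\F}^2$, enumerating the relevant values of $|n - k_{\F}^2|$ as $j + 1/2$ for $j = 0, 1, 2, \ldots$. Let $M$ be such that the union of all shells with $|n - k_{\F}^2| \le M + 1/2$ suffices to cover $A$. Then
\begin{equation*}
    \hbar^2 \sum_{p \in A} e(p)^{-1}
    \le \sum_{j=0}^{M} \frac{|S_{\lfloor k_{\F}^2 \rfloor - j}| + |S_{\lceil k_{\F}^2 \rceil + j}|}{j + 1/2} \;.
\end{equation*}
Because every relevant shell lies inside $\overline{B_{2 k_{\F}}(0)}$, we have $n \le 4 k_{\F}^2 \le C N$. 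Invoking Lemma~\ref{lem:sphere_point_estimate}, each shell has $|S_n| \le C_\varepsilon n^\varepsilon \le C_\varepsilon N^\varepsilon$.

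To control $M$, I use the size constraint on $A$: the number of points collected up to shell index $j = M$ is at most $C_\varepsilon N^\varepsilon (2M+1)$, and this must cover $|A| \le |\overline{B_{2 k_{\F}}(0)} \cap \ZZZ^2| \le CN$. In the worst case $M \le C_\varepsilon N^{1-\varepsilon}$, which combined with the harmonic sum gives
\begin{equation*}
    \hbar^2 \sum_{p \in A} e(p)^{-1}
    \le C_\varepsilon N^\varepsilon \sum_{j=0}^{M} \frac{2}{j+1/2}
    \le C_\varepsilon N^\varepsilon \log(N) \;.
\end{equation*}
Absorbing $\log(N)$ into a slightly larger $N^\varepsilon$ (i.e., replacing $\varepsilon$ by $\varepsilon/2$) yields the claimed bound. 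The only delicate point is the bookkeeping that (i) $k_{\F}^2$ avoids integers so there is a genuine half-integer gap to all $|p|^2$, and (ii) the constraint on $|A|$ forces the farthest selected shell to sit within polynomial distance of $k_{\F}^2$; beyond that the argument is a direct divisor-bound plus harmonic-sum calculation.
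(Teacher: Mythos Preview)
Your approach is essentially the paper's: reorder to the worst case, decompose into shells $S_n$, apply Lemma~\ref{lem:sphere_point_estimate}, and bound a harmonic sum. One step is mis-argued, though it does not break the proof. From ``the number of points collected up to shell index $j=M$ is at most $C_\varepsilon N^\varepsilon(2M+1)$'' together with ``this must cover $|A|$'' you can only deduce a \emph{lower} bound on $M$, not the upper bound $M \le C_\varepsilon N^{1-\varepsilon}$ you assert. The correct upper bound is already implicit in your previous sentence: once you know every relevant shell satisfies $n \le 4k_{\F}^2 \le CN$, you get $M \le CN$ directly, and $\log M \le C\log N$ is all the harmonic-sum estimate needs. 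To justify that the relevant shells do lie in $\overline{B_{2k_{\F}}(0)}$, the paper makes explicit the observation you are tacitly using: $\sup_{|p|>2k_{\F}} e(p)^{-1} < \inf_{|p|\le 2k_{\F}} e(p)^{-1}$, so after reordering the maximizing $A$ is exactly $\overline{B_{2k_{\F}}(0)}\cap\ZZZ^2$.
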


Note that in the convention of~\cite{christiansen2024correlation}, this bound is of order $k_{\F}^{\varepsilon}$, which is smaller by a factor of $ k_{\F} $, as compared to the 3d case.

\begin{proof}
First, note that
\begin{equation}
    \hbar^2 \sup_{|p| > 2 k_{\F}} e(p)^{-1}
    = \sup_{|p| > 2 k_{\F}} ||p|^2 - k_{\F}^2|^{-1}
    < \frac 13 k_{\F}^{-2}
    = \hbar^2 \inf_{|p| \le 2 k_{\F}} e(p)^{-1} \;.
\end{equation}
Hence we can restrict our attention to $ A = \overline{B_{2k_{\F}}(0)}\cap\mathbb{Z}^2 $ by reordering. If we call $m\coloneqq \sup_{q\in B_F}|q|^2$ and $m'\coloneqq \inf_{p\in B_F^c}|p|^2$, then we have the decomposition into spheres
\begin{equation} \label{eq:spheredecomposition}
    \sum_{p\in \overline{B_{2k_{\F}}(0)}\cap\mathbb{Z}^2} ||p|^2 - k_{\F}^2|^{-1}
    =\sum_{n=1}^{\lfloor4k_{\F}^2\rfloor}\frac{r_2(n)}{|n-k_{\F}^2|} 
    = \sum_{n=1}^m \frac{r_2(n)}{k_{\F}^2 - n} + \sum_{n=m'}^{\lfloor4k_{\F}^2\rfloor}\frac{r_2(n)}{n - k_{\F}^2} 
    =: S^1 + S^2 \;.
\end{equation}
Here, by definition of $ k_{\F} $, we have $ |n-k_{\F}^2| \ge \frac 12 $. Thus, with Lemma~\ref{lem:sphere_point_estimate} and $ N \sim k_{\F}^2 $,
\begin{align*}
    S^1 &\leq C_\varepsilon\sum_{n=1}^m \frac{n^\varepsilon}{k_{\F}^2-n}
    \leq C_\varepsilon k_{\F}^\varepsilon
    \Big( 2 + \sum_{n=1}^{m-1}\frac{1}{k_{\F}^2-n} \Big)
    \leq C_\varepsilon k_{\F}^\varepsilon
        \Big( 2 + \int_2^m\frac{\di t}{k_{\F}^2-t} \Big)
    \leq C_\varepsilon k_{\F}^\varepsilon\log \Big( \frac{k_{\F}^2}{k_{\F}^2-m} \Big)\\
    &\leq C_\varepsilon N^\varepsilon \log(N)
    \leq C_\varepsilon N^\varepsilon \;.
\end{align*}
Likewise, for $S^2$, we have
\begin{equation*}
    S^2 \leq C_\varepsilon k_{\F}^{\varepsilon} \Big( 2 + \sum_{n=m'+1}^{\lfloor 4k_{\F}^2\rfloor}\frac{1}{n-k_{\F}^2} \Big)
    \leq C_\varepsilon k_{\F}^\varepsilon \Big( 2+ \int_{m'}^{4k_{\F}^2-1}\frac{\di t}{t-k_{\F}^2} \Big)
    \leq C_\varepsilon N^\varepsilon \log(N) 
    \leq C_\varepsilon N^\varepsilon \;.
\end{equation*}
Plugging this into~\eqref{eq:spheredecomposition} renders the desired result.
\end{proof}

\subsection{Points in Annulus Intersections}

The following estimate is a key ingredient for the bound of $\cE_1$ in Lemma~\ref{lem:cE_1estimate}.

\begin{lemma}[Points in Annulus Intersections] \label{lem:annulus_intersection}
For any $N = |B_{k_{\F}}(0)|$, consider the annulus $ \mathcal{A} \coloneq \{ p \in \mathbb{R}^2 ~|~ k_{\F} \le |p| < k_{\F} + \Delta \} $, where the thickness $ \Delta > 0 $ satisfies $ c N^{-\alpha} < \Delta < C N^{-\alpha} $ for some $\alpha \in (0,\frac 12)$. Then, for $k \in \mathbb{Z}^2 \setminus \{0\}$, we have
\begin{equation}
    |\mathcal{A} \cap (\mathcal{A}+k) \cap \mathbb{Z}^2|
    \leq C (N^{\frac 34 - \frac 52 \alpha} + N^{\frac 14 - \frac 12 \alpha}) \;.
\end{equation}
\end{lemma}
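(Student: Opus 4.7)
The strategy is to geometrically describe $R := \mathcal{A} \cap (\mathcal{A}+k)$ and then bound its lattice points by combining convex-body estimates with a slicing argument tailored to the shape of $R$.

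Subtracting the two defining inequalities of $\mathcal{A}$ and $\mathcal{A}+k$ yields the slab constraint $|p \cdot k - |k|^2/2| \le k_F\Delta + \Delta^2/2$, which localizes $p$ to a strip of thickness $\sim k_F\Delta/|k|$ perpendicular to $\hat{k}$. Intersecting with the annulus $\mathcal{A}$ produces $R$, whose topology depends on $|k|$: for $|k| \le 2k_F - C\sqrt{k_F\Delta}$, $R$ is the disjoint union of two ``pearls'' centered near the two intersection points of $\partial B_{k_F}(0)$ and $\partial B_{k_F}(k)$, each a parallelogram-like region with sides $\sim \Delta/\sin\theta$ where $\sin\theta \sim |k|\sqrt{k_F^2 - |k|^2/4}/k_F^2$; for $|k|$ within $O(\sqrt{k_F\Delta})$ of $2k_F$, these pearls merge into a single lens-shaped region near $k/2$ with area $\sim \Delta^{3/2}\sqrt{k_F}$ and perimeter $\sim \sqrt{k_F\Delta}$; for $|k| > 2k_F + \Delta$, the set $R$ is empty.

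In the merged regime, my plan is to use the elementary convex-body estimate $|K \cap \mathbb{Z}^2| \le C(\mathrm{area}(K) + \mathrm{perimeter}(K) + 1)$ applied to the lens, yielding $|R \cap \mathbb{Z}^2| \le C(\Delta^{3/2}\sqrt{k_F} + \sqrt{k_F\Delta} + 1) \le C N^{1/4 - \alpha/2}$, which matches the second term in the claimed bound.

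For the pearl regime a direct area-plus-perimeter estimate is too generous, since each pearl has perimeter $\sim \Delta k_F/|k|$ and can therefore be much larger than $\sqrt{k_F\Delta}$ when $|k|$ is small. My plan here is to slice each pearl by the family of lines $\{p \in \mathbb{R}^2 : p \cdot k = m\}$ with $m \in d\,\mathbb{Z}$, where $d := \gcd(k_1, k_2)$ and $m$ ranges over the slab; each such line meets $\mathcal{A}$ in arcs of total length $L_m \le C k_F\Delta/\sqrt{k_F^2 - (m/|k|)^2}$, and, because lattice points on the line are spaced by $|k|/d$, it contributes at most $2 + L_m d/|k|$ lattice points. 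Summing over the $\lesssim 1 + k_F\Delta/d$ admissible values of $m$ with the geometric bound on $L_m$, and invoking the sphere point estimate (Lemma~\ref{lem:sphere_point_estimate}) to discard arcs shorter than the lattice spacing, should then give the first term $N^{3/4 - 5\alpha/2}$. The main obstacle is precisely this small-$|k|$ pearl regime: the pearl is long and thin, and one must carefully exploit that for most admissible $m$ the chord $L_m$ is much shorter than the lattice spacing $|k|/d$, so that the contribution per $m$ is essentially $O(1)$ rather than $O(L_m)$. A secondary subtlety is smoothly patching the pearl and merged regimes near $|k| \approx 2k_F$ and handling the degenerate case $d \sim |k|$, where the sphere point estimate becomes essential to avoid spurious contributions.
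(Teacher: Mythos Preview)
Your merged-regime treatment (area plus perimeter of the lens near $|k| \approx 2k_{\F}$) is fine and matches the paper's Cases~3--5. The gap is in the pearl regime for small $|k|$: slicing by lines $\{p \cdot k = m\}$ perpendicular to $k$ cannot reach the stated bound. The number of admissible $m$ is $\sim k_{\F}\Delta/d$, and your per-slice estimate $2 + L_m d/|k|$ always carries the additive $+2$. Summing this alone already gives $\gtrsim k_{\F}\Delta/d$, which for $k=(1,0)$ (so $d=1$) is $\sim N^{1/2-\alpha}$. This exceeds $N^{3/4 - 5\alpha/2} + N^{1/4-\alpha/2}$ whenever $\alpha > 1/6$, and the lemma must hold for all $\alpha \in (0,\tfrac12)$ (indeed the application in Lemma~\ref{lem:cE_1estimate} uses $\alpha = 7/34 > 1/6$). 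Your suggestion to ``discard arcs shorter than the lattice spacing'' via Lemma~\ref{lem:sphere_point_estimate} does not close this: a short chord contains \emph{at most} one lattice point, but you cannot rule out that it contains one, and counting how many slices \emph{do} carry a point is exactly the quantity under estimation. Bounding lattice points in a short angular sector of a thin annulus is a genuinely hard arithmetic problem that the sphere bound alone does not resolve.

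The paper's fix is to reverse the slicing direction for small $|k|$: slice \emph{parallel} to $k$, i.e., by lines $\{p\cdot \hat{k}^\perp = m\ell\}$ with $\ell = \gcd(k_1,k_2)/|k|$. The point is that the pearl is strongly elongated along $k$: its extent perpendicular to $k$ is only $\sim N^{-\alpha} + N^{1/2-2\alpha}|k|^{-2}$, while its extent along $k$ is $\sim N^{1/2-\alpha}|k|^{-1}$. So the number of parallel slices is $\lesssim \ell^{-1}(1 + N^{1/2-2\alpha}|k|^{-2})$, and on each such slice the chord lies in the annulus and hence has length $\le 2\sqrt{2k_{\F}\Delta + \Delta^2} \sim N^{1/4-\alpha/2}$, giving at most $C\ell N^{1/4-\alpha/2}$ lattice points per slice. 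Multiplying, the $\ell$'s cancel and you recover $C(N^{1/2-2\alpha}|k|^{-1}+1)N^{1/4-\alpha/2} \le C(N^{3/4-5\alpha/2} + N^{1/4-\alpha/2})$. For larger $|k|$ (roughly $|k| \gtrsim k_{\F}$) the paper then switches back to perpendicular slicing, where your argument does work because the slab width $k_{\F}\Delta/|k|$ has shrunk.
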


\begin{proof}
For $N$ large enough, we have $ |k| \ge 1 > \Delta $, so if $|k| < 2 k_{\F} $, then $\mathcal{A} \cap (\mathcal{A}+k)$ consists of two areas, each bounded by four arcs spanned between four points, see Figure~\ref{fig:annuli_intersection}. We consider one of those areas and call the points $P_1$, $P_2$, $P_3$ and $P_4$, characterized by
\begin{equation}
\begin{aligned}
    |P_1| &= |P_1-k| = k_{\F} + \Delta \;, \qquad
    &|P_2| &= k_{\F} \;, \quad |P_2-k| = k_{\F} + \Delta \;, \\
    |P_3| &= k_{\F} + \Delta \;, \quad |P_3-k| = k_{\F} \;, \qquad
    &|P_4| &= |P_4-k| = k_{\F} \;.
\end{aligned}
\end{equation}
The set $\mathcal{A} \cap (\mathcal{A}+k) \cap \mathbb{Z}^2$ is then decomposed into several planes that run either parallel or orthogonal to $k$, where we bound the number of planes and points per plane. For this, we distinguish five cases: Let $k_{\mathrm{crit}}$ be the value of $|k|$ for which $|P_2 - P_3| = |k| = k_{\mathrm{crit}}$, that is,
\begin{equation}
    (k_{\F} + \Delta)^2
    = k_{\F}^2 + k_{\mathrm{crit}}^2 
    \qquad \Rightarrow \qquad
    k_{\mathrm{crit}}
    = \sqrt{2 k_{\F} \Delta + \Delta^2}
    \sim N^{\frac 14 - \frac{\alpha}{2}} \;.
\end{equation}
So $P_2$ is right above 0, and $P_3$ is right above $k$.\\

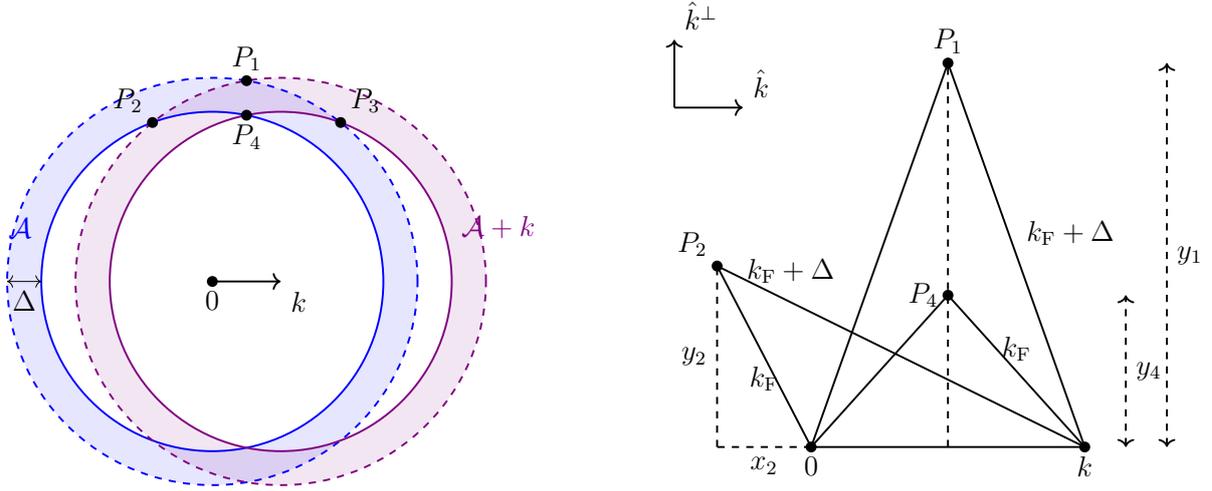
\begin{figure}
    \centering
    \scalebox{0.9}{\def\kF{2.5}    
\def\delta{0.5} 
\def\k{1}       
\begin{tikzpicture}

\fill[blue, opacity = .1] (0,{\kF + \delta}) arc[start angle=90, end angle=450, radius={\kF + \delta}] -- (0,{\kF}) arc[start angle=90, end angle=-270, radius={\kF}];
\fill[blue!50!red, opacity = .1] ({\k},{\kF + \delta}) arc[start angle=90, end angle=450, radius={\kF + \delta}] -- ({\k},{\kF}) arc[start angle=90, end angle=-270, radius={\kF}];
\draw[blue,thick] (0,0) circle ({\kF}) ;
\draw[blue,thick, dashed] (0,0) circle ({\kF + \delta}) ;
\draw[blue!50!red,thick] (\k,0) circle ({\kF}) ;
\draw[blue!50!red,thick, dashed] (\k,0) circle ({\kF + \delta}) ;

\fill (0,0) circle (0.08) node[anchor = north]{$0$};
\draw[thick,->] (0,0) -- ++(\k,0) node[anchor = north west]{$k$};

\coordinate (P1) at (
    {0.5*\k},
    { sqrt((\kF + \delta)^2 - (0.5*\k)^2)} );
\coordinate (P2) at (
    {(\k^2 - \delta^2 - 2*\kF*\delta)/(2*\k)}
    ,{sqrt(\kF^2 - (\k^2 - \delta^2 - 2*\kF*\delta)^2/(2*\k)^2) } );
\coordinate (P3) at (
    {\k - (\k^2 - \delta^2 - 2*\kF*\delta)/(2*\k)}
    ,{sqrt(\kF^2 - (\k^2 - \delta^2 - 2*\kF*\delta)^2/(2*\k)^2) } );
\coordinate (P4) at (
    {0.5*\k},
    { sqrt((\kF)^2 - (0.5*\k)^2)} );
\fill (P1) circle (0.08) node[anchor = south]{$P_1$};
\fill (P2) circle (0.08) node[anchor = south east]{$P_2$};
\fill (P3) circle (0.08) node[anchor = south west]{$P_3$};
\fill (P4) circle (0.08) node[anchor = north]{$P_4$};

\node[blue, anchor = south east] at ({-\kF},0.5) {$\mathcal{A}$};
\node[blue!50!red, anchor = south west] at ({\k+\kF},0.5) {$\mathcal{A} + k$};
\draw[<->] ({-\kF},0) --node[anchor = north] {$\Delta$} ++({-\delta},0); 

\end{tikzpicture}}
    \hspace{2em}
    \scalebox{0.9}{\def\kF{3}    
\def\delta{3} 
\def\k{4}       
\begin{tikzpicture}

\draw[thick,->] (-2,5) -- ++(1,0) node[anchor = south west]{$\hat{k}$};
\draw[thick,->] (-2,5) -- ++(0,1) node[anchor = south west]{$\hat{k}^\perp$};

\fill (0,0) circle (0.08) node[anchor = north]{$0$};
\fill ({\k},0) circle (0.08) node[anchor = north]{$k$};
\draw[thick] (0,0) -- ++(\k,0);

\coordinate (P1) at (
    {0.5*\k},
    { sqrt((\kF + \delta)^2 - (0.5*\k)^2)} );
\coordinate (P2) at (
    {(\k^2 - \delta^2 - 2*\kF*\delta)/(2*\k)}
    ,{sqrt(\kF^2 - (\k^2 - \delta^2 - 2*\kF*\delta)^2/(2*\k)^2) } );
\coordinate (P3) at (
    {\k - (\k^2 - \delta^2 - 2*\kF*\delta)/(2*\k)}
    ,{sqrt(\kF^2 - (\k^2 - \delta^2 - 2*\kF*\delta)^2/(2*\k)^2) } );
\coordinate (P4) at (
    {0.5*\k},
    { sqrt((\kF)^2 - (0.5*\k)^2)} );
\coordinate (A2) at (
    {(\k^2 - \delta^2 - 2*\kF*\delta)/(2*\k)}
    ,0 );

\fill (P1) circle (0.08) node[anchor = south]{$P_1$};
\fill (P2) circle (0.08) node[anchor = south east]{$P_2$};
\fill (P4) circle (0.08) node[anchor = east]{$P_4$};

\draw[thick] (0,0) -- (P1);
\draw[thick] ({\k},0) --node[anchor = south west]{$k_{\F} + \Delta$} (P1);
\draw[thick, dashed] ({0.5*\k},0) -- (P1);
\draw[thick] (0,0) -- (P4);
\draw[thick] ({\k},0) --node[anchor = south]{$k_{\F}$} (P4);

\draw[thick, dashed] (0,0) --node[anchor = north]{$x_2$} (A2);
\draw[thick, dashed] (A2) --node[anchor = east]{$y_2$} (P2);
\draw[thick] (0,0) --node[anchor = north]{$k_{\F}$} (P2);
\draw[thick] ({\k},0) -- (P2);

\node at (-0.3,2.6) {$k_{\F} + \Delta$};
\draw[thick, dashed, <->] ({\k+0.6},0) --node[anchor = west]{$y_4$} ++(0,{ sqrt((\kF)^2 - (0.5*\k)^2)});
\draw[thick, dashed, <->] ({\k+1.2},0) --node[anchor = west]{$y_1$} ++(0,{ sqrt((\kF + \delta)^2 - (0.5*\k)^2)});

\end{tikzpicture}}
    \caption{\textbf{Left}: For $1 \le |k| < 2 k_{\F} $, the intersection of the two annuli $ \mathcal{A} \cap (\mathcal{A}+k) $ amounts to two areas, each bordered by four arcs between four points $P_1$, $P_2$, $P_3$ and $P_4$.\\
    \textbf{Right}: The coordinates $x_j$ and $y_j $ are defined by putting the intersection points $P_j$ into the coordinate system spanned by $\hat{k} $ and $\hat{k}^\perp$}
    \label{fig:annuli_intersection}
\end{figure}

\textbf{Case 1: $1 \le |k| \le k_{\mathrm{crit}}$.}
We divide $\mathcal{A} \cap (\mathcal{A}+k) \cap \mathbb{Z}^2$ into planes $\mathcal{A}_m$, parallel to $\hat{k} = k/|k|$, and hence orthogonal to $\hat{k}^\perp \coloneq \left( \begin{smallmatrix} 0 & -1 \\ 1 & 0 \end{smallmatrix} \right) \hat{k}$, with the separation between two planes being $\ell=|k|^{-1} \gcd(k_1,k_2) \leq1$:
\begin{equation} \label{eq:cAm}
    \mathcal{A}_m \coloneq \{ p \in \mathcal{A} \cap (\mathcal{A}+k) \cap \mathbb{Z}^2 ~|~ p \cdot \hat{k}^\perp = m \ell \} \;, \qquad
    m \in \mathbb{Z} \;.
\end{equation}
Let the coordinates of $P_j$ in the system spanned by $(\hat{k}, \hat{k}^\perp)$ be $x_j \coloneq P_j \cdot \hat{k}$ and $y_j \coloneq P_j \cdot \hat{k}^\perp $, see Figure~\ref{fig:annuli_intersection}. Without loss of generality, let $y_j > 0$. Then, $\mathcal{A}_m$ can only be non-empty if
\begin{equation}
    m_* \le |m| \le m^* \;, \qquad
    m_* \coloneq \inf\{ m \in \mathbb{N} ~|~ m \ell \ge \min(y_2,y_4) \} \;, \quad
    m^* \coloneq \sup\{ m \in \mathbb{N} ~|~ m \ell \le y_1\} \;.
\end{equation}
The number of non-empty planes is thus bounded by
\begin{equation} \label{eq:numberofplanes_bound}
    2(m^* - m_* + 1)
    \le \tfrac{2}{\ell} (1 + |y_1 - y_2| + |y_1 - y_4|) \;.
\end{equation}
By the Pythagorean theorem, we conclude
\begin{equation} \label{eq:y1y4_bound}
    y_4^2 = k_{\F}^2 - \frac{|k|^2}{4} \;, \quad
    y_1^2 = (k_{\F} + \Delta)^2 - \frac{|k|^2}{4} \quad \Rightarrow \quad
    |y_1 - y_4| = \frac{2 k_{\F} \Delta + \Delta^2}{y_1 + y_4}
    \leq C \Delta
    \leq C N^{-\alpha} \;,
\end{equation}
where we used in the last two steps that $|k| \ll k_{\F}$ implies $(y_1 + y_4) \ge C k_{\F}$. Moreover, 
\begin{equation*}
    y_2^2 = k_{\F}^2 - x_2^2
    = (k_{\F} + \Delta)^2 - (|x_2| + |k|)^2 \quad \Rightarrow \quad
    |x_2| = \frac{2 k_{\F} \Delta + \Delta ^2 - |k|^2}{2 |k|}
    \leq C N^{\frac 12 - \alpha} |k|^{-1} \;,
\end{equation*}
so in particular $|x_2| \ll k_{\F}$, hence $y_2 \ge c k_{\F}$, and we have
\begin{equation}
\begin{aligned}
    (k_{\F} + \Delta)^2 - \frac{|k|^2}{4}
    &= y_1^2
    = (y_2 + (y_1-y_2))^2
    \geq 2 y_2 (y_1 - y_2) + y_2^2 \\
    \Rightarrow \quad
    |y_1 - y_2| 
    &\leq \frac{(k_{\F} + \Delta)^2 - \frac{|k|^2}{4} - y_2^2}{2 y_2}
    = \frac{2 k_{\F} \Delta + \Delta^2 - \frac{|k|^2}{4} + x_2^2}{2 y_2}
    \leq C (N^{-\alpha} + N^{\frac 12 - 2 \alpha} |k|^{-2}) \;.
\end{aligned}
\end{equation}
With~\eqref{eq:numberofplanes_bound}, this bounds the number of planes by 
\begin{equation} \label{eq:numberofplanes_bound2}
    2(m^* - m_* + 1)
    \leq C \ell^{-1}  (N^{\frac 12 - 2 \alpha} |k|^{-1} + 1 + N^{-\alpha}) \;.
\end{equation}
The maximum number of points that can fit on a plane is bounded by (see Figure~\ref{fig:plane_maxpoint})
\begin{equation} \label{eq:pointsinplanes_bound}
    |\mathcal{A}_m|
    \leq 2 \ell \sqrt{(k_{\F} + \Delta)^2 - k_{\F}^2} + 1
    \leq C \ell \sqrt{k_{\F} \Delta} + C
    \leq C \ell N^{\frac 14 - \frac{\alpha}{2}} \;,
\end{equation}
where we used $\ell \sqrt{k_{\F} \Delta} \ge c |k|^{-1} N^{\frac 14 - \frac{\alpha}{2}} \ge c |k|^{-1} k_{\mathrm{crit}} \ge c$. With $|k| \ge 1$, the final bound is then
\begin{equation} \label{eq:final_bound_annulus_intersection}
    |\mathcal{A} \cap (\mathcal{A}+k) \cap \mathbb{Z}^2|
    = \sum_{m_* \le |m| \le m^*} |\mathcal{A}_m|
    \leq C (N^{\frac 12 - 2 \alpha} |k|^{-1} + 1)
        N^{\frac 14 - \frac{\alpha}{2}}
    \leq C (N^{\frac 34 - \frac 52 \alpha} + N^{\frac 14 - \frac 12 \alpha}) \;.
\end{equation}

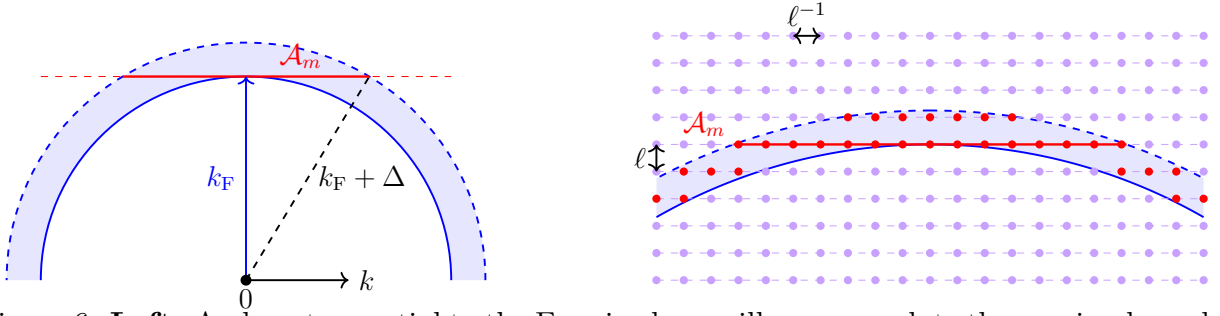
\begin{figure}
    \centering
    \scalebox{0.9}{\definecolor{lilla}{RGB}{200,160,255}
\usetikzlibrary{decorations.markings}

\begin{tikzpicture}
\useasboundingbox (-12,-2) rectangle (5,2);

\begin{scope}[xshift = -10cm, yshift = -2cm]
\def\kF{3}			
\def\delta{0.5}     
\def\k{1.5}         

\draw[thick,blue] ({\kF},0) arc[start angle=0, end angle=180, radius={\kF}];
\draw[thick,blue, dashed] ({\kF + \delta},0) arc[start angle=0, end angle=180, radius={\kF + \delta}];
\fill[blue, opacity = 0.1] ({\kF + \delta},0) 
    arc[start angle=0, end angle=180, radius={\kF + \delta}]
    -- ({-\kF},0)
    arc[start angle=180, end angle=0, radius={\kF}];

\coordinate (X1) at ({sqrt((\kF + \delta)^2 - (\kF)^2)} ,{\kF});
\coordinate (X2) at ({-sqrt((\kF + \delta)^2 - (\kF)^2)} ,{\kF});

\draw[dashed, red] ({-\kF} ,{\kF}) -- ({\kF} ,{\kF}) ;
\draw[thick, dashed] (0,0) --node[anchor = west]{$k_{\F} + \Delta$} (X1);
\draw[line width = 1, red] (X1) -- (X2);
\node[red] at (0.8,3.3) {$\mathcal{A}_m$};
\draw[->,thick, blue] (0,0) --node[anchor = east]{$k_{\F}$} ++(0,{\kF});

\fill (0,0) circle (0.08) node[anchor = north]{$0$};
\draw[thick,->] (0,0) -- ({\k},0) node[anchor = west] {$k$};
\end{scope}

\begin{scope}[xshift = 0cm]
\def\kF{8}			
\def\delta{0.5}     
\def\xlim{4}        
\def\l{0.4}         

\fill[blue, opacity=0.1] (0,0) --
    plot[domain=0:\xlim, samples=50] ({\x}, {sqrt(\kF^2 - \x^2) - \kF})
    -- ({\xlim},{sqrt((\kF + \delta)^2 - \xlim^2) - \kF}) 
    -- plot[domain=\xlim:{-\xlim}, samples=100] ({\x}, {sqrt((\kF + \delta)^2 - \x*\x) - \kF})
    -- ({-\xlim},{sqrt(\kF^2 - \xlim^2) - \kF}) 
    -- plot[domain={-\xlim}:0, samples=50] ({\x}, {sqrt(\kF^2 - \x*\x) - \kF});
    
\draw[blue, thick] (0,0) -- plot[domain=0:\xlim, samples=50] ({\x}, {sqrt(\kF^2 - \x^2) - \kF});
\draw[blue, thick] (0,0) -- plot[domain=0:{-\xlim}, samples=50] ({\x}, {sqrt(\kF^2 - \x*\x) - \kF});
\draw[blue, thick, dashed] (0,{\delta}) -- plot[domain=0:\xlim, samples=50] ({\x}, {sqrt((\kF + \delta)^2 - \x^2) - \kF});
\draw[blue, thick, dashed] (0,{\delta}) -- plot[domain=0:{-\xlim}, samples=50] ({\x}, {sqrt((\kF + \delta)^2 - \x*\x) - \kF});


\pgfmathsetmacro{\kFSqr}{\kF*\kF} 
\pgfmathsetmacro{\kFdeltaSqr}{(\kF+\delta)*(\kF+\delta)} 

\foreach \i [evaluate=\i as \iy using \i*\l] in {-5,...,4} {
    \draw[dashed,lilla,thin] ({-\xlim}, {\iy}) -- ({\xlim}, {\iy});
    \foreach \j [evaluate=\j as \jy using \j*\l] in {-10,...,10} {
    \pgfmathsetmacro{\rSqr}{(\iy+\kF)*(\iy+\kF) + \jy*\jy}
    \ifdim \rSqr pt < \kFSqr pt
        \fill[lilla] (\jy,\iy) circle (0.06);
    \else
        \ifdim \rSqr pt < \kFdeltaSqr pt
            \fill[red] (\jy,\iy) circle (0.06);
        \else
            \fill[lilla] (\jy,\iy) circle (0.06);
        \fi
    \fi
    }
}

\coordinate (Y1) at ({sqrt((\kF + \delta)^2 - (\kF)^2)} ,0);
\coordinate (Y2) at ({-sqrt((\kF + \delta)^2 - (\kF)^2)} ,0);
\draw[line width = 1, red] (Y1) -- (Y2);

\node[red] at (-3.3,0.3) {$\mathcal{A}_m$};

\draw[thick,<->] (-2,1.6) -- node[anchor = south]{$\ell^{-1}$} ++({\l},0);
\draw[thick,<->] (-4,-0.4) -- node[anchor = east]{$\ell$} ++(0,{\l});

\end{scope}

\end{tikzpicture}}
    \caption{\textbf{Left}: A plane tangential to the Fermi sphere will accommodate the maximal number of points inside the annulus on a single plane.
    \textbf{Right}: Decomposition of $\mathbb{Z}^2$ into planes parallel to $k$. A situation is shown, where the number of points $|\mathcal{A}_m|$ becomes maximal. Note that the distance between two planes is $\ell$, while the spacing between two lattice points on a plane is $\ell^{-1}$.}
    \label{fig:plane_maxpoint}
\end{figure}

\textbf{Case 2: $k_{\mathrm{crit}} < |k| \le k_{\F}$.}
Here, the number of non-empty planes is bounded as $ 2(m^*-m_*+1) \leq \tfrac{2}{\ell}(1 + |y_1-y_4|) $, where the bound~\eqref{eq:y1y4_bound} on $|y_1 - y_4|$ remains valid. Thus, the number of planes is still bounded by~\eqref{eq:numberofplanes_bound2}. The estimate on the number of points per plane~\eqref{eq:pointsinplanes_bound} holds irrespective of $k$, so also~\eqref{eq:final_bound_annulus_intersection} remains valid.\\

\textbf{Case 3: $k_{\F} < |k| \le 2 k_{\F}$.}
Here, $y_1 \ll k_{\F}$ may occur, so~\eqref{eq:y1y4_bound} loses its validity. Instead, we decompose $\mathcal{A} \cap (\mathcal{A}+k) \cap \mathbb{Z}^2$ into planes orthogonal to $k$,
\begin{equation} \label{eq:cAm_tilde}
    \tilde{\mathcal{A}}_m \coloneq \{ p \in \mathcal{A} \cap (\mathcal{A}+k) \cap \mathbb{Z}^2 ~|~ p \cdot \hat{k} = m \ell \} \;, \qquad
    m \in \mathbb{Z} \;,
\end{equation}
which are only non-empty if
\begin{equation}
    \tilde{m}_* \le |m| \le \tilde{m}^* \;, \qquad
    \tilde{m}_* \coloneq \inf\{ m \in \mathbb{N} ~|~ m \ell \ge x_2 \} \;, \quad
    \tilde{m}^* \coloneq \sup\{ m \in \mathbb{N} ~|~ m \ell \le x_3\} \;.
\end{equation}
The number of planes is bounded as $2(\tilde{m}^* - \tilde{m}_* + 1) \leq \frac{2}{\ell}(1 + x_3 - x_2)$. From the Pythagorean theorem and $x_3+x_2 = |k| \ge k_{\F}$, we get
\begin{equation}
\begin{aligned}
    y_2^2 &= k_{\F}^2 - x_2^2 = (k_{\F} + \Delta)^2 - x_3^2
    \quad \Rightarrow \quad 
    (x_3-x_2)(x_3+x_2) = 2 k_{\F} \Delta + \Delta^2 \\
    \Rightarrow \quad
    (x_3-x_2) &\le 2 \Delta + \Delta^2 k_{\F}^{-1}\le C N^{-\alpha} \;.
\end{aligned}
\end{equation}
By the same argument as in~\eqref{eq:pointsinplanes_bound}, we conclude $|\tilde{\mathcal{A}}_m| \le C \ell N^{\frac 14 - \frac{\alpha}{2}}$, so
\begin{equation} \label{eq:final_bound_annulus_intersection_2}
    |\mathcal{A} \cap (\mathcal{A}+k) \cap \mathbb{Z}^2|
    = \sum_{\tilde{m}_* \le |m| \le \tilde{m}^*} |\tilde{\mathcal{A}}_m|
    \leq C (1 + N^{-\alpha})
        N^{\frac 14 - \frac{\alpha}{2}}
    \leq C N^{\frac 14 - \frac 12 \alpha} \;.
\end{equation}

\textbf{Case 4: $2 k_{\F} < |k| \le 2 k_{\F} + 2 \Delta$.}
Here, the intersection points $P_1$, $P_2$, and $P_3$ may cease to exist. Nevertheless, we can still decompose $\mathcal{A} \cap (\mathcal{A}+k) \cap \mathbb{Z}^2$ into planes $\tilde{\mathcal{A}}_m$. The extension of $\mathcal{A} \cap (\mathcal{A}+k)$ in $k$-direction is now bounded by $2 \Delta \le C N^{-\alpha}$, so there are $\le C \ell^{-1}(1+ C N^{-\alpha})$ many planes, which still satisfy $|\tilde{\mathcal{A}}_m| \le C \ell N^{\frac 14 - \frac{\alpha}{2}}$. Hence,~\eqref{eq:final_bound_annulus_intersection_2} remains valid.\\

\textbf{Case 5: $|k| > 2 k_{\F} + 2 \Delta$.}
This case is trivial, since $\mathcal{A} \cap (\mathcal{A}+k) = \emptyset$. 

\end{proof}

\medskip

\printbibliography

\end{document}